\numberwithin{equation}{section}
\numberwithin{figure}{section}
\theoremstyle{plain}
\newtheorem{thm}{\protect\theoremname}[section]
\theoremstyle{plain}
\newtheorem{lem}[thm]{\protect\lemmaname}
\DeclareMathAlphabet{\mathcal}{OMS}{cmsy}{m}{n}
\providecommand{\lemmaname}{Lemma}
\providecommand{\theoremname}{Theorem}
\begin{document}
\global\long\def\Sgm{\boldsymbol{\Sigma}}%

\global\long\def\W{\boldsymbol{W}}%

\global\long\def\H{\boldsymbol{H}}%

\global\long\def\P{\mathbb{P}}%

\global\long\def\Q{\mathbb{Q}}%

\title{Monte-Carlo method for incompressible fluid\\ flows past obstacles}
\author{By V. Cherepanov\thanks{Mathematical Institute, University of Oxford, Oxford OX2 6GG. Email:
\protect\href{mailto:vladislav.cherepanov@maths.ox.ac.uk}{vladislav.cherepanov@maths.ox.ac.uk}} \ \ and \ \   Z. Qian\thanks{Mathematical Institute, University of Oxford, Oxford OX2 6GG and Oxford Suzhou Centre for Advanced Research, Suzhou, China. Email:
\protect\href{mailto:qianz@maths.ox.ac.uk}{qianz@maths.ox.ac.uk}}}
\maketitle
\begin{abstract}
We establish stochastic functional integral representations for incompressible fluid flows occupying wall-bounded domains using the conditional law duality for a class of diffusion processes. These representations are used to derive a Monte-Carlo scheme based on the corresponding exact random vortex formulation. We implement several numerical experiments based on the Monte-Carlo method 
without appealing to the boundary layer flow computations, to demonstrate the methodology.

\medskip

\emph{Key words}: diffusion processes, incompressible fluid flow,
Monte-Carlo simulation, random vortex method

\medskip

\emph{MSC classifications}: 76M35, 76M23, 60H30, 65C05, 68Q10.
\end{abstract}

\section{Introduction}

In this paper we aim to develop Monte-Carlo schemes for the numerical analysis of incompressible flows past solid obstacles, that is flows occupying certain domains with a nontrivial boundary. These fluid flows satisfy the incompressible Navier-Stokes equations, and we derive exact functional integral representations for their solutions. Phrasing these in terms of random vortex formulations, we are able to establish corresponding numerical schemes and conduct simulations for such flows.

The flows in question are characterised by their velocity $u(x,t)$ for $x \in D$, where $D$ is a domain in $\mathbb{R}^2$ or $\mathbb{R}^3$ with nontrivial boundary $\partial D$. The velocity then follows the Navier-Stokes equations:
\begin{equation}\label{NSgeneral}
\frac{\partial}{\partial t}u + (u\cdot\nabla) u-\nu\Delta u+\nabla P=F,
\end{equation}
\begin{equation}
\nabla\cdot u = 0
\end{equation}
in $D$, and the no-slip condition for $u$ is satisfied, i.e. $u(x,t)=0$ for $x\in\partial D$. The constant $\nu > 0$ is the kinematic viscosity of the fluid, $P$ is the pressure and $F$ denotes an external force. 

One can get another important formulation of the problem, writing it in terms of different dynamical variables, namely the velocity $u$ and the vorticity $\omega=\nabla\wedge u$, the curl of the velocity. For simplicity, we assume in the following derivation that the domain $D$ is two-dimensional --- we refer the reader to Section \ref{RV_for_wall-bounded_flows} for the three-dimensional case. We also assume that the velocity $u$ is extended to the whole space $\mathbb{R}^2$ such that $\nabla\cdot u = 0$ in distribution  (e.g., by letting $u=0$ outside of the domain $D$, see Sections \ref{RV_for_wall-bounded_flows} and \ref{LimitingRepresentationsSection} for details).

Indeed, the equation \eqref{NSgeneral} implies the vorticity transport equation
\begin{equation}\label{VTEgeneral}
\frac{\partial}{\partial t}\omega+(u\cdot\nabla)\omega-\nu\Delta\omega=G
\end{equation}
in $D$, where $G=\nabla\wedge F$. Notice, however, that in general the value of the vorticity $\omega$ is non-zero along the boundary $\partial D$, and we denote $\theta = \left. \omega \right|_{\partial D}$. It is well-known that the solution to equations of the form \eqref{VTEgeneral} can be expressed in terms of functional integrals --- this idea may trace back to Feynman \cite{Feynman1948} and Kac \cite{Kac1949}. However, we first have to introduce a perturbation $W_{\varepsilon}$ of the vorticity $\omega$ such that $W_{\varepsilon}=\omega$ everywhere except a thin $\varepsilon$-layer of the boundary $\partial D$, and $\left. W_{\varepsilon} \right|_{\partial D} = 0$. One way of defining such perturbation is given in \cite{QQZW2022} using a smooth cutoff function, which we also utilise in the subsequent sections.

Therefore, for $W_{\varepsilon}$ the vorticity equation \eqref{VTEgeneral} implies
\begin{equation}
\frac{\partial}{\partial t}W_{\varepsilon}+(u\cdot\nabla)W_{\varepsilon}-\nu\Delta W_{\varepsilon}=g_{\varepsilon}
\end{equation}
in $D$ where $\left. W_{\varepsilon} \right|_{\partial D} = 0$ and $g_{\varepsilon}$ is the corresponding perturbation of $G$. Introducing the Green function $p(s,x,t,y)$ for the forward parabolic operator $\nu\Delta-u\cdot\nabla-\frac{\partial}{\partial t}$ in $D$ subject to the Dirichlet boundary condition, we write
\begin{equation}\label{PerturbedVorticityRepresentation}
W_{\varepsilon}(y,t)=\int_{D}p(0,\xi,t,y)W_{\varepsilon}(\xi,0)\textrm{d}\xi+\int_{0}^{t}\int_{D}p(s,\xi,t,y)g_{\varepsilon}(\xi,s)\textrm{d}\xi\textrm{d}s
\end{equation}
which we can then use to recover the unperturbed vorticity $\omega$. Note that for incompressible flows one then recovers the velocity by
\begin{equation}\label{BiotSavartLawIntroduction}
    u(x,t)=\int_{D}K(x,y)\omega(y,t)\textrm{d}y
\end{equation}
where $K$ is the Biot-Savart kernel of the domain $D$ (assuming both the velocity and the vorticity decay to zero at infinity sufficiently fast). 

To give a formulatic expression, we assume throughout the introduction that a representation similar to \eqref{PerturbedVorticityRepresentation} holds for the unperturbed vorticity $\omega$, that is
\begin{equation}\label{HomogeneousVorticityWithDensity}
\omega(y,t)=\int_{D}p(0,\xi,t,y) \omega(\xi,0)\textrm{d}\xi +\int_{0}^{t}\int_{D}p(s,\xi,t,y)g(\xi,s)\textrm{d}\xi\textrm{d}s.
\end{equation}
Then for the velocity, one has 
\begin{equation}\label{HomogeneousVelocityWithDensity}
 u(x,t)=\int_{D}\int_{D} K(x,y) p(0,\xi,t,y) \omega(\xi,0)\textrm{d}\xi\textrm{d}y + \int_{0}^{t}\int_{D}\int_{D}K(x,y)p(s,\xi,t,y)g(\xi,s)\textrm{d}\xi\textrm{d}s\textrm{d}y.
\end{equation}
We stress again that even though the above formulae hold for the case of unbounded flows, in the case of nontrivial boundary the actual representations essentially involve the boundary vorticity $\theta$, see Sections \ref{RV_for_wall-bounded_flows} and \ref{LimitingRepresentationsSection} for exact expressions.

It turns out one can formulate the above as stochastic representations due to the following observation. Noting again that the vector field $u$ is divergence-free, we have that the function $p(s,x,t,y)$ is also the transition density for the diffusion process with infinitesimal generator $\nu\Delta+u\cdot\nabla$. This means that for the family of stochastic processes $X^{\xi, s}$ given by 
\begin{equation}\label{TaylorsDiffusionsIntroduction}
\textrm{d} X_{t}^{\xi,s}=u(X^{\xi,s}_t,t)\textrm{d}t+\sqrt{2\nu}\textrm{d}B_{t}
\end{equation}
for $t\geq s$ and $X^{\xi,s}_t=\xi$ for $t\leq s$, we have that 
\begin{equation}
p(s,\xi,t,y)\textrm{d}y=\mathbb{P}\left[X^{\xi,s}_t\in\textrm{d}y\right].
\end{equation}
The processes are called Taylor's diffusions (due to Taylor \cite{Taylor1921}), one interprets the processes as "imaginary" Brownian particles following the flow with velocity $u(x,t)$.

 Let us view the integral representations we have so far in terms of expectations with respect to the family of Taylor's diffusions introduced above. Indeed, it is easy to see that the integrals in the representation \eqref{HomogeneousVelocityWithDensity} can be viewed in terms of the expectations with respect to $X^{\xi, s}$, that is 
\begin{equation}\label{HomogeneousVelocityWithDiffusions}
 u(x,t)=\int_{D} \mathbb{E}\left[K(x,X^{\xi,0}_t)\right] \omega(\xi,0)\textrm{d}\xi + \int_{0}^{t}\int_{D}\mathbb{E}\left[K(x,X^{\xi,s}_t)\right]g(\xi,s)\textrm{d}\xi\textrm{d}s.
\end{equation}

Having discussed the above representations, we propose the following numerical scheme. We couple the processes $X^{\xi,s}$ with the velocity $u(x,t)$ and update them 
according to approximations given by 
\eqref{TaylorsDiffusionsIntroduction} and 
\eqref{HomogeneousVelocityWithDiffusions} respectively. Indeed, the former depends only on the velocity $u(x,t)$ for which one can take the approximation given by the latter. Note that, in turn, the velocity is determined by the diffusions $X^{\xi,s}$ and the presence of expectations in the representation 
\eqref{HomogeneousVelocityWithDiffusions} allows for Monte-Carlo-type schemes (we refer the reader to the Section 
\ref{MCSimulationsSection} for particular schemes we used in simulations).

One of the drawbacks for the scheme described above is that it involves Taylor's diffusions 
\eqref{TaylorsDiffusionsIntroduction} started at every time $s \in [0, t]$. This implies that the memory required for the computation and the computation complexity itself are increasing with time $t$. In Section \ref{RV_for_wall-bounded_flows}, we overcome this difficulty by deriving representations different from 
\eqref{HomogeneousVorticityWithDensity} and 
\eqref{HomogeneousVelocityWithDensity} using the duality of the conditional laws of Taylor's diffusions proved in \cite{QSZ3D}. Although the representations in this case involve more complicated integrands (see Theorems \ref{thm7.2}, \ref{thm:5.2} and their two-dimensional versions, Theorems \ref{thm5.2new}, \ref{thm5.4new}), we are able to devise the schemes with non-increasing memory and computational complexity in Section \ref{MCSimulationsSection}.

We also have to mention again that the above representations \eqref{HomogeneousVorticityWithDensity} and \eqref{HomogeneousVelocityWithDensity} in the stated form hold for domains without boundary. In the case of domains with nontrivial boundary, the representations are derived from \eqref{PerturbedVorticityRepresentation} for the perturbed vorticity $W_{\varepsilon}$. These depend on the boundary vorticity $\theta$ which poses an additional difficulty in simulations as we are unable to compute the values of $\theta$ using the expressions for $\omega$. However, as in practice we use a desingularised kernel $K_{\delta}$ to compute the velocity, we write the formal derivative of the representation for $u(x,t)$ to compute the vorticity $\omega$ used in plots and the boundary vorticity $\theta$. 

Note also that as the expression 
\eqref{PerturbedVorticityRepresentation} for the perturbed vorticity $W_{\varepsilon}$ depends on the boundary layer thickness parameter $\varepsilon$, it is interesting to consider the limit of the representations derived from 
\eqref{PerturbedVorticityRepresentation} as $\varepsilon \to 0$. We do the corresponding computations in Section \ref{LimitingRepresentationsSection} for different domains and notice that many terms in fact do not contribute to the limit. This observation turns out to be useful in Section \ref{MCSimulationsSection} as it simplifies the numerical schemes significantly. 

Let us remark that the discussed above limiting representations obtained in Section \ref{LimitingRepresentationsSection} have a certain interest on their own as they display how the boundary vorticity $\theta$ influences the velocity $u$ (see Section \ref{LimitingRepresentationsSection} for details), which seems to be in accordance with Prandtl's boundary layer theory \cite{Prandtl1904}. In some of the experiments reported in Section \ref{MCSimulationsSection} we omit the terms involving $\theta$ when we compute the velocity $u$ and notice that they essentially contribute to the chaotic behaviour of the flow close to the boundary. 

We also note that we derive these representations in Section \ref{LimitingRepresentationsSection} for two particular domains (namely, the ones for flows passing a flat plate and a wedge) and for general two-dimensional domains $D$ conformally equivalent to the half-plane with smooth boundary $\partial D$. However, we believe that similar representations should hold for more complicated domains as well (e.g., not necessary simply connected domains with piecewise smooth boundary).

For the existing literature on the subject, we first mention that there has been extensive literature studying Navier-Stokes equations, in particular, numerical analysis of their solutions. For the general numerical methods, we mention Computational Fluid Dynamics (CFD) (see \cite{Fletcher1991}, \cite{Wesseling2001}), while more particular numerical approaches include Direct Numerical Simulations (DNS) (see \cite{OrszagPatterson1972}, \cite{Spalart1988}, \cite{MoinMashesh1998} and also \cite{Wesseling2001}), Large Eddy Simulations (LES) (see \cite{Deardorff1970}, \cite{Lilly1967}, \cite{LesieurLDS}), Probability Density Function (PDF) (see \cite{Pope2000}). There is also a large volume of literature with numerical study of turbulence, e.g., see the works \cite{ChauhanPhilipetl2014}, \cite{Heisel2018}, \cite{DawsonMcKeon2019}, \cite{HeadBandyopadhyay1981}, \cite{RaiMoin1993}, \cite{SpalartWatmuff1993}, \cite{WuMoin2008, WuMoin2009, WuMoinHickey2014}, as well as those that study other various aspects of boundary flows \cite{BalakumarAdrian2007}, \cite{DeGraaffEaton2000}, \cite{ErmJoubert1991}, \cite{ErmSmitsJoubert1985}, \cite{HonkanAndreopoulos1997}, \cite{Keller1978}, \cite{Schlichting9th-2017}, \cite{WuJacobsHuntDurbin1999}.

Random vortex method is based originally on the work \cite{Chorin 1973} by Chorin, though the idea of using Brownian fluid particles dates back to Taylor \cite{Taylor1921}. It is also successfully used in studying different aspects of turbulent flows, see \cite{Pope2000}, \cite{Majda and Bertozzi 2002} and \cite{Falkovich2001}. We also mention that some convergence results are known for random vortex methods, see \cite{AndersonGreengard1985}, \cite{Goodman1987} and \cite{Long1988}, however, these concern the flows without boundary and the convergence results for the schemes described above will be studied in a future work. For other probabilistic aspects of fluid dynamics we refer the reader to \cite{Constantin2001a, Constantin2001b} and \cite{ConstantinIyer2011}, and we also mention particularly stochastic Lagrangian approach that is used to study isotropic turbulence, see \cite{Drivas2017a, Drivas2017b}, \cite{EyinkGuptaZaki2020a, EyinkGuptaZaki2020b}.

This paper is based on the previous works \citep{Qian-Stochastic2022}, \cite{QSZ3D} and \citep{QQZW2022} and generalises the results for wall-bounded domains presented in \citep{Qian-Stochastic2022} and \citep{QQZW2022}. In the recent paper \cite{LQX2023}, the ideas similar to those presented here are used to study Oberbeck-Boussinesq flows which are outside of the scope of the current work.

The paper is organised as follows. In Section \ref{RV_for_wall-bounded_flows}, we derive a random vortex representation for general two- and three-dimensional flows occupying the half-space. This representation is derived using the duality of conditional laws of the Taylor diffusions and thus depends only on diffusions started at time $0$. In Section \ref{BS_Section}, we review the Biot-Savart law and state auxiliary results concerning the Biot-Savart kernel for the domains in question. In Section \ref{LimitingRepresentationsSection}, we consider representations as those described above and compute their limits as the thickness of the thin layers converges to zero. In Section \ref{MCSimulationsSection}, we use the above results to derive numerical schemes and provide the experiment results.

\section{Random vortex for wall-bounded flows}\label{RV_for_wall-bounded_flows}

In this section, we follow the approach proposed in \citep{Qian-Stochastic2022} and \citep{QQZW2022}. In the latter paper the random
vortex dynamics has been established for viscous fluid flows moving
along with a solid wall modelled by the two dimensional Navier-Stokes
equations. Let $u=(u^{1},u^{2},u^{3})$ denote the velocity of an
incompressible viscous fluid flow constrained in the upper half space
$D$ where $x_{3}\geq0$. The velocity $u$ satisfies the no-slip 
condition, i.e. $u(x,t)=0$ for $x\in \partial D$ and $t>0$. Let us assume 
that there is an external force $F=(F^{1},F^{2},F^{3})$ supplying the energy
to the fluid dynamical system. Therefore the velocity $u$ and the pressure $P$
are evolved according to the Navier-Stokes equations:
\begin{equation}
\frac{\partial u^{i}}{\partial t}+\sum_{j=1}^{3}u^{j}\frac{\partial u^{i}}{\partial x_{j}}-\nu\Delta u^{i}+\frac{\partial P}{\partial x_{i}}-F^{i}=0\quad\textrm{ in }D\label{3D-Ns01-1}
\end{equation}
for $i=1,2,3$, and
\begin{equation}
\sum_{j=1}^{3}\frac{\partial u^{j}}{\partial x_{j}}=0\quad\textrm{ in }D.\label{3D-Ns02-1}
\end{equation}
The initial velocity is denoted by $u_{0}(x)=u(x,0)$. In the following we will often omit the summation over repeated indices. 

To proceed our discussion, we assume that the velocity $u$ is
smooth inside $D$ and is $C^3$ up to the boundary $\partial D$. 
This assumption is although technical, but in no means
it is trivial and apparent. Indeed it remains and will still be stand
as an open problem in mathematics for establishing the regularity
for solutions to the general three dimensional Navier-Stokes equations.
We do not pursue this line of research which lies outside the scope
of current project. 

We introduce the vorticity $\omega=\nabla\wedge u$,
whose components $\omega^{i}=\varepsilon^{ijk}\frac{\partial}{\partial x^{j}}u^{k}$. 
In our study the vorticity transport equations play a crucial role.
The vorticity transport equations are evolution equations for $\omega$, which are given as the
following
\begin{equation}
\frac{\partial\omega^{i}}{\partial t}+u^{j}\frac{\partial\omega^{i}}{\partial x_{j}}-\nu\Delta\omega^{i}+\omega^{j}S_{j}^{i}-G^{i}=0\quad\textrm{ in }D \label{3D-V-01}
\end{equation}
where $S_{j}^{i}=\frac{1}{2}(\frac{\partial u^{i}}{\partial x_{j}}+\frac{\partial u^{j}}{\partial x_{i}})$
is the symmetric tensor of rate-of-strain, and $G=\nabla\wedge F$. 

We adopt the simple idea in \citep{QQZW2022} of extending the
definition of the velocity $u=(u^{1},u^{2},u^{3})$ to the whole space
$\mathbb{R}^{3}$ by the reflection principle, that is 
\[
u^{i}(\bar{x},t)=u^{i}(x,t)\quad\textrm{ for }i=1,2;\quad u^{3}(\bar{x},t)=-u^{3}(x,t)
\]
for $x_{3}<0$. Since $u$ satisfies the no-slip condition, this
yields a crucial fact that the extended velocity $u$ is divergence
free in the distribution sense on $\mathbb{R}^{3}$. Therefore the adjoint
of the heat operator $\nu\Delta-u\cdot\nabla-\frac{\partial}{\partial t}$
is just the heat operator $\nu\Delta+u\cdot\nabla+\frac{\partial}{\partial t}$.
Therefore the Green function $\Gamma_{D}(\tau,x,t,y)$ to the Dirichlet
boundary problem of the (forward) parabolic equation
\[
\begin{cases}
\left(\frac{\partial}{\partial t}-\nu\Delta+u\cdot\nabla\right)f=0 & \textrm{ in }D,\\
f(x,t)=0 & \textrm{ if }x_{3}=0
\end{cases}
\]
coincides with the transition probability density of the diffusion
process with its infinitesimal generator $\nu\Delta+u\cdot\nabla$
killed on leaving the domain $D$. This leads to the following construction.
Let $X$ be the diffusion process with infinitesimal generator $\nu\Delta+u\cdot\nabla$,
which is a diffusion with state space $\mathbb{R}^{3}$, without ``killing'' at the boundary $x_3=0$. That
is, $X$ is a weak solution of the stochastic differential equation
\begin{equation}
\textrm{d}X=u(X,t)\textrm{d}t+\sqrt{2\nu}\textrm{d}B.\label{taylor-01}
\end{equation}
Let $p(\tau,x,t,y)$ (for $t>\tau\geq0$, and $x,y\in\mathbb{R}^{3}$)
be the transition probability density function, which is positive
and H\"older's continuous in all arguments, as long as $u$ is bounded
and Borel measurable. Formally 
\[
p(\tau,x,t,y)\textrm{d}y=\mathbb{P}\left[X_{t}\in\textrm{d}y|X_{\tau}=x\right]
\]
and therefore it is clear that
\[
p(\tau,x,t,y)=p(\tau,\overline{x},t,\overline{y})\quad\textrm{ for }x,y\in\mathbb{R}^{3}.
\]
Hence, by applying the reflection principle we have therefore the
following representation
\begin{equation}
\Gamma_{D}(\tau,x,t,y)=p(\tau,x,t,y)-p(\tau,x,t,\overline{y})\quad\textrm{ for }x,y\in D.\label{rep-green}
\end{equation}

We next apply this representation to the study of the vorticity $\omega$. The
vorticity $\omega$ may be considered as a solution of the linear
parabolic equation \eqref{3D-V-01} if $u$ is supposed as a given
fluid dynamic variable. However, $\omega$ has nontrivial boundary
value in general, so let $\theta$ be the trace of $\omega$ along the
boundary $\partial D$ (note that the vorticity boundary values may be identified with the stress of the fluid flow immediately injected to the wall, cf. \cite{Schlichting9th-2017} \cite{WeinanLiu1996}). Then it is easy to see that $\omega^{3}$
has trace zero, and due to the no-slip condition imposed on $u$,
$\theta$ can be identified with the normal part of the stress applied
to the boundary $\partial D$. Note that $\theta$ is a time dependent
vector field on the boundary $\partial D$. 

To handle the no vanishing boundary vorticity $\theta$, we employ the same technique used in \citep{QQZW2022}. Let $\phi$ be a smooth cut-off function defined on $[0,\infty)$ with values in $[0,1]$, such that $\phi(r)=1$ for $r\in[0,1/3]$ and $\phi(r)=0$ for $r\geq2/3$, and define $\sigma_{\varepsilon}(x,t)=\phi(x_{3}/\varepsilon)\theta(x_{1},x_{2},t)$ for $x=(x_{1},x_{2},x_{3})\in D$. Let 
\begin{equation}
W_{\varepsilon}(x,t)=\omega(x,t)-\sigma_{\varepsilon}(x,t).
\end{equation}
Then 
\begin{equation}
\frac{\partial W_{\varepsilon}^{i}}{\partial t}+u^{j}\frac{\partial W_{\varepsilon}^{i}}{\partial x_{j}}-\nu\Delta W_{\varepsilon}^{i}-W_{\varepsilon}^{j}\frac{\partial u^{i}}{\partial x_{j}}-g_{\varepsilon}^{i}=0\quad\textrm{ in }D,\label{3D-V-01-1}
\end{equation}
for $i=1,2,3$, and
\begin{equation}
\left.W_{\varepsilon}(x,t)\right|_{x\in\partial D}=0\textrm{ }\quad\textrm{ for }x\in\partial D,\label{vor-B-01}
\end{equation}
where 
\begin{equation}
g_{\varepsilon}^{i}=G^{i}-\frac{\partial}{\partial t}\sigma_{\varepsilon}^{i}-u^{j}\frac{\partial}{\partial x_{j}}\sigma_{\varepsilon}^{i}+\nu\Delta\sigma_{\varepsilon}^{i}-\sigma_{\varepsilon}^{j}S_{j}^{i}\label{e-g-01}
\end{equation}
for $i=1,2,3$.

\subsection{Random vortex for wall-bounded flows}

According to the Feynman-Kac formula for forward heat equations established in \citep{Qian-Stochastic2022} we can represent $W_{\varepsilon}$ in terms of the distribution of the Taylor diffusion. Here we present a slightly different approach. To this end, we have to introduce some notation. Let $\mathbb{P}^{\xi}$ denote the distribution of the diffusion process with infinitesimal generator $\nu\Delta+u\cdot\nabla$ (where $u$ is extended on $\mathbb{R}^{3}$ by the reflection principle) started from $\xi\in\mathbb{R}^{3}$ at time $0$, and $\mathbb{P}_{t}^{\xi\rightarrow\eta}$ the conditional law of $\mathbb{P}^{\xi}\left[\cdot|\psi(t)=\eta\right]$ where $\psi$ is the coordinate process (i.e. canonical element) on the path space $C([0,\infty),\mathbb{R}^{3})$. 

We recall that $p_{b}(s,x,t,y)$ denotes the transition probability density function of the diffusion with generator $\nu\Delta+b\cdot\nabla$.

\begin{thm}
\label{thm7.2} Let $X^{\eta}$ be the Taylor diffusion:
\begin{equation}
\textrm{d}X_{t}^{\eta}=u(X_{t}^{\eta},t)\textrm{d}t+\sqrt{2\nu}\textrm{d}B_{t},\quad X_{0}^{\eta}=\eta\label{3D-T1}
\end{equation}
for every $\eta\in\mathbb{R}^{3}$. For each pair $\eta\in\mathbb{R}^{3}$
and $t>0$, define $s\mapsto Q(\eta,t;s)$ to be the unique solution
of the differential equations:
\begin{equation}
\frac{\textrm{d}}{\textrm{d}s}Q_{j}^{i}(\eta,t;s)=-Q_{k}^{i}(\eta,t;s)1_{D}(X_{s}^{\eta})A_{j}^{k}(X_{s}^{\eta},s),\quad Q_{j}^{i}(\eta,t;t)=\delta_{ij}\label{3DQ-1}
\end{equation}
where $A_{j}^{k}=\frac{\partial}{\partial x_{j}}u^{k}$ and $i,j,k=1,2,3$. Then 
\begin{align}
\omega^{i}(\xi,t) & =\sigma_{\varepsilon}^{i}(\xi,t)+\int_{D}\mathbb{P}^{\eta\rightarrow\xi}\left[Q_{j}^{i}(\eta,t;0)1_{\{t<\zeta(X^{\eta}\circ\tau_{t})\}}\right]W_{\varepsilon}^{j}(\eta,0)p_{u}(0,\eta,t,\xi)\textrm{d}\eta\nonumber \\
 & +\int_{0}^{t}\int_{D}\mathbb{P}^{\eta\rightarrow\xi}\left[Q_{j}^{i}(\eta,t;s)1_{\{t-s<\zeta(X^{\eta}\circ\tau_{t})\}}g_{\varepsilon}^{j}(X_{s}^{\eta},s)\right]p_{u}(0,\eta,t,\xi)\textrm{d}\eta\textrm{d}s\label{3D-om}
\end{align}
for every $\xi\in D$ and $t>0$, $i=1,2,3$, where $\zeta(\psi)=\inf\left\{ s:\psi(s)\notin D\right\}$ and $\tau_{t}$ denotes the time reversion operator on the path space $C([0,t];\mathbb{R}^{3})$.
\end{thm}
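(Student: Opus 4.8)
The plan is to obtain \eqref{3D-om} from a Duhamel/Feynman--Kac representation for the linear parabolic system \eqref{3D-V-01-1}--\eqref{vor-B-01}, and then to transport that representation to the stated single--time--origin form by means of the conditional law duality of \cite{QSZ3D}. Throughout, $u$ (extended by reflection so as to be divergence free on $\mathbb{R}^3$) is treated as a given coefficient, so that \eqref{3D-V-01-1} is a linear system for $W_\varepsilon$ with source $g_\varepsilon$ and zeroth--order coupling $-W_\varepsilon^j A_j^i$, where $A_j^i=\partial_{x_j}u^i$ as in \eqref{3DQ-1}.

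First I would write the Feynman--Kac representation of \cite{Qian-Stochastic2022} for the forward operator $\partial_t-\nu\Delta+u\cdot\nabla$, applied componentwise and combined with Duhamel's principle. This expresses $W_\varepsilon^i(\xi,t)$ as the integral of $W_\varepsilon^j(\eta,0)$ and $g_\varepsilon^j(\eta,s)$ against the Dirichlet Green function $\Gamma_D$, weighted by a matrix multiplicative functional accounting for the stretching term. The natural candidate for that functional is precisely $Q(\eta,t;s)$ of \eqref{3DQ-1}: solved backward in $s$ from the terminal value $Q_j^i(\eta,t;t)=\delta_{ij}$, it produces the resolvent of $1_D A$ accumulated along the path on $[s,t]$, while the factor $1_D(X_s^\eta)$ encodes that the stretching acts only inside $D$. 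I would confirm the candidate by differentiating it and checking that it solves \eqref{3D-V-01-1} with the boundary condition \eqref{vor-B-01}.

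The core of the argument is the passage from $\Gamma_D$, which a priori involves diffusions issued at each intermediate time $s$, to the single family of bridges $\mathbb{P}^{\eta\to\xi}$ started at time $0$. Using \eqref{rep-green} together with the identity $\Gamma_D(s,\eta,t,\xi)=p_u(s,\eta,t,\xi)\,\mathbb{P}^{\eta\to\xi}[\psi(r)\in D,\ r\in[s,t]]$, the killing is recorded by an exit--time indicator inside the bridge expectation. Since for $s>0$ the event that the path stays in $D$ on $[s,t]$ is not $\{t<\zeta(X^\eta)\}$ (which would also forbid excursions on $[0,s]$), one reverses time: a direct computation gives $\{X_\tau^\eta\in D\ \forall\tau\in[s,t]\}=\{t-s<\zeta(X^\eta\circ\tau_t)\}$, the indicator appearing in \eqref{3D-om}. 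Invoking the conditional law duality of \cite{QSZ3D}, the $[s,t]$--segment of $\mathbb{P}^{\eta\to\xi}$ conditioned on $X_s^\eta=y$ has the law of the bridge from $(y,s)$ to $(\xi,t)$; hence $Q(\eta,t;s)$, the indicator and $g_\varepsilon^j(X_s^\eta,s)$, all being functionals of that segment, may be evaluated along one family of bridges. Multiplying by $p_u(0,\eta,t,\xi)$, integrating in $\eta$, and using the mass identity $\int p_u(0,\eta,s,y)\,\mathrm{d}\eta=1$ for divergence--free drift collapses the intermediate sources into the displayed double integral. Adding back $\sigma_\varepsilon^i(\xi,t)$ and recalling $\omega=W_\varepsilon+\sigma_\varepsilon$ then yields \eqref{3D-om}.

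I expect the duality step to be the main obstacle: one must track exactly how the matrix functional $Q$ and the killing indicator transform under the time reversal $\tau_t$ and under the conditioning on the intermediate position, and reconcile the $\eta$--integration over $D$ with the reflection--principle extension, so that the product of bridge law and density reproduces the killed propagator $\Gamma_D$ rather than the free one. Making these manipulations rigorous also relies on the standing regularity ($u$ smooth in $D$ and $C^3$ up to $\partial D$), which secures well--posedness of \eqref{3DQ-1}, the H\"older continuity and Gaussian bounds for $p_u$, and the validity of the Feynman--Kac and duality formulae invoked above.
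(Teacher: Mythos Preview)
Your route differs from the paper's in a substantive way. The paper does not start from a forward Feynman--Kac with diffusions launched at every intermediate time $s$ and then ``collapse'' them; instead it fixes $T>0$, introduces the \emph{backward} diffusion $\tilde X^\xi$ with drift $-u(\cdot,T-t)$ started at $\xi$, applies It\^o's formula to $W_\varepsilon^j(\tilde X^\xi_{t\wedge T_\xi},T-t)$, and integrates by parts against a forward gauge $\tilde Q$ solving $\mathrm d\tilde Q_j^i=\tilde Q_k^i q_j^k(\tilde X^\xi,T-t)\,\mathrm dt$. This yields a representation for $W_\varepsilon^i(\xi,T)$ directly in terms of one family of paths (started at $\xi$). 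The conditional law duality $\tilde{\mathbb P}^{\xi\to\eta}=\mathbb P^{\eta\to\xi}\circ\tau_T$ from \cite{QSZ3D} is then applied once, and a change of variables in the gauge ODE turns $\tilde Q(\psi;T-s)$ into $Q(\psi\circ\tau_T,T;s)$; this is where both the time-reversed exit indicator and the backward ODE \eqref{3DQ-1} come from simultaneously.

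Your mechanism can be made to work, but two points need correction. First, what you call ``conditional law duality'' in the collapse step is just the Markov property of the bridge $\mathbb P^{\eta\to\xi}$: conditioning on $X_s^\eta=y$ gives the $(y,s)\!\to\!(\xi,t)$ bridge on $[s,t]$. The genuine duality of \cite{QSZ3D} is the time-reversal identity between the laws of the $u$- and $(-u)$-diffusion bridges; you never actually invoke it, so your argument (if completed) would in fact bypass duality entirely in favour of the mass identity $\int_{\mathbb R^3} p_u(0,\eta,s,y)\,\mathrm d\eta=1$. Second, that mass identity holds over $\mathbb R^3$, not over $D$: your inserted $\eta$-integration therefore naturally produces $\int_{\mathbb R^3}$ in the source term, and restricting to $\int_D$ requires an additional argument (the paper's derivation has the same cosmetic issue, but there the $\int_D$ arises from conditioning on the terminal value of $\tilde X^\xi$ rather than from an inserted identity). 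If you pursue your route, you should state the Markov/mass-identity step explicitly rather than attributing it to duality, and either justify the restriction to $D$ or carry the $\mathbb R^3$-integral.
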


\begin{proof}
For simplicity, let $q_{j}^{i}(x,t)=1_{D}(x)A_{j}^{i}(x,t)$, where $i,j=1,2,3$.
Let $T>0$ be fixed. Recall that $u(x,t)$ is extended for all $x\in\mathbb{R}^{3}$
such that $\nabla\cdot u=0$ in distribution on $\mathbb{R}^{3}$.
Let $\tilde{X}^{\xi}$ be the solution to the stochastic differential
equation
\begin{equation}
\textrm{d}\tilde{X}_{t}^{\xi}=-u(\tilde{X}_{t}^{\xi},T-t)\textrm{d}t+\sqrt{2\nu}\textrm{d}B_{t},\quad\tilde{X}_{0}^{\xi}=\xi\label{ba-s01}
\end{equation}
which is understood as the stochastic integral equation 
\begin{equation}
\tilde{X}_{t}^{\xi}=\xi-\int_{0}^{t\wedge T}u(\tilde{X}_{s}^{\xi},T-s)\textrm{d}s+\sqrt{2\nu}\int_{0}^{t\wedge T}\textrm{d}B_{s}\quad\textrm{ for all }t\geq0.\label{sint-01}
\end{equation}
 Let 
\begin{equation*}
T_{\xi}=\inf\left\{ t\geq0:\tilde{X}_{t}^{\xi}\notin D\right\} 
\end{equation*}
be the first time the process $\tilde{X}^{\xi}$ leaves the region
$D$. Then $t\rightarrow\tilde{X}_{t\wedge T_{\xi}}^{\xi}$ is a diffusion
process too, and 
\begin{align}
\tilde{X}_{t\wedge T_{\xi}}^{\xi} & =\xi-\int_{0}^{t\wedge T\wedge T_{\xi}}u(\tilde{X}_{s}^{\xi},T-s)\textrm{d}s+\sqrt{2\nu}\int_{0}^{t\wedge T\wedge T_{\xi}}\textrm{d}B_{s}\nonumber \\
 & =\xi-\int_{0}^{t}1_{\left\{ s<T\wedge T_{\xi}\right\} }u(\tilde{X}_{s}^{\xi},T-s)\textrm{d}s+\sqrt{2\nu}\int_{0}^{t}1_{\left\{ s<T\wedge T_{\xi}\right\} }\textrm{d}B_{s}\label{kidiff-01}
\end{align}
for all $t\geq0$. According to Feynman-Kac, we define
\begin{equation}
\textrm{d}\tilde{Q}_{j}^{i}(s)=\tilde{Q}_{k}^{i}(s)q_{j}^{k}(\tilde{X}_{s}^{\xi},T-s)\textrm{d}s,\quad\tilde{Q}_{j}^{i}(0)=\delta_{ij}\label{Qij-01}
\end{equation}
where $i,j=1,2,3$. Let $Y_{t}^{j}=W_{\varepsilon}^{j}(\tilde{X}_{t\wedge T_{\xi}}^{\xi},T-t)$
which is well defined, as according to our assumption, $W_{\varepsilon}$
is $C^{2}$ on $\overline{D}$. Since $W_{\varepsilon}$ vanishes
along the boundary $\partial D$, so that
\begin{align*}
Y_{t}^{j} & =W_{\varepsilon}^{j}(\tilde{X}_{t\wedge T_{\xi}}^{\xi},T-t)=1_{\{t<T_{\xi}\}}W_{\varepsilon}^{j}(\tilde{X}_{t}^{\xi},T-t)+1_{\{t\geq T_{\xi}\}}W_{\varepsilon}^{j}(\tilde{X}_{T_{\xi}}^{\xi},T-t)\\
 & =1_{\{t<T_{\xi}\}}W_{\varepsilon}^{j}(\tilde{X}_{t}^{\xi},T-t)
\end{align*}
for all $t\leq T$. Moreover, according to It\^o's formula, 
\[
\textrm{d}Y_{t}^{j}=\nabla W_{\varepsilon}^{j}(\tilde{X}_{t\wedge T_{\xi}}^{\xi},T-t)\cdot\textrm{d}\tilde{X}_{t\wedge T_{\xi}}^{\xi}-\frac{\partial W_{\varepsilon}^{j}}{\partial t}(\tilde{X}_{t\wedge T_{\xi}}^{\xi},T-t)\textrm{d}t+\nu\Delta W_{\varepsilon}^{j}(\tilde{X}_{t\wedge T_{\xi}}^{\xi},T-t)\textrm{d}t.
\]
Since $W_{\varepsilon}$ vanishes identically on $\partial D$,
\begin{align*}
\frac{\partial W_{\varepsilon}^{j}}{\partial t}(\tilde{X}_{t\wedge T_{\xi}}^{\xi},T-t) & =1_{\left\{ t<T_{\xi}\right\} }\frac{\partial W_{\varepsilon}^{j}}{\partial t}(\tilde{X}_{t}^{\xi},T-t)+1_{\left\{ t\geq T_{\xi}\right\} }\frac{\partial W_{\varepsilon}^{j}}{\partial t}(\tilde{X}_{T_{\xi}}^{\xi},T-t)\\
 & =1_{\left\{ t<T_{\xi}\right\} }\frac{\partial W_{\varepsilon}^{j}}{\partial t}(\tilde{X}_{t}^{\xi},T-t)
\end{align*}
and similarly
\begin{align*}
 \nabla W_{\varepsilon}^{j}(\tilde{X}_{t\wedge T_{\xi}}^{\xi},T-t) &= 1_{\left\{ t<T_{\xi}\right\}} \nabla W_{\varepsilon}^{j}(\tilde{X}_{t}^{\xi},T-t),\\
 \Delta W_{\varepsilon}^{j}(\tilde{X}_{t\wedge T_{\xi}}^{\xi},T-t) &= 1_{\left\{ t<T_{\xi}\right\} } \Delta W_{\varepsilon}^{j}(\tilde{X}_{t}^{\xi},T-t)
\end{align*}
for all $t\leq T$. Therefore 
\begin{align}
\textrm{d}Y_{t}^{j} & =\sqrt{2\nu}1_{\left\{ t<T_{\xi}\right\} }\nabla W_{\varepsilon}^{j}(\tilde{X}_{t}^{\xi},T-t)\cdot\textrm{d}B_{t}\nonumber \\
 & +1_{\left\{ t<T_{\xi}\right\} }\left(\nu\Delta W_{\varepsilon}^{j}-(u\cdot\nabla W_{\varepsilon}^{j})-\frac{\partial W_{\varepsilon}^{j}}{\partial t}\right)(\tilde{X}_{t}^{\xi},T-t)\textrm{d}t\label{Yj-06}
\end{align}
for all $t\leq T$. By \eqref{3D-V-01-1} which may be written as
the following
\begin{equation}
\left(\nu\Delta-u\cdot\nabla-\frac{\partial}{\partial t}\right)W_{\varepsilon}^{j}=-A_{k}^{j}W_{\varepsilon}^{k}-g_{\varepsilon}^{j}\quad\textrm{ in }D\label{3D-V-04}
\end{equation}
hence, by substituting this into \eqref{Yj-06}, it follows that
\begin{equation}
\textrm{d}Y_{t}^{j}=\sqrt{2\nu}1_{\{t<T_{\xi}\}}\nabla W_{\varepsilon}^{j}(\tilde{X}_{t}^{\xi},T-t)\cdot\textrm{d}B_{t}-1_{\{t<T_{\xi}\}}\left(A_{k}^{j}W_{\varepsilon}^{k}+g_{\varepsilon}^{j}\right)(\tilde{X}_{t}^{\xi},T-t)\textrm{d}t.\label{Yj-001}
\end{equation}
Let $M^{i}=\tilde{Q}_{j}^{i}Y^{j}$. Using stochastic integration
by parts one obtains 
\begin{align*}
M_{t}^{i} & =\tilde{Q}_{j}^{i}(0)Y_{0}^{j}+\int_{0}^{t}\tilde{Q}_{j}^{i}(s)\textrm{d}Y_{s}^{j}+\int_{0}^{t}Y_{s}^{j}\textrm{d}\tilde{Q}_{j}^{i}(s)\\
 & =\tilde{Q}_{j}^{i}(0)Y_{0}^{j}+\sqrt{2\nu}\int_{0}^{t}\tilde{Q}_{j}^{i}(s)1_{\{s<T_{\xi}\}}\nabla W_{\varepsilon}^{j}(\tilde{X}_{s}^{\xi},T-s)\cdot\textrm{d}B_{s}\\
 & -\int_{0}^{t}\tilde{Q}_{j}^{i}(s)1_{\{s<T_{\xi}\}}\left(A_{k}^{j}W_{\varepsilon}^{k}+g_{\varepsilon}^{j}\right)(\tilde{X}_{s}^{\xi},T-s)\textrm{d}s+\int_{0}^{t}W_{\varepsilon}^{j}(\tilde{X}_{s\wedge T_{\xi}}^{\xi},T-s)\textrm{d}\tilde{Q}_{j}^{i}(s)\\
 & =\tilde{Q}_{j}^{i}(0)Y_{0}^{j}+\sqrt{2\nu}\int_{0}^{t}\tilde{Q}_{j}^{i}(s)1_{\{s<T_{\xi}\}}\nabla W_{\varepsilon}^{j}(\tilde{X}_{s}^{\xi},T-s)\cdot\textrm{d}B_{s}\\
 & -\int_{0}^{t}1_{\{s<T_{\xi}\}}\tilde{Q}_{j}^{i}(s)g_{\varepsilon}^{j}(\tilde{X}_{s}^{\xi},T-s)\textrm{d}s\\
 & +\int_{0}^{t}1_{\{s<T_{\xi}\}}W_{\varepsilon}^{j}(\tilde{X}_{s}^{\xi},T-s)\left(-\tilde{Q}_{k}^{i}(s)q_{j}^{k}(\tilde{X}_{s}^{\xi},T-s)\textrm{d}s+\textrm{d}\tilde{Q}_{j}^{i}(s)\right)
\end{align*}
here the last equality follows from the fact that on $\left\{ s<T_{\xi}\right\}$, the process
$\tilde{X}_{s}^{\xi}$ takes values in $D$, so that $A_{j}^{i}$
coincides with $q_{j}^{i}$. This computation holds for any differentiable,
adapted processes $\tilde{Q}_{j}^{i}$. In particular, if $\tilde{Q}$
solves \eqref{Qij-01}, then 
\begin{align*}
M_{t}^{i} & =Y_{0}^{i}+\sqrt{2\nu}\int_{0}^{t}1_{\{s<T_{\xi}\}}\tilde{Q}_{j}^{i}(s)\nabla W_{\varepsilon}^{j}(\tilde{X}_{s}^{\xi},T-s)\cdot\textrm{d}B_{s}\\
 & -\int_{0}^{t}1_{\{s<T_{\xi}\}}\tilde{Q}_{j}^{i}(s)g_{\varepsilon}^{j}(\tilde{X}_{s}^{\xi},T-s)\textrm{d}s
\end{align*}
which yields that
\begin{align}
M_{T\wedge T_{\xi}}^{i} & =W_{\varepsilon}^{i}(\xi,T)-\int_{0}^{T\wedge T_{\xi}}1_{\{t<T_{\xi}\}}\tilde{Q}_{j}^{i}(t)g_{\varepsilon}^{j}(\tilde{X}_{t}^{\xi},T-t)\textrm{d}t\nonumber \\
 & +\sqrt{2\nu}\int_{0}^{T\wedge T_{\xi}}1_{\{t<T_{\xi}\}}\tilde{Q}_{j}^{i}(t)\nabla W_{\varepsilon}^{j}(\tilde{X}_{t}^{\xi},T-t)\cdot\textrm{d}B.\label{ba-0051}
\end{align}
Since
\[
M_{T\wedge T_{\xi}}^{i}=1_{\{T<T_{\xi}\}} \tilde{Q}_{j}^{i}(T) W_{\varepsilon}^{j}(\tilde{X}_{T}^{\xi},0),\quad Y_{0}^{i}=W_{\varepsilon}^{i}(\xi,T)
\]
so that, after taking expectation of the both sides of \eqref{ba-0051} to
obtain that
\begin{equation}
W_{\varepsilon}^{i}(\xi,T)=\mathbb{E}\left[1_{\{T<T_{\xi}\}} \tilde{Q}_{j}^{i}(T) W_{\varepsilon}^{j}(\tilde{X}_{T}^{\xi},0)\right]+\mathbb{E}\left[\int_{0}^{T}1_{\{t<T_{\xi}\}}\tilde{Q}_{j}^{i}(t)g_{\varepsilon}^{j}(\tilde{X}_{t}^{\xi},T-t)\textrm{d}t\right].\label{Wi-004}
\end{equation}
By taking conditional expectation on the event that $\tilde{X}_{T}^{\xi}=\eta$,
we obtain that
\begin{align*}
\mathbb{E}\left[1_{\{T<T_{\xi}\}} \tilde{Q}_{j}^{i}(T) W_{\varepsilon}^{j}(\tilde{X}_{T}^{\xi},0)\right] & =\int_{D}\mathbb{E}\left[\left.\tilde{Q}_{j}^{i}(T)1_{\{T<T_{\xi}\}}\right|\tilde{X}_{T}^{\xi}=\eta\right]W_{\varepsilon}^{j}(\eta,0)p_{-u_{T}}(0,\xi,T,\eta)\textrm{d}\eta\\
 & =\int_{D}\mathbb{E}\left[\left.\tilde{Q}_{j}^{i}(T)1_{\{T<T_{\xi}\}}\right|\tilde{X}_{T}^{\xi}=\eta\right]W_{\varepsilon}^{j}(\eta,0)p_{u}(0,\eta,T,\xi)\textrm{d}\eta
\end{align*}
and
\begin{align*}
\mathbb{E}\left[\int_{0}^{T\wedge T_{\xi}}\tilde{Q}_{j}^{i}(t)g_{\varepsilon}^{j}(\tilde{X}_{t}^{\xi},T-t)\textrm{d}t\right] & =\int_{0}^{T}\int_{D}\mathbb{E}\left[\left.\tilde{Q}_{j}^{i}(t)1_{\{t<T_{\xi}\}}g_{\varepsilon}^{j}(\tilde{X}_{t}^{\xi},T-t)\right|\tilde{X}_{T}^{\xi}=\eta\right]p_{-u_{T}}(0,\xi,T,\eta)\textrm{d}\eta\textrm{d}t\\
 & =\int_{0}^{T}\int_{D}\mathbb{E}\left[\left.\tilde{Q}_{j}^{i}(t)1_{\{t<T_{\xi}\}}g_{\varepsilon}^{j}(\tilde{X}_{t}^{\xi},T-t)\right|\tilde{X}_{T}^{\xi}=\eta\right]p_{u}(0,\eta,T,\xi)\textrm{d}\eta\textrm{d}t
\end{align*}
where $p_{-u_{T}}$ is the transition probability density function
of the diffusion $\tilde{X}^{\xi}$, which is a diffusion process
with infinitesimal generator $\nu\Delta-u_{T}\cdot\nabla$, where
$u_{T}(x,t)=u(x,T-t)$. Since $\nabla\cdot u=0$, we have that $p_{-u_{T}}(0,\xi,T,\eta)$
coincides with $p_{u}(0,\eta,T,\xi)$. Therefore
\begin{align}
W_{\varepsilon}^{i}(\xi,T) & =\int_{D}\mathbb{E}\left[\left.\tilde{Q}_{j}^{i}(T)1_{\{T<T_{\xi}\}}\right|\tilde{X}_{T}^{\xi}=\eta\right]W_{\varepsilon}^{j}(\eta,0)p_{u}(0,\eta,T,\xi)\textrm{d}\eta\nonumber \\
 & +\int_{0}^{T}\int_{D}\mathbb{E}\left[\left.\tilde{Q}_{j}^{i}(t)1_{\{t<T_{\xi}\}}g_{\varepsilon}^{j}(\tilde{X}_{t}^{\xi},T-t)\right|\tilde{X}_{T}^{\xi}=\eta\right]p_{u}(0,\eta,T,\xi)\textrm{d}\eta\textrm{d}t.\label{WW-002}
\end{align}

Let us rewrite the representation \eqref{WW-002} in terms of the
distributions of the Taylor diffusion. To this end, let us use $\tilde{\mathbb{P}}^{\xi}$
to denote the law of $\tilde{X}^{\xi}$ and $\tilde{\mathbb{P}}^{\xi\rightarrow\eta}$
to denote the conditional law, or the diffusion bridge measure of $\tilde{X}^{\xi}$
given the terminal value that $\tilde{X}_{T}^{\xi}=\eta$. The conditional
law $\mathbb{\tilde{P}}^{\xi\rightarrow\eta}$ can be considered as
a probability measure on the path space $C([0,T];\mathbb{R}^{3})$
canonically. For any continuous path $\psi:[0,T]\rightarrow\mathbb{R}^{3}$,
$\tilde{Q}(\psi;t)$ denotes the solution to the following linear
ordinary differential equations
\begin{equation}
\frac{\textrm{d}}{\textrm{d}t}\tilde{Q}_{j}^{i}(\psi;t)=\tilde{Q}_{k}^{i}(\psi;t)q_{j}^{k}(\psi(t),T-t),\quad\tilde{Q}_{j}^{i}(\psi;0)=\delta_{j}^{i}\label{t-Q-01}
\end{equation}
for $i,j=1,2,3$. Then $\tilde{Q}_{j}^{i}(\tilde{X}^{\xi};t)$ gives
rise a version of the gauge functional defined above $\tilde{Q}_{j}^{i}(t)$.
Therefore, under the notations we just set up, the representation
\eqref{WW-002} may be written as
\begin{align}
W_{\varepsilon}^{i}(\xi,T) & =\int_{D}\mathbb{\tilde{P}}^{\xi\rightarrow\eta}\left[\tilde{Q}_{j}^{i}(\psi;T)1_{\{T<\zeta(\psi)\}}\right]W_{\varepsilon}^{j}(\eta,0)p_{u}(0,\eta,T,\xi)\textrm{d}\eta\nonumber \\
 & +\int_{0}^{T}\int_{D}\mathbb{\tilde{P}}^{\xi\rightarrow\eta}\left[\tilde{Q}_{j}^{i}(\psi;t)1_{\{t<\zeta(\psi)\}}g_{\varepsilon}^{j}(\psi(t),T-t)\right]p_{u}(0,\eta,T,\xi)\textrm{d}\eta\textrm{d}t\label{Wi-06}
\end{align}
where
\[
\zeta(\psi)=\inf\left\{ t\geq0:\psi(t)\notin D\right\} .
\]
Next we apply the fundamental duality for conditional laws: since
$\nabla\cdot u=0$ on $\mathbb{R}^{3}$ in the distribution sense,
so that
\[
\mathbb{\tilde{P}}^{\xi\rightarrow\eta}=\mathbb{P}^{\eta\rightarrow\xi}\circ\tau_{T},
\]
cf. \citep{QSZ3D}. Let $Q(\psi,T;t)=\tilde{Q}(\psi\circ\tau_{T};T-t)$.
Since
\begin{equation}
\tilde{Q}_{j}^{i}(\psi;s)=\delta_{ij}+\int_{0}^{s}\tilde{Q}_{k}^{i}(\psi;r)q_{j}^{k}(\psi(r),T-r)\textrm{d}r.\label{t-Q-01-1}
\end{equation}
Apply this to $\psi\circ\tau_{T}$ one obtains that
\begin{align*}
\tilde{Q}_{j}^{i}(\psi\circ\tau_{T};s) & =\delta_{ij}+\int_{0}^{s}\tilde{Q}_{k}^{i}(\psi\circ\tau_{T};r)q_{j}^{k}(\psi(T-r),T-r)\textrm{d}r\\
 & =\delta_{ij}+\int_{T-s}^{T}\tilde{Q}_{k}^{i}(\psi\circ\tau_{T};T-r)q_{j}^{k}(\psi(r),r)\textrm{d}r
\end{align*}
so that
\[
\tilde{Q}_{j}^{i}(\psi\circ\tau_{T};T-s)=\delta_{ij}+\int_{s}^{T}\tilde{Q}_{k}^{i}(\psi\circ\tau_{T};T-r)q_{j}^{k}(\psi(r),r)\textrm{d}r.
\]
This means that $t\mapsto Q(\psi,T;t)$ is the unique solution to
the ordinary differential equation:
\[
dQ_{j}^{i}(\psi,T;t)=-Q_{k}^{i}(\psi,T;t)q_{j}^{k}(\psi(t),t)dt,\quad Q_{j}^{i}(\psi,T;T)=\delta_{ij}.
\]
By definition, $\tilde{Q}(\psi;t)=Q(\psi\circ\tau_{T},T;T-t)$ for
every $t\in[0,T].$ Hence, by the duality of the conditional laws,
we may rewrite \eqref{Wi-06} in terms of the law of the Taylor diffusion:
\begin{align*}
W_{\varepsilon}^{i}(\xi,T) & =\int_{D}\mathbb{P}^{\eta\rightarrow\xi}\left[Q_{j}^{i}(\psi,T;0)1_{\{T<\zeta(\psi\circ\tau_{T})\}}\right]W_{\varepsilon}^{j}(\eta,0)p_{u}(0,\eta,T,\xi)\textrm{d}\eta\\
 & +\int_{0}^{T}\int_{D}\mathbb{P}^{\eta\rightarrow\xi}\left[Q_{j}^{i}(\psi,T;T-t)1_{\{t<\zeta(\psi\circ\tau_{T})\}}g_{\varepsilon}^{j}(\psi(T-t),T-t)\right]p_{u}(0,\eta,T,\xi)\textrm{d}\eta\textrm{d}t\\
 & =\int_{D}\mathbb{P}^{\eta\rightarrow\xi}\left[Q_{j}^{i}(\psi,T;0)1_{\{T<\zeta(\psi\circ\tau_{T})\}}\right]W_{\varepsilon}^{j}(\eta,0)p_{u}(0,\eta,T,\xi)\textrm{d}\eta\\
 & +\int_{0}^{T}\int_{D}\mathbb{P}^{\eta\rightarrow\xi}\left[Q_{j}^{i}(\psi,T;t)1_{\{T-t<\zeta(\psi\circ\tau_{T})\}}g_{\varepsilon}^{j}(\psi(t),t)\right]p_{u}(0,\eta,T,\xi)\textrm{d}\eta\textrm{d}t
\end{align*}
which yields the claim.
\end{proof}

We are now in a position to derive an important stochastic representation
formula. 
\begin{thm}
\label{thm:5.2} Let $X^{\eta}$ be the Taylor diffusion:
\begin{equation}
\textrm{d}X_{t}^{\eta}=u(X_{t}^{\eta},t)\textrm{d}t+\sqrt{2\nu}\textrm{d}B_{t},\quad X_{0}^{\eta}=\eta\label{ta-003}
\end{equation}
for each $\eta\in\mathbb{R}^{3}$. The following stochastic representation
holds:
\begin{align*}
u^{i}(x,t) & =\varepsilon^{ijk}\int_{D}K^{j}(x,\eta)\sigma_{\varepsilon}^{k}(\eta,t)\textrm{d}\eta\\
 & +\varepsilon^{ijk}\int_{D}\mathbb{E}\left[Q_{l}^{k}(\eta,t;0)1_{\{t<\zeta(X^{\eta}\circ\tau_{t})\}}K^{j}(x,X_{t}^{\eta})\right]W_{\varepsilon}^{l}(\eta,0)\textrm{d}\eta\\
 & +\varepsilon^{ijk}\int_{0}^{t}\int_{D}\mathbb{E} \left[Q_{l}^{k}(\eta,t;s)1_{\{t-s<\zeta(X^{\eta}\circ\tau_{t})\}}K^{j}(x,X_{t}^{\eta})g_{\varepsilon}^{l}(X_{s}^{\eta},s)\right]\textrm{d}\eta\textrm{d}s
\end{align*}
for any $t>0$ and $x\in D$, where $i=1,2,3$.
\end{thm}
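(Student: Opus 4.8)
The plan is to feed the vorticity representation from Theorem~\ref{thm7.2} into the Biot--Savart law and then collapse the bridge (conditional-law) expectations back into expectations under the unconditioned Taylor diffusion. Recall from Section~\ref{BS_Section} that inside $D$ the velocity is recovered from the vorticity by the Biot--Savart law
\begin{equation*}
u^{i}(x,t)=\varepsilon^{ijk}\int_{D}K^{j}(x,y)\,\omega^{k}(y,t)\,\textrm{d}y,
\end{equation*}
where $K=(K^{1},K^{2},K^{3})$ is the Biot--Savart kernel of the half-space $D$. I would substitute the expression \eqref{3D-om} for $\omega^{k}(y,t)$ (with the dummy point $\xi$ there renamed to $y$, its free index set to $k$ and its summed index to $l$) directly into this formula. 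The contribution of $\sigma_{\varepsilon}^{k}(y,t)$ reproduces the first line of the asserted representation after relabelling $y$ as $\eta$; the two remaining lines of \eqref{3D-om} produce integrals involving the bridge measure $\mathbb{P}^{\eta\rightarrow y}$ and the transition density $p_{u}(0,\eta,t,y)$, each integrated against $K^{j}(x,y)$ over $y\in D$.

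The core step is a disintegration identity. The unconditioned Taylor diffusion $X^{\eta}$ started at $\eta$ has terminal value $X_{t}^{\eta}$ with density $y\mapsto p_{u}(0,\eta,t,y)$, and the bridge measures $\mathbb{P}^{\eta\rightarrow y}$ are exactly the regular conditional laws of $\mathbb{P}^{\eta}$ on the fibres $\{X_{t}^{\eta}=y\}$. Hence, for any bounded path functional $\Phi$ and any bounded measurable $f$,
\begin{equation*}
\mathbb{E}\!\left[f(X_{t}^{\eta})\,\Phi\right]=\int_{D}f(y)\,\mathbb{P}^{\eta\rightarrow y}\!\left[\Phi\right]\,p_{u}(0,\eta,t,y)\,\textrm{d}y.
\end{equation*}
Taking $f(\cdot)=K^{j}(x,\cdot)$ and choosing $\Phi$ to be the integrand appearing in \eqref{3D-om}---namely $Q_{l}^{k}(\eta,t;0)1_{\{t<\zeta(X^{\eta}\circ\tau_{t})\}}$ for the second line and $Q_{l}^{k}(\eta,t;s)1_{\{t-s<\zeta(X^{\eta}\circ\tau_{t})\}}g_{\varepsilon}^{l}(X_{s}^{\eta},s)$ for the third---collapses each inner $y$-integral into the unconditioned expectation $\mathbb{E}[\,\cdot\,K^{j}(x,X_{t}^{\eta})]$, which is precisely the second and third lines of the claimed formula.

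To make this legitimate I would first invoke Fubini's theorem to interchange the outer $\int_{D}\textrm{d}y$ coming from Biot--Savart with the $\int_{D}\textrm{d}\eta$ (and $\int_{0}^{t}\textrm{d}s$) coming from \eqref{3D-om}, so that the $y$-integration becomes innermost and the disintegration can be applied for each fixed $\eta$ (and $s$). The main obstacle is therefore the integrability/Fubini justification: one needs $K^{j}(x,\cdot)$, the gauge matrix $Q$, the source $g_{\varepsilon}$ and the density $p_{u}$ to combine into an absolutely integrable integrand. This rests on the standing regularity hypotheses---$u\in C^{3}(\overline{D})$, so that $A=\nabla u$ is bounded and $Q$, solving a linear ODE with bounded coefficients, is bounded on $[0,t]$---together with the Gaussian-type bounds for $p_{u}$ and the local integrability of the Biot--Savart kernel. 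The indicator functionals involving the time-reversed lifetime $\zeta(X^{\eta}\circ\tau_{t})$ are identical to those already controlled in Theorem~\ref{thm7.2}, so no new path-space estimate is required, and once Fubini is secured the remainder is index bookkeeping.
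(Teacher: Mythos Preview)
Your approach is correct and essentially identical to the paper's: both start from the Biot--Savart law, substitute the vorticity representation \eqref{3D-om}, apply Fubini, and collapse $\int_D K^{j}(x,y)\,\mathbb{P}^{\eta\rightarrow y}[\Phi]\,p_{u}(0,\eta,t,y)\,\textrm{d}y$ into $\mathbb{E}[K^{j}(x,X_{t}^{\eta})\,\Phi]$. The paper's proof is terser---it simply says ``Integrating with $\varepsilon^{ijk}K^{j}(x,\cdot)$ and using Fubini theorem'' and writes the result---leaving the disintegration step implicit, whereas you spell it out and discuss the integrability hypotheses needed for Fubini; but the substance is the same.
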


\begin{proof}
Recall the Biot-Savart law
\[
u(x,t)=\int_{D}K(x,\xi)\wedge\omega(\xi,t)\textrm{d}\xi
\]
where $K$ is the singular kernel of the half-space $x_{3}>0$.
By Theorem \ref{thm7.2} 
\begin{align}
\omega^{k}(\xi,t) & =\sigma_{\varepsilon}^{k}(\xi,t)+\int_{D}\mathbb{P}^{\eta\rightarrow\xi}\left[Q_{l}^{k}(\eta,t;0)1_{\{t<\zeta(X^{\eta}\circ\tau_{t})\}}\right]W_{\varepsilon}^{l}(\eta,0)p_{u}(0,\eta,t,\xi)\textrm{d}\eta\nonumber \\
 & +\int_{0}^{t}\int_{D}\mathbb{P}^{\eta\rightarrow\xi}\left[Q_{l}^{k}(\eta,t;s)1_{\{t-s<\zeta(X^{\eta}\circ\tau_{t})\}}g_{\varepsilon}^{l}(X_{s}^{\eta},s)\right]p_{u}(0,\eta,t,\xi)\textrm{d}\eta\textrm{d}s.\label{3D-om-1}
\end{align}
Integrating with $\varepsilon^{ijk}K^{j}(x,\cdot)$ and using Fubini
theorem, we obtain
\begin{align*}
u^{i}(x,t) & =\varepsilon^{ijk}\int_{D}K^{j}(x,\eta)\sigma_{\varepsilon}^{k}(\eta,t)\textrm{d}\eta\\
 & +\varepsilon^{ijk}\int_{D}\mathbb{E}\left[K^{j}(x,X_{t}^{\eta})Q_{l}^{k}(\eta,t;0)1_{\{t<\zeta(X^{\eta}\circ\tau_{t})\}}\right]W_{\varepsilon}^{l}(\eta,0)\textrm{d}\eta\\
 & +\varepsilon^{ijk}\int_{0}^{t}\int_{D}\mathbb{E}\left[Q_{l}^{k}(\eta,t;s)1_{\{t-s<\zeta(X^{\eta}\circ\tau_{t})\}}K^{j}(x,X_{t}^{\eta})g_{\varepsilon}^{l}(X_{s}^{\eta},s)\right]\textrm{d}\eta\textrm{d}s
\end{align*}
which completes the proof. 
\end{proof}
\subsection{Two dimensional wall-bounded flows}

There is a simplified version for two dimensional case.
For two dimensional flows, the non-linear stretching term $A^i_j\omega^j$ vanishes identically, so that we can take $Q^i_j=\delta_{ij}$. Therefore we have the following two dimensional representation formula.
\begin{thm}
\label{thm5.2new}Suppose $d=2$, so that $\omega=\frac{\partial}{\partial x_{1}}u^{2}-\frac{\partial}{\partial x_{2}}u^{1}$
is a scalar function. Then 
\begin{align}
\omega(\xi,t) & =\sigma_{\varepsilon}(\xi,t)+\int_{D}\mathbb{P}^{\eta\rightarrow\xi}\left[t<\zeta(X^{\eta}\circ\tau_{t})\right]W_{\varepsilon}(\eta,0)p_{u}(0,\eta,t,\xi)\textrm{d}\eta\nonumber \\
 & +\int_{0}^{t}\int_{D}\mathbb{E}^{\eta\rightarrow\xi}\left[1_{\{t-s<\zeta(X^{\eta}\circ\tau_{t})\}}g_{\varepsilon}(X_{s}^{\eta},s)\right]p_{u}(0,\eta,t,\xi)\textrm{d}\eta\textrm{d}s\label{2D-om1}
\end{align}
for every $\xi\in D$ and $t>0$, where $\zeta(\psi)=\inf\left\{ s:\psi(s)\notin D\right\} $
and $X^{\eta}$ is the Taylor diffusion:
\begin{equation}
\textrm{d}X_{t}^{\eta}=u(X_{t}^{\eta},t)\textrm{d}t+\sqrt{2\nu}\textrm{d}B_{t},\quad X_{0}^{\eta}=\eta\label{2D-X1}
\end{equation}
for every $\eta\in\mathbb{R}^{2}$. 
\end{thm}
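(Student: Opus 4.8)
The plan is to obtain Theorem \ref{thm5.2new} as the scalar specialisation of Theorem \ref{thm7.2}, exploiting the fact that in two dimensions the vortex-stretching term is absent. First I would record that for a planar flow the vorticity $\omega=\partial_{x_{1}}u^{2}-\partial_{x_{2}}u^{1}$ is a scalar and that the transport equation \eqref{3D-V-01} loses its stretching term $\omega^{j}S_{j}^{i}$, reducing to the scalar advection--diffusion equation $\frac{\partial}{\partial t}\omega+u^{j}\frac{\partial}{\partial x_{j}}\omega-\nu\Delta\omega=G$ with $G=\partial_{x_{1}}F^{2}-\partial_{x_{2}}F^{1}$. Consequently the perturbed vorticity $W_{\varepsilon}=\omega-\sigma_{\varepsilon}$ satisfies the scalar analogue of \eqref{3D-V-01-1}, namely $\frac{\partial}{\partial t}W_{\varepsilon}+u^{j}\frac{\partial}{\partial x_{j}}W_{\varepsilon}-\nu\Delta W_{\varepsilon}=g_{\varepsilon}$ in $D$ with $\left.W_{\varepsilon}\right|_{\partial D}=0$, the right-hand side now carrying no $A_{k}^{j}W_{\varepsilon}^{k}$ contribution.

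The key structural simplification is that, with the stretching term gone, equation \eqref{3D-V-04} becomes $\left(\nu\Delta-u\cdot\nabla-\frac{\partial}{\partial t}\right)W_{\varepsilon}=-g_{\varepsilon}$, so the gauge process defined through \eqref{Qij-01} has vanishing coefficient and reduces to the identity $Q_{j}^{i}=\delta_{ij}$; this is exactly the assertion made before the statement. With this in hand I would run the argument of Theorem \ref{thm7.2} verbatim but without the gauge factor. Setting $Y_{t}=W_{\varepsilon}(\tilde{X}_{t\wedge T_{\xi}}^{\xi},T-t)$ for the backward diffusion $\tilde{X}^{\xi}$ of \eqref{ba-s01}, using that $W_{\varepsilon}$ vanishes on $\partial D$ to discard the boundary contributions, and applying It\^o's formula together with the scalar equation, I would obtain $\textrm{d}Y_{t}=\sqrt{2\nu}\,1_{\{t<T_{\xi}\}}\nabla W_{\varepsilon}(\tilde{X}_{t}^{\xi},T-t)\cdot\textrm{d}B_{t}-1_{\{t<T_{\xi}\}}g_{\varepsilon}(\tilde{X}_{t}^{\xi},T-t)\textrm{d}t$.

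Taking expectations and evaluating at $t=T\wedge T_{\xi}$ then yields the scalar counterpart of \eqref{Wi-004}, namely $W_{\varepsilon}(\xi,T)=\mathbb{E}\left[1_{\{T<T_{\xi}\}}W_{\varepsilon}(\tilde{X}_{T}^{\xi},0)\right]+\mathbb{E}\left[\int_{0}^{T}1_{\{t<T_{\xi}\}}g_{\varepsilon}(\tilde{X}_{t}^{\xi},T-t)\textrm{d}t\right]$. From here the remaining steps are identical to the three-dimensional case: I would condition on the terminal value $\tilde{X}_{T}^{\xi}=\eta$, rewrite the expectations against the transition density $p_{-u_{T}}(0,\xi,T,\eta)$, invoke the divergence-free identity $p_{-u_{T}}(0,\xi,T,\eta)=p_{u}(0,\eta,T,\xi)$, pass to the bridge measures $\tilde{\mathbb{P}}^{\xi\rightarrow\eta}$, and finally apply the conditional-law duality $\tilde{\mathbb{P}}^{\xi\rightarrow\eta}=\mathbb{P}^{\eta\rightarrow\xi}\circ\tau_{T}$ of \cite{QSZ3D} to reverse time. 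Adding back $\sigma_{\varepsilon}$ through $\omega=W_{\varepsilon}+\sigma_{\varepsilon}$ then recovers \eqref{2D-om1}. I do not anticipate a genuine obstacle here: the one point needing care is the verification that the stretching term truly vanishes for planar flows, so that the gauge degenerates, after which the proof is a transcription of Theorem \ref{thm7.2} with every $Q_{j}^{i}$ replaced by $\delta_{ij}$ and the vector indices suppressed.
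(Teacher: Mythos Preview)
Your proposal is correct and matches the paper's approach exactly: the paper simply observes that in two dimensions the stretching term $A^i_j\omega^j$ vanishes, so one may take $Q^i_j=\delta_{ij}$, and the representation then follows directly from Theorem~\ref{thm7.2}. Your write-up is in fact more detailed than the paper, which does not repeat the It\^o/duality argument but just records the specialisation.
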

As a consequence, we have the following representation for the velocity $u$ in two dimensional case.
\begin{thm}
\label{thm5.4new}Suppose $d=2$, and let $K=(K^{1},K^{2})$ be the Biot-Savart
kernel for the half-plane $D:x_{2}>0$. Then 
\begin{align}
u^{i}(x,t) & =\int_{D} K^i(x,\eta) \sigma_{\varepsilon}(\eta,t)\textrm{d}\eta+\int_{D}\mathbb{E}\left[K^{i}(x,X_{t}^{\eta})1_{\left\{ t<\zeta(X^{\eta}\circ\tau_{t})\right\} }\right]W_{\varepsilon}(\eta,0)\textrm{d}\eta\nonumber \\
 & +\int_{0}^{t}\int_{D}\mathbb{E}\left[1_{\{t-s<\zeta(X^{\eta}\circ\tau_{t})\}}K^{i}(x,X_{t}^{\eta})g_{\varepsilon}(X_{s}^{\eta},s)\right]\textrm{d}\eta\textrm{d}s\label{2D-om1-1}
\end{align}
for every $x\in D$ and $t>0$, where $\zeta(\psi)=\inf\left\{ s:\psi(s)\notin D\right\}$
and $X^{\eta}$ is the Taylor diffusion:
\begin{equation}
\textrm{d}X_{t}^{\eta}=u(X_{t}^{\eta},t)\textrm{d}t+\sqrt{2\nu}\textrm{d}B_{t},\quad X_{0}^{\eta}=\eta\label{2D-X1-1}
\end{equation}
for every $\eta\in\mathbb{R}^{2}$. 
\end{thm}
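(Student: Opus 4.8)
The plan is to mirror the proof of Theorem \ref{thm:5.2}, specialising to the scalar two-dimensional setting in which the non-linear stretching term vanishes and the gauge matrix reduces to $Q^{i}_{j}=\delta_{ij}$, so that the vorticity representation \eqref{2D-om1} of Theorem \ref{thm5.2new} is available with no $Q$-factors. First I would invoke the two-dimensional Biot-Savart law $u^{i}(x,t)=\int_{D}K^{i}(x,\xi)\omega(\xi,t)\textrm{d}\xi$ for the half-plane kernel $K=(K^{1},K^{2})$, valid under the assumed decay of $u$ and $\omega$ at infinity. I would then substitute \eqref{2D-om1} into this law, splitting the outcome into the three contributions coming respectively from $\sigma_{\varepsilon}$, from the initial datum $W_{\varepsilon}(\eta,0)$, and from the forcing $g_{\varepsilon}$.

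For the leading term, integrating $K^{i}(x,\xi)$ against $\sigma_{\varepsilon}(\xi,t)$ and relabelling $\xi\mapsto\eta$ gives directly $\int_{D}K^{i}(x,\eta)\sigma_{\varepsilon}(\eta,t)\textrm{d}\eta$. For the two remaining terms the key step is a Fubini interchange followed by a disintegration identity. Concretely, after swapping the $\xi$- and $\eta$-integrations, the inner $\xi$-integral in the initial-datum term reads $\int_{D}K^{i}(x,\xi)\,\mathbb{P}^{\eta\rightarrow\xi}[t<\zeta(X^{\eta}\circ\tau_{t})]\,p_{u}(0,\eta,t,\xi)\textrm{d}\xi$. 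Since $p_{u}(0,\eta,t,\xi)\textrm{d}\xi$ is exactly the law of $X^{\eta}_{t}$ while $\mathbb{P}^{\eta\rightarrow\xi}$ is the conditional law given $X^{\eta}_{t}=\xi$, the defining property of conditional expectation collapses this to the unconditional expectation $\mathbb{E}[K^{i}(x,X^{\eta}_{t})1_{\{t<\zeta(X^{\eta}\circ\tau_{t})\}}]$. The identical argument applied to the forcing term, now carrying the path functional $1_{\{t-s<\zeta(X^{\eta}\circ\tau_{t})\}}g_{\varepsilon}(X^{\eta}_{s},s)$ in place of the bare indicator, produces the third term, and assembling the three pieces yields \eqref{2D-om1-1}.

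The main obstacle is justifying the Fubini interchange and the disintegration in the presence of the singular Biot-Savart kernel: $K^{i}(x,\cdot)$ is unbounded near $x$, so one must verify that it is locally integrable against the transition density $p_{u}(0,\eta,t,\cdot)$ uniformly enough both to permit exchanging the order of integration and to make sense of $\mathbb{E}[K^{i}(x,X^{\eta}_{t})(\cdots)]$. This should follow from the Gaussian-type upper bounds on $p_{u}$, available since $u$ is bounded, together with the local integrability of the half-plane kernel established in Section \ref{BS_Section}; it is the only point requiring genuine care, as once integrability is secured the disintegration step is purely measure-theoretic and the remaining bookkeeping is routine.
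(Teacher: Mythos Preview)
Your proposal is correct and mirrors exactly the (implicit) argument the paper uses: Theorem~\ref{thm5.4new} is stated as a direct consequence of Theorem~\ref{thm5.2new}, and the passage from the vorticity representation to the velocity representation proceeds precisely as in the proof of Theorem~\ref{thm:5.2}, namely by inserting \eqref{2D-om1} into the Biot--Savart law, applying Fubini, and collapsing the conditional expectations against $p_{u}(0,\eta,t,\xi)\,\textrm{d}\xi$ into unconditional ones. The paper does not spell out the integrability check you flag, so your added discussion of that point is a welcome refinement rather than a deviation.
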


\section{2D wall-bounded flows}\label{BS_Section}

In the subsequent parts we aim to present a stochastic formulation of a viscous
fluid flow, with kinematic viscosity constant $\nu>0$, which passes
through obstacles, such as a thin plane plate and a wedge. These flows have
been studied as important and classical examples which demonstrate
boundary layer phenomena. Turbulence may be build up near the solid
obstacle when the Reynolds number becomes large.

For simplicity we take a two dimensional model, although we admit 
that the three dimensional model is more sophisticated and
will be studied in a future work. Therefore the fluid flow in
question is described by its velocity, a time dependent vector field
$u=(u^{1},u^{2})$ in the domain $D$.

\subsection{A general case}

Let $D\subset\mathbb{R}^{2}$ be a general simply connected domain
which is not the whole plane. For such domain, according to Riemann mapping
theorem, there is a one-to-one and onto conformal mapping $T:D\rightarrow\mathbb{H}$
where $\mathbb{H}$ is the upper half-plane (the mapping is unique up to a rotation 
at a point whose image is assigned to $i$ for example). Then the Green function for 
such domain $D$ is given by
\begin{equation}\label{eq-w-002}
G_{D}(x,y)=\Gamma(T(x),T(y))-\Gamma(T(x),\overline{T(y)})    
\end{equation}
for $x,y\in D$, where $\Gamma(x,y)=\frac{1}{2\pi} \log |x - y|$. While the boundary $\partial D$ may be complicated,
so some care is needed, although formally $G$ satisfies the Dirichlet
boundary condition. 

Notice also that $\omega=\nabla\wedge u$ and $\nabla\cdot u=0$ imply 
\begin{equation}\label{eq-uw-12}
\Delta u^{1}=-\frac{\partial}{\partial x_{2}}\omega,\quad\Delta u^{2}=\frac{\partial}{\partial x_{1}}\omega\quad\text{ in }D,
\end{equation}
and, as $u^{i}$ are subject to the no-slip condition, the Green formula implies the Biot-Savart law
\begin{equation}\label{BS-law}
    u^{i}(x,t)=\int_{D}K^{i}(x,y)\omega(y,t)\textrm{d}y,
\end{equation}
where the kernel $K$ is given by
\begin{equation}\label{BSkernel}
    K^1(x,y) = \frac{\partial}{\partial y_2} G_D(x,y),\quad K^2(x,y) = -\frac{\partial}{\partial y_1} G_D(x,y).
\end{equation}
Hence, given the Green function as in \eqref{eq-w-002}, we might state the following lemma.

\begin{lem}\label{lem-e-003}
The kernel $K$ can be computed as follows
\begin{equation}
\begin{split}
    K^1(x,y) &=  k^{-}(x,y) \cdot \frac{\partial T}{\partial y_2}(y) - k^{+}(x,y) \cdot \frac{\partial T}{\partial y_2}(y),\\
    K^2(x,y) &=  -k^{-}(x,y) \cdot \frac{\partial T}{\partial y_1}(y) + k^{+}(x,y) \cdot \frac{\partial T}{\partial y_1}(y),
\end{split}
\end{equation}
where the vector functions $k^{-}(x,y), k^{+}(x,y)$ are given by
\begin{equation}
\begin{split}
    k^{-}(x,y)=\frac{1}{2\pi} \frac{(T^1(y)-T^1(x), T^2(y)-T^2(x))}{(T^1(y)-T^1(x))^2+(T^2(y)-T^2(x))^2},\\
    k^{+}(x,y)=\frac{1}{2\pi} \frac{(T^1(y)-T^1(x), T^2(y)+T^2(x))}{(T^1(y)-T^1(x))^2+(T^2(y)+T^2(x))^2},
\end{split}
\end{equation}
and $\frac{\partial T}{\partial y_1}(y)$ and $\frac{\partial T}{\partial y_2}(y)$ denote the vectors $(\frac{\partial T^1}{\partial y_1}(y), \frac{\partial T^2}{\partial y_1}(y))$ and $(\frac{\partial T^1}{\partial y_2}(y), \frac{\partial T^2}{\partial y_2}(y))$ respectively. 
\end{lem}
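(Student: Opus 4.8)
The plan is to reduce the statement to an elementary chain-rule differentiation of the explicit Green function formula \eqref{eq-w-002}. First I would write $G_D(x,y)$ out entirely in terms of the four scalars $T^1(x), T^2(x), T^1(y), T^2(y)$. Since $\Gamma(a,b)=\frac{1}{2\pi}\log|a-b|$ and $\overline{T(y)}=(T^1(y),-T^2(y))$ is the reflection of $T(y)$ across the real axis, so that $|T(x)-\overline{T(y)}|^2=(T^1(y)-T^1(x))^2+(T^2(y)+T^2(x))^2$, one obtains
\begin{align*}
G_D(x,y) &= \frac{1}{4\pi}\log\left[(T^1(y)-T^1(x))^2 + (T^2(y)-T^2(x))^2\right] \\
&\quad - \frac{1}{4\pi}\log\left[(T^1(y)-T^1(x))^2 + (T^2(y)+T^2(x))^2\right].
\end{align*}
The two bracketed denominators appearing here are precisely the denominators of $k^-$ and $k^+$ in the statement.

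Next I would treat $x$ (hence $T^1(x),T^2(x)$) as fixed and regard $G_D$ as a function of the intermediate variables $a=T^1(y)$, $b=T^2(y)$, which depend on $y=(y_1,y_2)$ through the conformal map $T$. The chain rule then gives, for $k=1,2$,
\[
\frac{\partial G_D}{\partial y_k} = \frac{\partial G_D}{\partial a}\,\frac{\partial T^1}{\partial y_k} + \frac{\partial G_D}{\partial b}\,\frac{\partial T^2}{\partial y_k}.
\]
A direct differentiation of the displayed expression for $G_D$ yields
\[
\frac{\partial G_D}{\partial a} = k^-_1 - k^+_1, \qquad \frac{\partial G_D}{\partial b} = k^-_2 - k^+_2,
\]
where $k^\pm=(k^\pm_1,k^\pm_2)$ are the two components of the vector functions in the statement: the factor $\frac{1}{4\pi}$ combines with the derivative of the square to produce the $\frac{1}{2\pi}$ prefactor, and the numerators $T^1(y)-T^1(x)$ and $T^2(y)\mp T^2(x)$ emerge automatically. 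The one point requiring attention is that, because $T^2(x)$ is constant in $y$, differentiating $(T^2(y)+T^2(x))^2$ with respect to $b=T^2(y)$ leaves the numerator $T^2(y)+T^2(x)$, which is exactly the second component of $k^+$.

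Finally, substituting these partials into the chain rule and recalling $\frac{\partial T}{\partial y_k}=\bigl(\frac{\partial T^1}{\partial y_k},\frac{\partial T^2}{\partial y_k}\bigr)$, I obtain the compact form
\[
\frac{\partial G_D}{\partial y_k} = (k^- - k^+)\cdot\frac{\partial T}{\partial y_k}.
\]
Taking $k=2$ together with $K^1=\frac{\partial}{\partial y_2}G_D$ from \eqref{BSkernel} gives the first asserted identity, and taking $k=1$ with $K^2=-\frac{\partial}{\partial y_1}G_D$ gives the second, after distributing the dot product over $k^- - k^+$. There is no genuine analytical obstacle: the whole argument is the chain rule applied through $T$. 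The only things needing care are the bookkeeping of the reflection (securing the $+T^2(x)$ in the $k^+$ numerators) and the smoothness of $T$, which holds on the interior since $T$ is holomorphic on $D$; the possible irregularity of $\partial D$ noted in the text plays no role in this interior computation for $x,y\in D$.
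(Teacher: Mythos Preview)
Your argument is correct and is precisely the computation the paper has in mind: the lemma is stated in the paper without proof, immediately after the Green function formula \eqref{eq-w-002} and the definition \eqref{BSkernel}, as a direct consequence of differentiating $G_D$ through $T$. Your chain-rule derivation via the intermediate variables $a=T^1(y)$, $b=T^2(y)$ is the natural and essentially unique way to carry this out, and your bookkeeping of the reflection term (yielding the $T^2(y)+T^2(x)$ numerator in $k^+$) is accurate.
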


As for the numerical simulations we also need to compute the boundary stress $\theta$, we do so by differentiating the representation \eqref{BS-law} with respect to $x$ (and taking limits as $x$ approaches the boundary). Note that in practice we have to desingularise the kernels (see Section \ref{MCSimulationsSection} for details), which insures we can differentiate the singular integral representation. Therefore, we can write the representation, according to the lemma above, in terms of derivatives of $k^{-}, k^{+}$ given explicitly in the following.

\begin{lem}\label{DerivativeK_Lemma}
The derivatives $\dfrac{\partial k^{-}}{\partial x}, \dfrac{\partial k^{+}}{\partial x}$ are given by
\begin{equation}
\begin{split}
    \dfrac{\partial (k^{-})^1}{\partial x_i}(x,y)=\frac{1}{2\pi} \Bigg( \frac{-\frac{\partial T^1}{\partial x_i}(x)}{(T^1(y)-T^1(x))^2+(T^2(y)-T^2(x))^2} + 2(T^1(y)-T^1(x)) \\ 
    \cdot\frac{(T^1(y)-T^1(x))\frac{\partial T^1}{\partial x_i}(x)+(T^2(y)-T^2(x))\frac{\partial T^2}{\partial x_i}(x)}{\big((T^1(y)-T^1(x))^2+(T^2(y)-T^2(x))^2\big)^2} \Bigg),\\
    \dfrac{\partial (k^{-})^2}{\partial x_i}(x,y)=\frac{1}{2\pi} \Bigg( \frac{-\frac{\partial T^2}{\partial x_i}(x)}{(T^1(y)-T^1(x))^2+(T^2(y)-T^2(x))^2} + 2(T^2(y)-T^2(x)) \\ 
    \cdot\frac{(T^1(y)-T^1(x))\frac{\partial T^1}{\partial x_i}(x)+(T^2(y)-T^2(x))\frac{\partial T^2}{\partial x_i}(x)}{\big((T^1(y)-T^1(x))^2+(T^2(y)-T^2(x))^2\big)^2} \Bigg),
\end{split}
\end{equation}
and
\begin{equation}
\begin{split}
    \dfrac{\partial (k^{+})^1}{\partial x_i}(x,y)=\frac{1}{2\pi} \Bigg( \frac{-\frac{\partial T^1}{\partial x_i}(x)}{(T^1(y)-T^1(x))^2+(T^2(y)+T^2(x))^2} + 2(T^1(y)-T^1(x)) \\ 
    \cdot\frac{(T^1(y)-T^1(x))\frac{\partial T^1}{\partial x_i}(x)-(T^2(y)+T^2(x))\frac{\partial T^2}{\partial x_i}(x)}{\big((T^1(y)-T^1(x))^2+(T^2(y)+T^2(x))^2\big)^2} \Bigg),\\
    \dfrac{\partial (k^{+})^2}{\partial x_i}(x,y)=\frac{1}{2\pi} \Bigg( \frac{\frac{\partial T^2}{\partial x_i}(x)}{(T^1(y)-T^1(x))^2+(T^2(y)+T^2(x))^2} + 2(T^2(y)+T^2(x)) \\ 
    \cdot\frac{(T^1(y)-T^1(x))\frac{\partial T^1}{\partial x_i}(x)-(T^2(y)+T^2(x))\frac{\partial T^2}{\partial x_i}(x)}{\big((T^1(y)-T^1(x))^2+(T^2(y)+T^2(x))^2\big)^2} \Bigg),
\end{split}
\end{equation}
where $(k^{\pm})^i$ denote the components of $k^{\pm}$, i.e. $k^{\pm}=((k^{\pm})^1, (k^{\pm})^2)$.
\end{lem}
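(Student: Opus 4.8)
The plan is to differentiate the four scalar components of $k^{-}$ and $k^{+}$ directly, since each is an elementary rational function of the conformal coordinates $T^1,T^2$ evaluated at $x$ and $y$. First I would abbreviate the building blocks: write $a = T^1(y)-T^1(x)$, $b^{-} = T^2(y)-T^2(x)$, $b^{+} = T^2(y)+T^2(x)$, and the denominators $N^{\mp} = a^2 + (b^{\mp})^2$, so that $(k^{\mp})^1 = \frac{1}{2\pi}\, a/N^{\mp}$ and $(k^{\mp})^2 = \frac{1}{2\pi}\, b^{\mp}/N^{\mp}$. Since the $y$-arguments are held fixed and only $x$ is differentiated, the chain rule gives the three elementary identities $\partial_{x_i} a = -\partial_{x_i} T^1(x)$, $\partial_{x_i} b^{-} = -\partial_{x_i} T^2(x)$, and $\partial_{x_i} b^{+} = +\partial_{x_i} T^2(x)$; the sole structural difference between the two kernels sits in this last sign.

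Next I would compute the derivatives of the denominators. From $N^{\mp} = a^2 + (b^{\mp})^2$ one obtains $\partial_{x_i} N^{-} = -2\bigl(a\,\partial_{x_i}T^1(x) + b^{-}\,\partial_{x_i}T^2(x)\bigr)$ and $\partial_{x_i} N^{+} = -2\bigl(a\,\partial_{x_i}T^1(x) - b^{+}\,\partial_{x_i}T^2(x)\bigr)$, where again the sign flip in the $T^2$-term is inherited from $\partial_{x_i} b^{+}$. With these in hand, applying the quotient rule to each of the four components is mechanical: for instance $\partial_{x_i}(k^{-})^1 = \frac{1}{2\pi}\bigl(\partial_{x_i}a\,/N^{-} - a\,\partial_{x_i}N^{-}/(N^{-})^2\bigr)$, and substituting $\partial_{x_i} a$ and $\partial_{x_i}N^{-}$ from above reproduces exactly the first displayed formula. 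The remaining three components follow by the identical computation, differing only in which numerator ($a$ versus $b^{\mp}$) is carried and in the inherited sign.

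Finally I would unpack the abbreviations back into $T^1(y)-T^1(x)$ and $T^2(y)\pm T^2(x)$ and collect the two resulting terms over the common powers of $N^{\mp}$, which yields the four stated identities verbatim. There is no genuine obstacle here: the computation is a direct application of the chain and quotient rules, and the only point demanding care is the bookkeeping of the sign change between $k^{-}$ and $k^{+}$, which enters through $\partial_{x_i}(T^2(y)+T^2(x)) = +\partial_{x_i}T^2(x)$ and must be propagated consistently into $\partial_{x_i} N^{+}$ and hence into the cross term of each $(k^{+})$-derivative. Implicit throughout is that $x \ne y$, so that $N^{-} \ne 0$ (while $N^{+} > 0$ automatically for interior points, since $T$ maps into the upper half-plane); on this set the rational functions are smooth and the formal differentiation is legitimate, which is exactly the regime made accessible by the desingularisation of the kernel noted before the statement.
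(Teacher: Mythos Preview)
Your proposal is correct and is exactly the natural approach: the paper states this lemma without proof, treating it as a routine application of the chain and quotient rules to the explicit expressions for $k^{\pm}$ given in Lemma~\ref{lem-e-003}. Your abbreviations and your tracking of the single sign change in $\partial_{x_i}b^{+}$ are precisely what is needed, and nothing more is required.
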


\subsection{Flows past a thin plate}

We consider a flow in the domain $D=\mathbb{R}^{2}\setminus\left\{(x_{1},0):x_{1}\geq0\right\} $,
where the boundary $\partial D$: $x_{2}=0$ and $x_{1}\geq0$ is
an obstacle in the flow. The fluid flow has to satisfy the no-slip
condition
\begin{equation}
\lim_{x_{2}\uparrow0}u((x_{1},x_{2}),t)=\lim_{x_{2}\downarrow0}u((x_{1},x_{2}),t)=0,\quad\text{when }x_{1}\geq0\label{pl-01}
\end{equation}
for $t>0$. Therefore automatically $u(x,t)$ is extended to the whole
plane by setting $u(x,t)=0$ for $x\in\partial D$. This extension
has the following nice property: if $\nabla\cdot u=0$ in $D$, then
$\nabla\cdot u=0$ in $\mathbb{R}^{2}$ in the sense of distributions. 

To find the Green function we use the conformal principle. The domain $D$ is conformally
equivalent to $\mathbb{H}$, it is realised by the conformal transform
$T^{-1}:x\rightarrow x^{2}$ (consider $x=(x_{1},x_{2})=x_{1}+i x_{2}$
as a complex coordinate) from $\mathbb{H}$ one-to-one and onto $D$.
In terms of real coordinates
\begin{equation*}
    T^{-1}:(x_{1},x_{2})\rightarrow(x_{1}^{2}-x_{2}^{2},2x_{1}x_{2})
\end{equation*}
and its inverse
\begin{equation}\label{Plate_T}
    T:(x_{1},x_{2})\rightarrow\left(\textrm{sgn}(x_{2})\sqrt{\frac{1}{2}\left(|x|+x_{1}\right)},\sqrt{\frac{1}{2}\left(|x|-x_{1}\right)}\right)
\end{equation}
where $|x|=\sqrt{x_{1}^{2}+x_{2}^{2}}$. 

Recall that the Green function for $D$ is given by
\[
G(x,y)=\Gamma(Tx,Ty)-\Gamma(Tx,\overline{Ty}),
\]
which vanishes for $x$ such that
\[
Tx=\overline{Tx},
\]
which is equivalent to that $|x|-x_{1}=0$, i.e. $x_{1}\geq0$ and $x_{2}=0$. To compute the kernel $K$ in this case, we explicitly find the partial derivatives of the map $T$ and use Lemma \ref{lem-e-003}.

\begin{lem}\label{PlateDerivativesOfT}
    Define $T=T^1+i T^2$ by \eqref{Plate_T}. Its partial derivatives are given by
\begin{equation}
\begin{split}    
    \frac{\partial T^1}{\partial x_1} (x) &= \textrm{sgn}(x_{2})\frac{\sqrt{\frac{1}{2}\left(|x|+x_{1}\right)}}{2|x|},\\
    \frac{\partial T^2}{\partial x_1} (x) &= - \frac{\sqrt{\frac{1}{2}\left(|x|-x_{1}\right)}}{2|x|},
\end{split}
\end{equation}
and
\begin{equation}
\begin{split}    
    \frac{\partial T^1}{\partial x_2} (x) &= \textrm{sgn}(x_{2}) \frac{x_2}{4|x| \sqrt{\frac{1}{2}\left(|x|+x_{1}\right)}},\\
    \frac{\partial T^2}{\partial x_2} (x) &= \frac{x_2}{4|x| \sqrt{\frac{1}{2}\left(|x|-x_{1}\right)}}.
\end{split}
\end{equation}
\end{lem}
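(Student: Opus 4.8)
The plan is to compute all four partial derivatives directly by the chain rule, since $T^{1}$ and $T^{2}$ are explicit compositions of elementary functions of $x_{1}$, $x_{2}$ and $|x|=\sqrt{x_{1}^{2}+x_{2}^{2}}$. The only ingredient needed is the pair of elementary identities
\[
\frac{\partial|x|}{\partial x_{1}}=\frac{x_{1}}{|x|},\qquad\frac{\partial|x|}{\partial x_{2}}=\frac{x_{2}}{|x|},
\]
valid at every point of $D$ with $x\neq0$. Throughout I would work on the open set where $x_{2}\neq0$; there the factor $\textrm{sgn}(x_{2})$ appearing in $T^{1}$ is locally constant ($\pm1$) and hence passes through the differentiation unchanged, while the radicals $\sqrt{\tfrac{1}{2}(|x|\pm x_{1})}$ are smooth and nonvanishing.

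Writing $T^{1}=\textrm{sgn}(x_{2})\,(\tfrac{1}{2}(|x|+x_{1}))^{1/2}$ and $T^{2}=(\tfrac{1}{2}(|x|-x_{1}))^{1/2}$, the $x_{1}$-derivatives are where the stated compact form emerges. Differentiating the power $(\tfrac{1}{2}(|x|\pm x_{1}))^{1/2}$ yields an inner derivative $\tfrac{1}{2}(\tfrac{x_{1}}{|x|}\pm1)$, and the key simplification is that $\tfrac{x_{1}}{|x|}+1=\tfrac{|x|+x_{1}}{|x|}$ while $\tfrac{x_{1}}{|x|}-1=-\tfrac{|x|-x_{1}}{|x|}$. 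In each case the numerator $|x|\pm x_{1}$ equals twice the quantity under the radical, so it cancels one half-power and collapses the expression to $\pm(\tfrac{1}{2}(|x|\pm x_{1}))^{1/2}/(2|x|)$. This reproduces exactly the claimed forms of $\partial T^{1}/\partial x_{1}$ and $\partial T^{2}/\partial x_{1}$, the latter carrying a minus sign precisely because $\tfrac{x_{1}}{|x|}-1<0$.

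For the $x_{2}$-derivatives the inner derivative of $\tfrac{1}{2}(|x|\pm x_{1})$ is simply $\tfrac{1}{2}\cdot\tfrac{x_{2}}{|x|}$, and no cancellation is available. Consequently $\partial T^{1}/\partial x_{2}$ and $\partial T^{2}/\partial x_{2}$ retain a radical in the denominator and give the stated $x_{2}/(4|x|\sqrt{\tfrac{1}{2}(|x|\pm x_{1})})$, with $\textrm{sgn}(x_{2})$ surviving only in the first component. Assembling the four expressions completes the proof.

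There is no genuine obstacle here, as the whole computation is routine differentiation; the only point requiring a word of care is the status of $\textrm{sgn}(x_{2})$, which is handled by restricting to the smooth locus $x_{2}\neq0$. The excluded set, the negative $x_{1}$-axis inside $D$, is exactly where $|x|+x_{1}=0$ and the derivative formulae degenerate at the branch structure of the square root, so those points drop out automatically and do not affect the stated identities on the region where they are meaningful.
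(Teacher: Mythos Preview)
Your proposal is correct and is exactly the routine chain-rule computation one would expect; the paper in fact states this lemma without proof, treating it as an elementary calculation, so your argument supplies precisely the missing details in the natural way.
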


\subsection{Flows past a wedge obstacle}

Suppose the wedge with angle $2\alpha$ at its tip (where $\alpha\in(0,\frac{\pi}{2})$)
is modelled by 
\[
\Lambda=\left\{ x=r\textrm{e}^{\textrm{i}\theta}:r\geq0\textrm{ and }-\alpha\leq\theta\leq\alpha\right\} 
\]
so that the fluid occupies the region $D=\mathbb{R}^{2}\setminus\Lambda$. It is again conformally equivalent to the upper half space $\mathbb{H}$,
the conformal mapping $T:D\rightarrow\mathbb{H}$ is given explicitly as follows:
\[
T(x)=\left(\textrm{e}^{-\textrm{i}\alpha}x\right)^{\beta}=\exp\left[\beta\ln|x|+\textrm{i}\beta\arg(\textrm{e}^{-\textrm{i}\alpha}x)\right]\quad\textrm{ for }x\in D
\]
where $0<\arg x<2\pi$,
\[
\beta=\frac{\pi}{2(\pi-\alpha)}\quad\textrm{ and }\quad\alpha\beta=\frac{\pi\alpha}{2(\pi-\alpha)}
\]
and
\[
x^{\beta}=|x|^{\beta}\exp\left(\textrm{i}\beta\arg x\right)\quad\textrm{ with }0<\arg x<2\pi.
\]
Therefore
\[
T(x)=\exp\left(\textrm{i}\beta\arg\left(e^{-\textrm{i}\alpha}x\right)\right)|x|^{\beta}\quad\textrm{ for }x\in D.
\]
In terms of real coordinates we have
\[
T(x)=\left(|x|^{\beta}\cos\left(\beta\arg\left(e^{-\textrm{i}\alpha}x\right)\right),|x|^{\beta}\sin\left(\beta\arg\left(e^{-\textrm{i}\alpha}x\right)\right)\right)
\]
for $x=(x_{1},x_{2})\in D$, where $\arg x\in(0,2\pi)$, so that 
\[
\arg(x_{1},x_{2})=\begin{cases}
\arctan\frac{x_{2}}{x_{1}} & \textrm{if }x_{1}>0,x_{2}\geq0,\\
\frac{\pi}{2} & \textrm{if }x_{1}=0,x_{2}>0,\\
\pi+\arctan\frac{x_{2}}{x_{1}} & \textrm{if }x_{1}<0,\\
\frac{3\pi}{2} & \textrm{if }x_{1}=0,x_{2}<0,\\
2\pi+\arctan\frac{x_{2}}{x_{1}} & \textrm{if }x_{1}>0,x_{2}\leq0.
\end{cases}
\]
Since $\alpha\in(0,\frac{\pi}{2})$ and $\arg x \in (\alpha, 2\pi - \alpha)$ for $x \in D$, it follows that
\[
\arg\left(e^{-\textrm{i}\alpha}x\right)=\arg x-\alpha,
\]
hence the real and imaginary parts $T=T^1 +i T^2$ can be written as
\begin{equation}\label{e-sq-004}
    T^1(x) = |x|^{\beta} \cos{(\beta (\arg x -\alpha) )},\quad T^2(x) = |x|^{\beta} \sin{(\beta (\arg x -\alpha))}.
\end{equation}
Therefore the Green function for this domain is given by
\[
G(x,y)=\Gamma(Tx,Ty)-\Gamma(Tx,\overline{Ty})
\]
which vanishes for $x$ such that
\[
Tx=\overline{Tx}
\]
that is
\[
|x|^{\beta}\sin\left(\beta\arg\left(e^{-\textrm{i}\alpha}x\right)\right)=-|x|^{\beta}\sin\left(\beta\arg\left(e^{-\textrm{i}\alpha}x\right)\right)
\]
so that either $x=(0,0)$ or 
\[
\sin\left(\beta\arg(e^{-\textrm{i}\alpha}x)\right)=0.
\]
That is 
\[
\beta\arg(e^{-\textrm{i}\alpha}x)=n\pi
\]
so that
\[
\arg(e^{-\textrm{i}\alpha}x)=2n(\pi-\alpha).
\]
That is
\[
\arg x-\alpha=2n(\pi-\alpha)\quad\textrm{ and }0<\arg x-\alpha<2\pi.
\]
The requirement
\[
0<2n(\pi-\alpha)<2\pi
\]
gives rise to the constraint that $n=0,1$ only that is, $\arg x=\alpha$
and $\arg x=2\pi-\alpha$ which are the edges of the wedge. 

For computing the Biot-Savart kernel $K$ using Lemma \ref{lem-e-003}, we write the partial derivatives of the map $T$.

\begin{lem}\label{WedgeDerivativesOfT}
    Let $T=T^1+i T^2$ be as in \eqref{e-sq-004}, then its partial derivatives are given by
\begin{equation}
\begin{split}    
    \frac{\partial T^1}{\partial x_1} (x) = \beta |x|^{\beta-2} (x_1 \cos{(\beta (\arg x -\alpha) )} +x_2 \sin{(\beta (\arg x -\alpha) )}),\\
    \frac{\partial T^2}{\partial x_1} (x) = \beta |x|^{\beta-2} (x_1 \sin{(\beta (\arg x -\alpha) )} - x_2 \cos{(\beta (\arg x -\alpha) )} ),
\end{split}
\end{equation}
and
\begin{equation}
\begin{split}    
    \frac{\partial T^1}{\partial x_2} (x) = \beta |x|^{\beta-2} (x_2 \cos{(\beta (\arg x -\alpha) )} - x_1 \sin{(\beta (\arg x -\alpha) )} ),\\
    \frac{\partial T^2}{\partial x_2} (x) = \beta |x|^{\beta-2} (x_2 \sin{(\beta (\arg x -\alpha) )} + x_1 \cos{(\beta (\arg x -\alpha) )} ).
\end{split}
\end{equation}
\end{lem}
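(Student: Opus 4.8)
The plan is to regard $T$ as the holomorphic function $z\mapsto(e^{-\textrm{i}\alpha}z)^{\beta}$, writing $z=x_{1}+\textrm{i}x_{2}$, and to exploit its Cauchy--Riemann structure rather than differentiating the four scalar fields separately. First I would record that on $D$ we have $\alpha<\arg x<2\pi-\alpha$, so the branch cut of $\arg$ (placed on the ray $\arg x=0$) lies inside the removed wedge $\Lambda$; hence $T$ is holomorphic, and in particular real-differentiable, throughout $D$. This observation both justifies the manipulations below and disposes of the only genuine subtlety, namely the choice of branch for $\arg$ and for the power $(\cdot)^{\beta}$.

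Next I would compute the single complex derivative. By the chain rule, $T'(z)=\beta e^{-\textrm{i}\alpha}(e^{-\textrm{i}\alpha}z)^{\beta-1}=\frac{\beta}{z}(e^{-\textrm{i}\alpha}z)^{\beta}=\frac{\beta T(z)}{z}$. Writing $\frac{1}{z}=\frac{x_{1}-\textrm{i}x_{2}}{|x|^{2}}$ together with $T(z)=|x|^{\beta}(\cos\phi+\textrm{i}\sin\phi)$, where I set $\phi:=\beta(\arg x-\alpha)$ to match \eqref{e-sq-004}, and multiplying out, I obtain
\[
T'(z)=\beta|x|^{\beta-2}(\cos\phi+\textrm{i}\sin\phi)(x_{1}-\textrm{i}x_{2})=\beta|x|^{\beta-2}\big[(x_{1}\cos\phi+x_{2}\sin\phi)+\textrm{i}(x_{1}\sin\phi-x_{2}\cos\phi)\big].
\]

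Finally I would read off the four partial derivatives from the elementary identities $\partial T/\partial x_{1}=T'(z)$ and $\partial T/\partial x_{2}=\textrm{i}\,T'(z)$, valid for any holomorphic $T$. Taking real and imaginary parts of $T'(z)$ gives $\partial T^{1}/\partial x_{1}$ and $\partial T^{2}/\partial x_{1}$ directly, which are exactly the claimed expressions; taking real and imaginary parts of $\textrm{i}\,T'(z)$ (that is, swapping the two components and negating appropriately) yields $\partial T^{1}/\partial x_{2}=\beta|x|^{\beta-2}(x_{2}\cos\phi-x_{1}\sin\phi)$ and $\partial T^{2}/\partial x_{2}=\beta|x|^{\beta-2}(x_{1}\cos\phi+x_{2}\sin\phi)$, again matching the statement. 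An entirely equivalent route avoids complex analysis: differentiate $T^{1}=|x|^{\beta}\cos\phi$ and $T^{2}=|x|^{\beta}\sin\phi$ by the product and chain rules using $\partial|x|^{\beta}/\partial x_{i}=\beta|x|^{\beta-2}x_{i}$, $\partial\arg x/\partial x_{1}=-x_{2}/|x|^{2}$ and $\partial\arg x/\partial x_{2}=x_{1}/|x|^{2}$. Either way the computation is routine; the only point requiring care, and the only place where anything could go wrong, is the branch verification in the first step, which the inclusion $\alpha<\arg x<2\pi-\alpha$ on $D$ settles.
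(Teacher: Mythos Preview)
Your proof is correct. The paper states this lemma without proof, treating it as a routine computation; your approach via the complex derivative $T'(z)=\beta T(z)/z$ and the Cauchy--Riemann relations $\partial T/\partial x_{1}=T'(z)$, $\partial T/\partial x_{2}=\textrm{i}\,T'(z)$ is a clean and efficient way to obtain all four formulas at once, and your remark about the branch cut lying inside the excised wedge is the only point that genuinely needs checking.
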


\section{Limiting stochastic representations}\label{LimitingRepresentationsSection}

The goal of this section is to derive representations similar to those in Section \ref{RV_for_wall-bounded_flows} for different domains described above. Note that so far we have a stochastic representation for the velocity $u$, e.g., the one given in Theorem \ref{thm5.4new}, for the half-plane. This representation contains the time-integral with the term $g_{\varepsilon}$ given by 
\begin{equation}
g_{\varepsilon}=G-\frac{\partial \sigma_{\varepsilon}}{\partial t}-(u\cdot\nabla)\sigma_{\varepsilon}+\nu\Delta\sigma_{\varepsilon},
\end{equation}
where $\varepsilon$ defines the width of the cutoff modification in $\sigma_{\varepsilon}$. It is therefore interesting theoretically to consider the limits of the representations as the cutoff width parameter $\varepsilon$ converges to zero. It turns out that many terms in $g_{\varepsilon}$ do not contribute to the limit, we are able to simplify the expressions used for simulations in Section \ref{MCSimulationsSection}. 

\subsection{Flat-plate case}\label{PlateSection}

Let us deal with the flows passing a flat-plate first, that is $D=\{x=(x_{1},x_{2}):x_{1}<0\textrm{ or }x_{2}\neq0\}$.
The flat-plate is modelled by the boundary $\partial D = \{x_{2}=0,x_{1}\geq0\}$, and the flow is split into upper flow and lower
flow after hitting the plate. The velocity $u$ has to satisfy the
no-slip condition, that is, 
\begin{equation}
\lim_{x_{2}\rightarrow0\pm}u(x_{1},x_{2},t)=0\quad\textrm{ for }x_{1}\geq0\textrm{ and }t>0.\label{noslip-p01}
\end{equation}
Therefore there is non-trivial stress on both sides of the plate.
The stress at the upper side of the plate is denoted by $\theta_{+}$
and the one at the lower side by $\theta_{-}$, that is
\begin{equation}
\theta_{\pm}(x_{1},t)=\begin{cases}
-\left.\frac{\partial}{\partial x_{2}}u^{1}(x_{1},x_{2},t)\right|_{x_{2}=0\pm} & \textrm{ for }x_{1}\geq0,\\
0 & \textrm{ for }x_{1}<0.
\end{cases}\label{ThetaDefPlate}
\end{equation}
We make a technical assumption that $\theta_{+}$ and $\theta_{-}$ have continuous derivatives up to second order except at $x_1=0$. We also assume that the limit of $-\frac{\partial}{\partial x_{2}}u^{1}(x_{1},x_{2},t)$ exists as $(x_1, x_2) \to (0,0)$ in $D$, therefore $\theta_{+}(0,t)=\theta_{-}(0,t)$, for all $t \geq 0$, denoted $\theta (t)$ in this case.
Let $\phi:\mathbb{R}\rightarrow[0,1]$ be a cut-off function which
is smooth on $(0,\infty)$ such that $\phi(r)=0$ for $r<0$, $\phi(r)=1$
for $0 \leq r < 1/3$ and $\phi(r)=0$ for $r\geq2/3$. Define
\begin{equation}
\sigma_{\varepsilon}(x_{1},x_{2},t)=\theta_{+}(x_{1},t)\phi(x_{2}/\varepsilon)+\theta_{-}(x_{1},t)\phi(-x_{2}/\varepsilon)+\theta(t)\phi(|x|/\varepsilon)1_{\{x_1<0\}}\label{ext-01}
\end{equation}
for $x=(x_{1},x_{2})\in\mathbb{R}^{2}$. 

Let $W_{\varepsilon}=\omega-\sigma_{\varepsilon}$. Then 
\begin{equation}
\left(\frac{\partial}{\partial t}+u\cdot\nabla-\nu\Delta\right)W_{\varepsilon}=G+\rho_{\varepsilon}\quad\textrm{ in }D,\label{eq:qq4}
\end{equation}
satisfying the boundary condition $\left.W_{\varepsilon}\right|_{\partial D}=0$
in the sense that
\begin{equation}
\lim_{x_{2}\rightarrow0\pm}W_{\varepsilon}(x_{1},x_{2},t)=0\quad\textrm{ for }x_{1}\geq0,\label{bd-W}
\end{equation}
where
\begin{equation*}
\rho_{\varepsilon}=\nu\Delta \sigma_{\varepsilon}-\frac{\partial \sigma_{\varepsilon}}{\partial t}-(u\cdot\nabla) \sigma_{\varepsilon}.
\end{equation*}
Taking into account that $\sigma_{\varepsilon}$ is given as in \eqref{ext-01}, we compute
\begin{align}
\rho_{\varepsilon}(x,t) &= \frac{\nu}{\varepsilon^{2}}\phi''(x_{2}/\varepsilon)\theta_{+}(x_{1},t)-\frac{1}{\varepsilon}\phi'(x_{2}/\varepsilon)u^{2}(x,t)\theta_{+}(x_{1},t)\nonumber \\
 & +\phi(x_{2}/\varepsilon)\left(\nu\frac{\partial^{2}\theta_{+}}{\partial x_{1}^{2}}(x_{1},t)-\frac{\partial\theta_{+}}{\partial t}(x_{1},t)\right)-\phi(x_{2}/\varepsilon)u^{1}(x,t)\frac{\partial\theta_{+}}{\partial x_{1}}(x_{1},t) \nonumber \\
& +\frac{\nu}{\varepsilon^{2}}\phi''(-x_{2}/\varepsilon)\theta_{-}(x_{1},t)+\frac{1}{\varepsilon}\phi'(-x_{2}/\varepsilon)u^{2}(x,t)\theta_{-}(x_{1},t)\nonumber \\
 & +\phi(-x_{2}/\varepsilon)\left(\nu\frac{\partial^{2}\theta_{-}}{\partial x_{1}^{2}}(x_{1},t)-\frac{\partial\theta_{-}}{\partial t}(x_{1},t)\right)-\phi(-x_{2}/\varepsilon)u^{1}(x,t)\frac{\partial\theta_{-}}{\partial x_{1}}(x_{1},t)\nonumber \\
 & +\Big( -\frac{1}{\varepsilon} \theta(t)\phi'(|x|/\varepsilon) \frac{u^1(x_1,x_2) x_1 + u^2(x_1,x_2) x_2}{|x|} \nonumber \\
 & -\theta'(t)\phi(|x|/\varepsilon) + \frac{\nu}{\varepsilon^2} \theta(t) \phi''(|x|/\varepsilon) + \frac{\nu}{\varepsilon} \theta(t) \phi'(|x|/\varepsilon) \frac{1}{|x|}\Big) 1_{\{x_1<0\}} \label{g-e-001}
\end{align}
for any $x\in D$. The initial data for $W_{\varepsilon}$ is given as follows
\begin{equation}
W_{\varepsilon}(x,0)=\omega_{0}(x_{1},x_{2})-\theta_{+}(x_{1},0)\phi(x_{2}/\varepsilon)-\theta_{-}(x_{1},0)\phi(-x_{2}/\varepsilon)-\theta(0)\phi(|x|/\varepsilon)1_{\{x_1<0\}},\label{W-e-001}
\end{equation}
for $x\in D$.

The velocity $u(x,t)$ is extended to the whole space $\mathbb{R}^{2}$
trivially by defining $u(x,t)=0$ if $x\notin D$. Then $\nabla\cdot u=0$
on $\mathbb{R}^{2}$ in distribution sense. Let $p(s,\xi,t,\eta)$
denote the transition probability density function of the diffusion
process with infinitesimal generator $\nu\Delta+u\cdot\nabla$, and
$p^{D}(s,\xi,t,\eta)$ be the transition probability density of the
same diffusion killed on hitting the boundary $\partial D$. Since
$u$ is divergence-free, $p^{D}(s,\xi,t,\eta)$ coincides with the
Green function of the Dirichlet problem associated with the heat operator
$\nu\Delta-u\cdot\nabla-\frac{\partial}{\partial t}$ on $D$, therefore
(cf. \citep[Chapter 1,  Theorem 12]{Friedman1964})
\begin{equation}
\omega(y,t)=\sigma_{\varepsilon}(y,t)+\int_{D}p^{D}(0,\xi,t,y)W_{\varepsilon}(\xi,0)\textrm{d}\xi+\int_{0}^{t}\int_{D}p^{D}(s,\xi,t,y)(G(\xi,s)+\rho_{\varepsilon}(\xi,s))\textrm{d}\xi\textrm{d}s\label{W-e-fud-01}
\end{equation}
for $y\in D$ and $t>0$. 
\begin{lem}\label{PlateOmegaLemma}
    The following integral representation holds
    \begin{align}
    \omega(y,t) & = (\omega_0(y_1,0+,t) + \omega_0(y_1,0-,t)) 1_{\{y \in \partial D\}} + \int_{D}p^{D}(0,\xi,t,y)\omega_{0}(\xi)\textrm{d}\xi \nonumber \\
     & +\int_{0}^{t}\int_{D}p^{D}(s,\xi,t,y)G(\xi,s)\textrm{d}\xi\textrm{d}s\nonumber \\
     & +\nu\int_{0}^{t}\int_{0}^{\infty}\frac{\partial}{\partial\xi_{2}}p^{D}(s,(\xi_{1},0+),t,y)\theta_{+}(\xi_{1},s)\textrm{d}\xi_{1}\textrm{d}s\nonumber \\
     & -\nu\int_{0}^{t}\int_{0}^{\infty}\frac{\partial}{\partial\xi_{2}}p^{D}(s,(\xi_{1},0-),t,y)\theta_{-}(\xi_{1},s)\textrm{d}\xi_{1}\textrm{d}s\label{ome-002}
    \end{align}
    for $y\in D$ and $t>0$. 
\end{lem}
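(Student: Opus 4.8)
The plan is to regard \eqref{W-e-fud-01} as an exact identity valid for every $\varepsilon>0$ whose left-hand side $\omega(y,t)$ does not depend on $\varepsilon$, and to obtain \eqref{ome-002} by letting $\varepsilon\to0$ on the right-hand side and identifying the limit of each term. Three contributions are easy. Substituting $W_{\varepsilon}(\xi,0)=\omega_{0}(\xi)-\sigma_{\varepsilon}(\xi,0)$ from \eqref{W-e-001}, the initial-data integral splits into $\int_{D}p^{D}(0,\xi,t,y)\omega_{0}(\xi)\,\mathrm{d}\xi$, which is already one of the terms of \eqref{ome-002}, and a remainder against $\sigma_{\varepsilon}(\cdot,0)$; since $\sigma_{\varepsilon}(\cdot,0)$ is uniformly bounded and supported in the $O(\varepsilon)$-strip $\{|\xi_{2}|<2\varepsilon/3\}$ together with a quarter-disc of radius $2\varepsilon/3$ about the tip, this remainder is $O(\varepsilon)$ and drops out. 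The forcing integral $\int_{0}^{t}\!\int_{D}p^{D}G$ carries no $\varepsilon$ and is retained verbatim. Finally, for $y$ in the interior of $D$ one has $\sigma_{\varepsilon}(y,t)\to0$, whereas for $y\in\partial D$ the Dirichlet kernel satisfies $p^{D}(\cdot,\cdot,\cdot,y)\equiv0$, so both integral terms vanish and $\sigma_{\varepsilon}(y,t)=\theta_{+}(y_{1},t)+\theta_{-}(y_{1},t)$ for every $\varepsilon$ by \eqref{ext-01} (using $\phi(0)=1$ and $1_{\{y_{1}<0\}}=0$); recalling from \eqref{ThetaDefPlate} that the boundary traces of $\omega$ on the two faces of the plate equal $\theta_{\pm}$, this produces the indicator term $\big(\omega(y_{1},0+,t)+\omega(y_{1},0-,t)\big)1_{\{y\in\partial D\}}$.

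The heart of the matter is the limit of $\int_{0}^{t}\!\int_{D}p^{D}(s,\xi,t,y)\rho_{\varepsilon}(\xi,s)\,\mathrm{d}\xi\,\mathrm{d}s$, which I would handle by sorting the twelve terms of \eqref{g-e-001} by their order on the shrinking supports of the cut-offs. The four terms carrying a bare factor $\phi(\pm\xi_{2}/\varepsilon)$ (those built from $\nu\partial_{\xi_{1}}^{2}\theta_{\pm}-\partial_{t}\theta_{\pm}$ and from $u^{1}\partial_{\xi_{1}}\theta_{\pm}$) are bounded on a strip of measure $O(\varepsilon)$, so after rescaling $\xi_{2}=\pm\varepsilon r$ they are $O(\varepsilon)$ and vanish. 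The two terms with $\varepsilon^{-1}\phi'(\pm\xi_{2}/\varepsilon)u^{2}\theta_{\pm}$ are controlled by the no-slip condition: on the support $\xi_{2}\sim\varepsilon$ with $\xi_{1}>0$ (where $\theta_{\pm}\neq0$) one has $u^{2}(\xi_{1},\xi_{2},s)=O(\varepsilon)$, so the $\varepsilon^{-1}$ is absorbed and these terms are $O(\varepsilon)$ too. The four terms supported near the tip and carrying $1_{\{\xi_{1}<0\}}$ are the delicate ones, because there $p^{D}$ need not vanish on the segment $\{\xi_{1}<0,\xi_{2}=0\}$; however $p^{D}$ does vanish at the tip $0\in\partial D$, and the slit-corner regularity afforded by the conformal square-root map of the previous section gives $p^{D}(s,\xi,t,y)=O(|\xi|^{1/2})$ as $\xi\to0$. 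Combined with the $O(\varepsilon^{2})$ area of the quarter-disc and the worst prefactor $\varepsilon^{-2}$, each of these four integrals is $O(\varepsilon^{1/2})$ and vanishes (the $u$-dependent tip term is even smaller since $u(0)=0$).

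What survives are exactly the two leading singular pieces $\frac{\nu}{\varepsilon^{2}}\phi''(\pm\xi_{2}/\varepsilon)\theta_{\pm}$. For the $\theta_{+}$ piece I would freeze $\xi_{1}>0$, rescale $\xi_{2}=\varepsilon r$ in the inner integral and Taylor-expand the kernel about the flat boundary: since $p^{D}(s,(\xi_{1},0),t,y)=0$ for $\xi_{1}>0$ one has $p^{D}(s,(\xi_{1},\varepsilon r),t,y)=\varepsilon r\,\partial_{\xi_{2}}p^{D}(s,(\xi_{1},0+),t,y)+O(\varepsilon^{2})$, so the $\varepsilon^{-2}$ prefactor and the Jacobian $\varepsilon\,\mathrm{d}r$ leave $\nu\,\partial_{\xi_{2}}p^{D}(s,(\xi_{1},0+),t,y)\,\theta_{+}(\xi_{1},s)\int_{0}^{\infty}r\phi''(r)\,\mathrm{d}r$ in the limit. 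An integration by parts, using that $\phi$ is constant near $0$ and near $\infty$, gives $\int_{0}^{\infty}r\phi''(r)\,\mathrm{d}r=-\int_{0}^{\infty}\phi'(r)\,\mathrm{d}r=\phi(0+)=1$, whence this piece converges to $\nu\int_{0}^{t}\!\int_{0}^{\infty}\partial_{\xi_{2}}p^{D}(s,(\xi_{1},0+),t,y)\theta_{+}(\xi_{1},s)\,\mathrm{d}\xi_{1}\mathrm{d}s$; the $\xi_{1}$-range is $(0,\infty)$ because $\theta_{+}$ vanishes for $\xi_{1}<0$. The $\theta_{-}$ piece is identical with $\xi_{2}=-\varepsilon r$, the expansion $p^{D}(s,(\xi_{1},-\varepsilon r),t,y)=-\varepsilon r\,\partial_{\xi_{2}}p^{D}(s,(\xi_{1},0-),t,y)+O(\varepsilon^{2})$ supplying the minus sign, and yields the $-\nu\int\!\int$ term. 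Collecting the four surviving contributions gives \eqref{ome-002}.

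The point demanding real care is not the formal computation but the rigorous passage to the limit: to invoke dominated convergence in the rescaled integrals one needs Gaussian-type upper bounds for $p^{D}$ and for $\partial_{\xi_{2}}p^{D}$ that are uniform up to the flat part of $\partial D$, together with the corner estimate $p^{D}=O(|\xi|^{1/2})$ near the tip that justifies discarding the four $1_{\{\xi_{1}<0\}}$ terms. Such bounds are classical for the Dirichlet kernel of $\nu\Delta-u\cdot\nabla-\partial_{t}$ with bounded drift on a domain whose boundary is smooth away from a corner, and near the slit they rest on the explicit conformal map and on the assumed $C^{2}$-regularity of $\theta_{\pm}$ off $x_{1}=0$; this is also where the hypotheses on $\theta_{\pm}$ enter, ensuring the convergence of the resulting $\xi_{1}$-integrals.
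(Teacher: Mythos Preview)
Your proof is correct and follows the same strategy as the paper: pass to the limit $\varepsilon\to0$ in \eqref{W-e-fud-01} and analyse each piece of $\rho_\varepsilon$ on its shrinking support. The paper phrases the singular limits distributionally---integrating each of $\varepsilon^{-1}\phi'$, $\varepsilon^{-2}\phi''$, $\varepsilon^{-1}\phi'(|x|/\varepsilon)/|x|$ and $\varepsilon^{-2}\phi''(|x|/\varepsilon)$ against a generic compactly supported test function $\beta$ and identifying the resulting boundary functional---whereas you work directly, Taylor-expanding $p^{D}$ normal to the plate (using $p^{D}|_{\partial D}=0$) and evaluating $\int_{0}^{\infty}r\phi''(r)\,\mathrm{d}r=1$ by parts; these are the same calculation written two ways. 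One minor remark: for the four tip contributions the paper needs only that $p^{D}(s,\cdot,t,y)$ is continuous up to the tip and vanishes there (since $0\in\partial D$), which already forces the polar limit $\lim_{r\to0}\int\beta(r\cos\psi,r\sin\psi)\,\mathrm{d}\psi$ to be zero; your sharper corner estimate $p^{D}=O(|\xi|^{1/2})$ is plausible but more than required.
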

\begin{proof}
    Following \cite{QQZW2022}, we obtain the result by letting $\varepsilon\downarrow 0$ in \eqref{W-e-fud-01} --- one can also take the same cutoff function $\phi$ given by
    \begin{equation}\label{CutoffPhi}
    \phi(r) = \begin{cases}
    1 &\text{ if } 0 \leq r \leq \frac{1}{3},\\
    \frac{1}{2}+54\Big(r-\frac{1}{2}\Big)^3 - \frac{9}{2} \Big(r-\frac{1}{2}\Big) &\text{ if } \frac{1}{3} < r \leq \frac{2}{3},\\
    0 &\text{ if } r > \frac{2}{3}.
    \end{cases}
    \end{equation}
    Taking \eqref{ext-01} to the limit, we have 
    \begin{equation}
        \sigma_{\varepsilon}(y, t) \to \theta_{+}(y_1, t) 1_{\{y_2=0\}} + \theta_{-}(y_1, t) 1_{\{y_2=0\}} = (\omega_0(y_1,0+,t) + \omega_0(y_1,0-,t)) 1_{\{y_1 \geq 0, y_2=0\}},
    \end{equation}
    as $\varepsilon \downarrow 0$ pointwise. We thus have that the integral 
    \begin{equation}
        \int_{D} p^{D}(0, \xi, t, y) \sigma_{\varepsilon}(\xi, 0) \textrm{d} \xi \to 0 \quad \textrm{ as } \varepsilon \downarrow 0,
    \end{equation}
    by dominated convergence as the limit of $\sigma_{\varepsilon}$ is zero almost everywhere. 

    Let us consider the integral 
    \begin{equation*}
        \int_{0}^{t}\int_{D}p^{D}(s,\xi,t,y)\rho_{\varepsilon}(\xi,s)\textrm{d}\xi\textrm{d}s.
    \end{equation*}
    The following terms in \eqref{g-e-001} 
    \begin{equation}
        \phi(\pm x_2/\varepsilon) \left( \nu \frac{\partial^2 \theta_{\pm}}{\partial x_1^2} (x_1, t) - \frac{\partial \theta_{\pm}}{\partial t}(x_1, t) \right) - \phi(\pm x_2/\varepsilon) u^{1}(x,t) \frac{\partial \theta_{\pm}}{\partial x_1}(x_1, t),
    \end{equation}
    converge pointwise to 
    \begin{equation}\label{d-e002}
        \Big(\nu \frac{\partial^2 \theta_{\pm}}{\partial x_1^2} (x_1, t) - \frac{\partial \theta_{\pm}}{\partial t}(x_1, t)\Big) 1_{\{x_2=0\}},
    \end{equation}
    as $\varepsilon \downarrow 0$, due to the no-slip condition. These terms again do not contribute to the limit as the function \eqref{d-e002} is zero almost everywhere. Similarly, the term 
    \begin{equation*}
        \theta'(t)\phi(|x|/\varepsilon)1_{\{x_1<0\}}
    \end{equation*}
    converges pointwise to zero and does not contribute to the limit.

    Consider now the following terms 
    \begin{equation}\label{s-g-030}
        \frac{1}{\varepsilon}\phi'(\pm x_{2}/\varepsilon)u^{2}(x,t)\theta_{\pm}(x_{1},t),
    \end{equation}
    and compute the limit of these terms in weak sense. Take a smooth function $\beta$ with compact support and write
    \begin{align*}
        \frac{1}{\varepsilon} \int_{\mathbb{R}^2} \beta(x) \phi'(x_2/\varepsilon) \textrm{d} x 
        &= \int_{-\infty}^{+\infty} \int_{\frac{1}{3}}^{\frac{2}{3}} \beta(x_1, \varepsilon x_2) \phi'(x_2) \textrm{d} x_2 \textrm{d} x_1 \\ 
        &= - \varepsilon \int_{-\infty}^{+\infty} \int_{\frac{1}{3}}^{\frac{2}{3}} \frac{\partial \beta}{\partial x_2} (x_1, \varepsilon x_2) \phi(x_2) \textrm{d} x_2 \textrm{d} x_1 \\ 
        &- \int_{-\infty}^{+\infty} \beta\Big(x_1, \frac{\varepsilon}{3}\Big) \textrm{d} x_1.
    \end{align*}
    Therefore,
    \begin{equation}
    \begin{split}
        \lim_{\varepsilon \to 0} \frac{1}{\varepsilon} \int_{\mathbb{R}^2} \beta(x) \phi'(x_2/\varepsilon) \textrm{d} x = - \int_{-\infty}^{+\infty} \beta(x_1, 0) \textrm{d} x_1,
    \end{split} 
    \end{equation}
    which is zero for terms in \eqref{s-g-030} due to the no-slip condition. Similarly, the integral with $\theta_{-}$ vanishes in the limit. For the terms
    \begin{equation}\label{s-f-020}
        \Big(-\frac{1}{\varepsilon} \theta(t)\phi'(|x|/\varepsilon) \frac{u^1(x_1,x_2) x_1 + u^2(x_1,x_2) x_2}{|x|} + \frac{\nu}{\varepsilon} \theta(t) \phi'(|x|/\varepsilon) \frac{1}{|x|}\Big) 1_{\{x_1<0\}},
    \end{equation}
    we write in polar coordinates
    \begin{align*}
        \frac{1}{\varepsilon} \int_{\mathbb{R}^2} \beta(x) \frac{\phi'(|x|/\varepsilon)}{|x|} \textrm{d} x 
        &= \int_{0}^{2\pi} \int_{\frac{1}{3}}^{\frac{2}{3}} \beta(\varepsilon r, \psi) \phi'(r) \textrm{d} r \textrm{d} \psi \\ 
        &= - \varepsilon \int_{0}^{2\pi} \int_{\frac{1}{3}}^{\frac{2}{3}} \frac{\partial \beta}{\partial r} (\varepsilon r, \psi) \phi(r) \textrm{d} r \textrm{d} \psi \\ 
        &- \int_{0}^{2\pi} \beta\Big(\frac{\varepsilon}{3}, \psi\Big) \textrm{d} \psi.
    \end{align*}
    Hence
    \begin{equation}
    \begin{split}
        \lim_{\varepsilon \to 0} \frac{1}{\varepsilon} \int_{\mathbb{R}^2} \beta(x) \frac{\phi'(|x|/\varepsilon)}{|x|} \textrm{d} x  = - \lim_{r \to 0} \int_{0}^{2\pi} \beta(r \cos\psi, r \sin\psi) \textrm{d} \psi,
    \end{split} 
    \end{equation}
    which is zero for \eqref{s-f-020}.
    
    Finally, consider the terms of the form
    \begin{equation}
        \frac{\nu}{\varepsilon^2} \theta_{\pm}(x_1, t)  \phi''(\pm x_2/\varepsilon).
    \end{equation}
    Again, take a function $\beta$ and write
    \begin{align*}
        \frac{1}{\varepsilon^2} \int_{\mathbb{R}^2} \beta(x) \phi''(x_2/\varepsilon) \textrm{d} x 
        &= \frac{1}{\varepsilon} \int_{-\infty}^{+\infty} \int_{\frac{1}{3}}^{\frac{2}{3}} \beta(x_1, \varepsilon x_2) \phi''(x_2) \textrm{d} x_2 \textrm{d} x_1 \\
        &= - \int_{-\infty}^{+\infty} \int_{\frac{1}{3}}^{\frac{2}{3}} \frac{\partial \beta}{\partial x_2} (x_1, \varepsilon x_2) \phi'(x_2) \textrm{d} x_2 \textrm{d} x_1 \\
        &= \varepsilon \int_{-\infty}^{+\infty} \int_{\frac{1}{3}}^{\frac{2}{3}} \frac{\partial^2 \beta}{\partial x_2^2} (x_1, \varepsilon x_2) \phi(x_2) \textrm{d} x_2 \textrm{d} x_1 \\
        &+ \int_{-\infty}^{+\infty} \frac{\partial \beta}{\partial x_2} \Big(x_1, \frac{\varepsilon}{3}\Big) \textrm{d} x_1.
    \end{align*}
    Therefore, 
    \begin{equation}
        \lim_{\varepsilon \to 0} \frac{1}{\varepsilon^2} \int_{\mathbb{R}^2} \beta(x) \phi''(x_2/\varepsilon) \textrm{d} x = \int_{-\infty}^{+\infty} \frac{\partial \beta}{\partial x_2} (x_1, 0) \textrm{d} x_1,
    \end{equation}
    and the corresponding term in $\rho_{\varepsilon}$ contributes to the limit as 
    \begin{equation}
        \lim_{\varepsilon \to 0} \frac{\nu}{\varepsilon^2} \int_{D} p^{D}(s, \xi, t, y) \theta_{+}(\xi_1, s)  \phi'' (\xi_2/\varepsilon) \textrm{d} \xi = \nu \int_0^{+\infty} \frac{\partial}{\partial \xi_2} p^{D}(s, (\xi_1, 0+), t, y)\theta_{+}(\xi_{1}, s) \textrm{d} \xi_{1},
    \end{equation}
    and we also have a similar contribution from the integral with $\theta_{-}$. Similarly, for the term
    \begin{equation}\label{d-s-200}
        \frac{\nu}{\varepsilon^2} \theta(t) \phi''(|x|/\varepsilon) 1_{\{x_1<0\}},
    \end{equation}
    we write
    \begin{align*}
        \frac{1}{\varepsilon^2} \int_{\mathbb{R}^2} \beta(x) \phi''(|x|/\varepsilon) \textrm{d} x 
        &= \int_{0}^{2\pi} \int_{\frac{1}{3}}^{\frac{2}{3}} \beta(\varepsilon r, \psi) \phi''(r) r \textrm{d} r \textrm{d} \psi \\
        &= - \int_{0}^{2\pi} \int_{\frac{1}{3}}^{\frac{2}{3}} \Big( \beta(\varepsilon r, \psi) + \varepsilon r \frac{\partial \beta}{\partial r} (\varepsilon r, \psi) \Big) \phi'(r) \textrm{d} r \textrm{d} \psi \\
        &= \int_{0}^{2\pi} \beta\Big(\frac{\varepsilon}{3}, \psi\Big) \textrm{d} \psi + \varepsilon \int_{0}^{2\pi} \int_{\frac{1}{3}}^{\frac{2}{3}} \frac{\partial \beta}{\partial r} (\varepsilon r, \psi) \phi(r) \textrm{d} r \textrm{d} \psi \\
        &-\varepsilon \int_{0}^{2\pi} \int_{\frac{1}{3}}^{\frac{2}{3}} r \frac{\partial \beta}{\partial r} (\varepsilon r, \psi) \phi'(r) \textrm{d} r \textrm{d} \psi.
    \end{align*}
    Therefore,
    \begin{equation}
    \begin{split}
        \lim_{\varepsilon \to 0} \frac{1}{\varepsilon^2} \int_{\mathbb{R}^2} \beta(x) \phi''(|x|/\varepsilon) \textrm{d} x = \lim_{r \to 0} \int_{0}^{2\pi}  \beta(r \cos\psi, r \sin\psi) \textrm{d} \psi,
    \end{split} 
    \end{equation}
    which is zero for the term \eqref{d-s-200}.
\end{proof}

Therefore, the Biot-Savart law implies the following representation for the velocity $u$.
\begin{lem}
\label{lem-3.1}The following integral
formula holds:
\begin{align}
u^{i}(x,t) & =\int_{D}\left(\int_{D}K^{i}(x,y)p^{D}(0,\xi,t,y)\textrm{d}y\right)\omega_{0}(\xi)\textrm{d}\xi\nonumber \\
 & +\int_{0}^{t}\int_{D}\left(\int_{D}K^{i}(x,y)p^{D}(s,\xi,t,y)\textrm{d}y\right)G(\xi,s)\textrm{d}\xi\textrm{d}s\nonumber \\
 & +\nu\int_{0}^{t}\int_{0}^{\infty}\left(\int_{D}K^{i}(x,y)\frac{\partial}{\partial\xi_{2}}p^{D}(s,(\xi_{1},0+),t,y)\textrm{d}y\right)\theta_{+}(\xi_{1},s)\textrm{d}\xi_{1}\textrm{d}s\nonumber \\
 & -\nu\int_{0}^{t}\int_{0}^{\infty}\left(\int_{D}K^{i}(x,y)\frac{\partial}{\partial\xi_{2}}p^{D}(s,(\xi_{1},0-),t,y)\textrm{d}y\right)\theta_{-}(\xi_{1},s)\textrm{d}\xi_{1}\textrm{d}s\label{lem-s01}
\end{align}
for every $x\in D$ and $t>0$.
\end{lem}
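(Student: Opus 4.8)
The plan is to combine the vorticity representation from Lemma \ref{PlateOmegaLemma} with the Biot-Savart law \eqref{BS-law} and then interchange the order of integration via Fubini's theorem. Starting from $u^{i}(x,t)=\int_{D}K^{i}(x,y)\omega(y,t)\textrm{d}y$ and inserting the expression \eqref{ome-002} for $\omega(y,t)$, the velocity splits into a sum of five $y$-integrals, one for each term on the right-hand side of \eqref{ome-002}: the boundary term, the initial-vorticity term with $\omega_0$, the forcing term with $G$, and the two boundary-stress terms with $\theta_{+}$ and $\theta_{-}$.

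The first step is to dispose of the boundary term $(\omega_0(y_1,0+,t)+\omega_0(y_1,0-,t))1_{\{y\in\partial D\}}$. Since the flat plate $\partial D=\{y_2=0,\,y_1\geq0\}$ is a one-dimensional set and hence has two-dimensional Lebesgue measure zero, integrating this term against $K^{i}(x,\cdot)\textrm{d}y$ over $D$ yields zero. This accounts for the discrepancy in the number of terms: the five terms of \eqref{ome-002} produce only the four terms of \eqref{lem-s01}. For each of the remaining four terms I would then apply Fubini to move the $y$-integration inside, collecting the bracketed kernel integrals $\int_{D}K^{i}(x,y)p^{D}(0,\xi,t,y)\textrm{d}y$ paired with $\omega_0$, $\int_{D}K^{i}(x,y)p^{D}(s,\xi,t,y)\textrm{d}y$ paired with $G$, and $\int_{D}K^{i}(x,y)\frac{\partial}{\partial\xi_2}p^{D}(s,(\xi_1,0\pm),t,y)\textrm{d}y$ paired with $\theta_{\pm}$. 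This reordering produces \eqref{lem-s01} directly.

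The main point requiring care is the justification of Fubini, since the Biot-Savart kernel $K^{i}(x,y)$ is singular along the diagonal, of order $|x-y|^{-1}$; this is locally integrable in two dimensions, so $K^{i}(x,\cdot)$ is integrable on bounded sets. Combined with the smoothness, positivity, and Gaussian-type decay of the Dirichlet heat kernel $p^{D}$ and its spatial derivatives (for $t>s$), together with the assumed regularity and decay of $\omega_0$, $G$, and $\theta_{\pm}$, the integrands are absolutely integrable on the relevant product domains, which licenses the interchange. The most delicate instance is the boundary-stress terms, where $\frac{\partial}{\partial\xi_2}p^{D}$ is evaluated at the boundary points $(\xi_1,0\pm)$; here one relies on the boundary regularity of the Green function for the Dirichlet problem and on the assumption, stated before \eqref{ext-01}, that $\theta_{+}$ and $\theta_{-}$ have continuous derivatives up to second order away from $x_1=0$. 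With these integrability estimates in hand, the interchange is routine and the formula \eqref{lem-s01} follows.
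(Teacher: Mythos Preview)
Your proposal is correct and follows exactly the approach of the paper: apply the Biot-Savart law \eqref{BS-law} to the vorticity representation \eqref{ome-002} from Lemma~\ref{PlateOmegaLemma}, observe that the boundary term $(\omega_0(y_1,0+,t)+\omega_0(y_1,0-,t))1_{\{y\in\partial D\}}$ vanishes under the $y$-integral since $\partial D$ has Lebesgue measure zero, and swap the order of integration. In fact, your write-up is more careful than the paper's own two-sentence proof, which does not spell out the Fubini justification at all.
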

\begin{proof}
    As noticed above, the formula is implied by the Biot-Savart law \eqref{BS-law}. Note that the following term $(\omega_0(y_1,0+,t) + \omega_0(y_1,0-,t)) 1_{\{y \in \partial D\}}$ does not contribute to the representation as it is zero outside the boundary.
\end{proof}

Thanks to the integral representation \eqref{lem-s01}, we may establish
the random vortex dynamics as follows.

Firstly extend $u(x,t)$ to be zero if $x\notin D$, and define the
Taylor diffusion process family $X^{\xi,s}$ as the weak solution
to the stochastic differential equation
\begin{equation}
\begin{cases}
\textrm{d} X_{t}^{\xi,s}=u(X_{t}^{\xi,s},t)\textrm{d}t+\sqrt{2\nu}\textrm{d}B_{t}, & \textrm{ for }t\geq s,\\
X_{t}^{\xi,s}=\xi, & \textrm{ for }t\leq s.
\end{cases}\label{Ta-001}
\end{equation}
Let 
\[
\tau_{\xi,s}=\inf\left\{ t\geq s:X_{t}^{\xi,s}\in\partial D\right\} .
\]
Then
\[
\int_{D}K^{i}(x,y)p^{D}(s,\xi,t,y)\textrm{d}y=\mathbb{E}\left[K^{i}(x,X_{t}^{\xi,s}):t<\tau_{\xi,s}\right],
\]
\[
\int_{D}K^{i}(x,y)\frac{\partial}{\partial\xi_{2}}p^{D}(s,(\xi_{1},0+),t,y)\textrm{d}y=\left.\frac{\partial}{\partial\xi_{2}}\right|_{\xi_{2}=0+}\mathbb{E}\left[K^{i}(x,X_{t}^{\xi,s}):t<\tau_{\xi,s}\right]
\]
and
\[
\int_{D}K^{i}(x,y)\frac{\partial}{\partial\xi_{2}}p^{D}(s,(\xi_{1},0-),t,y)\textrm{d}y=\left.\frac{\partial}{\partial\xi_{2}}\right|_{\xi_{2}=0-}\mathbb{E}\left[K^{i}(x,X_{t}^{\xi,s}):t<\tau_{\xi,s}\right].
\]
Therefore the velocity $u(x,t)$ can be written as the following
\begin{align}\label{UReprPlate}
u^{i}(x,t) & =\int_{D}\mathbb{E}\left[K^{i}(x,X_{t}^{\xi,0}):t<\tau_{\xi,0}\right]\omega_{0}(\xi)\textrm{d}\xi\nonumber \\
 & +\int_{0}^{t}\int_{D}\mathbb{E}\left[K^{i}(x,X_{t}^{\xi,s}):t<\tau_{\xi,s}\right]G(\xi,s)\textrm{d}\xi\textrm{d}s\nonumber \\
 & +\nu\int_{0}^{t}\int_{0}^{\infty}\left.\frac{\partial}{\partial\xi_{2}}\right|_{\xi_{2}=0+}\mathbb{E}\left[K^{i}(x,X_{t}^{\xi,s}):t<\tau_{\xi,s}\right]\theta_{+}(\xi_{1},s)\textrm{d}\xi_{1}\textrm{d}s\nonumber \\
 & -\nu\int_{0}^{t}\int_{0}^{\infty}\left.\frac{\partial}{\partial\xi_{2}}\right|_{\xi_{2}=0-}\mathbb{E}\left[K^{i}(x,X_{t}^{\xi,s}):t<\tau_{\xi,s}\right]\theta_{-}(\xi_{1},s)\textrm{d}\xi_{1}\textrm{d}s.
\end{align}

\subsection{Flows past a wedge type obstacle}\label{WedgeSection}

In this subsection we derive a representation similar to the above for the wedge case. Here we will use the local coordinates for boundary components, and this idea will be used to derive a representation for more general domains subsequently. 

Recall that the wedge obstacle is given by $\Lambda=\left\{ x=r\textrm{e}^{\textrm{i}\theta}:r\geq0\textrm{ and }-\alpha\leq\theta\leq\alpha\right\}$ for a fixed $\alpha \in (0, \frac{\pi}{2})$, so the domain $D=\mathbb{R}^{2}\setminus\Lambda$. Note that the boundary $\partial D =  \{x=re^{i\theta}: r \geq 0, \theta = \pm \alpha\}$ has two components denoted $\partial D^{\pm}$ correspondingly. We also introduce normal coordinates $n^{\pm}, \tau^{\pm}$ in neighbourhood of $\partial D^{\pm}$ which are given by the following transformation
\begin{equation}\label{NormCoordDef}
    \left\{\begin{aligned}
    &\tau^{\pm}=x_1 \cos \alpha \pm x_2 \sin \alpha,\\
    &n^{\pm}= -x_1 \sin \alpha \pm x_2 \cos \alpha.
    \end{aligned}\right.
\end{equation}
In these coordinates, we denote the components of the vector field $u$ by $u^{\tau^{\pm}}, u^{n^{\pm}}$. Note that the boundary components are given by $\partial D^{\pm}=\{\tau^{\pm} \geq 0, n^{\pm}=0\}$. In this case, the no-slip condition reads
\begin{equation}
    u(x,t) = \lim_{n^{\pm} \to 0} u(\tau^{\pm}, n^{\pm}, t) = 0,
\end{equation}
for any $x \in \partial D^{\pm}$. As in \eqref{ThetaDefPlate}, the stress at the boundary is given by 
\begin{equation}\label{ThetaDefWedge}
    \theta_{\pm}(\tau^{\pm}, t) = \begin{cases}
    - \left. \frac{\partial u^{\tau^{\pm}}}{\partial n^{\pm}} (\tau^{\pm}, n^{\pm}, t) \right|_{n^{\pm}=0+} & \textrm{ for }\tau^{\pm} \geq 0,\\
    0 & \textrm{ for }\tau^{\pm}<0,
\end{cases}
\end{equation}
as the tangential derivatives $\frac{\partial u}{\partial \tau^{\pm}}$ vanish due to the no-slip condition. Note that the derivative operators are given by 
\begin{equation}
    \left\{\begin{aligned}
    &\frac{\partial}{\partial \tau^{\pm}}=\cos \alpha \frac{\partial}{\partial x_1} \pm \sin \alpha \frac{\partial}{\partial x_2},\\
    &\frac{\partial}{\partial n^{\pm}}=-\sin \alpha \frac{\partial}{\partial x_1} \pm \cos \alpha \frac{\partial}{\partial x_2};
    \end{aligned}\right.
\end{equation}
due to the definition \eqref{NormCoordDef}. 

We introduce the extension $\sigma_{\varepsilon}$ as follows 
\begin{equation}\label{SigmaEpsDef}
    \sigma_{\varepsilon}(x_1, x_2, t)=\begin{cases}
    \theta_{+}(\tau^{+}, t) \phi(n^{+}/\varepsilon), &\text{ if } \arg x \in [\alpha, \alpha+\frac{\pi}{2}],\\
    \theta(t)\phi(|x|/\varepsilon), &\text{ if } \arg x \in (\alpha+\frac{\pi}{2}, \frac{3\pi}{2}-\alpha),\\
    \theta_{-}(\tau^{-}, t) \phi(n^{-}/\varepsilon), &\text{ if } \arg x \in [\frac{3\pi}{2}-\alpha, 2\pi-\alpha];
    \end{cases}
\end{equation}
where the coordinates $n^{\pm}, \tau^{\pm}$ are given as functions of $x_1, x_2$ in \eqref{NormCoordDef}, and $\theta(t)$ denotes the value of $\theta_{+}(0,t)=\theta_{-}(0,t)$. Then $W^{\varepsilon}=\omega-\sigma_{\varepsilon}$ as before satisfies the following equation
\begin{equation}\label{WepsEq}
\left(\frac{\partial}{\partial t}+u\cdot\nabla-\nu\Delta\right)W^{\varepsilon}=G+\rho_{\varepsilon}\quad\textrm{ in }D,
\end{equation}
with homogeneous boundary condition
\begin{equation}
\lim_{n^{\pm} \to 0+} W^{\varepsilon}(\tau^{\pm}, n^{\pm}, t)=0\quad\textrm{ for } \tau^{\pm}\geq0,
\end{equation}
and initial data
\begin{align}
W_{0}^{\varepsilon}(x)&=\omega_{0}(x_{1},x_{2})-\theta_{+}(\tau^{+}, 0) \phi(n^{+}/\varepsilon) 1_{\{\arg x \in [\alpha, \alpha+\frac{\pi}{2}]\}} \nonumber \\ 
&-\theta_{-}(\tau^{-}, 0) \phi(n^{-}/\varepsilon) 1_{\{\arg x \in [\frac{3\pi}{2}-\alpha, 2\pi-\alpha]\}} \nonumber \\
&-\theta(0)\phi(|x|/\varepsilon) 1_{\{\arg x \in (\alpha+\frac{\pi}{2}, \frac{3\pi}{2}-\alpha)\}},
\end{align}
for $x\in D$.

The function $\rho_{\varepsilon}$ is given by
\begin{equation}
\rho_{\varepsilon} = \nu \Delta \sigma_{\varepsilon} - \frac{\partial \sigma_{\varepsilon}}{\partial t}-u\cdot \nabla \sigma_{\varepsilon},
\end{equation}
and, using the definition of $\sigma_{\varepsilon}$, can be written as follows
\begin{align}\label{Rhoeps}
    \rho_{\varepsilon}(x,t) &= \Big(\nu \frac{\partial^2 \theta_{+}}{\partial (\tau^{+})^2} (\tau^{+}, t) \phi(n^{+}/\varepsilon) + \frac{\nu}{\varepsilon^2} \theta_{+}(\tau^{+}, t)  \phi'' (n^{+}/\varepsilon) - \frac{\partial \theta_{+}}{\partial t}(\tau^{+}, t) \phi(n^{+}/\varepsilon) \nonumber \\
    &- u^{\tau^{+}}(\tau^{+}, n^{+},t) \frac{\partial \theta_{+}}{\partial \tau^{+}}(\tau^{+}, t) \phi(n^{+}/\varepsilon) - \frac{1}{\varepsilon} u^{n^{+}}(\tau^{+}, n^{+},t) \theta_{+}(\tau^{+}, t) \phi'(n^{+}/\varepsilon) \Big) 1_{\{\arg x \in [\alpha, \alpha+\frac{\pi}{2}]\}} \nonumber \\
    &+ \Big(\nu \frac{\partial^2 \theta_{-}}{\partial (\tau^{-})^2} (\tau^{-}, t) \phi(n^{-}/\varepsilon) + \frac{\nu}{\varepsilon^2} \theta_{-}(\tau^{-}, t)  \phi'' (n^{-}/\varepsilon) - \frac{\partial \theta_{-}}{\partial t}(\tau^{-}, t) \phi(n^{-}/\varepsilon) \nonumber \\
    &- u^{\tau^{-}}(\tau^{-}, n^{-},t) \frac{\partial \theta_{-}}{\partial \tau^{-}}(\tau^{-}, t) \phi(n^{-}/\varepsilon) - \frac{1}{\varepsilon} u^{n^{-}}(\tau^{-}, n^{-},t) \theta_{-}(\tau^{-}, t) \phi'(n^{-}/\varepsilon) \Big) 1_{\{\arg x \in [\frac{3\pi}{2}-\alpha, 2\pi-\alpha]\}} \nonumber \\
    & +\Big( -\frac{1}{\varepsilon} \theta(t)\phi'(|x|/\varepsilon) \frac{u^1(x_1,x_2) x_1 + u^2(x_1,x_2) x_2}{|x|} -\theta'(t)\phi(|x|/\varepsilon) + \frac{\nu}{\varepsilon^2} \theta(t) \phi''(|x|/\varepsilon) \nonumber \\
    & + \frac{\nu}{\varepsilon} \theta(t) \phi'(|x|/\varepsilon) \frac{1}{|x|}\Big) 1_{\{\arg x \in (\alpha+\frac{\pi}{2}, \frac{3\pi}{2}-\alpha)\}}.
\end{align}

As before, we write the representation for the solution $W^{\varepsilon}$ to \eqref{WepsEq} in terms of the transition probability density $p^D(s, \xi, t, \eta)$ which implies 
\begin{equation}\label{WepsRepr}
    \omega(y,t) = \sigma_{\varepsilon}(y,t) + \int_{D} p^{D}(0, \xi, t, y) W^{\varepsilon}(\xi, 0) \textrm{d} \xi + \int_0^t \int_{D} p^{D}(s, \xi, t, y) (G(\xi, s) + \rho_{\varepsilon}(\xi, s)) \textrm{d} \xi \textrm{d} s,
\end{equation}
for $y \in D$ and $t>0$.

\begin{lem}
    We have the following representation:
\begin{align}
     \omega(y,t) &= \omega(y,t) 1_{\{y \in \partial D\}} + \int_{D} p^{D}(0, \xi, t, y) \omega_0(\xi) \textrm{d} \xi + \int_0^t \int_{D} p^{D}(s, \xi, t, y) G(\xi, s) \textrm{d} \xi \textrm{d} s \nonumber \\ 
     &+ \nu \int_0^t \int_0^{+\infty} \frac{\partial}{\partial n^{+}} p^{D}(s, (\tau^{+}, 0+), t, y)\theta_{+}(\tau^{+}, s) \textrm{d} \tau^{+} \textrm{d} s \nonumber \\ 
     &+ \nu \int_0^t \int_0^{+\infty} \frac{\partial}{\partial n^{-}} p^{D}(s, (\tau^{-}, 0+), t, y)\theta_{-}(\tau^{-}, s) \textrm{d} \tau^{-} \textrm{d} s,
\end{align}
for $y \in D$ and $t>0$.
\end{lem}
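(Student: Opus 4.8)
The plan is to mirror the proof of Lemma~\ref{PlateOmegaLemma}: I let $\varepsilon\downarrow0$ in the representation~\eqref{WepsRepr} and identify which terms of $\rho_{\varepsilon}$ in~\eqref{Rhoeps} survive in the limit. First I would dispose of the elementary terms exactly as in the flat-plate case. The surface density $\sigma_{\varepsilon}(y,t)$ from~\eqref{SigmaEpsDef} converges pointwise to a function supported on $\partial D$, which produces the term $\omega(y,t)1_{\{y\in\partial D\}}$; in the initial-data integral the $\sigma_{\varepsilon}(\xi,0)$ contribution vanishes by dominated convergence, since its pointwise limit is zero almost everywhere, leaving $\int_{D}p^{D}(0,\xi,t,y)\omega_{0}(\xi)\textrm{d}\xi$; and the forcing term $\int_{0}^{t}\int_{D}p^{D}G$ is untouched by the limit.

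The core of the argument is the term-by-term analysis of $\int_{0}^{t}\int_{D}p^{D}(s,\xi,t,y)\rho_{\varepsilon}(\xi,s)\textrm{d}\xi\textrm{d}s$. The key observation is that the change of variables~\eqref{NormCoordDef} is a rotation, so it preserves both the Lebesgue measure and the Laplacian, and it maps the sector $\{\arg x\in[\alpha,\alpha+\frac{\pi}{2}]\}$ exactly onto the quadrant $\{\tau^{+}\geq0,\,n^{+}\geq0\}$ (and similarly for the $-$ component). Consequently, after passing to the coordinates $(\tau^{\pm},n^{\pm})$, each of the two boundary-layer groups in~\eqref{Rhoeps} reduces verbatim to the weak-limit computations already carried out in Lemma~\ref{PlateOmegaLemma}, with $x_{2}$ replaced by $n^{\pm}$ and $x_{1}$ by $\tau^{\pm}$. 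I would thus argue that the terms carrying an undifferentiated factor $\phi(n^{\pm}/\varepsilon)$ converge pointwise to functions supported on $\{n^{\pm}=0\}$ and hence do not contribute; the terms $\frac{1}{\varepsilon}u^{n^{\pm}}\theta_{\pm}\phi'(n^{\pm}/\varepsilon)$ vanish in the weak limit because the surviving boundary factor is the normal velocity $u^{n^{\pm}}(\tau^{\pm},0)$, which is zero by the no-slip condition; and only the terms $\frac{\nu}{\varepsilon^{2}}\theta_{\pm}\phi''(n^{\pm}/\varepsilon)$ survive, yielding, after pairing with the smooth kernel $p^{D}(s,\cdot,t,y)$ and integrating by parts twice in $n^{\pm}$, the two boundary integrals $\nu\int_{0}^{t}\int_{0}^{\infty}\frac{\partial}{\partial n^{\pm}}p^{D}(s,(\tau^{\pm},0+),t,y)\theta_{\pm}(\tau^{\pm},s)\textrm{d}\tau^{\pm}\textrm{d}s$. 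Both appear with $+\nu$ because $n^{+}$ and $n^{-}$ are both inward normals, in contrast to the opposite signs of the flat-plate case.

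Finally, the cap group in~\eqref{Rhoeps}, supported in the sector $\{\arg x\in(\alpha+\frac{\pi}{2},\frac{3\pi}{2}-\alpha)\}$, is treated exactly as the corresponding $|x|/\varepsilon$ terms of Lemma~\ref{PlateOmegaLemma}: writing the weak limits in polar coordinates shows that the $\theta'(t)\phi(|x|/\varepsilon)$, $\frac{1}{\varepsilon}\phi'(|x|/\varepsilon)/|x|$ and $\frac{1}{\varepsilon^{2}}\phi''(|x|/\varepsilon)$ contributions all collapse to limits concentrated at the tip $x=0$, a single point, and therefore vanish. The main obstacle I anticipate is the careful treatment of this neighbourhood of the tip, where the two boundary components $\partial D^{\pm}$ meet and the three angular sectors abut: one must check that the artificial interfaces $\arg x=\alpha+\frac{\pi}{2}$ and $\arg x=\frac{3\pi}{2}-\alpha$ produce no spurious boundary term in the integration by parts, and that the thin supports $\{n^{\pm}\in[\varepsilon/3,2\varepsilon/3]\}$ of $\phi''(n^{\pm}/\varepsilon)$ localise correctly onto $\partial D^{\pm}$ away from the corner. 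This is ensured precisely because $\theta_{\pm}(\tau^{\pm},s)$ vanishes for $\tau^{\pm}<0$ and all tip contributions are absorbed into the vanishing cap group.
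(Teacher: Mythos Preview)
Your proposal is correct and follows essentially the same approach as the paper: let $\varepsilon\downarrow0$ in \eqref{WepsRepr} and analyse $\rho_{\varepsilon}$ term by term, with the $\phi$, $\phi'$ and cap terms vanishing and only the $\frac{\nu}{\varepsilon^{2}}\phi''(n^{\pm}/\varepsilon)$ terms surviving as the boundary integrals. Your observation that the orthogonal change of variables~\eqref{NormCoordDef} reduces the boundary-layer computations verbatim to those of Lemma~\ref{PlateOmegaLemma} is exactly what the paper does in practice (it rewrites the same integration-by-parts argument in the $(\tau^{\pm},n^{\pm})$ coordinates), and your remark on the tip and sector interfaces is a careful point the paper leaves implicit.
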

\begin{proof}
    To obtain the result, we again let $\varepsilon \downarrow 0$ in the representation \eqref{WepsRepr}. First, it is easy to see that due to \eqref{SigmaEpsDef}, 
    \begin{equation}
    \sigma_{\varepsilon}(y, t) \to \theta_{+}(\tau^{+}, t) 1_{\{\tau^{+} \geq 0, n^{+}=0\}} + \theta_{-}(\tau^{-}, t) 1_{\{\tau^{-} \geq 0, n^{-}=0\}} = \omega(y,t) 1_{\{y \in \partial D\}},
    \end{equation}
    pointwise as $\varepsilon \to 0$. This also implies that the integral 
    \begin{equation}
    \int_{D} p^{D}(0, \xi, t, y) \sigma_{\varepsilon}(\xi, 0) \textrm{d} \xi \to 0, \quad \textrm{ as } \varepsilon \downarrow 0,
    \end{equation}
    by dominated convergence. 

    Now let us take the limit in the last integral in 
    \eqref{WepsRepr}. It is easy to see that in \eqref{Rhoeps}, the following terms 
    \begin{equation}
        \phi(n^{\pm}/\varepsilon) \left( \nu \frac{\partial^2 \theta_{\pm}}{\partial (\tau^{\pm})^2} (\tau^{\pm}, t) - \frac{\partial \theta_{\pm}}{\partial t}(\tau^{\pm}, t) \right) - u^{\tau^{\pm}}(\tau^{\pm}, n^{\pm},t) \frac{\partial \theta_{\pm}}{\partial \tau^{\pm}}(\tau^{\pm}, t) \phi(n^{\pm}/\varepsilon),
    \end{equation}
    as $\varepsilon \to 0$, converge pointwise to 
    \begin{equation}
        \Big(\nu \frac{\partial^2 \theta_{\pm}}{\partial (\tau^{\pm})^2} (\tau^{\pm}, t) - \frac{\partial \theta_{\pm}}{\partial t}(\tau^{\pm}, t)\Big) 1_{\{\tau^{\pm} \geq 0, n^{\pm}=0\}},
    \end{equation}
    since $u|_{\partial D^{\pm}} = 0$ and therefore, these terms do not contribute to the limit. Similarly, 
    \begin{equation*}
        -\theta'(t)\phi(|x|/\varepsilon)1_{\{\arg x \in (\alpha+\frac{\pi}{2}, \frac{3\pi}{2}-\alpha)\}}
    \end{equation*}
    converges to zero pointwise and does not contribute to the limiting representation. 

    Let us consider now the term
    \begin{equation}\label{Terms1}
        \frac{1}{\varepsilon} u^{n^{+}}(\tau^{+}, n^{+},t) \theta_{+}(\tau^{+}, t) \phi'(n^{+}/\varepsilon) 1_{\{\arg x \in [\alpha, \alpha+\frac{\pi}{2}]\}}.
    \end{equation}
    Take a smooth function $\beta$ with compact support and consider first the integral
    \begin{align*}
        \frac{1}{\varepsilon} \int_{\mathbb{R}^2} \beta(x) \phi'(n^{+}/\varepsilon) 1_{\{\arg x \in [\alpha, \alpha+\frac{\pi}{2}]\}} \textrm{d} x &= \frac{1}{\varepsilon} \int_0^{+\infty} \int_0^{+\infty} \beta(\tau^{+}, n^{+}) \phi'(n^{+}/\varepsilon) \textrm{d} n^{+} \textrm{d} \tau^{+} \\ 
        &= \int_0^{+\infty} \int_{\frac{1}{3}}^{\frac{2}{3}} \beta(\tau^{+}, \varepsilon n^{+}) \phi'(n^{+}) \textrm{d} n^{+} \textrm{d} \tau^{+} \\ 
        &= - \int_0^{+\infty} \beta\Big(\tau^{+}, \frac{\varepsilon}{3}\Big) \textrm{d} \tau^{+}\\ 
        &- \varepsilon \int_0^{+\infty} \int_{\frac{1}{3}}^{\frac{2}{3}} \frac{\partial \beta}{\partial n^{+}} (\tau^{+}, \varepsilon n^{+}) \phi(n^{+}) \textrm{d} n^{+} \textrm{d} \tau^{+}.
    \end{align*}
    Therefore, we have 
    \begin{equation}
    \begin{split}
        \lim_{\varepsilon \to 0} \frac{1}{\varepsilon} \int_{\mathbb{R}^2} \beta(x) \phi'(n^{+}/\varepsilon) 1_{\{\arg x \in [\alpha, \alpha+\frac{\pi}{2}]\}} \textrm{d} x = - \int_0^{+\infty} \beta(\tau^{+}, 0) \textrm{d} \tau^{+},
    \end{split} 
    \end{equation}
    which is zero for terms in \eqref{Terms1} due to the no-slip condition. Similarly, the integral with the term 
    \begin{equation*}
        \frac{1}{\varepsilon} u^{n^{-}}(\tau^{-}, n^{-},t) \theta_{-}(\tau^{-}, t) \phi'(n^{-}/\varepsilon) 1_{\{\arg x \in [\frac{3\pi}{2}-\alpha, 2\pi-\alpha]\}}.
    \end{equation*}
    vanishes in the limit. Consider now the following terms
    \begin{equation}\label{s-x-020}
    \Big(-\frac{1}{\varepsilon} \theta(t)\phi'(|x|/\varepsilon) \frac{u^1(x_1,x_2) x_1 + u^2(x_1,x_2) x_2}{|x|} + \frac{\nu}{\varepsilon} \theta(t) \phi'(|x|/\varepsilon) \frac{1}{|x|}\Big) 1_{\{\arg x \in (\alpha+\frac{\pi}{2}, \frac{3\pi}{2}-\alpha)\}}.
    \end{equation}
    Recall that in the proof of Lemma \ref{PlateOmegaLemma} we had
    \begin{equation}
    \begin{split}
        \lim_{\varepsilon \to 0} \frac{1}{\varepsilon} \int_{\mathbb{R}^2} \beta(x) \frac{\phi'(|x|/\varepsilon)}{|x|} \textrm{d} x  = - \lim_{r \to 0} \int_{0}^{2\pi} \beta(r\cos\psi,r\sin\psi) \textrm{d} \psi,
    \end{split} 
    \end{equation}
    which is zero for the terms \eqref{s-x-020}.

    Finally, consider the terms of the form
    \begin{equation}
        \frac{\nu}{\varepsilon^2} \theta_{\pm}(\tau^{\pm}, t)  \phi'' (n^{\pm}/\varepsilon).
    \end{equation}
    Again, take a function $\beta$ and write
    \begin{align*}
        \frac{1}{\varepsilon^2} \int_{\mathbb{R}^2} \beta(x) \phi''(n^{+}/\varepsilon) 1_{\{\arg x \in [\alpha, \alpha+\frac{\pi}{2}]\}} \textrm{d} x &= \frac{1}{\varepsilon^2} \int_0^{+\infty} \int_0^{+\infty} \beta(\tau^{+}, n^{+}) \phi''(n^{+}/\varepsilon) \textrm{d} n^{+} \textrm{d} \tau^{+} \\ 
        &= \frac{1}{\varepsilon} \int_0^{+\infty} \int_{\frac{1}{3}}^{\frac{2}{3}} \beta(\tau^{+}, \varepsilon n^{+}) \phi''(n^{+}) \textrm{d} n^{+} \textrm{d} \tau^{+} \\
        &= - \int_0^{+\infty} \int_{\frac{1}{3}}^{\frac{2}{3}} \frac{\partial \beta}{\partial n^{+}} (\tau^{+}, \varepsilon n^{+}) \phi'(n^{+}) \textrm{d} n^{+} \textrm{d} \tau^{+} \\
        &= \int_0^{+\infty} \frac{\partial \beta}{\partial n^{+}} \Big(\tau^{+}, \frac{\varepsilon}{3}\Big) \textrm{d} \tau^{+} \\
        &+ \varepsilon \int_0^{+\infty} \int_{\frac{1}{3}}^{\frac{2}{3}} \frac{\partial^2 \beta}{\partial (n^{+})^2} (\tau^{+}, \varepsilon n^{+}) \phi(n^{+}) \textrm{d} n^{+} \textrm{d} \tau^{+}.
    \end{align*}
    Therefore, 
    \begin{equation}
        \lim_{\varepsilon \to 0} \frac{1}{\varepsilon^2} \int_{\mathbb{R}^2} \beta(x) \phi''(n^{+}/\varepsilon) 1_{\{\arg x \in [\alpha, \alpha+\frac{\pi}{2}]\}} \textrm{d} x = \int_0^{+\infty} \frac{\partial \beta}{\partial n^{+}} (\tau^{+}, 0) \textrm{d} \tau^{+},
    \end{equation}
    and the corresponding term in $\rho_{\varepsilon}$ contributes to the limit as 
    \begin{equation}
    \begin{split}
        \lim_{\varepsilon \to 0} \frac{\nu}{\varepsilon^2} \int_{D} p^{D}(s, \xi, t, y) \theta_{+}(\tau^{+}, s)  \phi'' (n^{+}/\varepsilon) 1_{\{y_2 \geq 0 \}} \textrm{d} \xi \\
        = \nu \int_0^{+\infty} \frac{\partial}{\partial n^{+}} p^{D}(s, (\tau^{+}, 0+), t, y)\theta_{+}(\tau^{+}, s) \textrm{d} \tau^{+},
    \end{split}
    \end{equation}
    and we also have a similar contribution from the integral with $\theta_{-}$. Lastly, consider the term
    \begin{equation}\label{d-x-200}
        \frac{\nu}{\varepsilon^2} \theta(t) \phi''(|x|/\varepsilon) 1_{\{\arg x \in (\alpha+\frac{\pi}{2}, \frac{3\pi}{2}-\alpha)\}}.
    \end{equation}
    Recall that in Lemma \ref{PlateOmegaLemma}, it was shown that
    \begin{equation}
    \begin{split}
        \lim_{\varepsilon \to 0} \frac{1}{\varepsilon^2} \int_{\mathbb{R}^2} \beta(x) \phi''(|x|/\varepsilon) \textrm{d} x = \lim_{r \to 0} \int_{0}^{2\pi} \beta(r\cos\psi,r\sin\psi) \textrm{d} \psi,
    \end{split} 
    \end{equation}
    which is zero for the term \eqref{d-x-200}.
\end{proof}

We again have the following representation for the velocity due to the Biot-Savart law.

\begin{lem}
    The following representation holds:
    \begin{align}
    u^i(x,t) & =\int_{D}\left(\int_{D}K^i(x,y)p^{D}(0,\xi,t,y)\textrm{d}y\right)\omega_{0}(\xi)\textrm{d}\xi\nonumber \\
    & +\int_{0}^{t}\int_{D}\left(\int_{D}K^i(x,y)p^{D}(s,\xi,t,y)\textrm{d}y\right)G(\xi,s)\textrm{d}\xi\textrm{d}s\nonumber \\
    & +\nu\int_{0}^{t}\int_{0}^{\infty}\left(\int_{D}K^i(x,y)\frac{\partial}{\partial n^{+}} p^{D}(s, (\tau^{+}, 0+), t, y)\textrm{d}y\right)\theta_{+}(\tau^{+}, s) \textrm{d} \tau^{+}\textrm{d}s\nonumber \\
    & +\nu\int_{0}^{t}\int_{0}^{\infty}\left(\int_{D}K^i(x,y)\frac{\partial}{\partial n^{-}} p^{D}(s, (\tau^{-}, 0+), t, y)\textrm{d}y\right)\theta_{-}(\tau^{-}, s) \textrm{d} \tau^{-}\textrm{d}s,
\end{align}
for any $x \in D$ and $t > 0$.
\end{lem}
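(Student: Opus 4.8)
The plan is to mimic the argument for the flat-plate case (Lemma \ref{lem-3.1}): substitute the vorticity representation just established into the Biot-Savart law and interchange the order of integration by Fubini. Concretely, I would start from the Biot-Savart law \eqref{BS-law},
\[
u^{i}(x,t)=\int_{D}K^{i}(x,y)\omega(y,t)\textrm{d}y,
\]
and insert the four-term expression for $\omega(y,t)$ from the preceding lemma.

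First I would dispose of the boundary contribution $\omega(y,t)1_{\{y\in\partial D\}}$: since the two edges of the wedge $\partial D=\partial D^{+}\cup\partial D^{-}$ form a set of Lebesgue measure zero in $\mathbb{R}^{2}$, integrating this term against $K^{i}(x,\cdot)$ yields zero, exactly as in the flat-plate case. The remaining four terms are integrals against $p^{D}$, against $G\,p^{D}$, and against the two boundary-normal derivatives $\frac{\partial}{\partial n^{\pm}}p^{D}(s,(\tau^{\pm},0+),t,y)$. Applying Fubini to each term, I would move the $y$-integration against $K^{i}(x,y)$ inside, which produces precisely the bracketed kernels $\int_{D}K^{i}(x,y)p^{D}(\cdots)\textrm{d}y$ and $\int_{D}K^{i}(x,y)\frac{\partial}{\partial n^{\pm}}p^{D}(\cdots)\textrm{d}y$ displayed in the statement, so that the four summands match term by term.

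The main obstacle is justifying the interchange of integration in the two boundary terms, where the integrand involves the normal derivative of the Dirichlet heat kernel evaluated at the boundary together with the singular Biot-Savart kernel $K^{i}(x,\cdot)$. Here I would appeal to the regularity and Gaussian-type decay of $p^{D}$ and its spatial derivatives up to the boundary, combined with the local integrability of $K^{i}(x,\cdot)$ made explicit via Lemma \ref{lem-e-003} in Section \ref{BS_Section}; these give the joint integrability over $(s,\tau^{\pm},y)$ needed for Fubini. As in the flat-plate setting, in the numerical scheme one in any case works with the desingularised kernel $K_{\delta}$, for which the interchange is immediate. Once Fubini is justified, the proof concludes exactly as for Lemma \ref{lem-3.1}.
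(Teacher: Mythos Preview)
Your proposal is correct and follows the paper's own approach exactly: the paper states this lemma with only the one-line justification ``We again have the following representation for the velocity due to the Biot-Savart law'', which is precisely your plan of inserting the preceding vorticity representation into \eqref{BS-law}, discarding the boundary indicator term as a null set, and swapping the order of integration. Your added remarks on justifying Fubini are more detailed than anything the paper provides (compare the two-sentence proof of the analogous Lemma~\ref{lem-3.1}), but the underlying argument is identical.
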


We can use the Taylor diffusion processes $X^{\xi, s}$ defined in \eqref{Ta-001} to represent $u$ as follows. Notice that
\[
\int_{D}K^i(x,y)p^{D}(s,\xi,t,y)\textrm{d}y=\mathbb{E}\left[K^i(x,X_{t}^{\xi,s}):t<\tau_{\xi,s}\right],
\]
and
\[
\int_{D}K^i(x,y)\left.\frac{\partial}{\partial n^{\pm}} \right|_{n^{\pm} = 0+} p^{D}(s, (\tau^{\pm}, n^{\pm}), t, y)\textrm{d}y=\left.\frac{\partial}{\partial n^{\pm}} \right|_{n^{\pm} = 0+}\mathbb{E}\left[K^i(x,X_{t}^{\xi,s}):t<\tau_{\xi,s}\right],
\]
where $\tau_{\xi,s}$ is again the boundary hitting time for $X^{\xi, s}$. This implies that
\begin{align}\label{UReprWedge}
     u^i(x,t) = \int_{D} \mathbb{E}\left[K^i(x,X_{t}^{\xi,0}):t<\tau_{\xi,0}\right] \omega_0(\xi) \textrm{d} \xi 
     + \int_0^t \int_{D} \mathbb{E}\left[K^i(x,X_{t}^{\xi,s}):t<\tau_{\xi,s}\right] g(\xi, s) \textrm{d} \xi \textrm{d} s \nonumber \\ 
     + \nu \int_0^t \int_0^{+\infty} \frac{\partial}{\partial n^{+}} \Big|_{n^{+} = 0+} \mathbb{E}\left[K^i(x,X_{t}^{\xi,s}):t<\tau_{\xi,s}\right] \theta_{+}(\tau^{+}, s) \textrm{d} \tau^{+} \textrm{d} s \nonumber \\ 
     + \nu \int_0^t \int_0^{+\infty} \frac{\partial}{\partial n^{-}} \Big|_{n^{-} = 0+}\mathbb{E}\left[K^i(x,X_{t}^{\xi,s}):t<\tau_{\xi,s}\right] \theta_{-}(\tau^{-}, s) \textrm{d} \tau^{-} \textrm{d} s.
\end{align}

\subsection{General case}

We consider a flow in $D\subset\mathbb{R}^{2}$ assuming it is a proper simply connected domain. Let us fix a conformal mapping $T:D\rightarrow\mathbb{H}$. Notice that in this case we can parameterise the domain $D$ using the coordinates $(z_1, z_2) \in \mathbb{H}$, i.e. the transform is given by $x=T^{-1}(z)$ for $x \in \mathbb{H}$. The boundary in this coordinates is given by $\partial D=\{z_2=0\}$, and the no-slip condition is written as 
\begin{equation}
    \lim_{z_2 \to 0+} u(z_1, z_2, t) = 0.
\end{equation}
Notice that as the velocity $u$ is a vector field, its components $u^{z_1}, u^{z_2}$ in new coordinates are transformed correspondingly. We again introduce the boundary stress as 
\begin{equation}
    \theta(z_1, t) = - \left. \frac{\partial u^{z_1}}{\partial z_2} (z_1, z_2, t) \right|_{z_2=0+}.
\end{equation}
Note that in the subsequent argument we have to assume that the boundary $\partial D$ is sufficiently regular, in particular, $T$ gives a smooth parametrisation of the boundary (except, maybe, one point in $\partial D$ which is sent to infinity by $T$). However, as for the plate and wedge cases we were able to take care of singularities at the boundary $\partial D$, we expect a similar formula to hold more generally, say, for domains with piecewise smooth boundary.

Let us define the extension $\sigma_{\varepsilon}$ in new coordinates by
\begin{equation}
    \sigma_{\varepsilon} (z_1, z_2, t) = \theta(z_1, t) \phi(z_2/\varepsilon),
\end{equation}
and introduce $W^{\varepsilon}=\omega-\sigma_{\varepsilon}$. It has to satisfy 
\begin{equation}\label{s-p-112}
\left(\frac{\partial}{\partial t}+u\cdot\nabla-\nu\Delta\right)W^{\varepsilon}=G+\rho_{\varepsilon}\quad\textrm{ in }D,
\end{equation}
with homogeneous boundary condition $\left. W^{\varepsilon} \right|_{\partial D}=0$ written in new coordinates as 
\begin{equation*}
    \lim_{z_2 \to 0+} W^{\varepsilon}(z_1, z_2, t)=0
\end{equation*}
for all $z_1$. The initial data for $W^{\varepsilon}$ is 
\begin{equation}
    W_0^{\varepsilon}(z)=\omega_0(z_1,z_2)-\theta(z_1,0)\phi(z_2/\varepsilon),
\end{equation}
and $\rho_{\varepsilon}$ is given by
\begin{equation}
\rho_{\varepsilon} = \nu \Delta \sigma_{\varepsilon} - \frac{\partial \sigma_{\varepsilon}}{\partial t}-u\cdot \nabla \sigma_{\varepsilon}.
\end{equation}
Therefore, in new coordinates
\begin{align}
    \rho_{\varepsilon}(z,t) = - \frac{\partial \theta}{\partial t}(z_1, t)\phi(z_2/\varepsilon) &- u^{z_1}(z_1, z_2,t) \frac{\partial \theta}{\partial z_1}(z_1, t) \phi(z_2/\varepsilon) - \frac{1}{\varepsilon} u^{z_2}(z_1, z_2,t) \theta(z_1, t) \phi'(z_2/\varepsilon) \nonumber \\
    &+ \nu |T'|^2 \Big(\frac{\partial^2 \theta}{\partial z_1^2}(z_1, t) \phi(z_2/\varepsilon) + \frac{1}{\varepsilon^2} \theta(z_1, t) \phi''(z_2/\varepsilon)\Big),
\end{align}
where $T'$ denotes the derivative of $T$ as a complex function.

Writing the solution to \eqref{s-p-112} in terms of the transition density $p^D(s, \xi, t, \eta)$ we have
\begin{equation}\label{s-q-114}
    \omega(y,t) = \sigma_{\varepsilon}(y,t) + \int_{D} p^{D}(0, \xi, t, y) W^{\varepsilon}_0(\xi) \textrm{d} \xi + \int_0^t \int_{D} p^{D}(s, \xi, t, y) (G(\xi, s) + \rho_{\varepsilon}(\xi, s)) \textrm{d} \xi \textrm{d} s,
\end{equation}
for $y \in D$ and $t>0$.

\begin{lem}
    We have the following representation:
\begin{align}
     \omega(y,t) = \omega(y,t) 1_{\{y \in \partial D\}} &+ \int_{D} p^{D}(0, \xi, t, y) \omega_0(\xi) \textrm{d} \xi + \int_0^t \int_{D} p^{D}(s, \xi, t, y) G(\xi, s) \textrm{d} \xi \textrm{d} s \nonumber \\ 
     &+ \nu \int_0^t \int_{-\infty}^{+\infty} \frac{\partial}{\partial z_2} \Big|_{z_2 = 0} p^{D}(s, z, t, y)\theta(z_1, s) \textrm{d} z_1 \textrm{d} s,
\end{align}
for $y \in D$ and $t>0$.
\end{lem}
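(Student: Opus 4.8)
The plan is to pass to the limit $\varepsilon\downarrow 0$ in the representation \eqref{s-q-114}, following verbatim the structure of the proof of Lemma \ref{PlateOmegaLemma} and its wedge analogue. The genuinely new feature here — and the step I expect to be the main obstacle — is the conformal factor $|T'|^2$ appearing in $\rho_{\varepsilon}$, which must be reconciled with the Lebesgue measure $\textrm{d}\xi$ carried by the integral $\int_D p^D\rho_{\varepsilon}\,\textrm{d}\xi$ before the earlier weak-limit identities can be reused.

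First I would dispose of the boundary term. Since $\sigma_{\varepsilon}(z_1,z_2,t)=\theta(z_1,t)\phi(z_2/\varepsilon)$ and $\phi(z_2/\varepsilon)\to 1_{\{z_2=0\}}$ pointwise as $\varepsilon\downarrow 0$, we obtain $\sigma_{\varepsilon}(y,t)\to\theta(z_1,t)1_{\{z_2=0\}}=\omega(y,t)1_{\{y\in\partial D\}}$, which produces the first term of the claim. As this pointwise limit is supported on the measure-zero set $\{z_2=0\}$, dominated convergence gives $\int_D p^D(0,\xi,t,y)\sigma_{\varepsilon}(\xi,0)\,\textrm{d}\xi\to 0$, so the integral against $W_0^{\varepsilon}=\omega_0-\sigma_{\varepsilon}(\cdot,0)$ collapses to the $\omega_0$ term, while the forcing integral $\int_0^t\int_D p^D G$ is untouched.

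The bulk of the work is the limit of $\int_0^t\int_D p^D(s,\xi,t,y)\rho_{\varepsilon}(\xi,s)\,\textrm{d}\xi\,\textrm{d}s$, which I would treat term by term. The three $O(1)$ terms carrying a plain factor $\phi(z_2/\varepsilon)$, namely $-\partial_t\theta\,\phi$, $-u^{z_1}\partial_{z_1}\theta\,\phi$ and $\nu|T'|^2\partial_{z_1}^2\theta\,\phi$, converge pointwise to functions supported on $\{z_2=0\}$ and drop out by dominated convergence. For the $O(1/\varepsilon)$ convection term $-\tfrac{1}{\varepsilon}u^{z_2}\theta\,\phi'(z_2/\varepsilon)$, I would change variables $\xi=T^{-1}(z)$ (so $\textrm{d}\xi=|T'|^{-2}\textrm{d}z$) and invoke the weak-limit identity established in the flat-plate proof, which sends $\tfrac{1}{\varepsilon}\phi'(z_2/\varepsilon)$ to a boundary trace at $z_2=0$; since the no-slip condition forces $u$, and in particular $u^{z_2}$, to vanish on $\partial D$, this term contributes nothing.

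The only surviving contribution is $\tfrac{\nu}{\varepsilon^2}|T'|^2\theta\,\phi''(z_2/\varepsilon)$, and here the decisive observation is that under $\xi=T^{-1}(z)$ the Jacobian $|T'|^{-2}$ cancels the conformal factor $|T'|^2$ exactly, leaving $\tfrac{\nu}{\varepsilon^2}\int_{\mathbb{H}}p^D(s,T^{-1}(z),t,y)\,\theta(z_1,s)\,\phi''(z_2/\varepsilon)\,\textrm{d}z$. Applying the half-plane computation from Lemma \ref{PlateOmegaLemma} — integrating by parts twice in $z_2$, the surviving boundary contribution coming from $\phi(1/3)=1$ — converts this into $\int_{-\infty}^{+\infty}\tfrac{\partial}{\partial z_2}\big|_{z_2=0+}p^D(s,z,t,y)\,\theta(z_1,s)\,\textrm{d}z_1$ (with $p^D(s,z,t,y)$ read as $p^D(s,T^{-1}(z),t,y)$), and integrating in $s$ yields the final term of the claim with the correct sign $+\nu$. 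The one technical caveat I would flag is the boundedness of $|T'|^{-2}$ near the exceptional boundary point sent to infinity by $T$: under the standing smoothness assumption on $\partial D$ this factor stays controlled on the support of the relevant integrands, which also legitimises using the decay of $p^D$ in place of the compact support of the test function used in the reference computation.
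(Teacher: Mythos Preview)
Your proposal is correct and follows essentially the same approach as the paper's own proof: pass to the limit $\varepsilon\downarrow 0$ in \eqref{s-q-114}, dispose of the $O(1)$ and $O(1/\varepsilon)$ pieces of $\rho_\varepsilon$ by dominated convergence and the no-slip condition respectively, and for the surviving $\tfrac{\nu}{\varepsilon^2}|T'|^2\theta\,\phi''(z_2/\varepsilon)$ term make the ``decisive observation'' that the conformal factor $|T'|^2$ cancels exactly against the Jacobian of the change of variables $\xi=T^{-1}(z)$, reducing to the flat half-plane computation. Your additional caveat about the behaviour of $|T'|^{-2}$ near the exceptional boundary point is a reasonable remark that the paper simply absorbs into its standing regularity assumption on $\partial D$.
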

\begin{proof}
    We take as usual the limit in \eqref{s-q-114} as $\varepsilon \to 0$. Notice that as $\varepsilon \to 0$,
    \begin{equation}
    \sigma_{\varepsilon}(z, t) \to \theta(z_1, t) 1_{\{z_2=0\}},
    \end{equation}
    pointwise, which can be written as $\omega(y,t) 1_{\{y \in \partial D\}}$. This also implies that the integral 
    \begin{equation}
    \int_{D} p^{D}(0, \xi, t, y) \sigma_{\varepsilon}(\xi, 0) \textrm{d} \xi \to 0, \quad \textrm{ as } \varepsilon \downarrow 0,
    \end{equation}
    by dominated convergence. 

    Now let us find the limit of
    \begin{equation}\label{e-m-034}
    \int_0^t \int_{D} p^{D}(s, \xi, t, y) \rho_{\varepsilon}(\xi, s) \textrm{d}\xi \textrm{d}s,
    \end{equation}
    as $\varepsilon \to 0$. Notice that the terms   
    \begin{equation}
    - \frac{\partial \theta}{\partial t}(z_1, t)\phi(z_2/\varepsilon) - u^{z_1}(z_1, z_2,t) \frac{\partial \theta}{\partial z_1}(z_1, t) \phi(z_2/\varepsilon) + \nu |T'|^2 \frac{\partial^2 \theta}{\partial z_1^2}(z_1, t) \phi(z_2/\varepsilon),
    \end{equation}
    converge pointwise to 
    \begin{equation}\label{i-l-032}
    \Big(-\frac{\partial \theta}{\partial t}(z_1, t) + \nu |T'|^2 \frac{\partial^2 \theta}{\partial z_1^2}(z_1, t)\Big)1_{\{z_2=0\}},
    \end{equation}
    as $\left. u \right|_{\partial D} = 0$. These terms do not contribute to the limit as \eqref{i-l-032} is zero almost everywhere. 

    Consider now the term
    \begin{equation}\label{b-o-012}
         - \frac{1}{\varepsilon} u^{z_2}(z_1, z_2,t) \theta(z_1, t) \phi'(z_2/\varepsilon).
    \end{equation}
    Take a smooth function $\beta$ with compact support and write
    \begin{equation*}
        \frac{1}{\varepsilon} \int_{\mathbb{R}^2} \beta(x) \phi'(z_2(x)/\varepsilon) \textrm{d} x = \frac{1}{\varepsilon} \int_{-\infty}^{+\infty} \int_0^{+\infty} \beta(z_1, z_2) \phi'(z_2/\varepsilon) J(T^{-1}) \textrm{d} z_2 \textrm{d} z_1,
    \end{equation*}
    where $J(T^{-1})$ is the Jacobian of the transform $x=T^{-1}(z)$. Write the latter integral as
    \begin{align*}
        \int_{-\infty}^{+\infty} \int_{\frac{1}{3}}^{\frac{2}{3}} \beta(z_1, \varepsilon z_2) \phi'(z_2) J(T^{-1})(z_1, \varepsilon z_2) \textrm{d} z_2 \textrm{d} z_1 &= - \int_{-\infty}^{+\infty} \beta\left(z_1, \frac{\varepsilon}{3}\right) J(T^{-1})\left(z_1,\frac{\varepsilon}{3}\right) \textrm{d} z_1\\ 
        &-\varepsilon \int_{-\infty}^{+\infty} \int_{\frac{1}{3}}^{\frac{2}{3}} \phi(z_2)\frac{\partial \beta}{\partial z_2}(z_1, \varepsilon z_2) J(T^{-1})(z_1, \varepsilon z_2)\textrm{d} z_2 \textrm{d} z_1\\ 
        &-\varepsilon \int_{-\infty}^{+\infty} \int_{\frac{1}{3}}^{\frac{2}{3}} \phi(z_2) \beta(z_1, \varepsilon z_2) \frac{\partial}{\partial z_2} J(T^{-1})(z_1, \varepsilon z_2) \textrm{d} z_2 \textrm{d} z_1.
    \end{align*}
    Therefore, the limit of the term \eqref{b-o-012} is equal to zero due to the no-slip condition as
    \begin{equation}
    \begin{split}
        \lim_{\varepsilon \to 0} \frac{1}{\varepsilon} \int_{\mathbb{R}^2} \beta(x) \phi'(z_2(x)/\varepsilon) \textrm{d} x = - \int_{-\infty}^{+\infty} \beta(z_1, 0) J(T^{-1})(z_1,0) \textrm{d} z_1.
    \end{split} 
    \end{equation}

    Let us consider the last term 
    \begin{equation}
         \frac{\nu}{\varepsilon^2} |T'|^2 \theta(z_1, t) \phi''(z_2/\varepsilon).
    \end{equation}
    Again, taking a smooth compactly supported function $\beta$, write
    \begin{equation*}
        \frac{1}{\varepsilon^2} \int_{\mathbb{R}^2} \beta(x) |T'|^2 \phi''(z_2(x)/\varepsilon) \textrm{d} x = \frac{1}{\varepsilon^2} \int_{-\infty}^{+\infty} \int_0^{+\infty} \beta(z_1, z_2) \phi''(z_2/\varepsilon) \textrm{d} z_2 \textrm{d} z_1,
    \end{equation*}
    since the Jacobian $J(T) = |T'|^2$. Therefore, 
    \begin{align*}
        \frac{1}{\varepsilon} \int_{-\infty}^{+\infty} \int_{\frac{1}{3}}^{\frac{2}{3}} \beta(z_1, \varepsilon z_2) \phi''(z_2) \textrm{d} z_2 \textrm{d} z_1 
        &= - \int_{-\infty}^{+\infty} \int_{\frac{1}{3}}^{\frac{2}{3}} \frac{\partial \beta}{\partial z_2} (z_1, \varepsilon z_2) \phi'(z_2) \textrm{d} z_2 \textrm{d} z_1 \\
        &= \int_{-\infty}^{+\infty} \frac{\partial \beta}{\partial z_2} \Big(z_1, \frac{\varepsilon}{3}\Big) \textrm{d} z_1 \\
        &+ \varepsilon \int_{-\infty}^{+\infty} \int_{\frac{1}{3}}^{\frac{2}{3}} \frac{\partial^2 \beta}{\partial z_2^2} (z_1, \varepsilon z_2) \phi(z_2) \textrm{d} z_2 \textrm{d} z_1,
    \end{align*}
    and the limit 
    \begin{equation}
        \lim_{\varepsilon \to 0} \frac{1}{\varepsilon^2} \int_{\mathbb{R}^2} \beta(x) |T'|^2 \phi''(z_2(x)/\varepsilon)  \textrm{d} x = \int_{-\infty}^{+\infty} \frac{\partial \beta}{\partial z_2} (z_1, 0) \textrm{d} z_1.
    \end{equation}
    Then it is the only term contributing to the limit, and we have that 
    \begin{equation*}
        \int_0^t \int_{D} p^{D}(s, \xi, t, y) \rho_{\varepsilon}(\xi, s) \textrm{d}\xi \textrm{d}s \to \nu \int_0^t \int_{-\infty}^{+\infty} \frac{\partial}{\partial z_2} \Big|_{z_2 = 0+} p^{D}(s, z, t, y)\theta(z_1, s) \textrm{d} z_1 \textrm{d} s,
    \end{equation*}
    as $\varepsilon \to 0$.
\end{proof}

\begin{lem}
    We have the following:
    \begin{align}
    u^i(x,t) & =\int_{D}\left(\int_{D}K^i(x,y)p^{D}(0,\xi,t,y)\textrm{d}y\right)\omega_{0}(\xi)\textrm{d}\xi\nonumber \\
    & +\int_{0}^{t}\int_{D}\left(\int_{D}K^i(x,y)p^{D}(s,\xi,t,y)\textrm{d}y\right)G(\xi,s)\textrm{d}\xi\textrm{d}s\nonumber \\
    & +\nu\int_{0}^{t}\int_{-\infty}^{+\infty}\left(\int_{D}K^i(x,y)\frac{\partial}{\partial z_2} \Big|_{z_2 = 0+} p^{D}(s, (z_1, z_2), t, y)\textrm{d}y\right)\theta(z_1, s) \textrm{d} z_1 \textrm{d}s,
\end{align}
for any $x \in D$ and $t > 0$ --- note that the last integral is written in coordinates $z_1, z_2$.
\end{lem}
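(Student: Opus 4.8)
The plan is to derive the velocity formula directly from the vorticity representation just established, via the Biot--Savart law \eqref{BS-law}. Concretely, I would start from
\[
u^i(x,t)=\int_D K^i(x,y)\,\omega(y,t)\,\textrm{d}y
\]
and substitute the four-term expression for $\omega(y,t)$ supplied by the preceding lemma, exactly as in the proof of Lemma \ref{lem-3.1} for the flat-plate case.

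The first step is to dispose of the boundary term $\omega(y,t)1_{\{y\in\partial D\}}$. Since $\partial D$ is a smooth curve it has two-dimensional Lebesgue measure zero, so this term contributes nothing to the integral against $K^i(x,\cdot)$. The remaining three terms of $\omega(y,t)$ are then integrated against $K^i(x,y)$, and I would invoke Fubini's theorem to move the $y$-integration inside the $\xi$-, $s$- and $z_1$-integrations. This produces precisely the three inner kernel integrals $\int_D K^i(x,y)p^D(0,\xi,t,y)\,\textrm{d}y$, $\int_D K^i(x,y)p^D(s,\xi,t,y)\,\textrm{d}y$ and $\int_D K^i(x,y)\frac{\partial}{\partial z_2}\big|_{z_2=0+}p^D(s,(z_1,z_2),t,y)\,\textrm{d}y$ appearing in the claimed identity, the last one naturally expressed in the conformal coordinates $(z_1,z_2)$.

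The one place requiring genuine care --- and the step I expect to be the main obstacle --- is the last term, where I must interchange the normal-derivative boundary limit $\frac{\partial}{\partial z_2}\big|_{z_2=0+}$ with the $y$-integration against the singular Biot--Savart kernel. The kernel $K^i(x,\cdot)$ from Lemma \ref{lem-e-003} has a singularity at $y=x$ of order $|x-y|^{-1}$, which is locally integrable in two dimensions, and the Dirichlet transition density $p^D$ together with its boundary normal derivative decays fast enough in $y$ that all iterated integrals are absolutely convergent; this legitimises Fubini, and --- together with the smoothness of $\theta$ and of $p^D$ up to the boundary assumed throughout this subsection --- also the exchange of the $z_2\to0+$ limit with integration. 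In the numerical implementation the kernel is in any case desingularised, so the interchange is unproblematic there. With these justifications in hand, collecting the three surviving terms yields the stated representation.
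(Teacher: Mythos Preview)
Your proposal is correct and follows exactly the paper's approach: the paper's own proof consists of the single remark that ``the lemma above follows as usual from the Biot--Savart law,'' and your write-up simply spells out the details (disposing of the measure-zero boundary term and invoking Fubini) in the same spirit as the proof of Lemma~\ref{lem-3.1}. Your additional discussion of the interchange of the $z_2\to0+$ limit with the $y$-integration goes beyond what the paper justifies, but is not at odds with it.
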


The lemma above follows as usual from the Biot-Savart law. Moreover, we can again write the representation for the velocity $u$ in terms of the Taylor diffusions $X^{\xi, s}$ given in \eqref{Ta-001} as follows
\begin{align}
     u^i(x,t) = \int_{D} \mathbb{E}\left[K^i(x,X_{t}^{\xi,0}):t<\tau_{\xi,0}\right] \omega_0(\xi) \textrm{d} \xi 
     + \int_0^t \int_{D} \mathbb{E}\left[K^i(x,X_{t}^{\xi,s}):t<\tau_{\xi,s}\right] G(\xi, s) \textrm{d} \xi \textrm{d} s \nonumber \\ 
     + \nu \int_0^t \int_{-\infty}^{+\infty} \frac{\partial}{\partial z_2} \Big|_{z_2 = 0+} \mathbb{E}\left[K^i(x,X_{t}^{(z_1,z_2),s}):t<\tau_{z,s}\right] \theta(z_1, s) \textrm{d} z_1 \textrm{d} s.
\end{align}

\section{Monte-Carlo simulations}\label{MCSimulationsSection}

\subsection{Numerical scheme}

The integral representations for the velocity $u$ we have obtained so far, together with the Taylor diffusions, form a closed system. Therefore, we can discretise the corresponding representations to establish the following numerical schemes. For our simulations we use the representation for $u$ as in Theorem \ref{thm5.4new} since in this case we work only with the diffusion processes $X_t^{\eta}$ initialised at time $t=0$ (compared to \eqref{UReprPlate}, \eqref{UReprWedge}). Thus we use the representation 
\begin{align}
u^{i}(x,t) & =\int_{D} K^i(x,\eta) \sigma_{\varepsilon}(\eta,t)\textrm{d}\eta+\int_{D}\mathbb{E}\left[K^{i}(x,X_{t}^{\eta})1_{\left\{ t<\zeta(X^{\eta}\circ\tau_{t})\right\} }\right]W_{\varepsilon}(\eta,0)\textrm{d}\eta\nonumber \\
 & +\int_{0}^{t}\int_{D}\mathbb{E}\left[1_{\{t-s<\zeta(X^{\eta}\circ\tau_{t})\}}K^{i}(x,X_{t}^{\eta})g_{\varepsilon}(X_{s}^{\eta},s)\right]\textrm{d}\eta\textrm{d}s,
\end{align}
for the domains $D$ in Subsections \ref{PlateSection} and \ref{WedgeSection}. In fact the derivations in these sections imply that we can pass to the limit as $\varepsilon \to 0$ in every term except for the terms with $\theta_{\pm}(\cdot \, , t) \phi'' (\cdot \, /\varepsilon)$ for which we might not have the same representation in the limit. We therefore approximate the above representation by the following
\begin{align}
u^{i}(x,t) & =\int_{D}\mathbb{E}\left[1_{\left\{ t<\zeta(X^{\eta}\circ\tau_{t})\right\} }K^{i}(x,X_{t}^{\xi})\right]\omega_0(\xi)\textrm{d}\xi\nonumber +\int_{0}^{t}\int_{D}\mathbb{E}\left[1_{\{t-s<\zeta(X^{\xi}\circ\tau_{t})\}}K^{i}(x,X_{t}^{\xi})G(X_{s}^{\xi},s)\right]\textrm{d}\xi\textrm{d}s\\
 &+\frac{\nu}{\varepsilon^2}\int_{0}^{t}\int_{D}\mathbb{E}\left[1_{\{t-s<\zeta(X^{\xi}\circ\tau_{t})\}}K^{i}(x,X_{t}^{\xi})\theta_{+}(X_{s}^{\xi}, s) \phi'' (X_{s}^{\xi}/\varepsilon)\right]\textrm{d}\xi\textrm{d}s\nonumber\\
 &+\frac{\nu}{\varepsilon^2}\int_{0}^{t}\int_{D}\mathbb{E}\left[1_{\{t-s<\zeta(X^{\xi}\circ\tau_{t})\}}K^{i}(x,X_{t}^{\xi})\theta_{-}(X_{s}^{\xi}, s) \phi''(X_{s}^{\xi}/\varepsilon)\right]\textrm{d}\xi\textrm{d}s,  
\end{align}
with some small $\varepsilon$. Notice that for our choice of the cutoff function $\phi$ as in \eqref{CutoffPhi}, the second derivative $\phi''$ is supported on $[1/3, 2/3]$ and therefore the last two integrals are taken over a layer close to the boundary.

Since the kernel $K(x,y)$ is singular, we have to mollify it, e.g. using the cutoff function $f_{\delta}(y)=(1-\exp(-|y|/\delta))$. Notice that in the general expression for $K$ given in Lemma \ref{lem-e-003} we also have the derivatives $\dfrac{\partial T^i}{\partial y_j}$ which are singular in our case according to Lemmas \ref{PlateDerivativesOfT} and \ref{WedgeDerivativesOfT}. We therefore denote $K_{\delta}(x,y)$ the kernel obtained after desingularisation of $k^{\pm}(x,y)$ and $\dfrac{\partial T^i}{\partial y_j}$.

We first have to discretise the domain $D$ choosing lattices to represent points in $D$. 

\textit{Flat-plate.} For the domain $D = \mathbb{R}^2 \setminus \{(x_1,0): x_1 \geq 0\}$, we introduce the following lattices.
\begin{enumerate}
    \item For each boundary component $\partial D^{\pm}$, we define the corresponding thin boundary layer lattice $D_b^{\pm}$. Choose mesh sizes $h_1, h_2$ and numbers of points $N_1, N_2$, and define the following lattice points
    \begin{equation}\label{PlateBLayerLattice}
        x_{b \pm}^{i_1 i_2} = (i_1 h_1, \pm i_2 h_2),
    \end{equation}
    for $0 \leq i_1 \leq N_1$ and $0 \leq i_2 \leq N_2$. The number of points used in these layers is $2(N_1+1)(N_2+1)$.
    
    \item Outside the thin boundary layers, we introduce the outer layer lattice $D_o$. Choose a mesh size $h_0$ and a number $N_0$, and define
    \begin{equation}\label{PlateOLayerLattice}
        x_{o}^{i_1 i_2} = (i_1 h_0 , i_2 h_0),
    \end{equation}
    where $-N_0 \leq i_1 < 0, |i_2| \leq N_0$ and $0 \leq i_1 \leq N_0, 0 < |i_2| \leq N_0$. The number of points used in this layer is $\leq (2N_0+1)^2$.
\end{enumerate}

\textit{Wedge.} Let us also introduce the lattice points for the domain $D =  \{x: \arg x \in (\alpha, 2\pi - \alpha)\}$. 
\begin{enumerate}
    \item To introduce two thin boundary layer lattices $D_b^{\pm}$ for the corresponding boundary components $\partial D^{\pm}$, we choose mesh sizes $h_1, h_2$ and numbers of points $N_1, N_2$. Define the following lattice points
    \begin{equation}
        x_{b \pm}^{i_1 i_2} = (i_1 h_1 \cos \alpha - i_2 h_2 \sin \alpha, \pm i_1 h_1 \sin \alpha \pm i_2 h_2 \cos \alpha),
    \end{equation}
    for $0 \leq i_1 \leq N_1$ and $0 \leq i_2 \leq N_2$. The number of points used in these layers is $2(N_1+1)(N_2+1)$.
    
    \item We also introduce the outer layer lattice $D_o$ choosing a mesh size $h_0$ and a number $N_0$. We define the points
    \begin{equation}
        x_{o}^{i_1 i_2} = (i_1 h_0 , i_2 h_0),
    \end{equation}
    for $|i_1|, |i_2| \leq N_0$. This is a square lattice in which the domain $D$ is given by the condition: $x_{o}^{i_1 i_2} \in D$ if and only if $i_1 \leq 0$ or $\arg(i_1, i_2) \in (\alpha, 2\pi - \alpha)$. The number of points used in this layer to represent points in $D$ is $\leq (2N_0+1)^2$.
\end{enumerate}

We choose a time mesh size $h$ and denote $t_k = kh$ for $k \geq 0$. We initialise the processes $X^{i_1, i_2}_{b\pm; t_0} = x_{b\pm}^{i_1 i_2}$ and $X^{i_1, i_2}_{o; t_0} = x_{o}^{i_1 i_2}$ and update them for $k \geq 0$ according to
    \begin{equation}\label{UpdX}
       X^{i_1, i_2}_{t_{k+1}} = X^{i_1, i_2}_{t_k} + h \Hat{u}(X^{i_1, i_2}_{t_k}, t_k) + \sqrt{2\nu} (B_{t_{k+1}}-B_{t_k}),
    \end{equation}
    where we suppressed the subscripts $o$ and $b\pm$ for uniform notation. The drift $\Hat{u}$ is given by
    \begin{align}\label{HatURepr}
    \Hat{u}(x,t_{k+1}) &= \sum_{(i_1,i_2)\in D} A_{i_1,i_2} \omega_{i_1,i_2} \mathbb{E} \left[1_{\left\{ t_k<\zeta(X^{i_1, i_2}\circ\tau_{t_k})\right\}} K_{\delta}(x, X^{i_1, i_2}_{t_k}) \right] \nonumber \\
     &+ \sum_{(i_1,i_2)\in D} A_{i_1,i_2} \sum_{l=0}^k h G_{i_1, i_2;t_l} \mathbb{E} \left[1_{\{t_k-t_l<\zeta(X^{i_1, i_2}\circ\tau_{t_k})\}} K_{\delta}(x, X^{i_1, i_2}_{t_l}) \right] \nonumber \\ 
     &+ \frac{\nu}{\varepsilon^2} \sum_{(i_1,i_2)\in D_b^{-}} h_1 h_2 \sum_{l=0}^k h \mathbb{E} \left[ 1_{\{t_k-t_l<\zeta(X^{i_1, i_2}\circ\tau_{t_k})\}} K_{\delta}(x, X^{i_1, i_2}_{t_k})\theta_{-}(X^{i_1, i_2}_{t_l}, t_l) \phi''(X^{i_1, i_2}_{t_l}/\varepsilon) \right] \nonumber \\ 
     &+ \frac{\nu}{\varepsilon^2} \sum_{(i_1,i_2)\in D_b^{+}} h_1 h_2 \sum_{l=0}^k h \mathbb{E} \left[ 1_{\{t_k-t_l<\zeta(X^{i_1, i_2}\circ\tau_{t_k})\}} K_{\delta}(x, X^{i_1, i_2}_{t_k})\theta_{+}(X^{i_1, i_2}_{t_l}, t_l) \phi''(X^{i_1, i_2}_{t_l}/\varepsilon) \right], 
    \end{align}
    for $x \in D$, and $\Hat{u} = 0$ otherwise. Notice that in the formula above, sums over $(i_1,i_2)\in D$ denote both sums over the boundary layer lattices $D_b^{\pm}$ and the outer layer lattice $D_o$. However, the last two summations are over boundary lattices $D_b^{\pm}$ due to the remark we made above regarding the support of the cutoff function. Thus, 
    \begin{align}\label{a-d032}
    A_{i_1,i_2} &= h_1 h_2 \text{ or } h_0^2, \nonumber \\
    \omega_{i_1,i_2} &= \omega(x_{b \pm}^{i_1 i_2}, 0) \text{ or } \omega(x_{o}^{i_1 i_2}, 0), \nonumber \\
    G_{i_1,i_2; t_l} &= G(x_{b \pm}^{i_1 i_2}, t_l) \text{ or } G(x_{o}^{i_1 i_2}, t_l),
    \end{align}
    for boundary and outer layers respectively. 
    
    Since at every step we have to compute the boundary stress $\theta_{\pm}$, we do this computation, according to \eqref{ThetaDefPlate} and 
    \eqref{ThetaDefWedge}, by applying the corresponding derivatives to the formula \eqref{HatURepr}. Indeed, this gives valid approximations for the derivatives of the velocity $u$ as we work with the mollified kernel $K_{\delta}(x,y)$. The expression for the derivatives is given in terms of a similar to \eqref{HatURepr} formula where we replace the kernel $K_{\delta}$ by the corresponding derivatives. We use the formulae for the derivatives $\dfrac{\partial k^{-}}{\partial x}$ and $\dfrac{\partial k^{+}}{\partial x}$ given in Lemma \ref{DerivativeK_Lemma} which we similarly regularise using the cutoff function $f_{\delta}$.

    To handle the expectations in the above representation, we propose the following numerical schemes.

    \textit{Numerical scheme 1.}
    In this numerical scheme, we omit the expectations in 
    \eqref{HatURepr} and run independent Brownian motions in \eqref{UpdX}. Therefore, we update the diffusions $X_{t_k}^{i_1, i_2}$, starting at $x^{i_1 i_2}$ when $k = 0$, according to
    \begin{equation}\label{NS1X}
       X^{i_1, i_2}_{t_{k+1}} = X^{i_1, i_2}_{t_k} + h \Hat{u}(X^{i_1, i_2}_{t_k}, t_k) + \sqrt{2\nu} (B_{t_{k+1}}^{i_1, i_2}-B_{t_k}^{i_1, i_2}),
    \end{equation}
    for $k \geq 0$, where $B^{i_1, i_2}$ are independent Brownian motions. Here 
    \begin{align}\label{NS1U}
    \Hat{u}(x,t_{k+1}) &= \sum_{(i_1,i_2)\in D} A_{i_1,i_2} \omega_{i_1,i_2} K_{\delta}(x, X^{i_1, i_2}_{t_k}) 1_{\{ t_k<\zeta(X^{i_1, i_2}\circ\tau_{t_k})\}} \nonumber \\
     &+ \sum_{(i_1,i_2)\in D} A_{i_1,i_2} h K_{\delta}(x, X^{i_1, i_2}_{t_k}) \sum_{l=0}^k G_{i_1, i_2;t_l} 1_{\{t_k-t_l<\zeta(X^{i_1, i_2}\circ\tau_{t_k})\}} \nonumber \\ 
     &+ \frac{\nu}{\varepsilon^2} \sum_{(i_1,i_2)\in D_b^{-}} h_1 h_2 h K_{\delta}(x, X^{i_1, i_2}_{t_k}) \sum_{l=0}^k \theta_{-}(X^{i_1, i_2}_{t_l}, t_l) \phi''(X^{i_1, i_2}_{t_l}/\varepsilon) 1_{\{t_k-t_l<\zeta(X^{i_1, i_2}\circ\tau_{t_k})\}} \nonumber \\ 
     &+ \frac{\nu}{\varepsilon^2} \sum_{(i_1,i_2)\in D_b^{+}} h_1 h_2 h K_{\delta}(x, X^{i_1, i_2}_{t_k}) \sum_{l=0}^k \theta_{+}(X^{i_1, i_2}_{t_l}, t_l) \phi''(X^{i_1, i_2}_{t_l}/\varepsilon) 1_{\{t_k-t_l<\zeta(X^{i_1, i_2}\circ\tau_{t_k})\}}, 
    \end{align}
    for $x \in D$, and $\Hat{u}=0$ otherwise with $A_{i_1,i_2}$, $ \omega_{i_1,i_2}$, $G_{i_1, i_2; t_l}$ given in \eqref{a-d032}. 

    \textit{Numerical scheme 2}
    In this scheme, we replace expectations in \eqref{HatURepr} by averages with independent Brownian motions. Thus, we start the processes $X_{t_k}^{m; i_1, i_2}$ at $x^{i_1 i_2}$ when $k = 0$ and update them according to
    \begin{equation}\label{NS2X}
       X^{m; i_1, i_2}_{t_{k+1}} = X^{m; i_1, i_2}_{t_k} + h \Hat{u}(X^{m; i_1, i_2}_{t_k}, t_k) + \sqrt{2\nu} (B_{t_{k+1}}^{m}-B_{t_k}^{m}),
    \end{equation}
    for $k \geq 0$, where $B^{m}$ are independent Brownian motions for $m = 1, \ldots, N$, and
    \begin{align}\label{NS2U}
    \Hat{u}(x,t_{k+1}) &= \sum_{(i_1,i_2)\in D} A_{i_1,i_2} \omega_{i_1,i_2} \frac{1}{N} \sum_{m=1}^{N} K_{\delta}(x, X^{m; i_1, i_2}_{t_k}) 1_{\{ t_k<\zeta(X^{m; i_1, i_2}\circ\tau_{t_k})\}} \nonumber \\
     &+ \sum_{(i_1,i_2)\in D} A_{i_1,i_2} h \frac{1}{N} \sum_{m=1}^{N} K_{\delta}(x, X^{m; i_1, i_2}_{t_k}) \sum_{l=0}^k G_{i_1, i_2;t_l} 1_{\{t_k-t_l<\zeta(X^{m;i_1, i_2}\circ\tau_{t_k})\}} \nonumber \\ 
     &+ \frac{\nu}{\varepsilon^2} \sum_{(i_1,i_2)\in D_b^{-}} h_1 h_2 h \frac{1}{N} \sum_{m=1}^{N} K_{\delta}(x, X^{m; i_1, i_2}_{t_k}) \sum_{l=0}^k \theta_{-}(X^{m; i_1, i_2}_{t_l}, t_l) \phi''(X^{m; i_1, i_2}_{t_l}/\varepsilon) 1_{\{t_k-t_l<\zeta(X^{m; i_1, i_2}\circ\tau_{t_k})\}} \nonumber \\ 
     &+ \frac{\nu}{\varepsilon^2} \sum_{(i_1,i_2)\in D_b^{+}} h_1 h_2 h \frac{1}{N} \sum_{m=1}^{N} K_{\delta}(x, X^{m; i_1, i_2}_{t_k}) \sum_{l=0}^k \theta_{+}(X^{m; i_1, i_2}_{t_l}, t_l) \phi''(X^{m; i_1, i_2}_{t_l}/\varepsilon) 1_{\{t_k-t_l<\zeta(X^{m; i_1, i_2}\circ\tau_{t_k})\}}, 
    \end{align}
    for $x \in D$, and $\Hat{u}=0$ otherwise. 

    Note that in described numerical schemes, the stopping times $\zeta(X^{i_1, i_2})$ denote the first boundary hitting times for processes $X^{i_1, i_2}$ (or $X^{m;i_1, i_2}$ for the second scheme), while $\zeta(X^{i_1, i_2}\circ\tau_{t_k})$ are seen as the last hitting times. We therefore check if $X^{i_1, i_2}_{t_{k+1}}$ crosses the boundary $\partial D$ at every step when updating it according to \eqref{NS1X} and \eqref{NS2X}. Thus, after the first boundary crossing by $X^{i_1, i_2}$ we do not count the corresponding term in the first sum in \eqref{NS1U} and \eqref{NS2U}. 

    Notice also that in \eqref{NS1U} and \eqref{NS2U} we separate the sums that track the history of processes $X^{i_1, i_2}$, i.e.
    \begin{equation*}
        \sum_{l=0}^k G_{i_1, i_2;t_l} 1_{\{t_k-t_l<\zeta(X^{i_1, i_2}\circ\tau_{t_k})\}},
    \end{equation*}
    and
    \begin{equation*}
        \sum_{l=0}^k \theta_{\pm}(X^{i_1, i_2}_{t_l}, t_l) \phi''(X^{i_1, i_2}_{t_l}/\varepsilon) 1_{\{t_k-t_l<\zeta(X^{i_1, i_2}\circ\tau_{t_k})\}},
    \end{equation*}
    as well as the corresponding sums with $X^{m;i_1, i_2}$ for the second scheme. Thus, we do not have to compute these sums at every step, instead we store the whole sums and update them at each step adding new summands. When we have a boundary crossing by $X^{i_1, i_2}$ or $X^{m;i_1, i_2}$, we set the stored sum to zero and start updating it anew. 

\subsection{Numerical experiments}

    In this subsection, we use the numerical schemes described above to conduct numerical experiments for certain flows. We exploit the Numerical Scheme 1 which is summarised in Algorithm \ref{Alg1} below.

\begin{algorithm}
\caption{Random vortex algorithm.}\label{Alg1}
\begin{algorithmic}
\State \textbf{Define:} $\omega_{i_1, i_2}$ and $G_{i_1, i_2; t_l}$
\State \textbf{Initialise:} $X^{i_1, i_2} \gets x^{i_1 i_2}$ 
\State \textbf{Initialise:} $\text{ind}^{i_1, i_2} \gets 1$ \Comment{indicators of $X^{i_1, i_2}$ hitting the boundary}
\State \textbf{Compute:} $S^{i_1, i_2} \gets \theta_{\pm}(X^{i_1, i_2}_{b\pm}, t_0) \phi''(X^{i_1, i_2}_{b\pm}/\varepsilon)$
\For{$k \gets 0$ to $N$}
\State \textbf{Compute:} $\Delta^{i_1, i_2} \gets h \Hat{u}(X^{i_1, i_2}, t_k) + \sqrt{2\nu} (B_{t_{k+1}}^{i_1, i_2}-B_{t_k}^{i_1, i_2})$
\State \textbf{Update:} $X^{i_1, i_2} \gets X^{i_1, i_2} + \Delta^{i_1, i_2} \cdot \text{ind}^{i_1, i_2}$
    \If{$X^{i_1, i_2}$ crosses $\partial D$}
        \State \textbf{Set:} $X^{i_1, i_2} \gets$ the crossing point
        \State \textbf{Set:} $\text{ind}^{i_1, i_2} \gets 0$ 
        \State \textbf{Set:} $S^{i_1, i_2} \gets 0$
    \EndIf
\State \textbf{Compute:} $\theta_{\pm}(X^{i_1, i_2}_{b\pm}, t_k)$
\State \textbf{Update:} $S^{i_1, i_2} \gets S^{i_1, i_2} + \theta_{\pm}(X^{i_1, i_2}_{b\pm}, t_k) \phi''(X^{i_1, i_2}_{b\pm}/\varepsilon)$
\EndFor
\end{algorithmic}
\end{algorithm}

    \textit{Experiment 1.} We consider the flow past the flat-plate, i.e. $D=\mathbb{R}^2 \setminus \{(x_1,0):x_1\geq0\}$. The lattice points are given by 
    \eqref{PlateBLayerLattice} and 
    \eqref{PlateOLayerLattice} with $N_0=15, N_1=30,N_2=45$. Therefore, we use 961 points in the outer layer and 1,426 points for each of the two thin boundary layers. With $H = 6$ and $H_0=0.1$ determining the size of the domain and the thickness of the thin layers respectively, we find the mesh constants $h_0=\frac{H}{N_0} = 0.4, h_1=\frac{H}{N_1} = 0.2, h_2=\frac{H_0}{N_2} \approx 0.0022$. We set the parameter $\varepsilon = 0.05$ for the boundary sums in this simulation.
    
    We initialise the vector field $u_0(x_1, x_2) = (U_0 |x_2| (6-|x_2|) / 9, 0)$ for $-H \leq x_1, x_2 \leq H$, where $U_0 = \frac{\nu Re}{L} = 125$ for $\nu = 0.1, L = 6$ and $Re = 7500$. Thus, the initial vorticity $\omega_0(x_1, x_2) = - U_0 (-2|x_2|+6)/9 \cdot \textrm{sgn}(x_{2})$, where for $x_2 = 0$ and $x_1 < 0$ we assume $\omega_0(x_1,x_2) = 0$, and the initial boundary stress $\theta_{+}(x_1,0) = - 2U_0 / 3$ and $\theta_{-}(x_1,0) = 2U_0 / 3$. The external force $G$ is taken to be identically zero. 
    
    We conduct our simulation with time steps $h=0.01$, the results for several times $t$ are shown in Figures \ref{Exp1FigOFlow}, \ref{Exp1FigUpperBFlow} and \ref{Exp1FigLowerBFlow}. In these figures, the streamlines are coloured by the velocity magnitude while the background is coloured by the vorticity value. We also plot the boundary stress $\theta_{+}$ and $\theta_{-}$ as functions of position at boundary in Figures \ref{Exp1FigLowerBStress} and \ref{Exp1FigUpperBStress}.

\textit{Experiment 2.} In this experiment, the initial velocity is again taken to be $u_0(x_1, x_2) = (U_0 |x_2| (6-|x_2|) / 9, 0)$ and $G\equiv 0$. The lattice points and other parameters are taken as before, however, we alter the representation for $\Hat{u}$. We ignore the boundary sums in this case, that is we compute
\begin{equation*}
    \Hat{u}(x,t_{k+1}) = \sum_{(i_1,i_2)\in D} A_{i_1,i_2} \omega_{i_1,i_2} K_{\delta}(x, X^{i_1, i_2}_{t_k}) 1_{\{ t_k<\zeta(X^{i_1, i_2}\circ\tau_{t_k})\}}.
\end{equation*}
Therefore, we have to use a variant of Algorithm \ref{Alg1} with the above representation for $\Hat{u}$. The results are shown in Figures \ref{Exp2FigOFlow}, \ref{Exp2FigUpperBFlow} and \ref{Exp2FigLowerBFlow} for the outer flow and boundary flows respectively, and Figures \ref{Exp2FigLowerBStress} and Figures \ref{Exp2FigUpperBStress} for boundary stress values. 

We notice that this version of the numerical scheme still allows to capture some boundary phenomena as the boundary flows display certain chaotic behaviour. However, at large times $t$ one has large regions close to the boundary where the velocity does not change direction. We conclude therefore that for simulating the boundary flows it is essential to use the boundary sums in the velocity representation as they contribute to the chaotic turbulent motion.

\textit{Experiment 3.} We conduct this experiment for the wedge domain $D=\mathbb{R}^{2}\setminus\left\{x:-\alpha\leq\arg x\leq\alpha\right\}$ with $\alpha = \frac{\pi}{4}$. The size of the domain and numbers of lattice points are taken as before, i.e. $H=6, H_0=0.1$ and $N_0=15, N_1=30,N_2=45$. Notice that in this case $h_1=\frac{\sqrt{2} H}{N_1} \approx 0.2828$ as the length of the boundary components contained in the domain is $\sqrt{2} H$.

We take the initial velocity $u_0(x_1, x_2) = 0$, and the external force term $G(x_1,x_2)=2U_0 \frac{x_2}{H^2}$ for $U_0 = 500$ independent of time. For this simulation, we choose smaller time steps $h = 0.001$. The results of the simulation are shown in Figures \ref{Exp3FigOFlow}, \ref{Exp3FigUpperBFlow}, \ref{Exp3FigLowerBFlow} --- notice that in the figures, the upper and lower boundary flows are rotated by $-\frac{\pi}{4}$ and $\frac{\pi}{4}$ respectively. Also, see Figures \ref{Exp3FigLowerBStress}, \ref{Exp3FigUpperBStress} for boundary stress values.

\newpage

\begin{figure}
    \centering
    \subfloat[\centering $t=0.01$]{{\includegraphics[width=.5\linewidth]{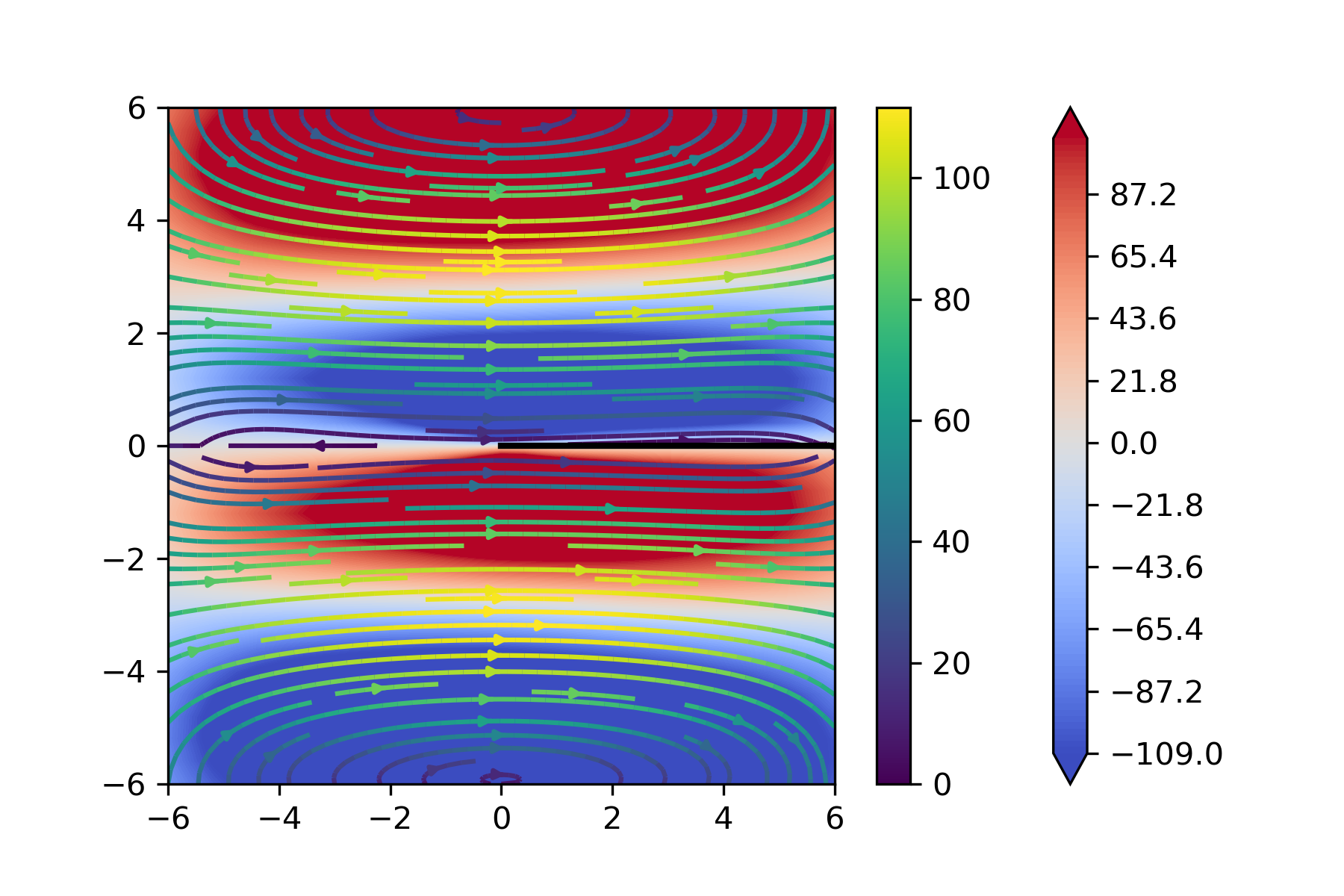} }}
    \subfloat[\centering $t=0.25$]{{\includegraphics[width=.5\linewidth]{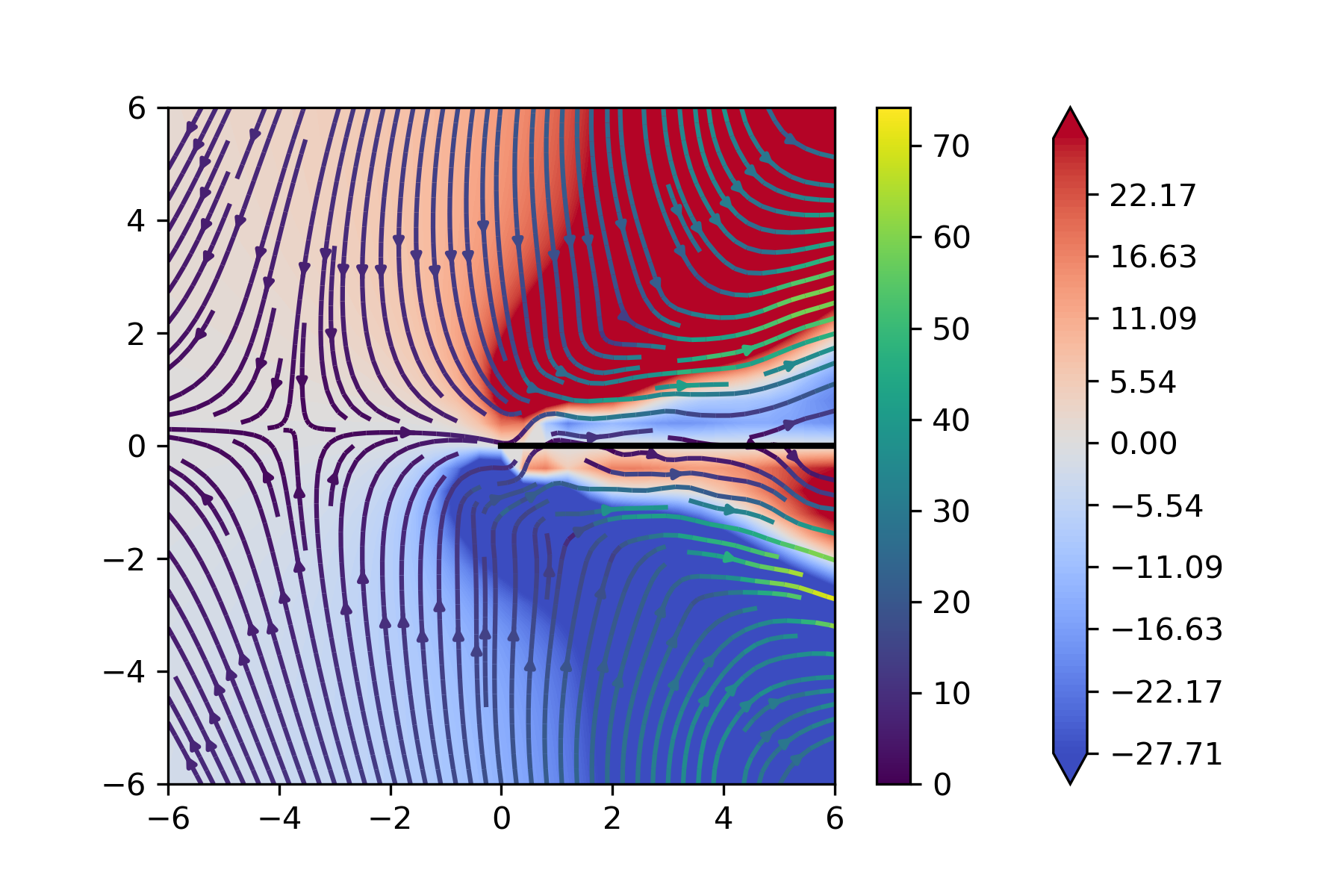}}}
    \qquad
    \subfloat[\centering $t=0.5$]{{\includegraphics[width=.5\linewidth]{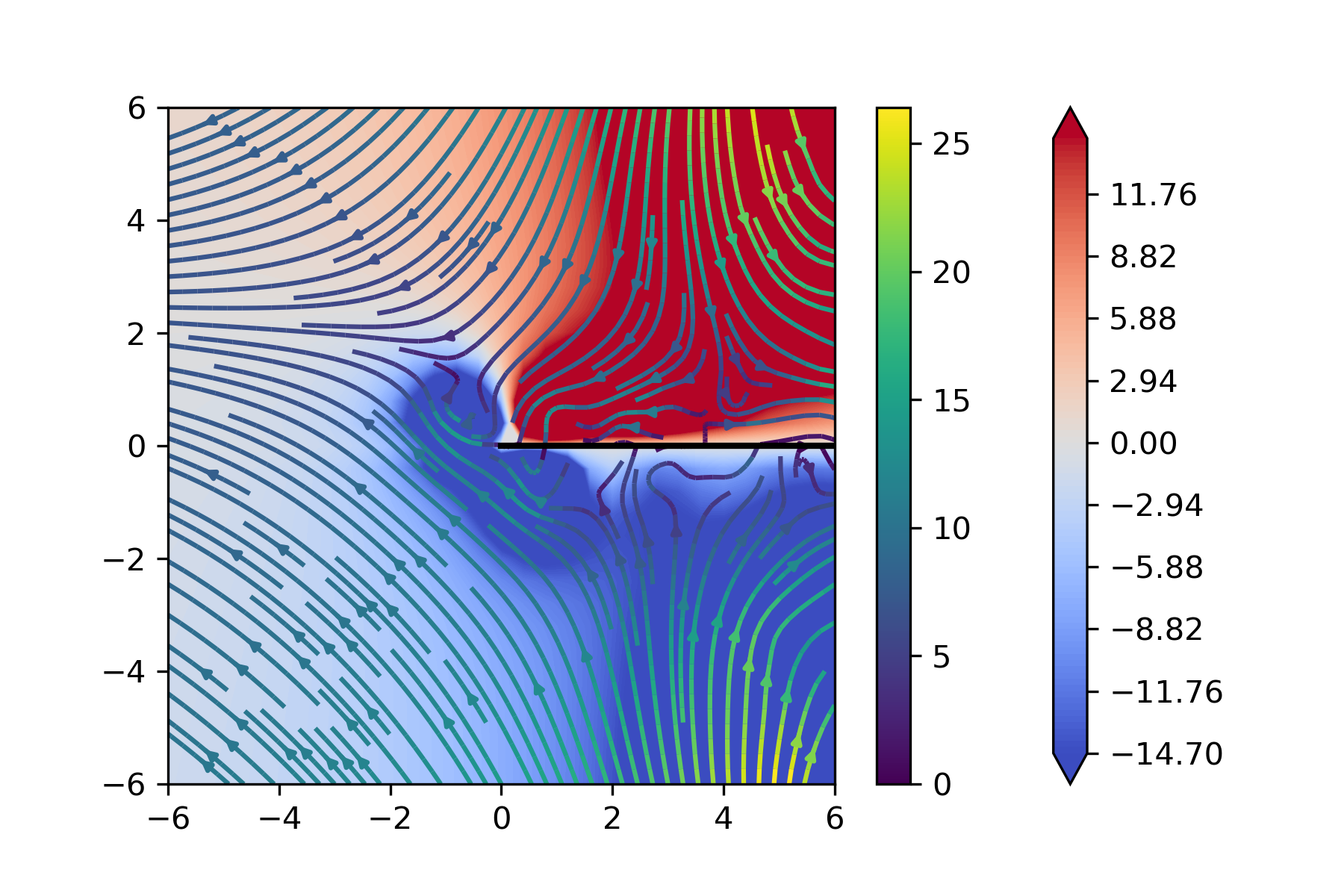} }}
    \subfloat[\centering $t=1.0$]{{\includegraphics[width=.5\linewidth]{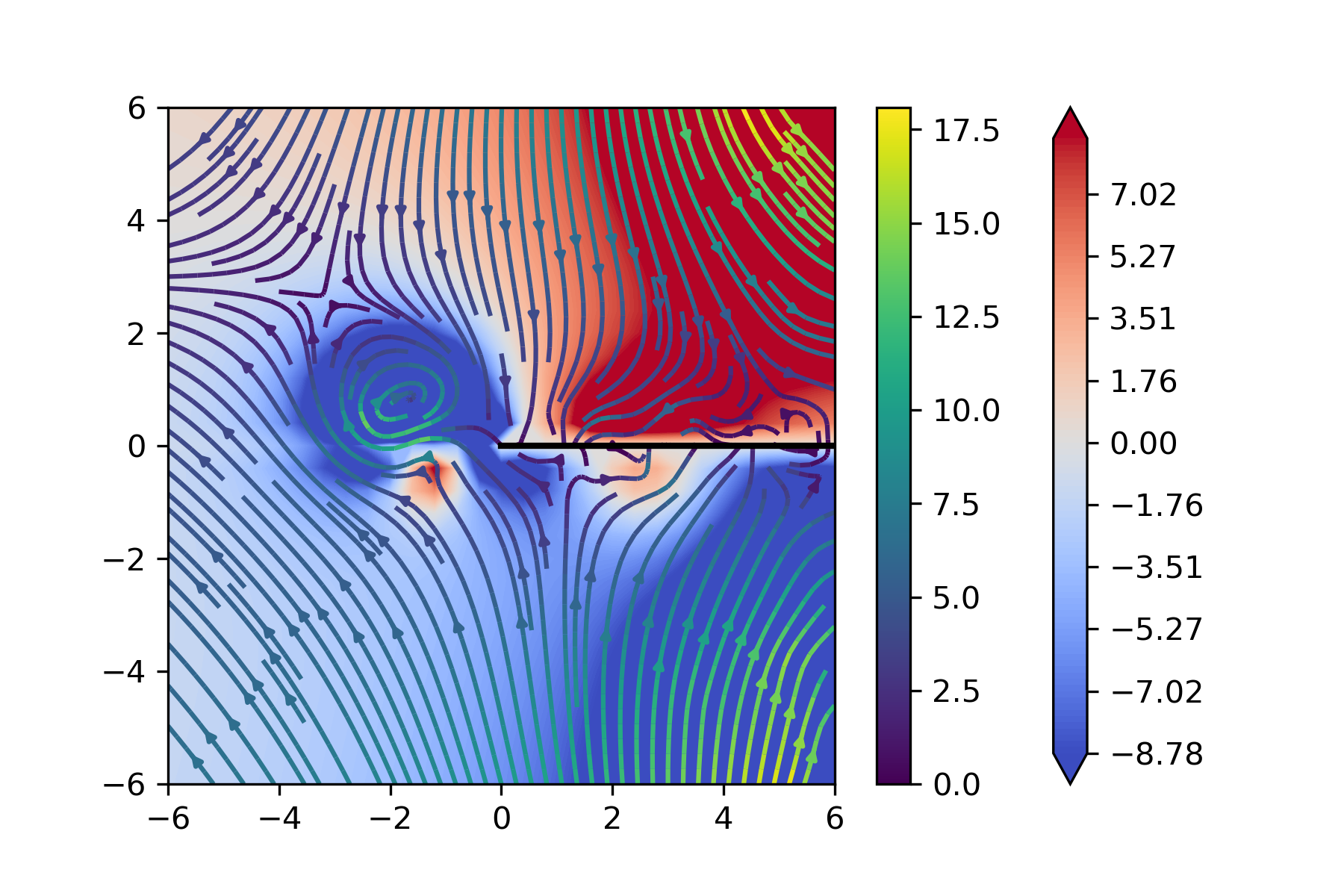}}}
    \caption{The outer layer flow at different times $t$.}
    \label{Exp1FigOFlow}
\end{figure}

\begin{figure}
    \centering
    \subfloat[\centering $t=0.01$]{{\includegraphics[width=.5\linewidth]{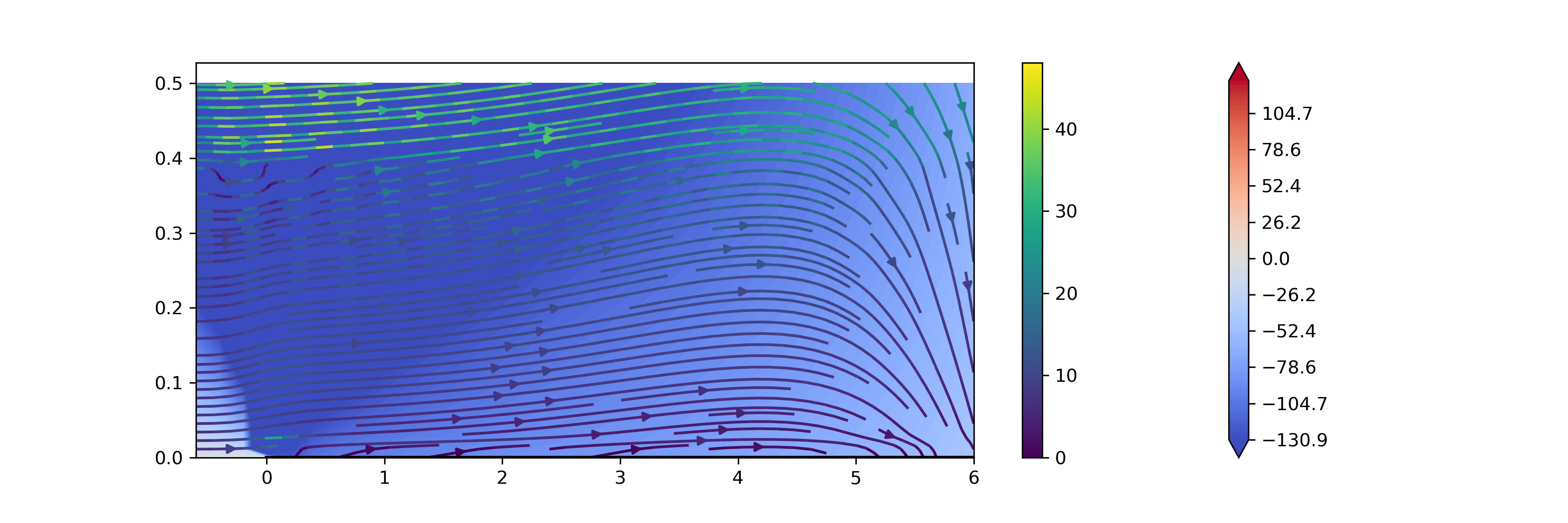} }}
    \subfloat[\centering $t=0.25$]{{\includegraphics[width=.5\linewidth]{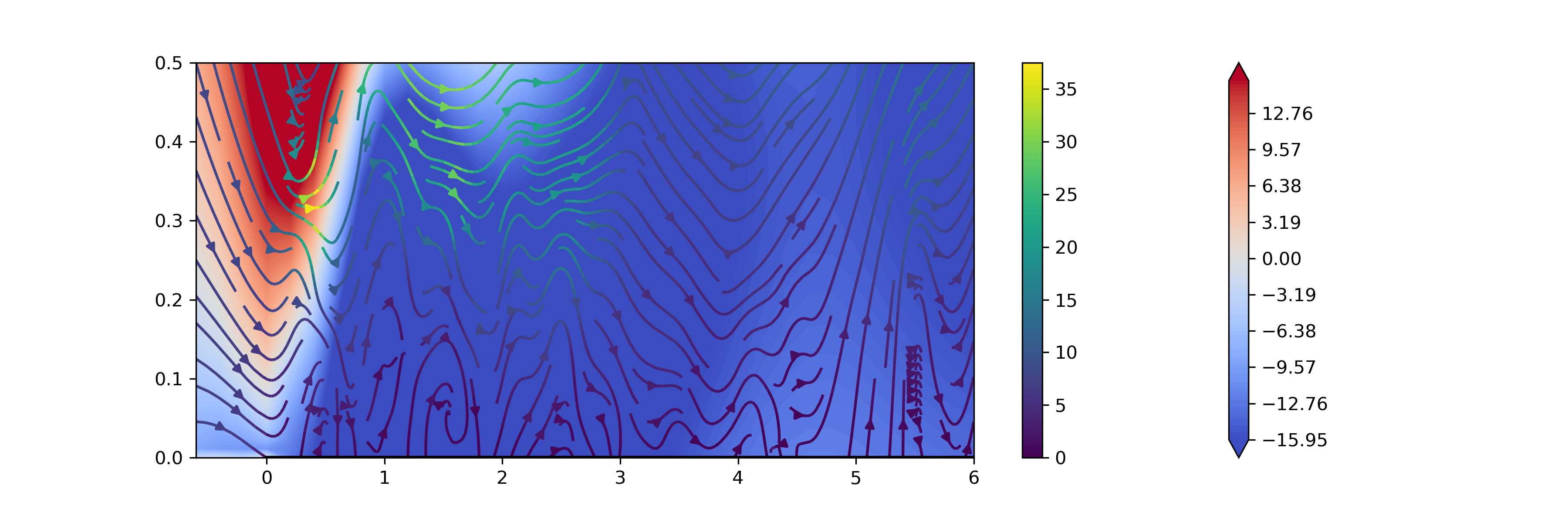}}}
    \qquad
    \subfloat[\centering $t=0.5$]{{\includegraphics[width=.5\linewidth]{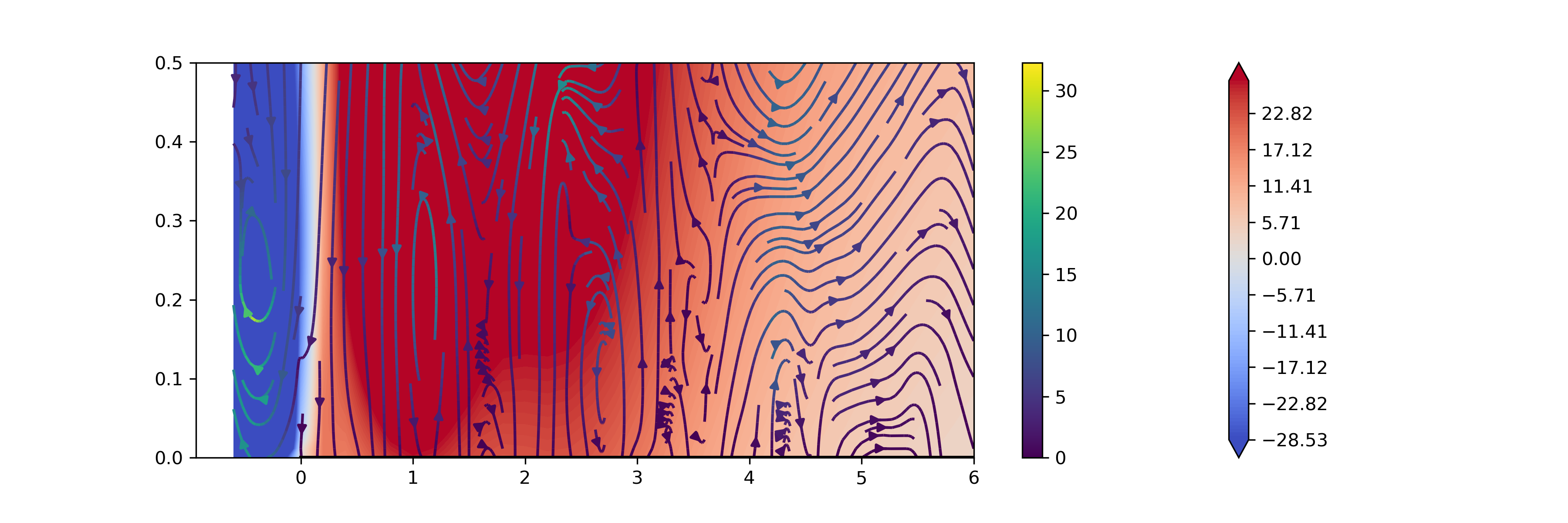} }}
    \subfloat[\centering $t=1.0$]{{\includegraphics[width=.5\linewidth]{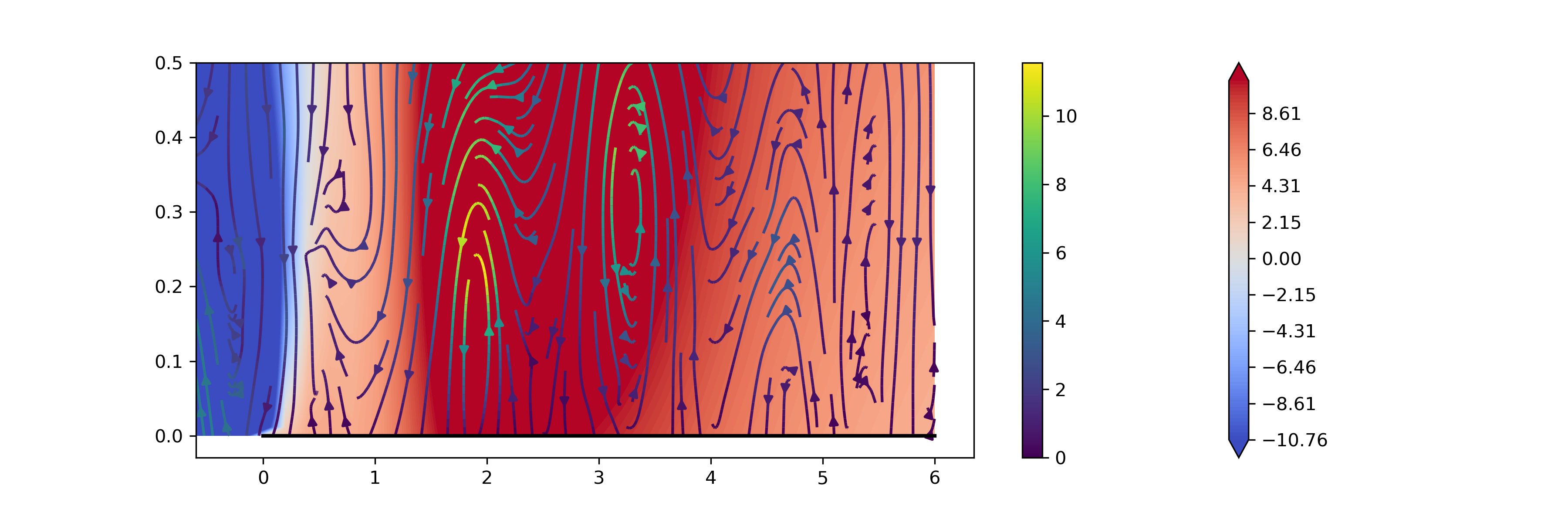}}}
    \caption{The upper boundary layer flow at different times $t$.}
    \label{Exp1FigUpperBFlow}
\end{figure}

\begin{figure}
    \centering
    \subfloat[\centering $t=0.01$]{{\includegraphics[width=.5\linewidth]{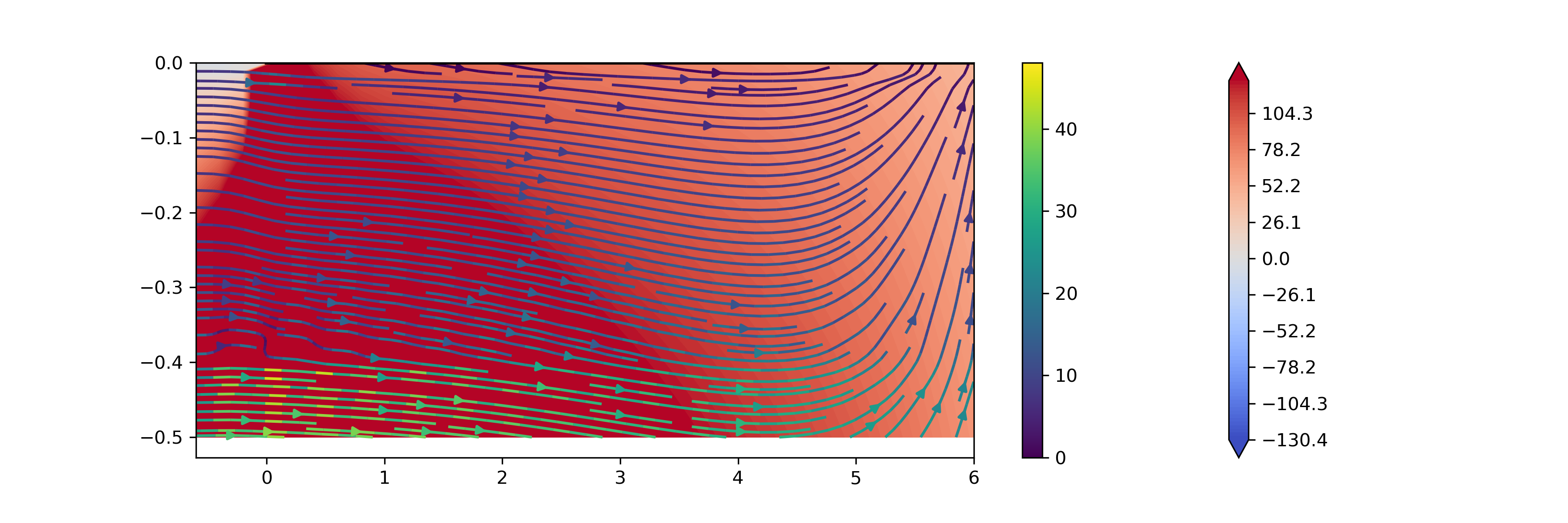} }}
    \subfloat[\centering $t=0.25$]{{\includegraphics[width=.5\linewidth]{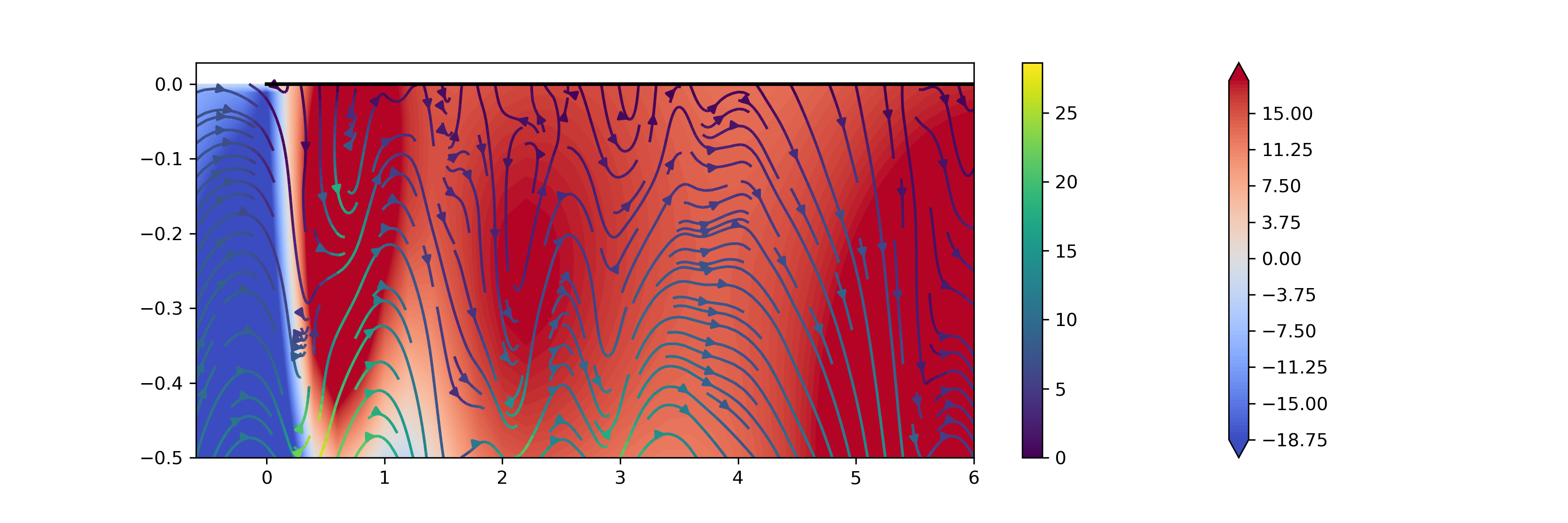}}}
    \qquad
    \subfloat[\centering $t=0.5$]{{\includegraphics[width=.5\linewidth]{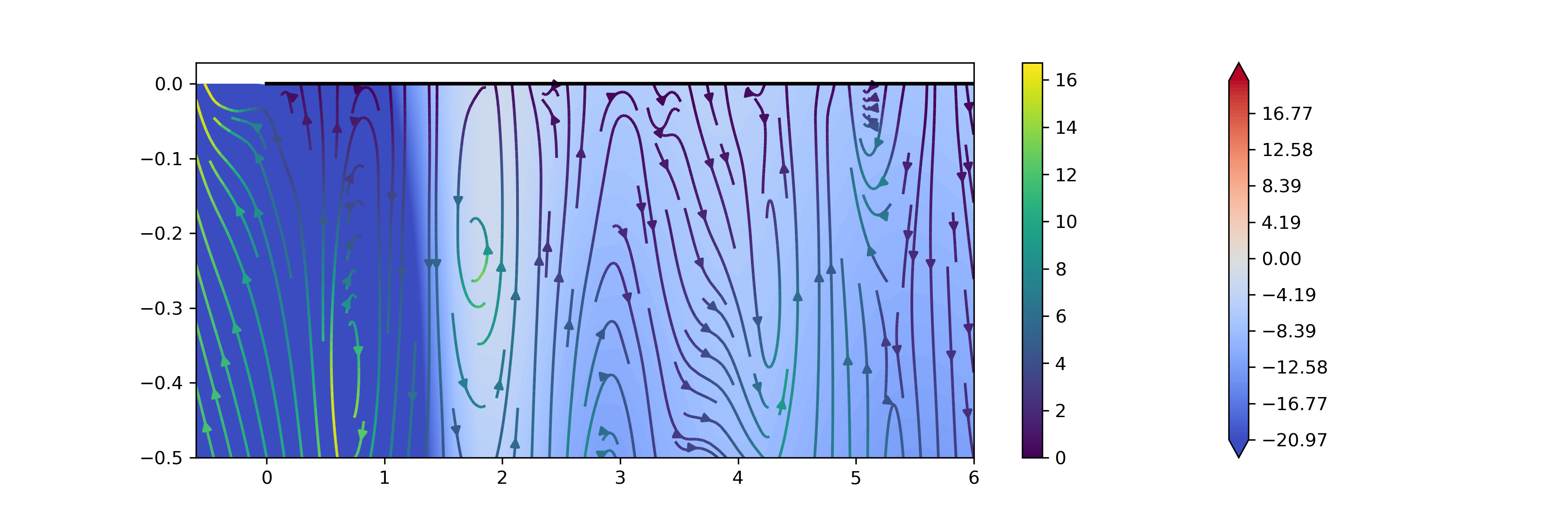} }}
    \subfloat[\centering $t=1.0$]{{\includegraphics[width=.5\linewidth]{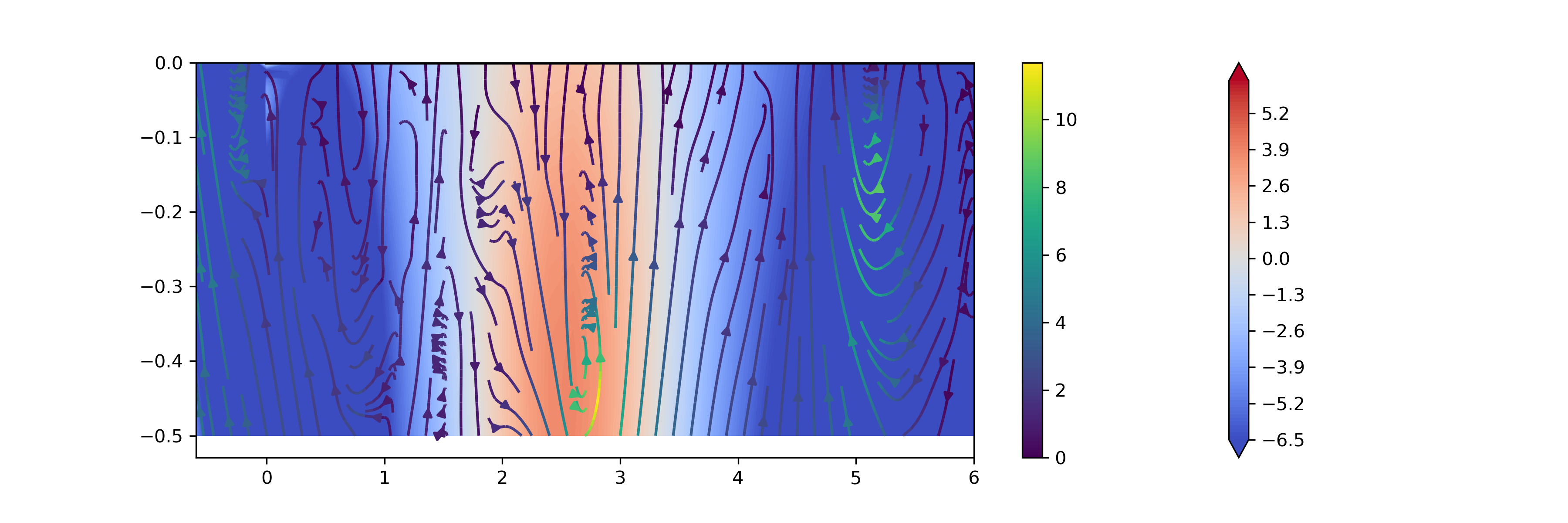}}}
    \caption{The lower boundary layer flow at different times $t$.}
    \label{Exp1FigLowerBFlow}
\end{figure}

\begin{figure}
    \centering
    \subfloat[\centering $t=0.01$]{{\includegraphics[width=.4\linewidth]{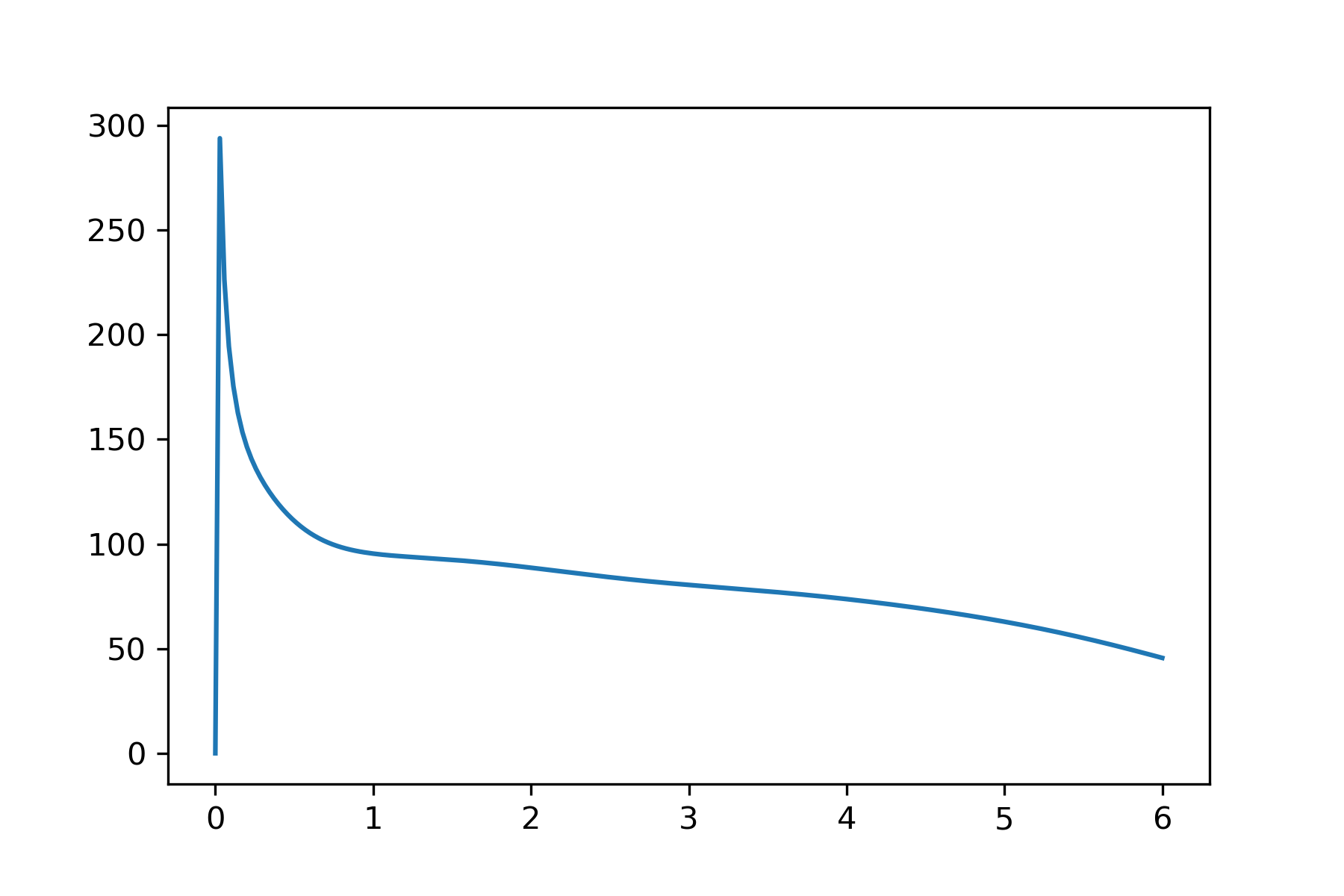} }}
    \subfloat[\centering $t=0.25$]{{\includegraphics[width=.4\linewidth]{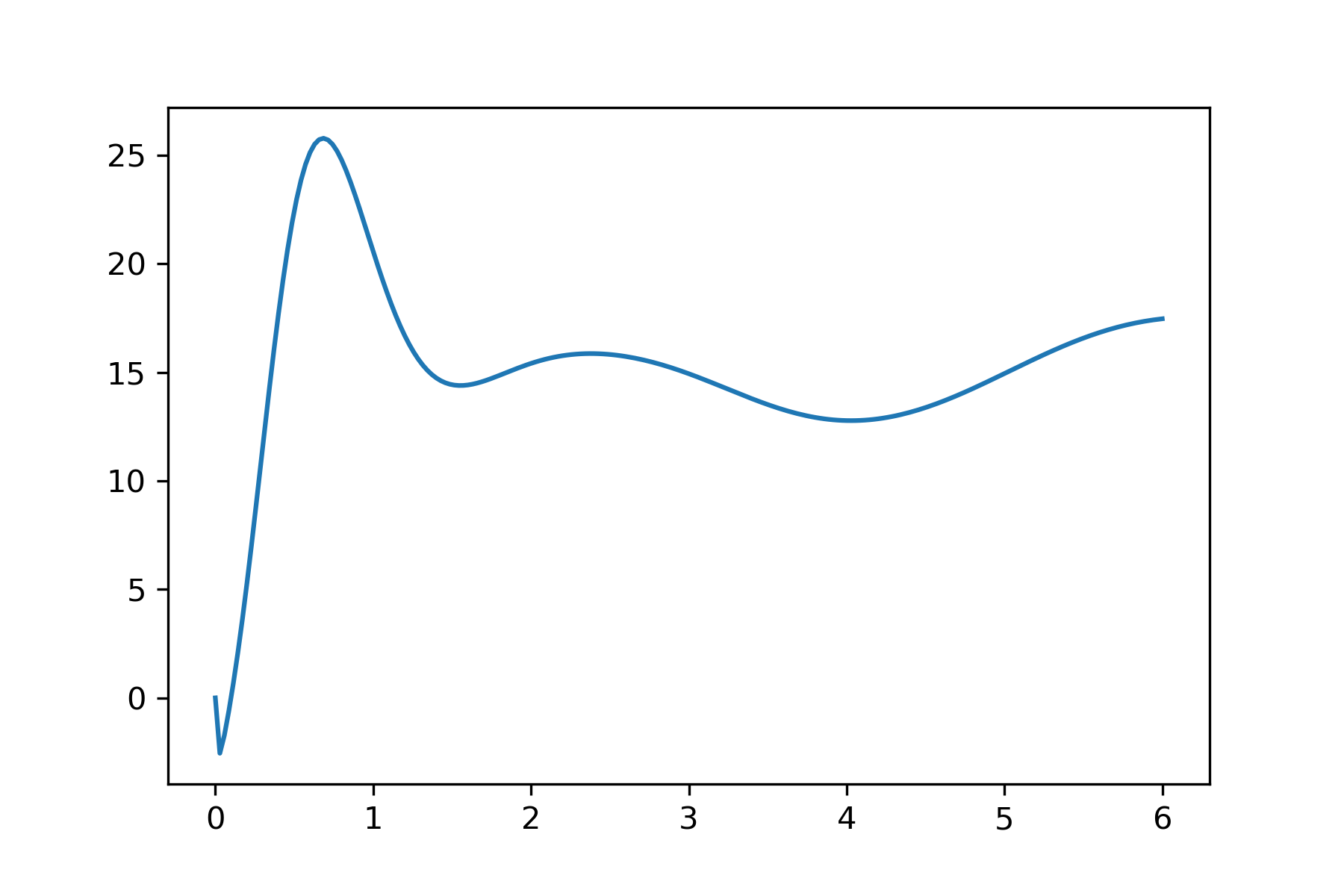}}}
    \qquad
    \subfloat[\centering $t=0.5$]{{\includegraphics[width=.4\linewidth]{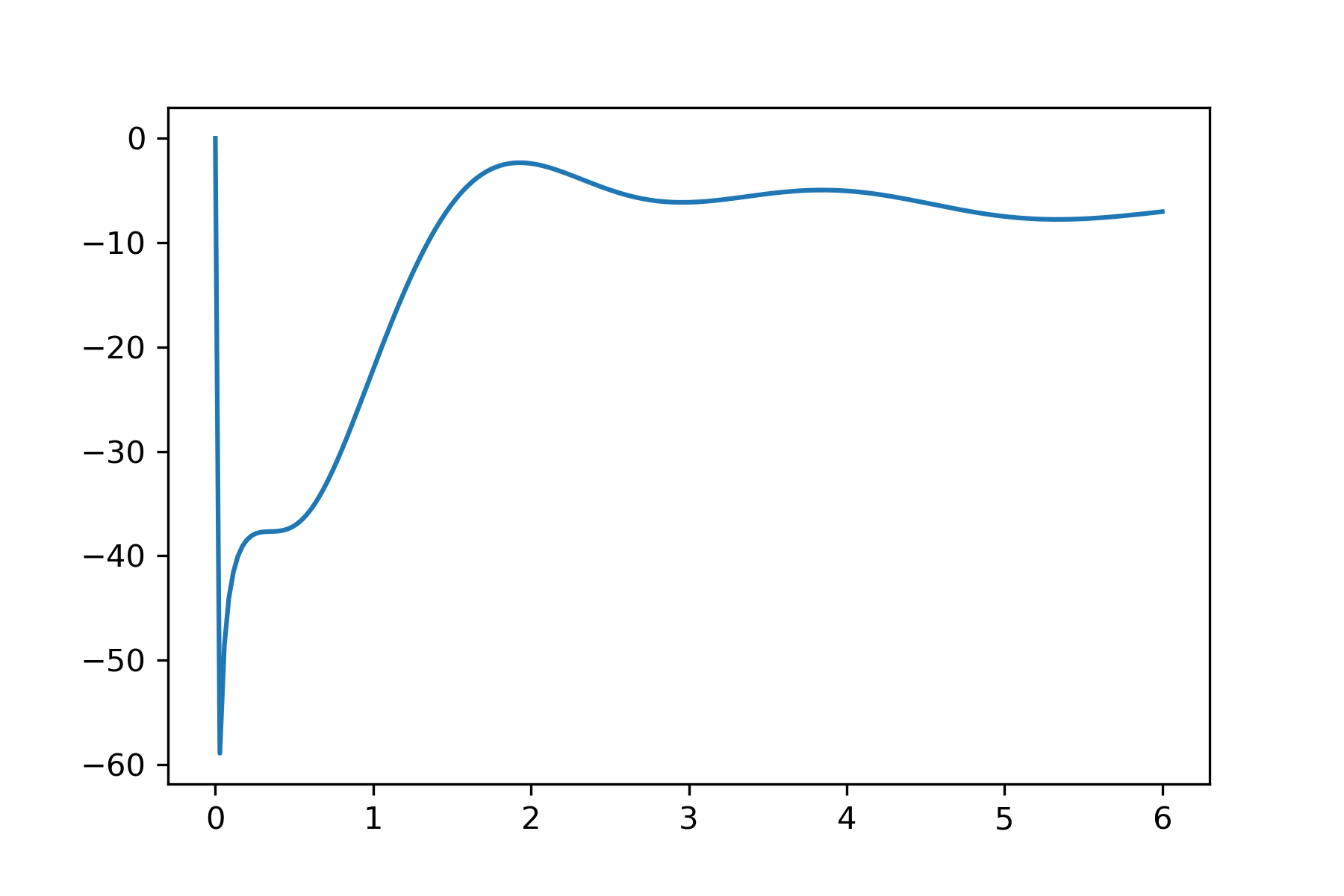} }}
    \subfloat[\centering $t=1.0$]{{\includegraphics[width=.4\linewidth]{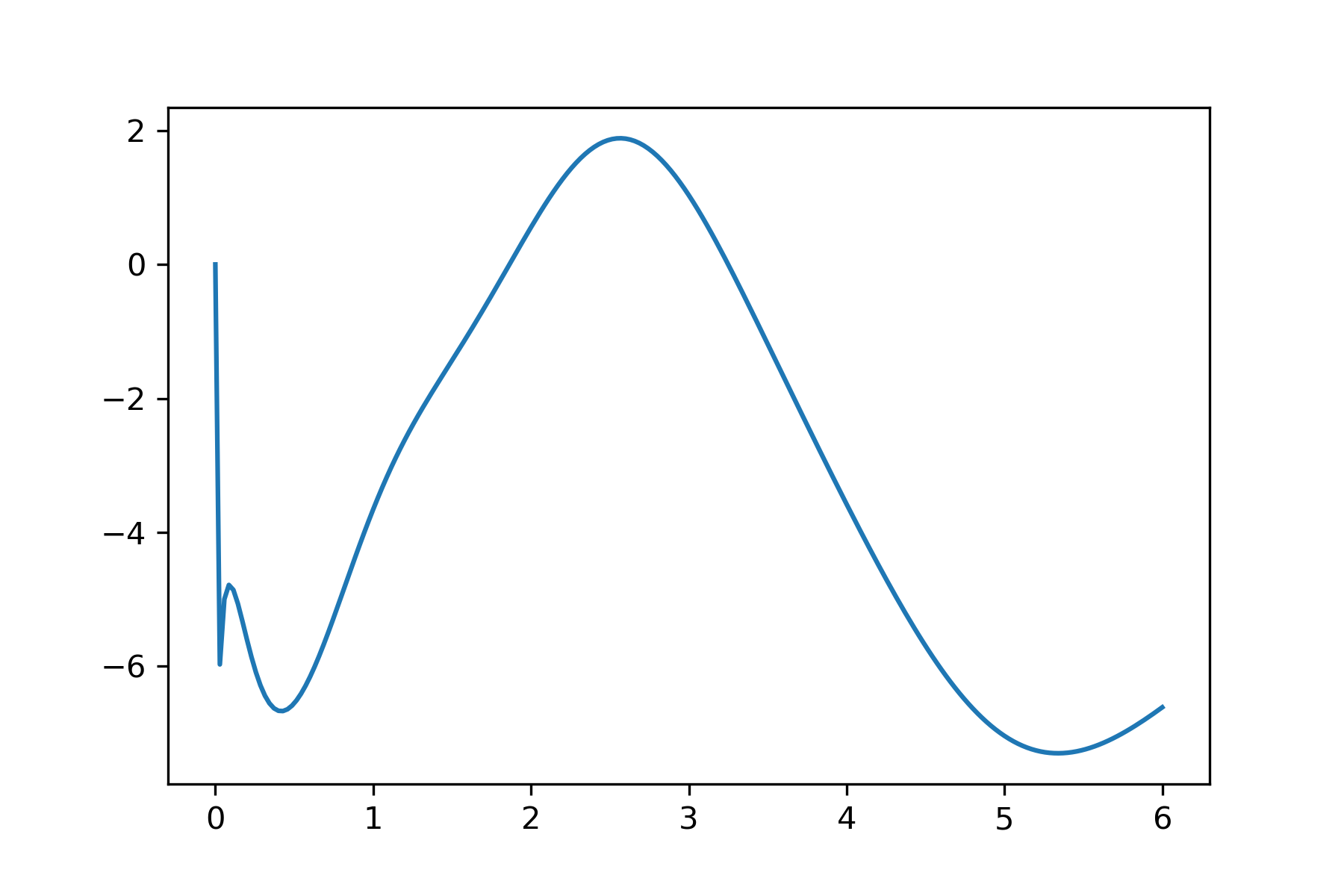}}}
    \caption{The stress applied to the lower boundary at different times $t$.}
    \label{Exp1FigLowerBStress}
\end{figure}

\begin{figure}
    \centering
    \subfloat[\centering $t=0.01$]{{\includegraphics[width=.4\linewidth]{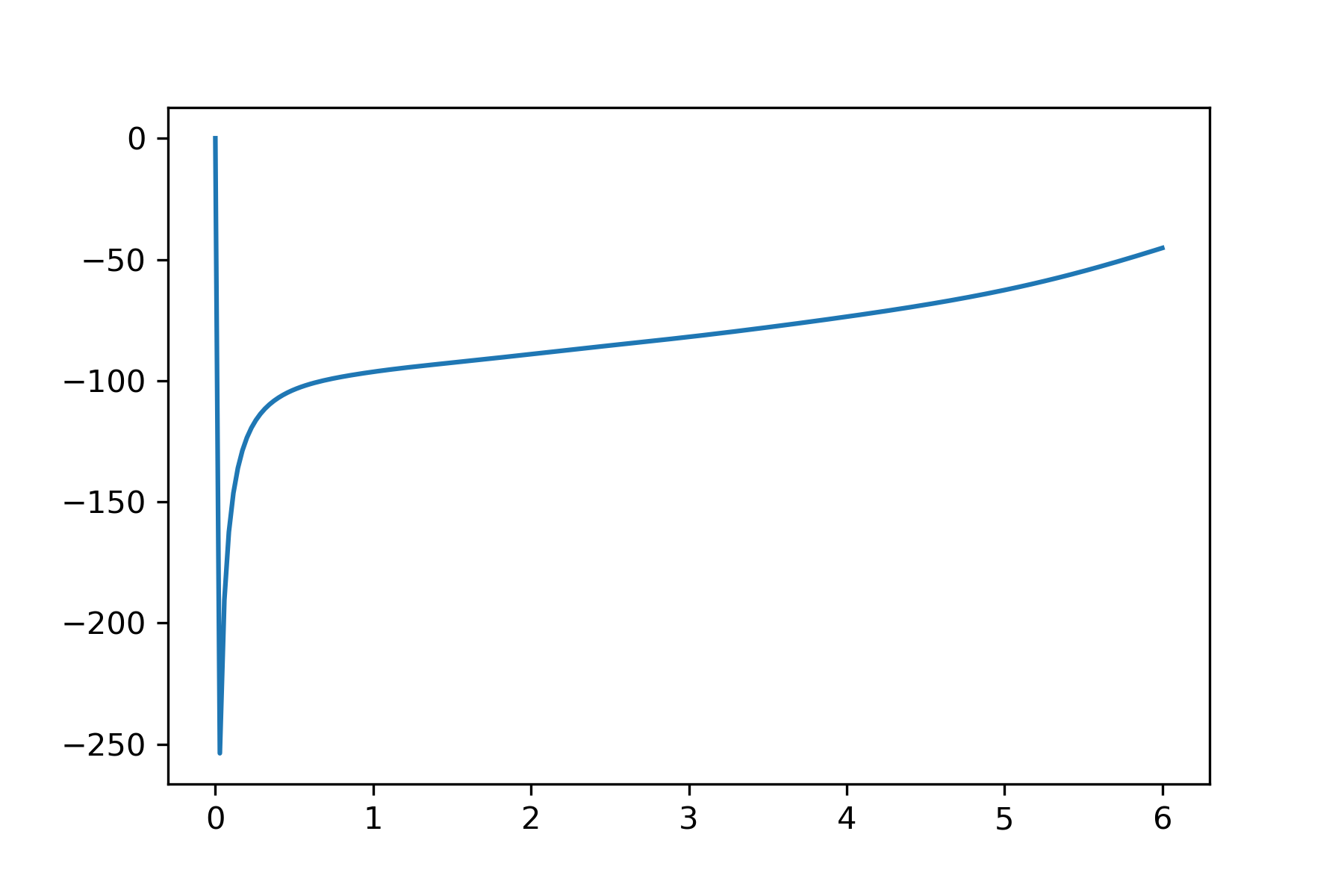} }}
    \subfloat[\centering $t=0.25$]{{\includegraphics[width=.4\linewidth]{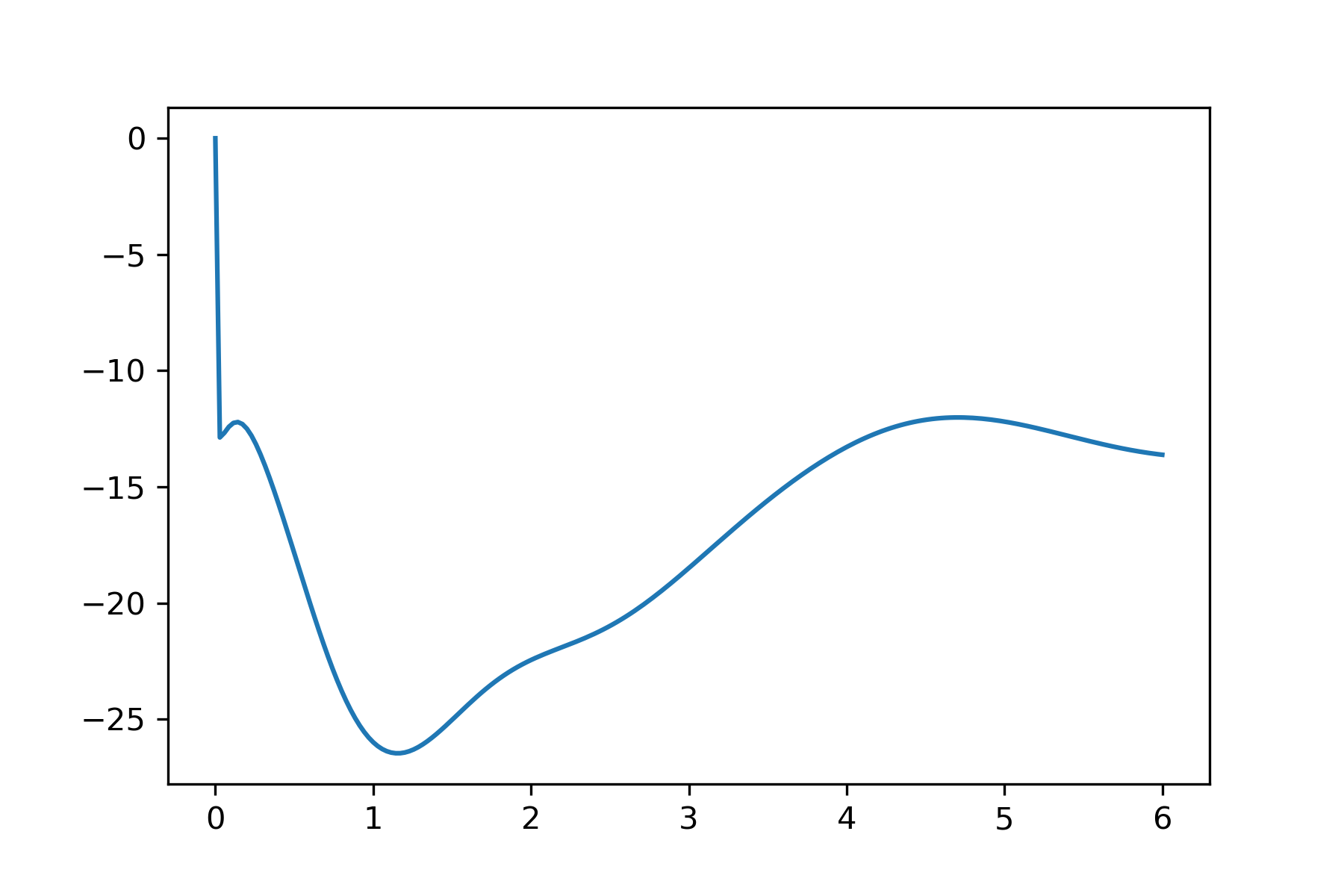}}}
    \qquad
    \subfloat[\centering $t=0.5$]{{\includegraphics[width=.4\linewidth]{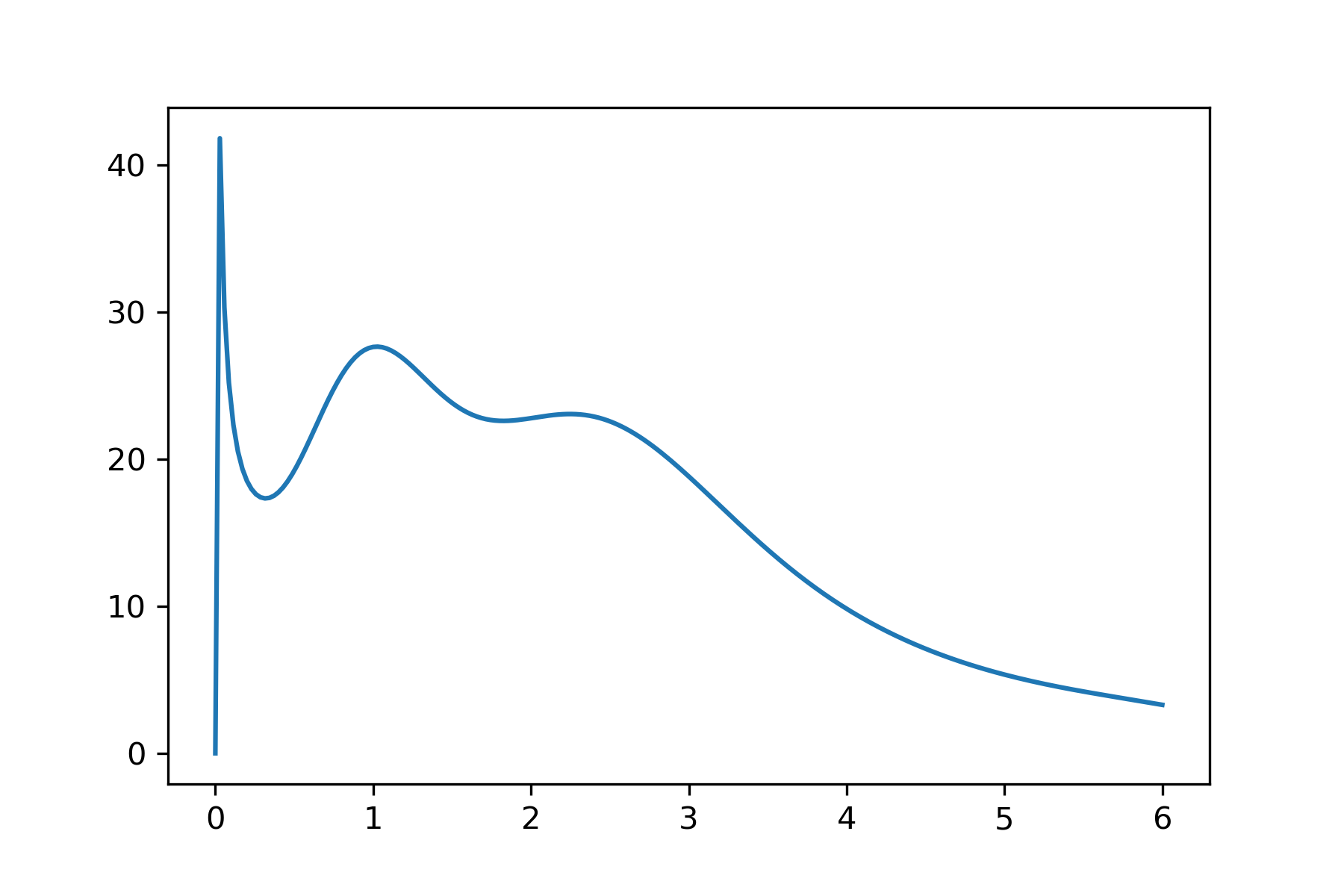} }}
    \subfloat[\centering $t=1.0$]{{\includegraphics[width=.4\linewidth]{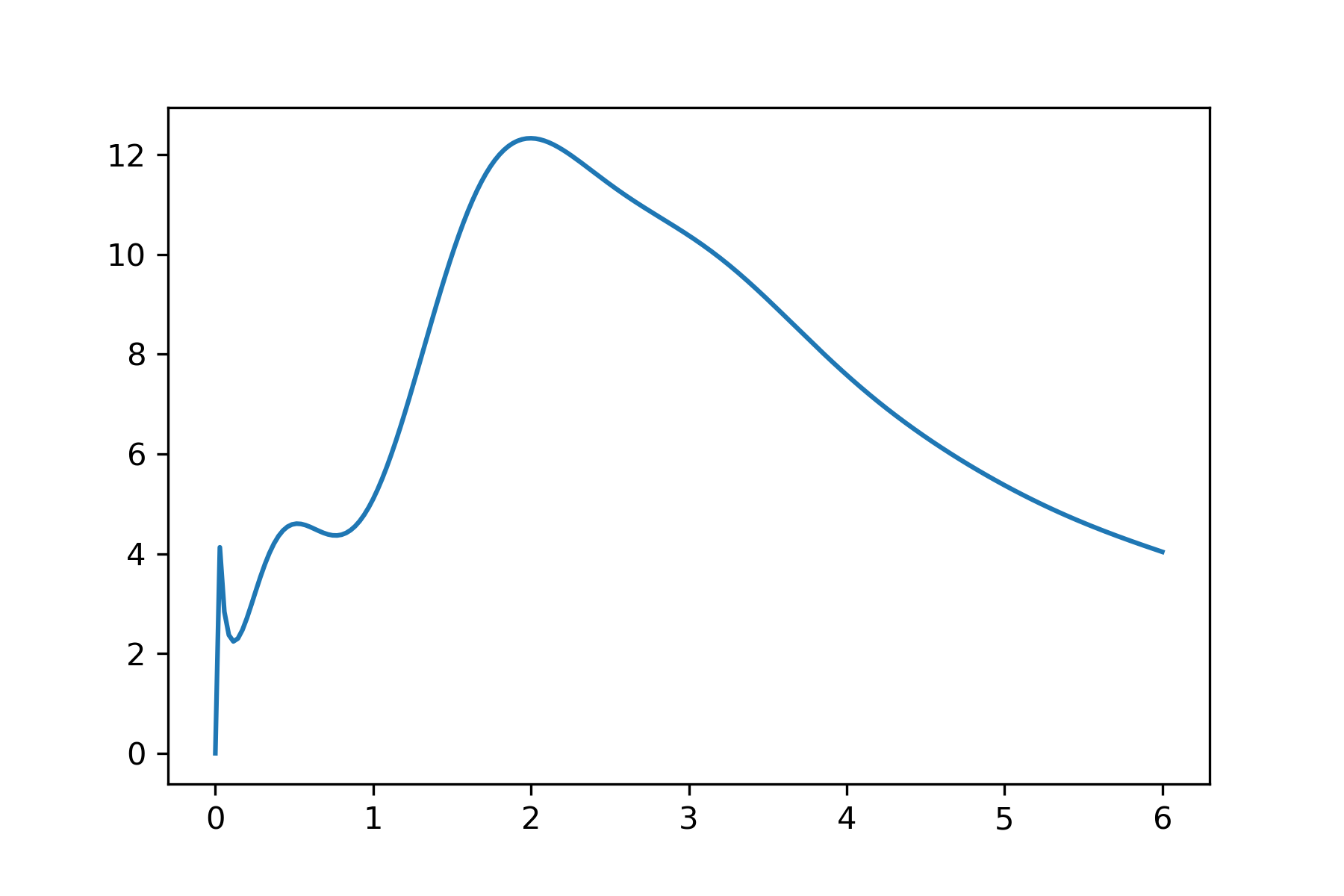}}}
    \caption{The stress applied to the upper boundary at different times $t$.}
    \label{Exp1FigUpperBStress}
\end{figure}

\begin{figure}
    \centering
    \subfloat[\centering $t=0.01$]{{\includegraphics[width=.5\linewidth]{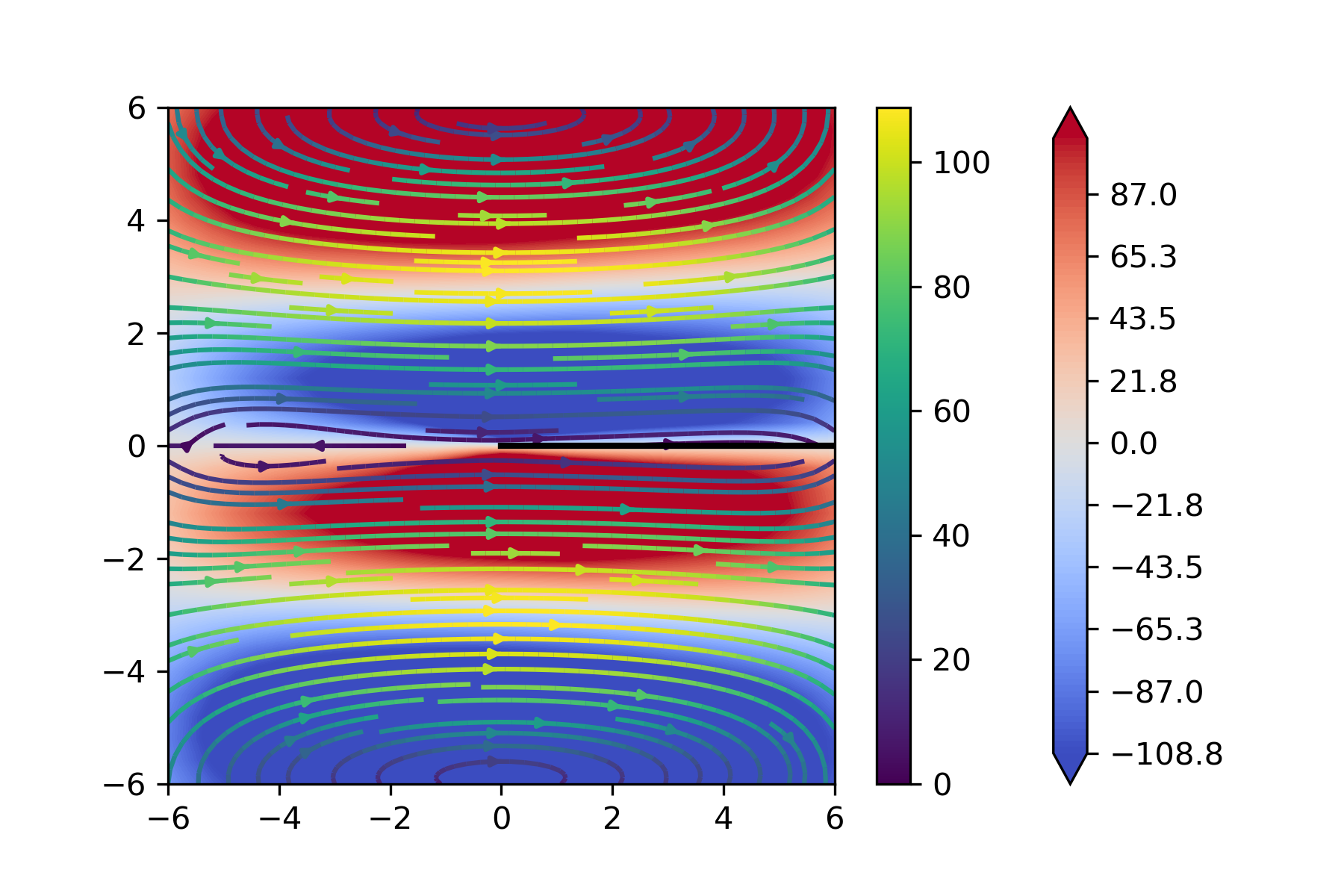} }}
    \subfloat[\centering $t=0.25$]{{\includegraphics[width=.5\linewidth]{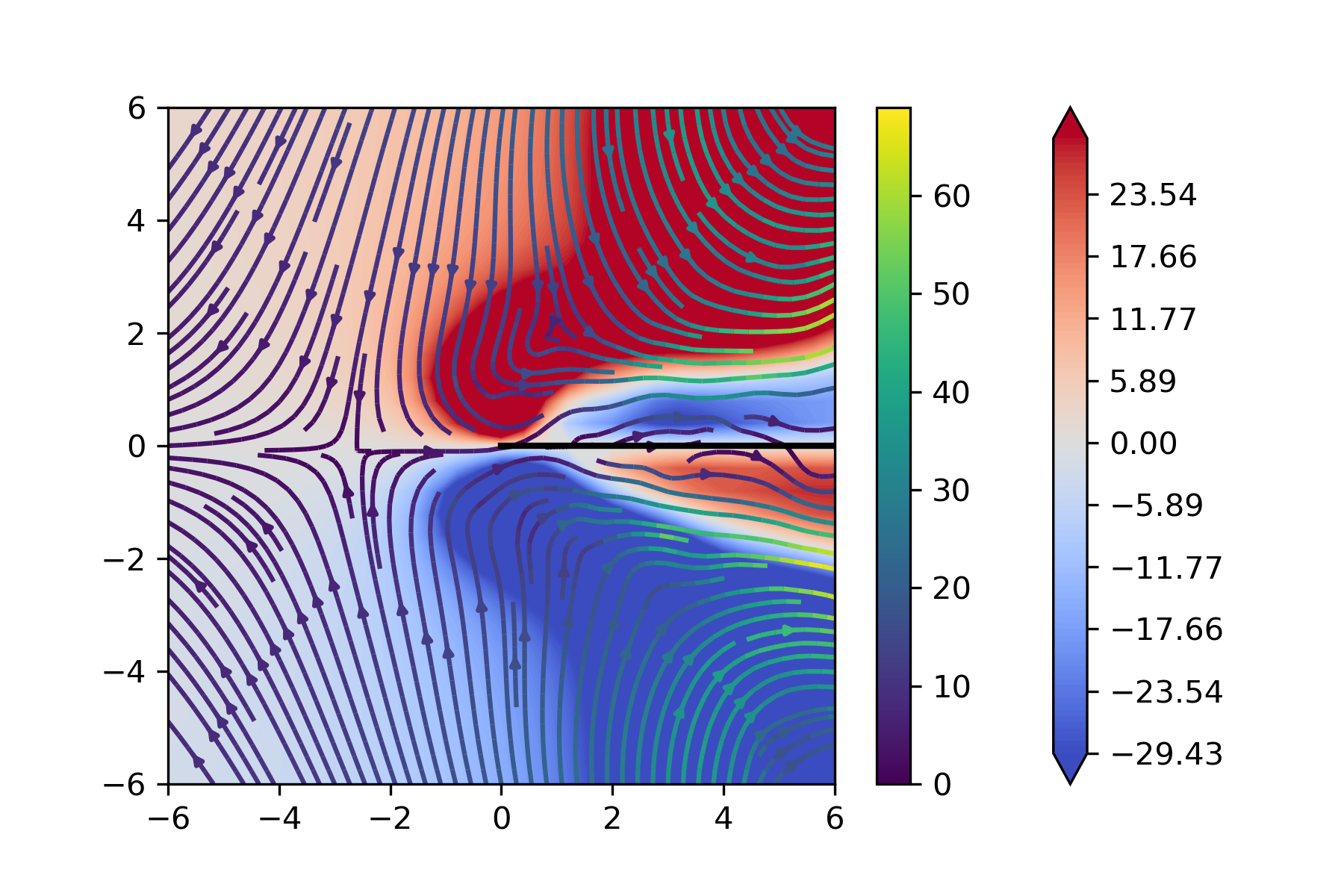}}}
    \qquad
    \subfloat[\centering $t=0.5$]{{\includegraphics[width=.5\linewidth]{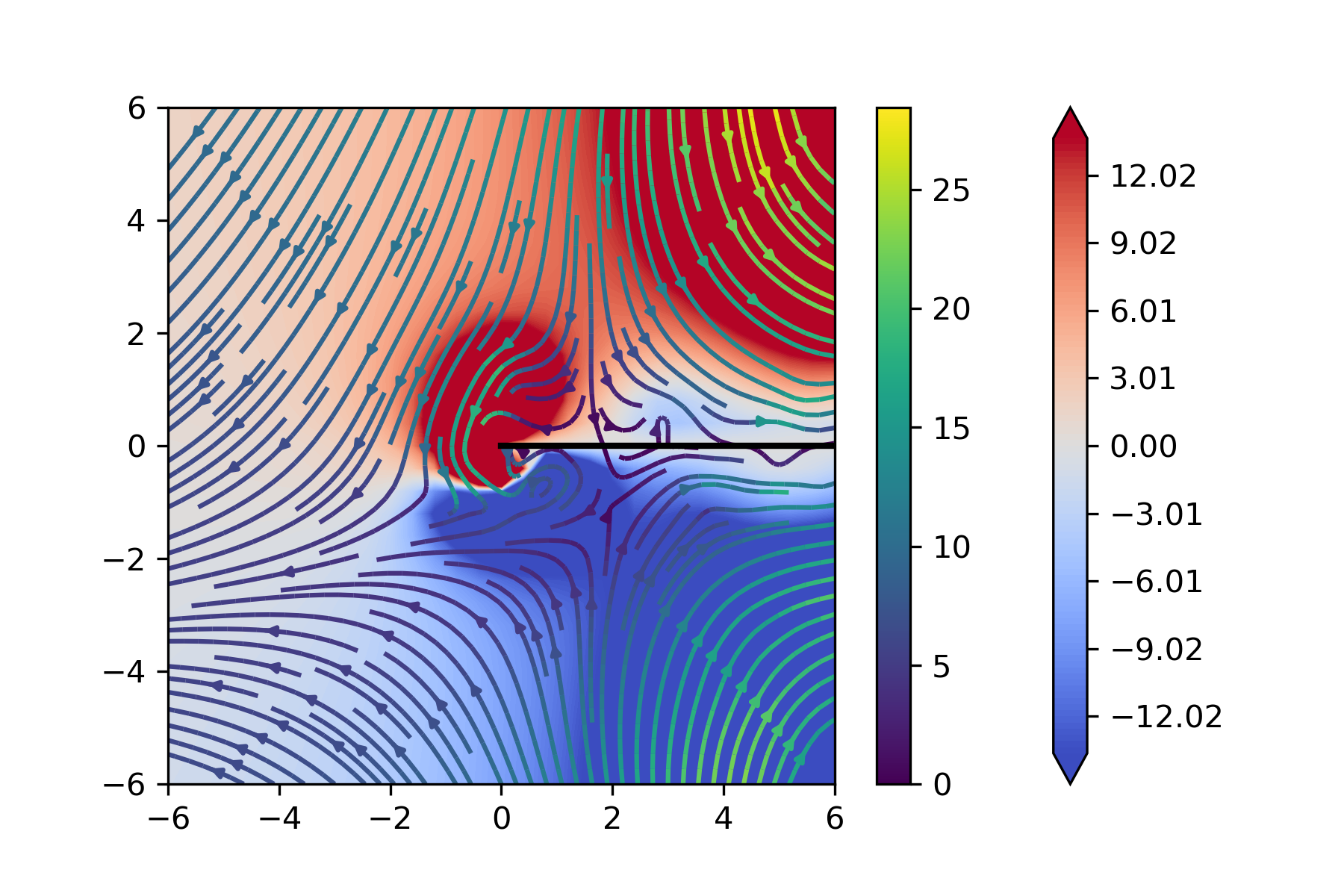} }}
    \subfloat[\centering $t=1.0$]{{\includegraphics[width=.5\linewidth]{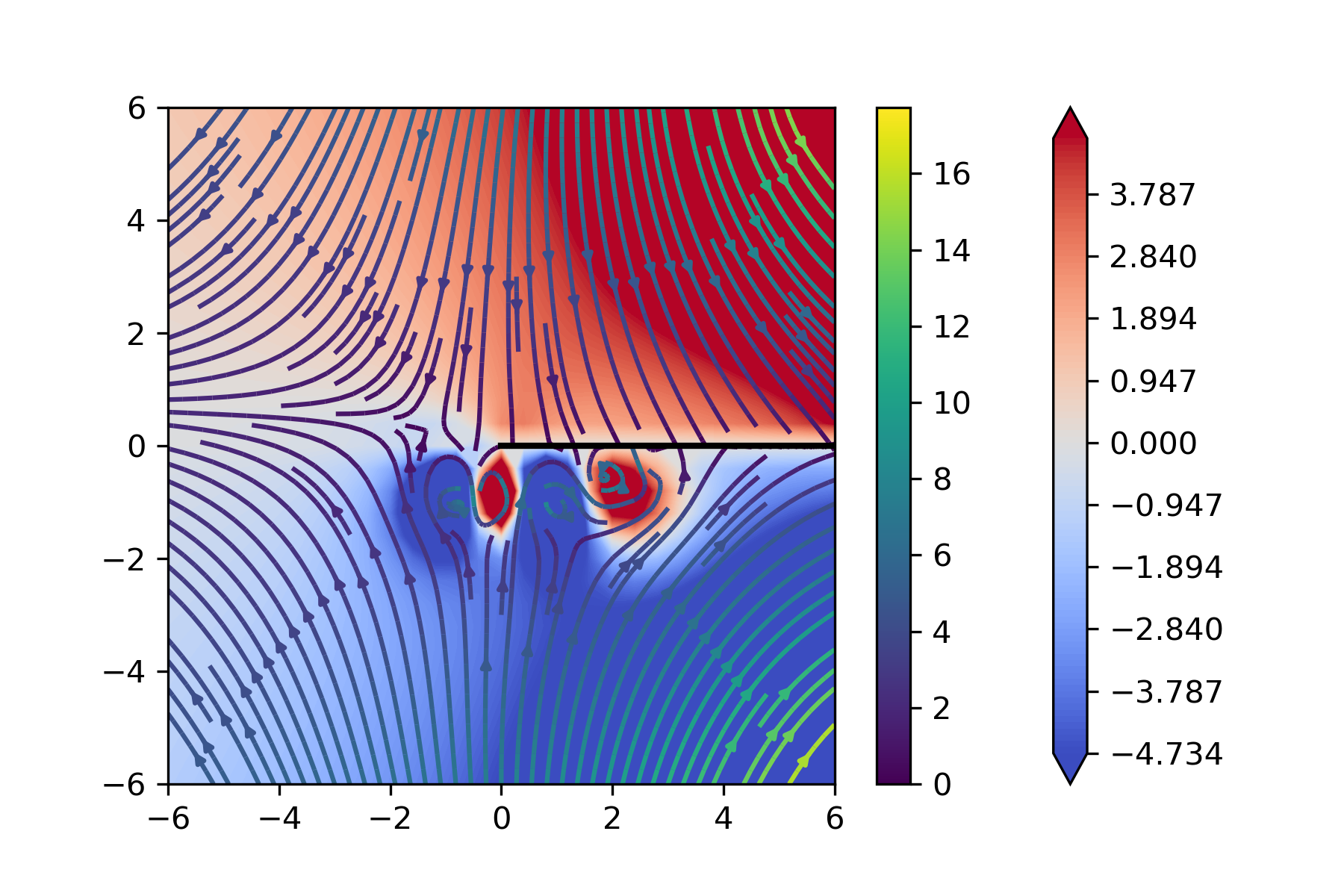}}}
    \caption{The outer layer flow at different times $t$.}
    \label{Exp2FigOFlow}
\end{figure}

\begin{figure}
    \centering
    \subfloat[\centering $t=0.01$]{{\includegraphics[width=.5\linewidth]{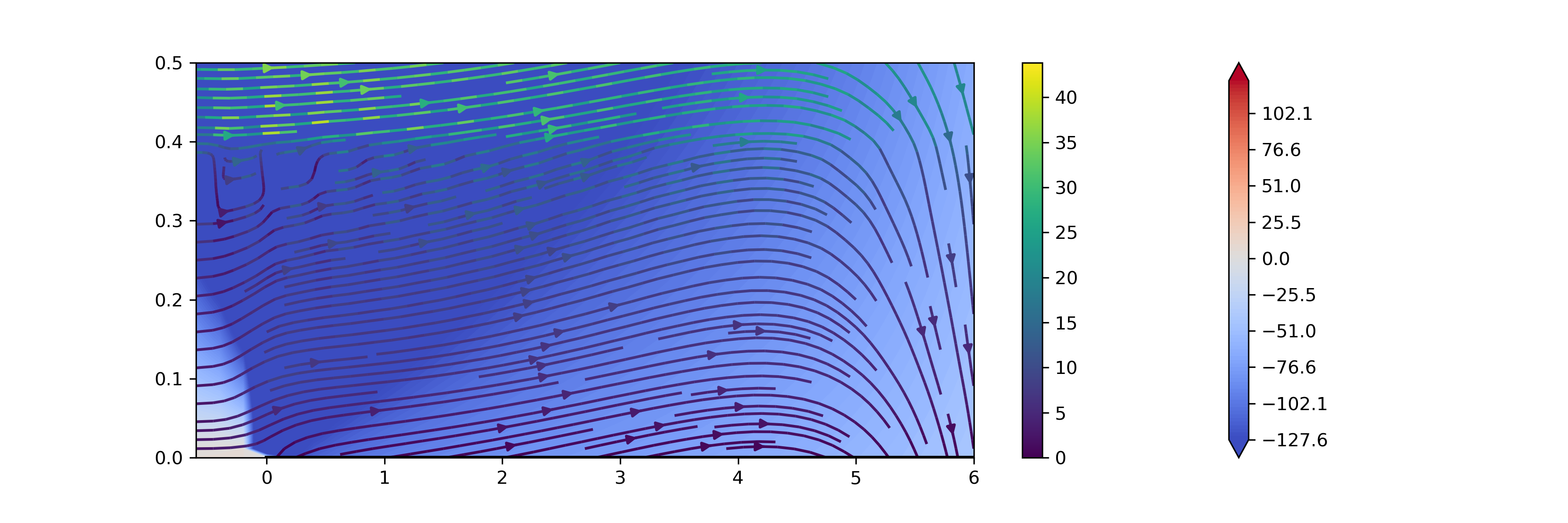} }}
    \subfloat[\centering $t=0.25$]{{\includegraphics[width=.5\linewidth]{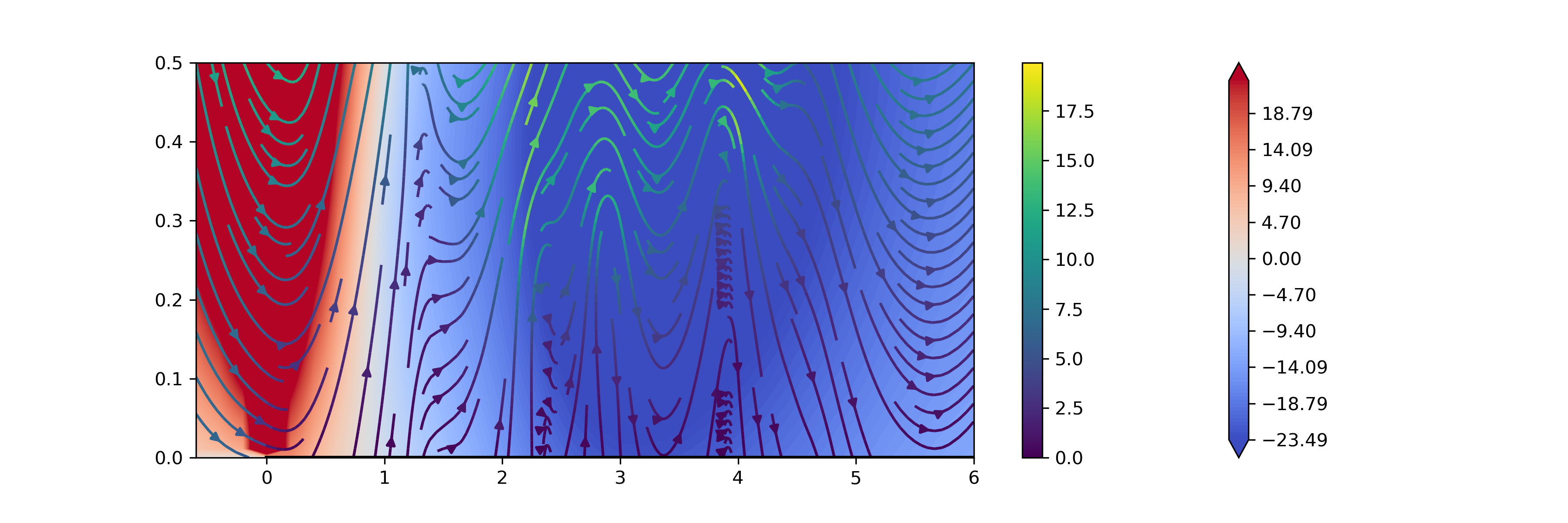}}}
    \qquad
    \subfloat[\centering $t=0.5$]{{\includegraphics[width=.5\linewidth]{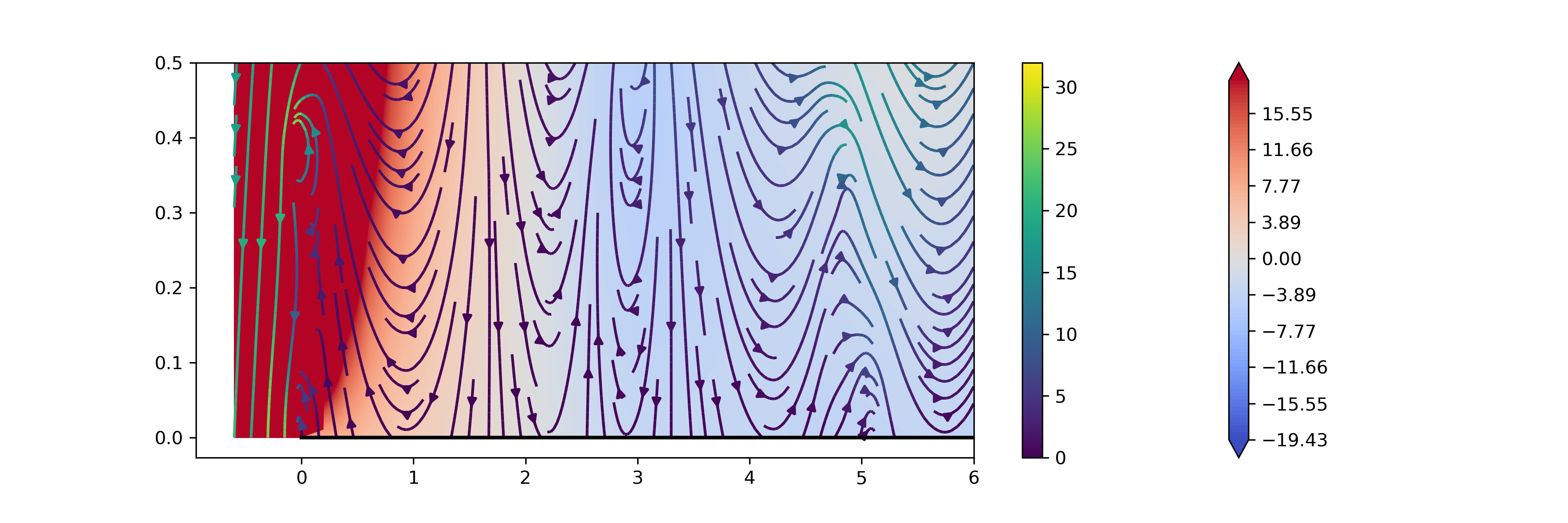} }}
    \subfloat[\centering $t=1.0$]{{\includegraphics[width=.5\linewidth]{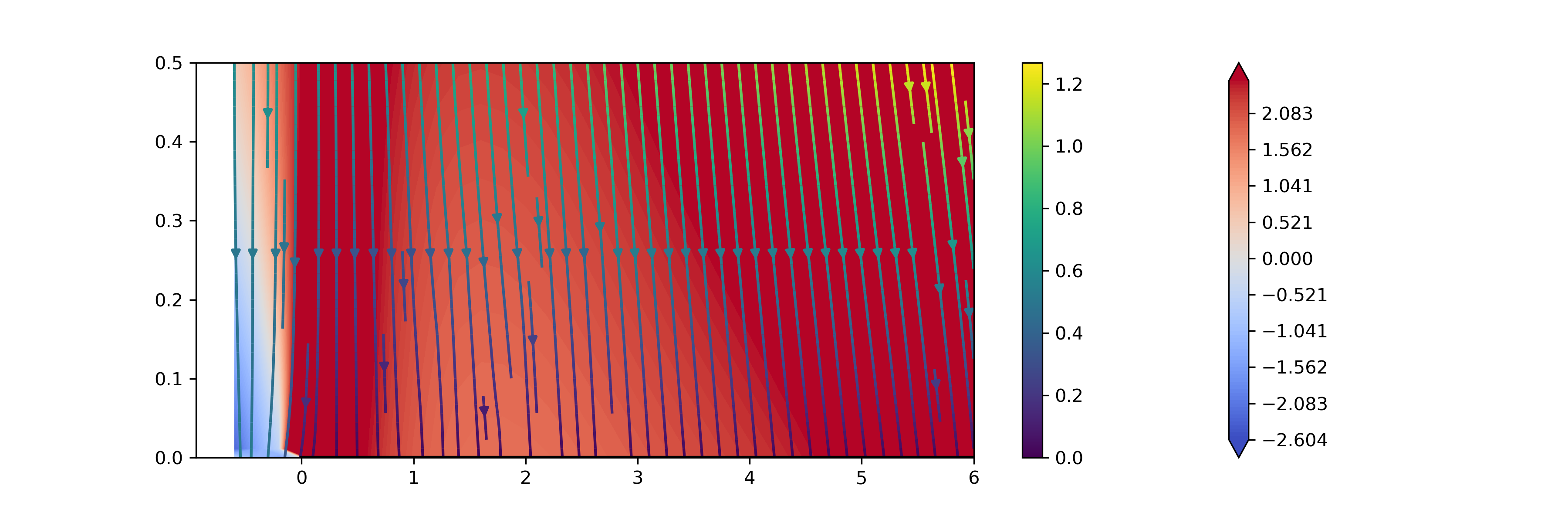}}}
    \caption{The upper boundary layer flow at different times $t$.}
    \label{Exp2FigUpperBFlow}
\end{figure}

\begin{figure}
    \centering
    \subfloat[\centering $t=0.01$]{{\includegraphics[width=.5\linewidth]{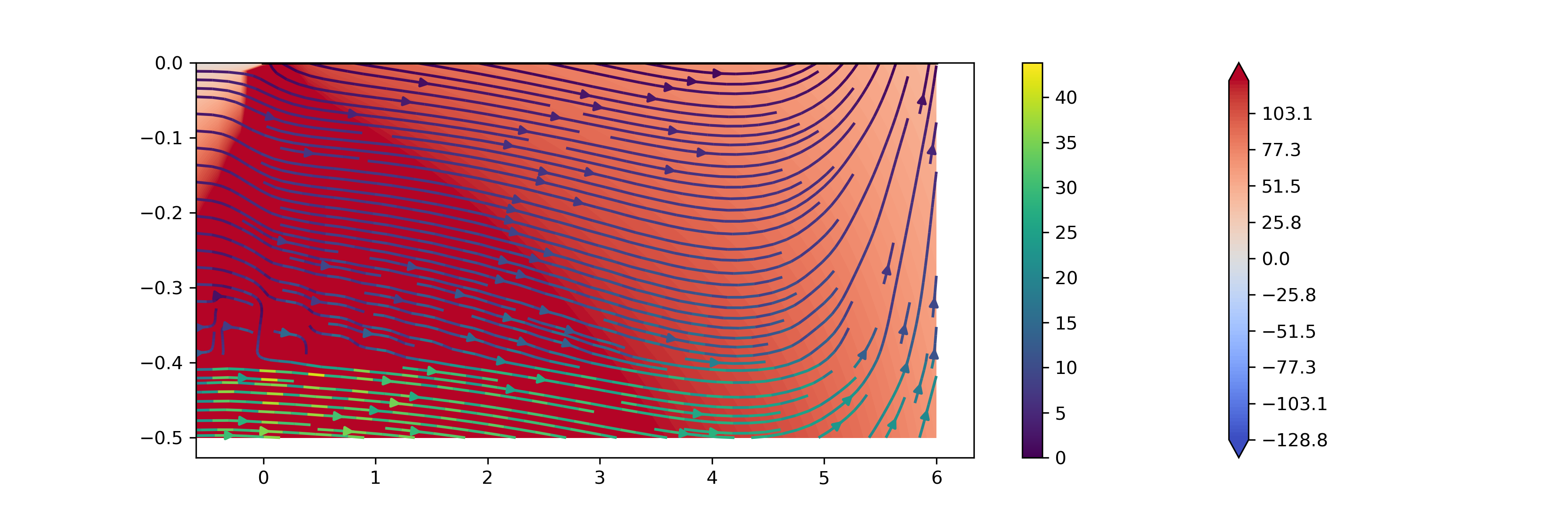} }}
    \subfloat[\centering $t=0.25$]{{\includegraphics[width=.5\linewidth]{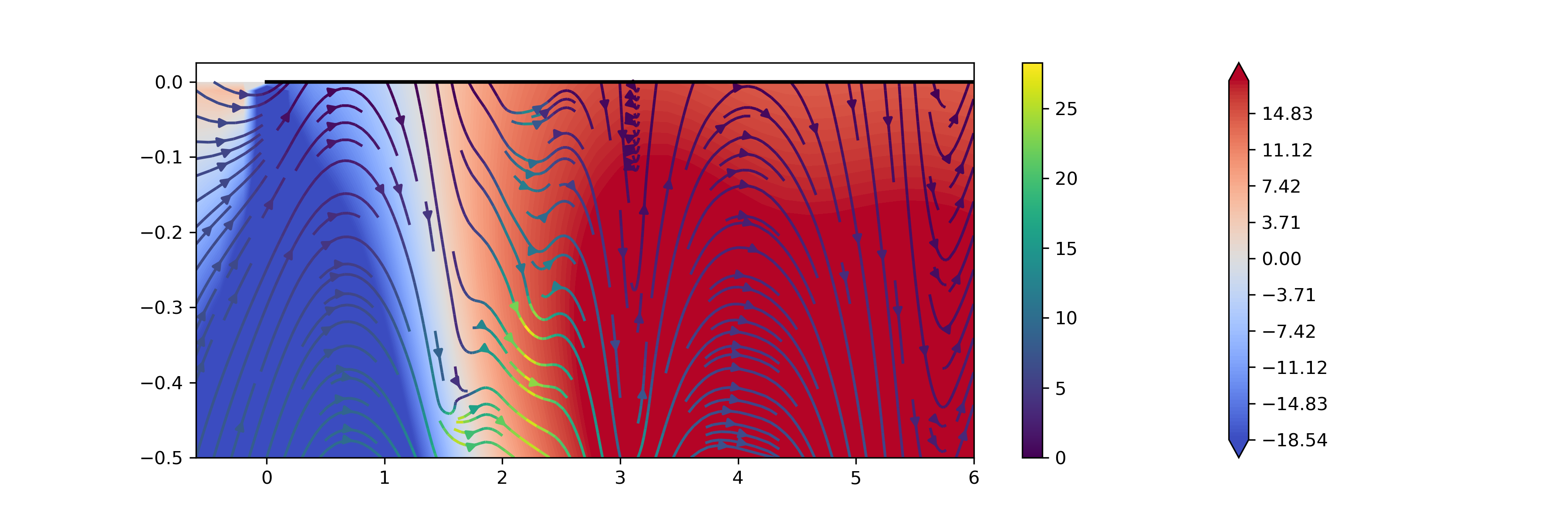}}}
    \qquad
    \subfloat[\centering $t=0.5$]{{\includegraphics[width=.5\linewidth]{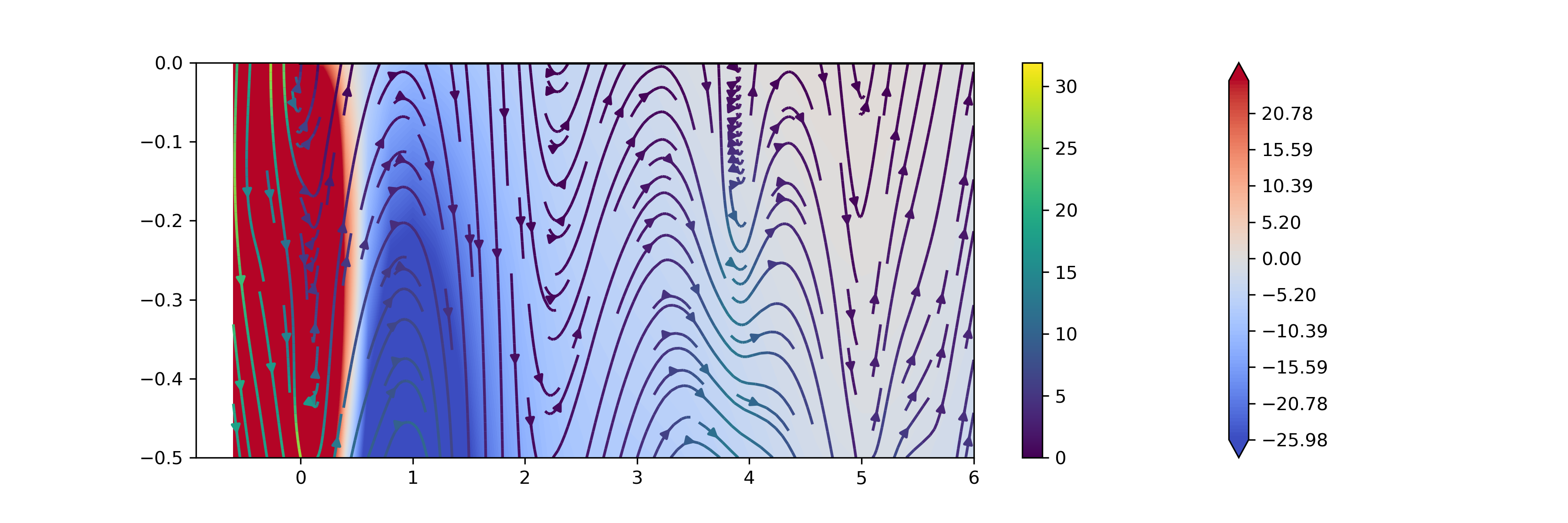} }}
    \subfloat[\centering $t=1.0$]{{\includegraphics[width=.5\linewidth]{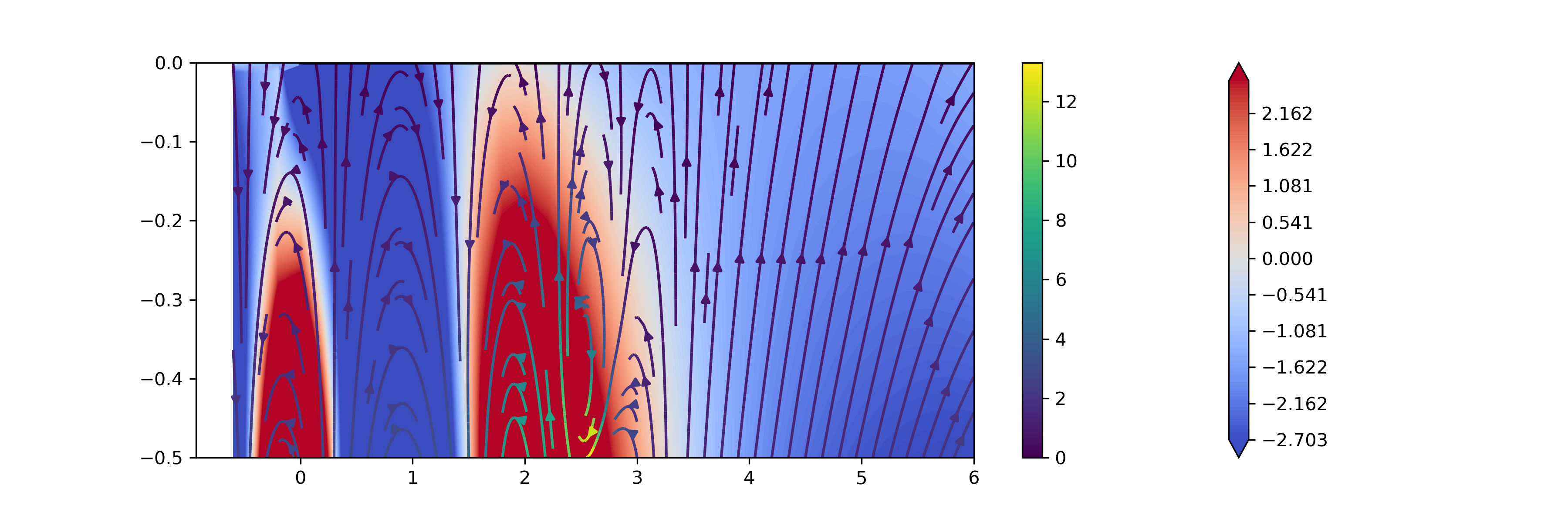}}}
    \caption{The lower boundary layer flow at different times $t$.}
    \label{Exp2FigLowerBFlow}
\end{figure}

\begin{figure}
    \centering
    \subfloat[\centering $t=0.01$]{{\includegraphics[width=.4\linewidth]{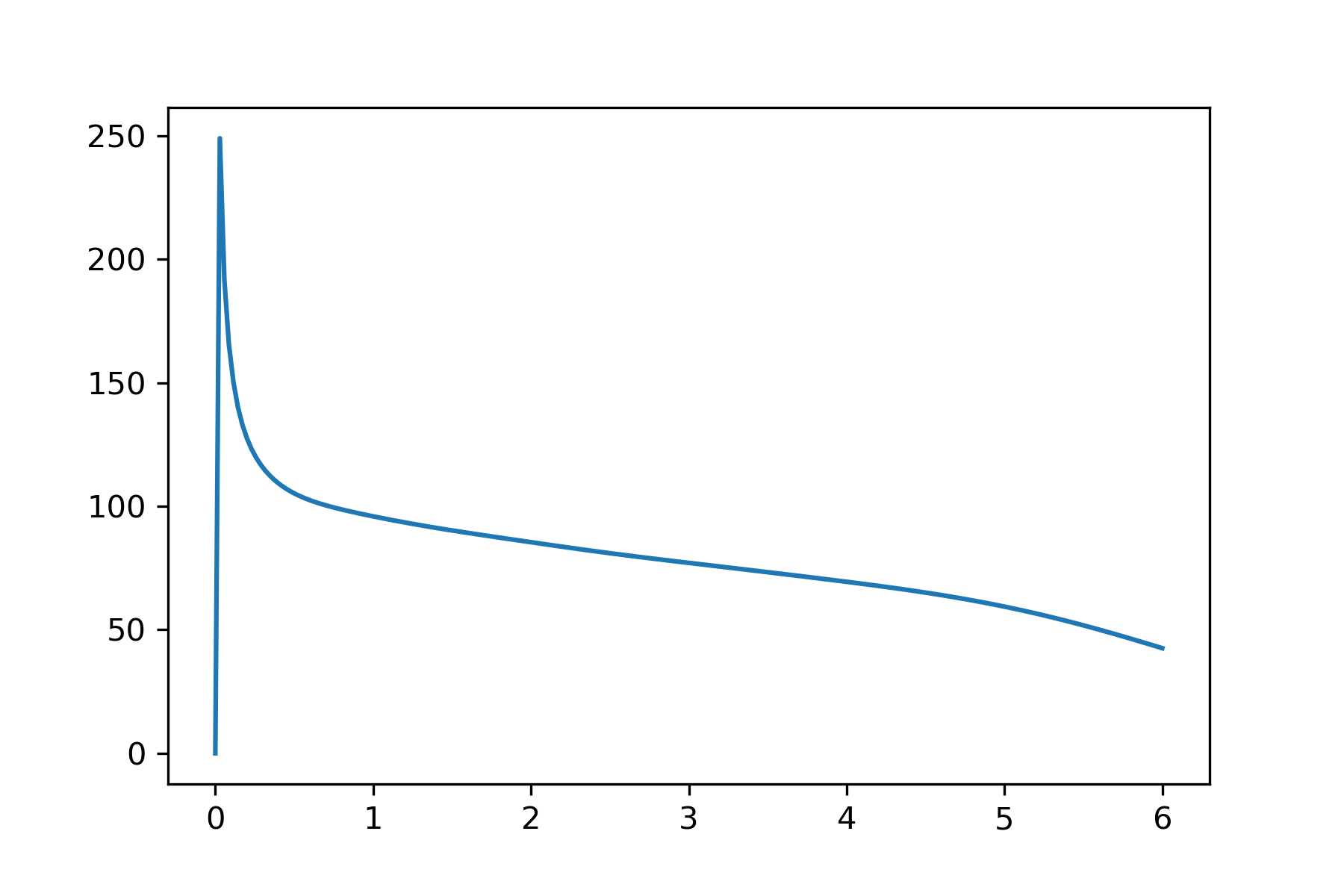} }}
    \subfloat[\centering $t=0.25$]{{\includegraphics[width=.4\linewidth]{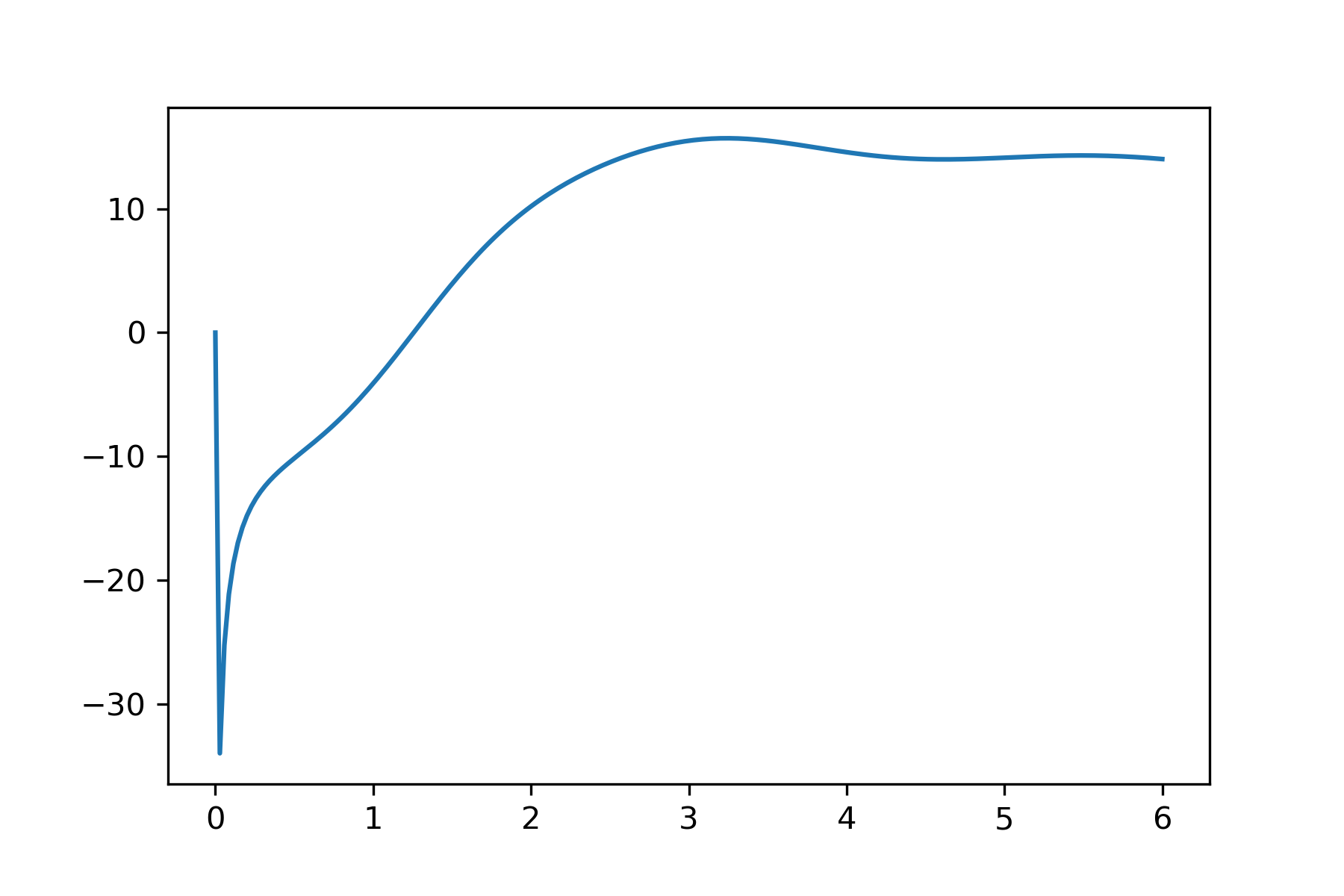}}}
    \qquad
    \subfloat[\centering $t=0.5$]{{\includegraphics[width=.4\linewidth]{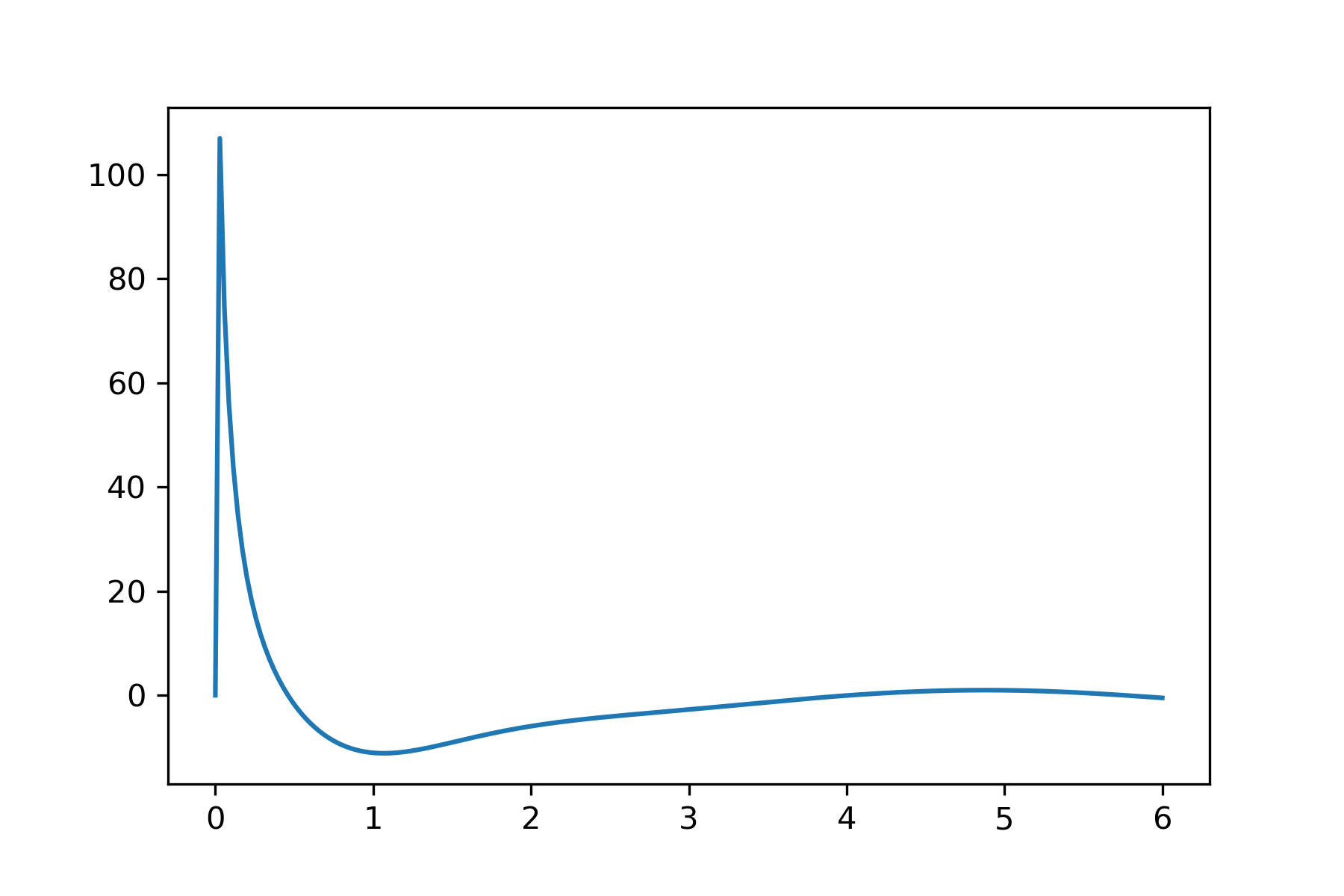} }}
    \subfloat[\centering $t=1.0$]{{\includegraphics[width=.4\linewidth]{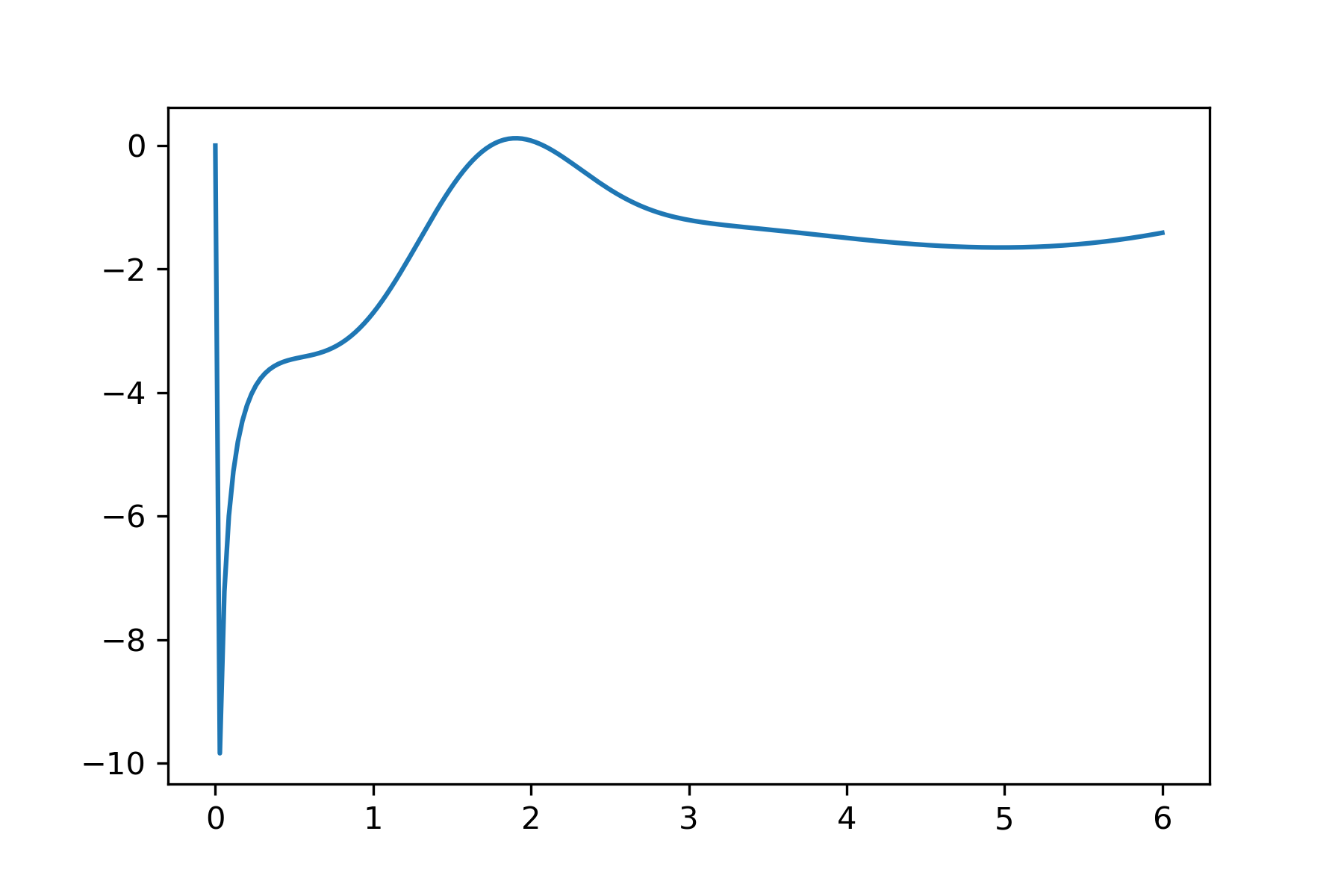}}}
    \caption{The stress applied to the lower boundary at different times $t$.}
    \label{Exp2FigLowerBStress}
\end{figure}

\begin{figure}
    \centering
    \subfloat[\centering $t=0.01$]{{\includegraphics[width=.4\linewidth]{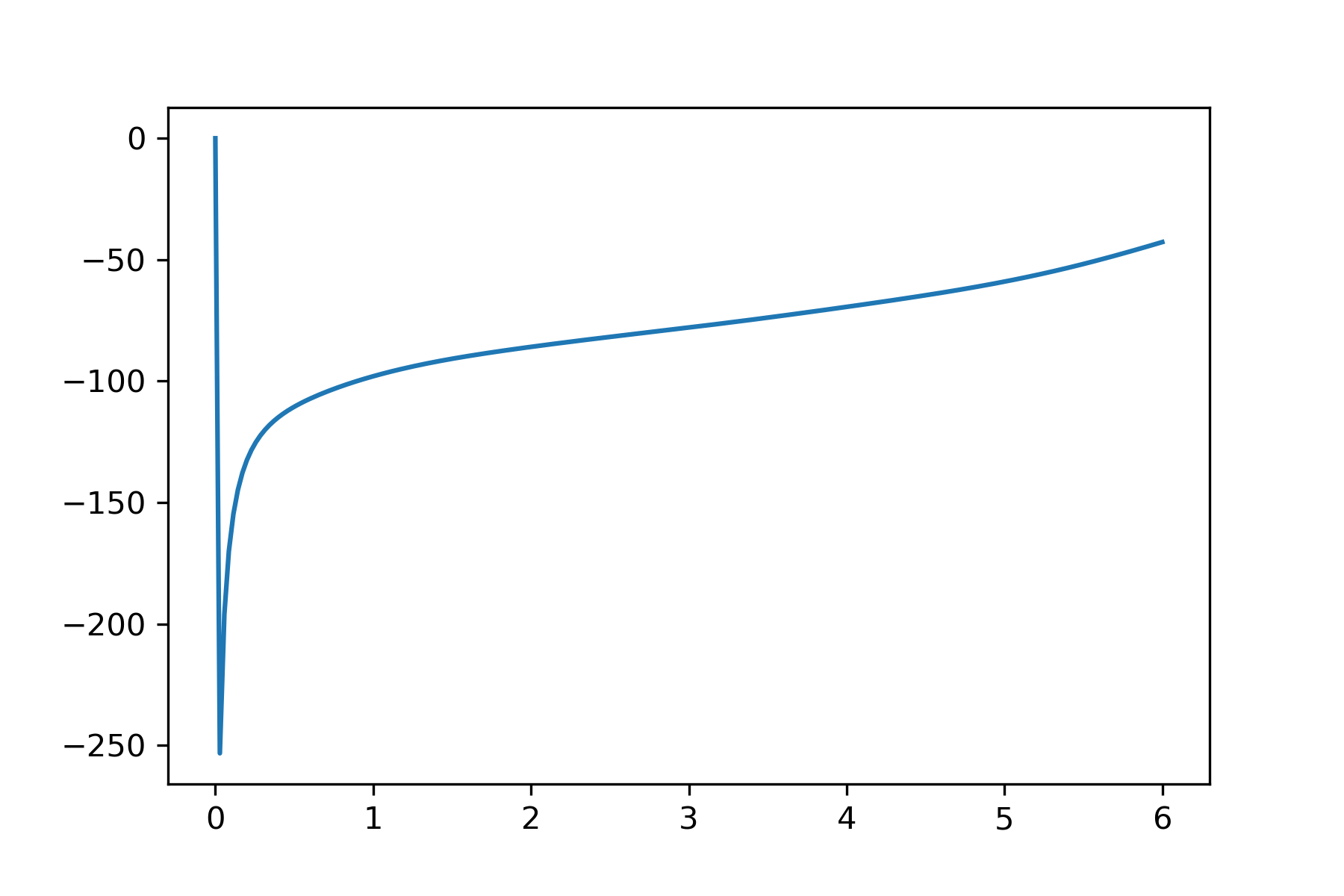} }}
    \subfloat[\centering $t=0.25$]{{\includegraphics[width=.4\linewidth]{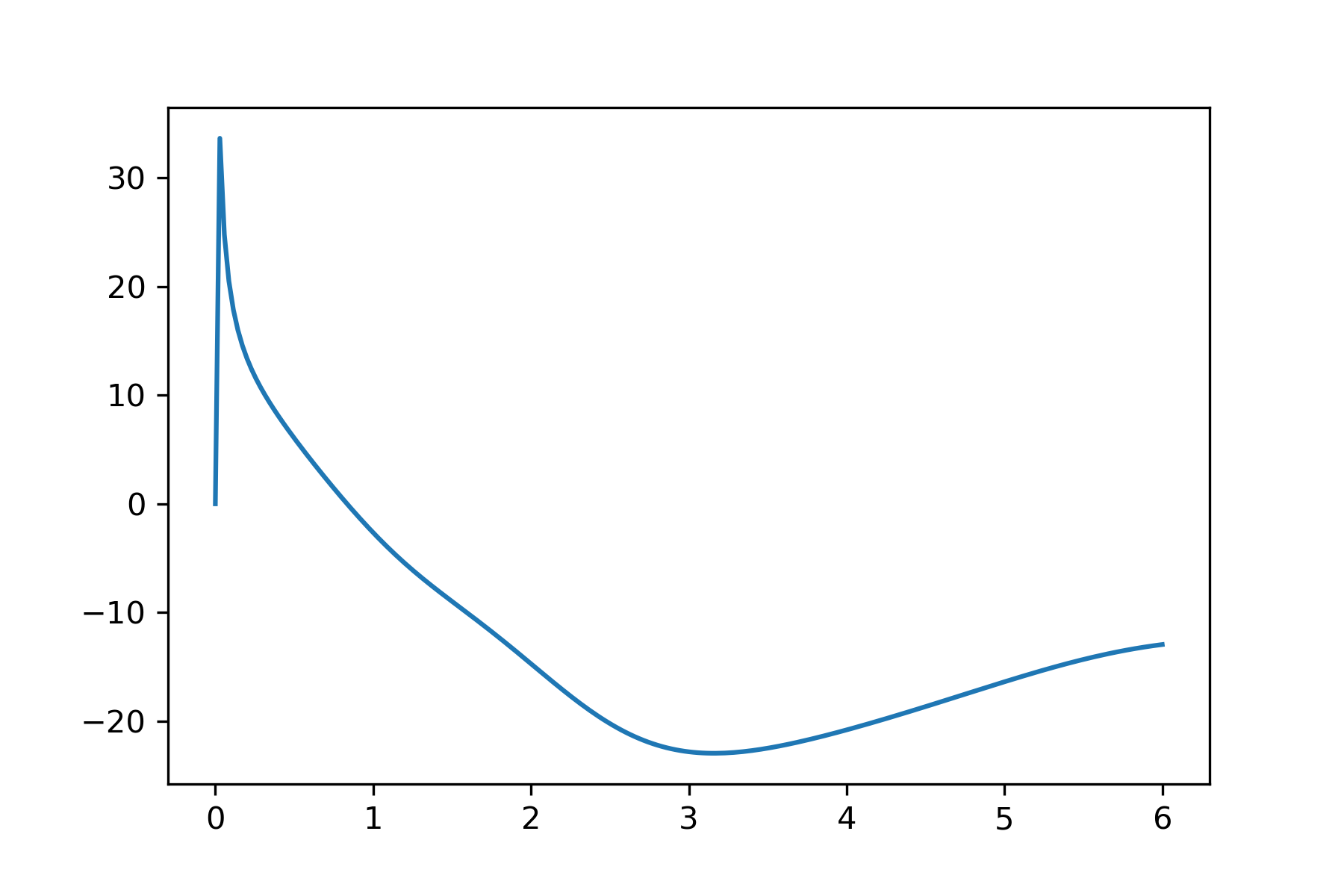}}}
    \qquad
    \subfloat[\centering $t=0.5$]{{\includegraphics[width=.4\linewidth]{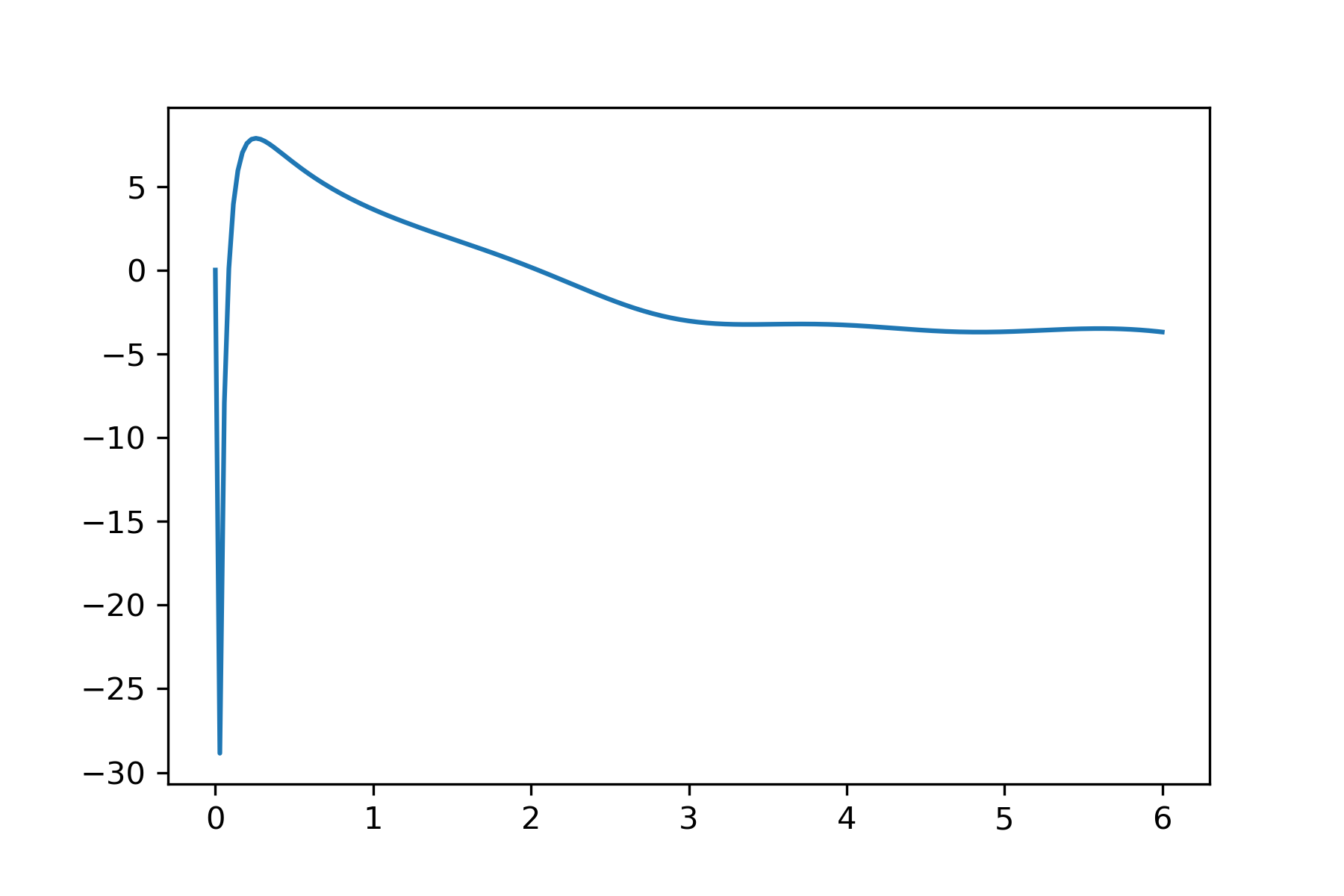} }}
    \subfloat[\centering $t=1.0$]{{\includegraphics[width=.4\linewidth]{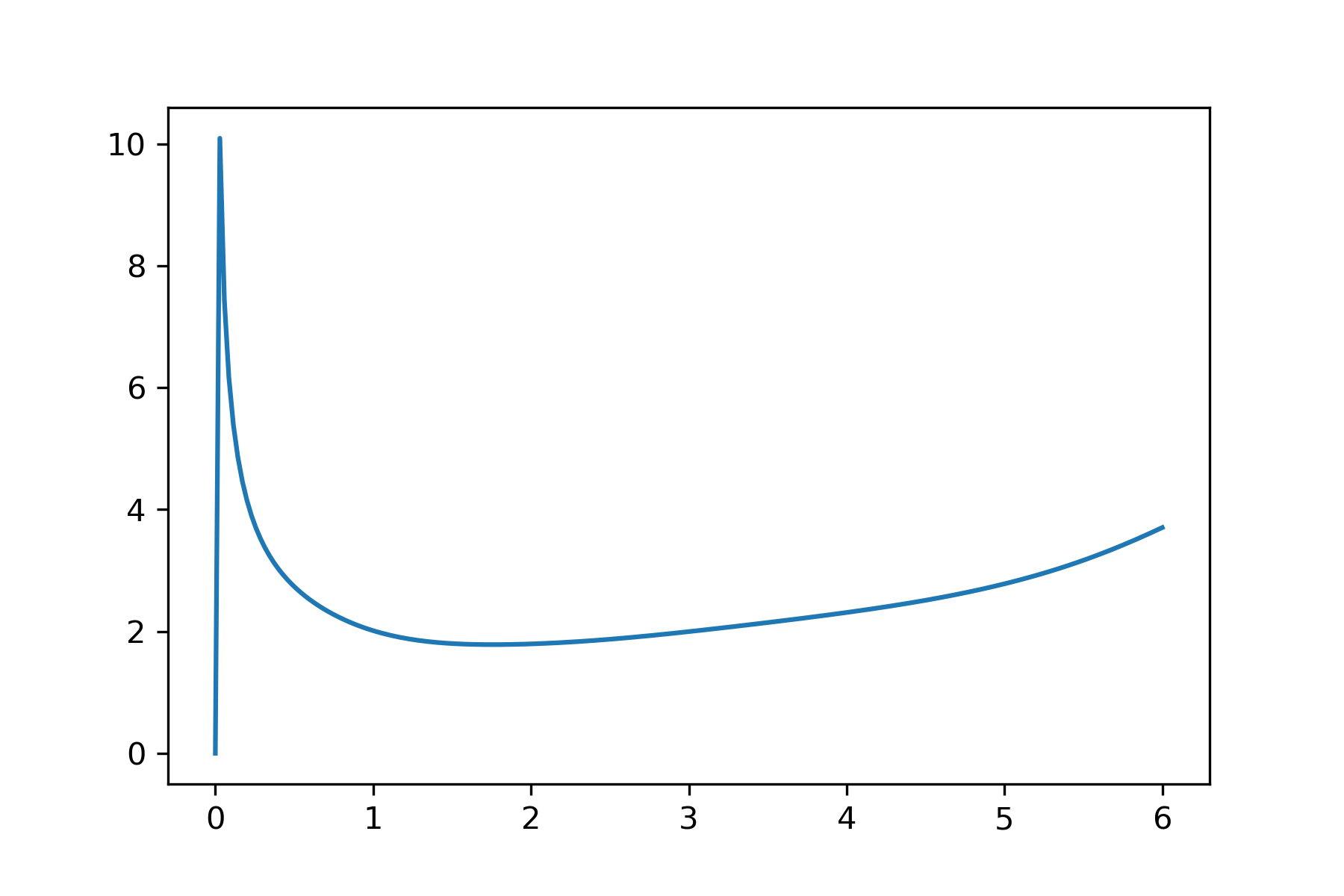}}}
    \caption{The stress applied to the upper boundary at different times $t$.}
    \label{Exp2FigUpperBStress}
\end{figure}

\begin{figure}
    \centering
    \subfloat[\centering $t=0.001$]{{\includegraphics[width=.5\linewidth]{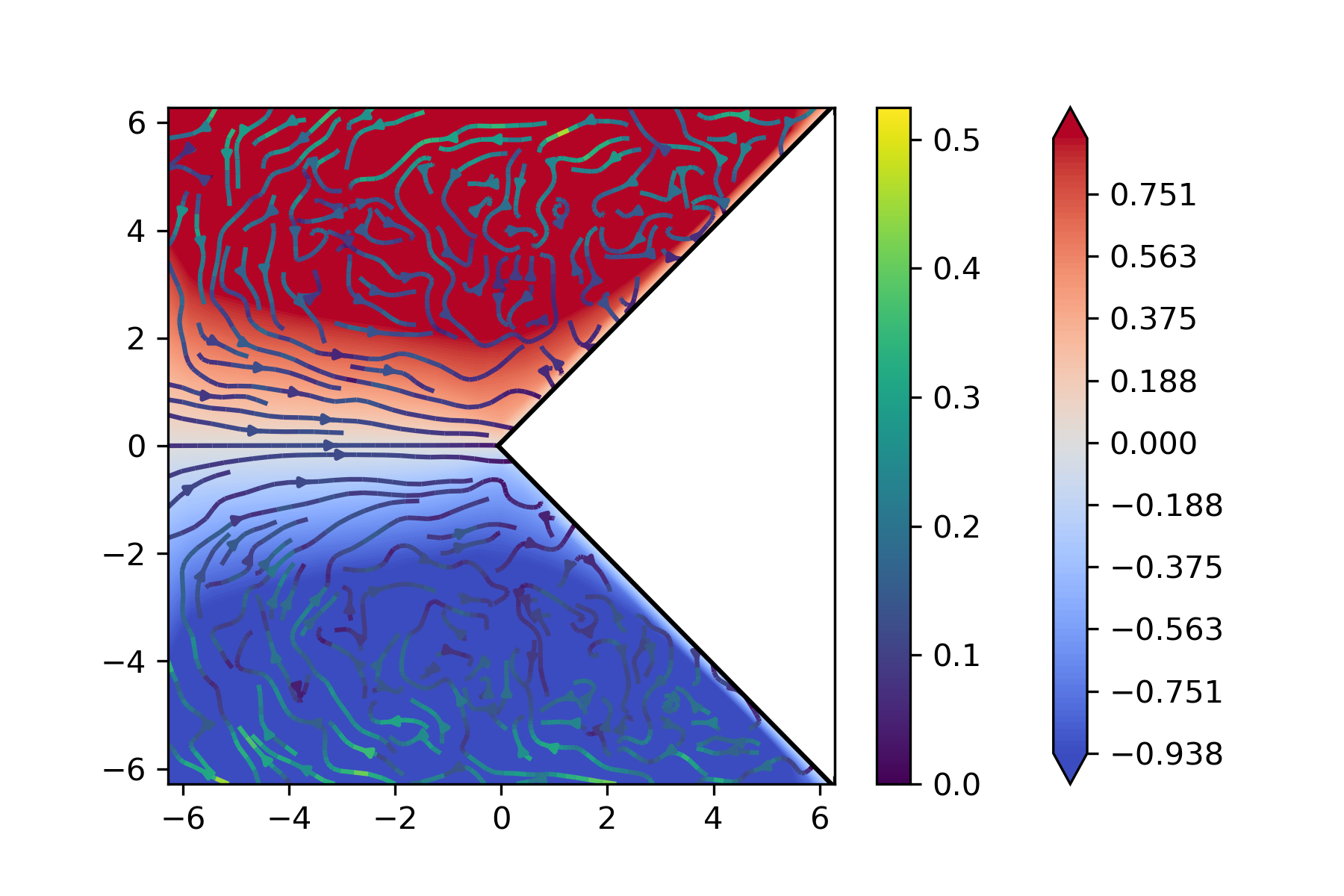} }}
    \subfloat[\centering $t=0.05$]{{\includegraphics[width=.5\linewidth]{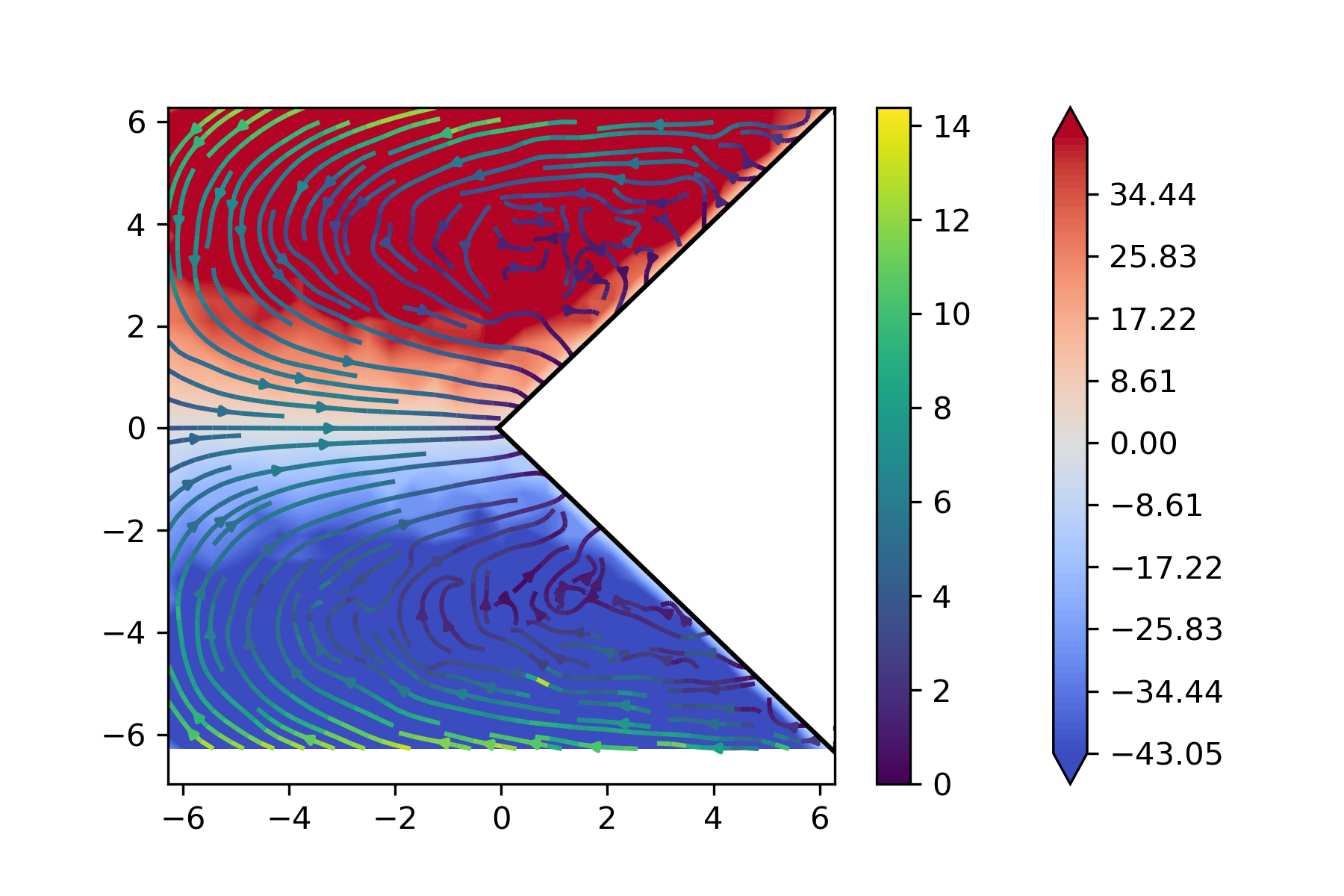}}}
    \qquad
    \subfloat[\centering $t=0.1$]{{\includegraphics[width=.5\linewidth]{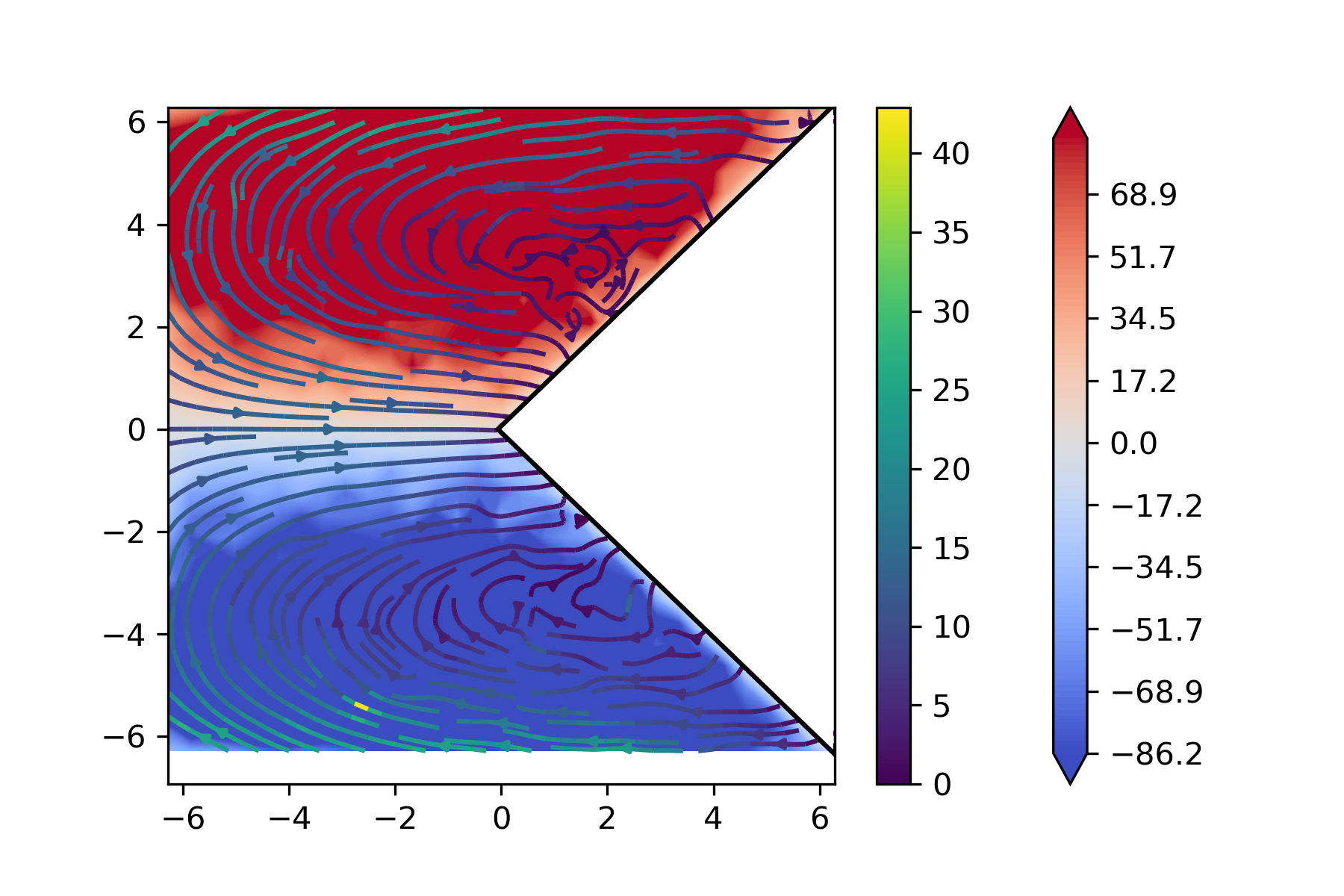} }}
    \subfloat[\centering $t=0.2$]{{\includegraphics[width=.5\linewidth]{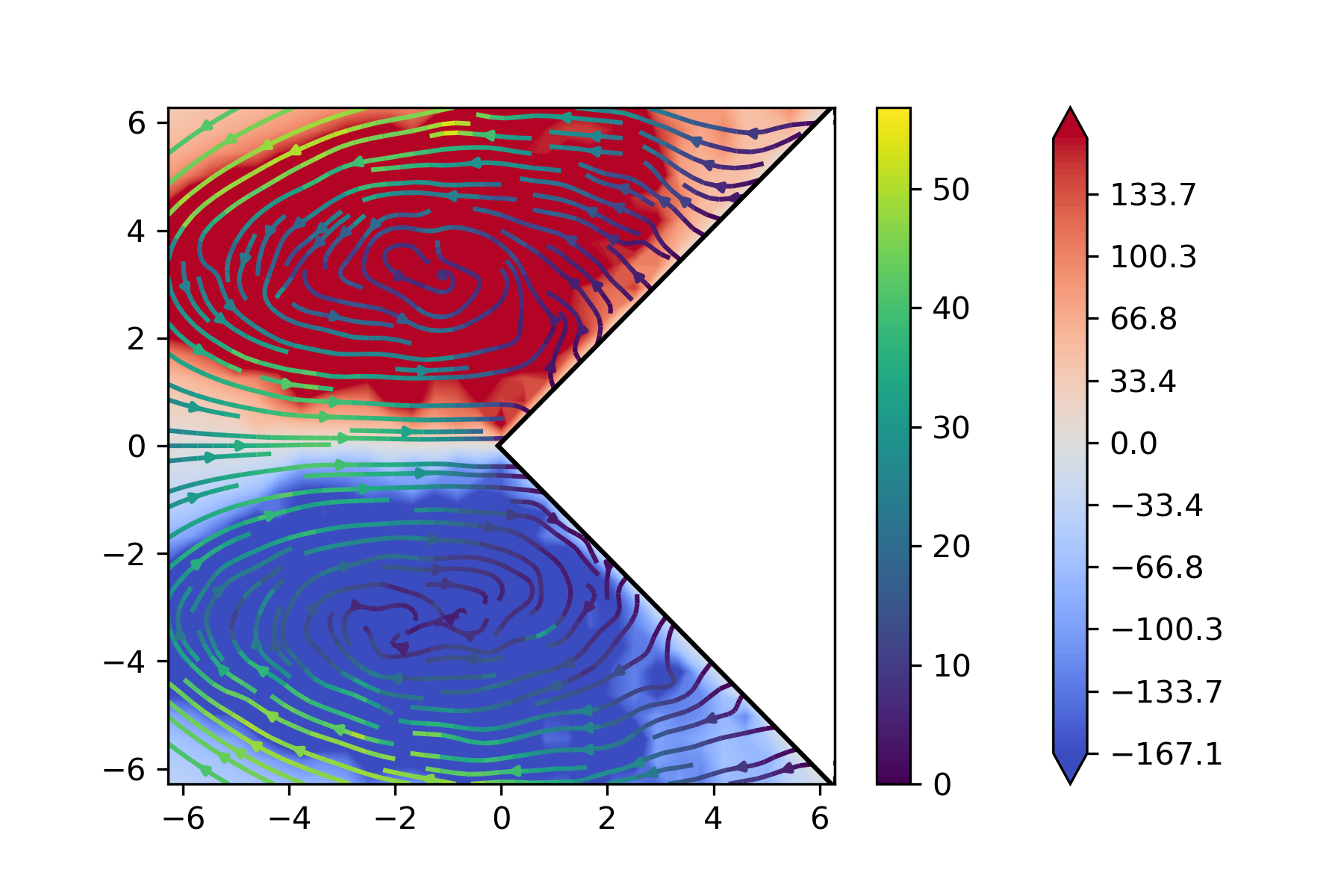}}}
    \caption{The outer layer flow at different times $t$.}
    \label{Exp3FigOFlow}
\end{figure}

\begin{figure}
    \centering
    \subfloat[\centering $t=0.001$]{{\includegraphics[width=.5\linewidth]{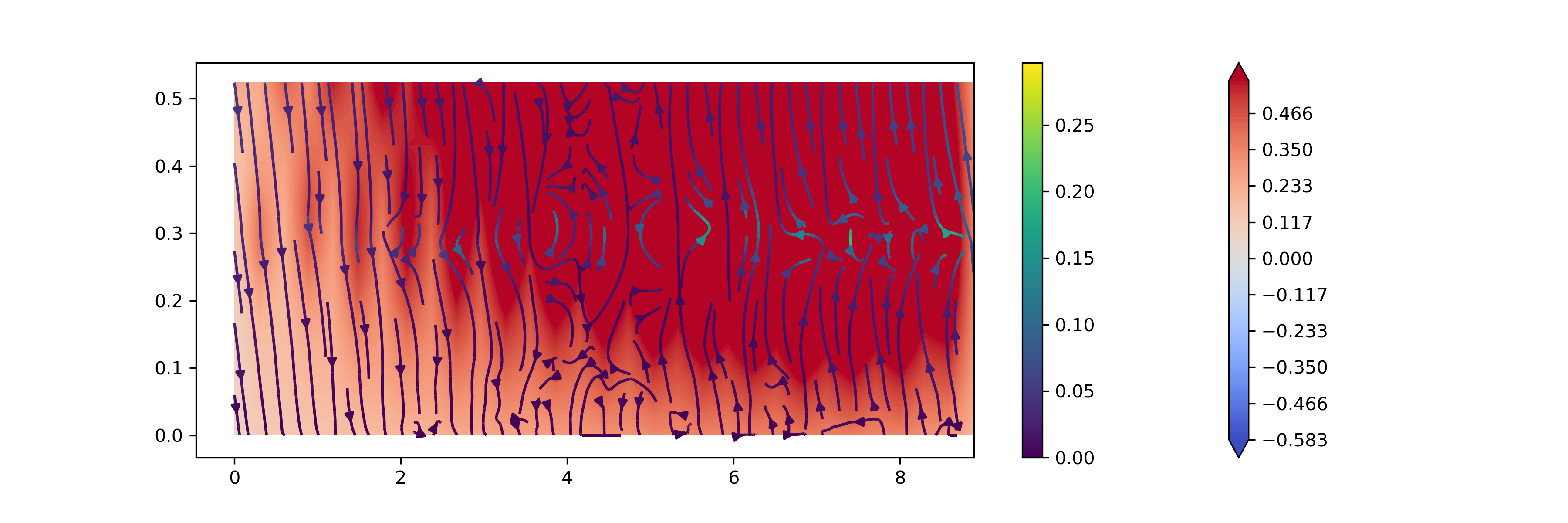} }}
    \subfloat[\centering $t=0.05$]{{\includegraphics[width=.5\linewidth]{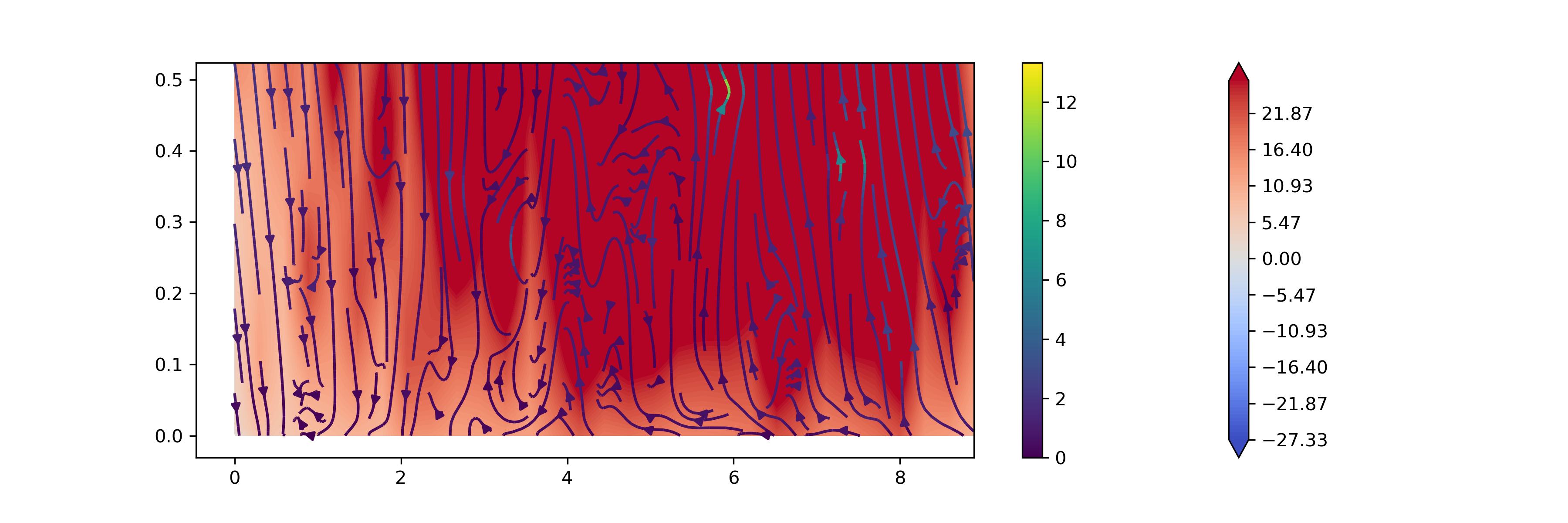}}}
    \qquad
    \subfloat[\centering $t=0.1$]{{\includegraphics[width=.5\linewidth]{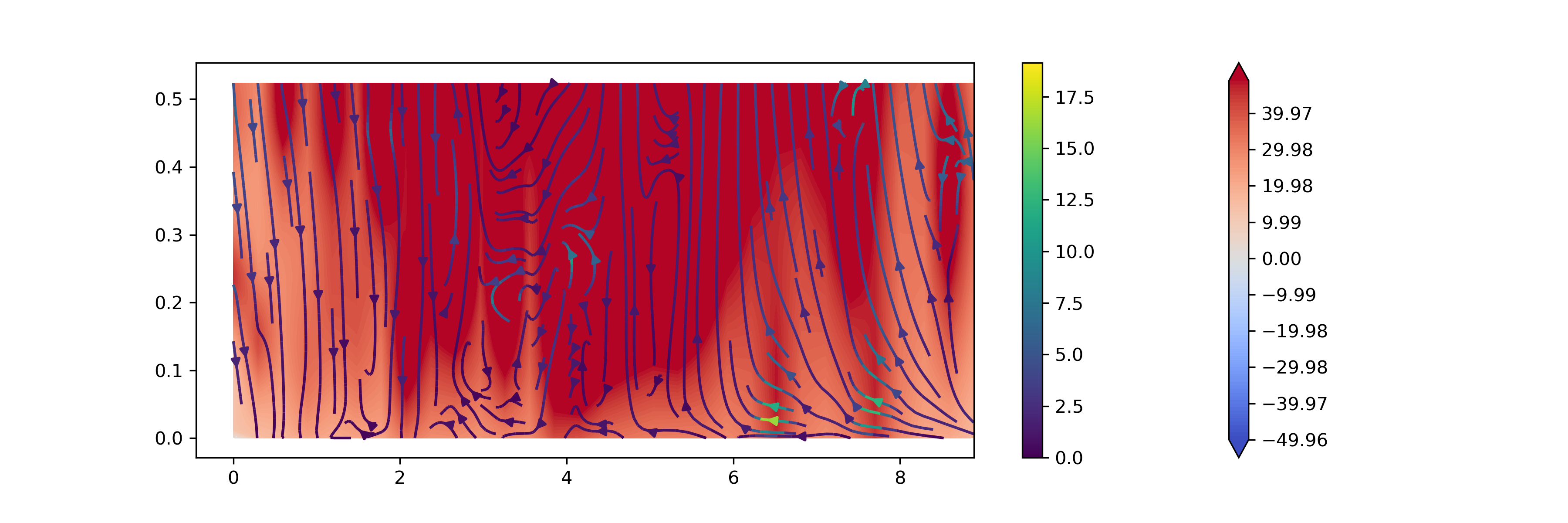} }}
    \subfloat[\centering $t=0.2$]{{\includegraphics[width=.5\linewidth]{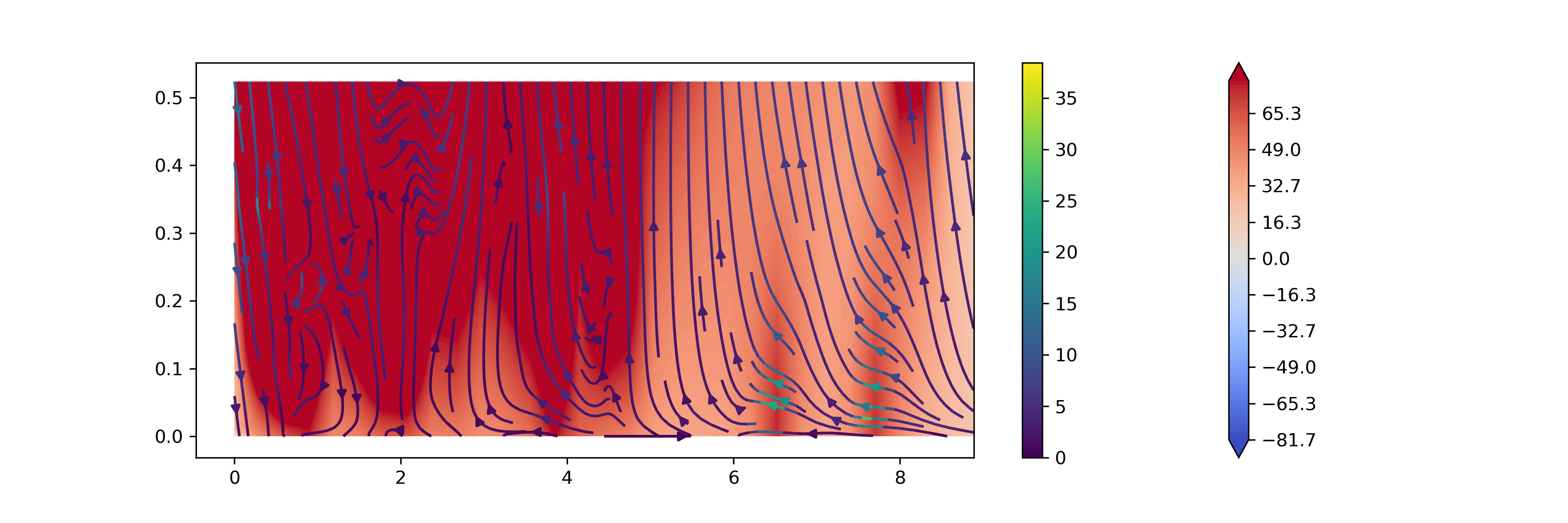}}}
    \caption{The upper boundary layer flow at different times $t$.}
    \label{Exp3FigUpperBFlow}
\end{figure}

\begin{figure}
    \centering
    \subfloat[\centering $t=0.001$]{{\includegraphics[width=.5\linewidth]{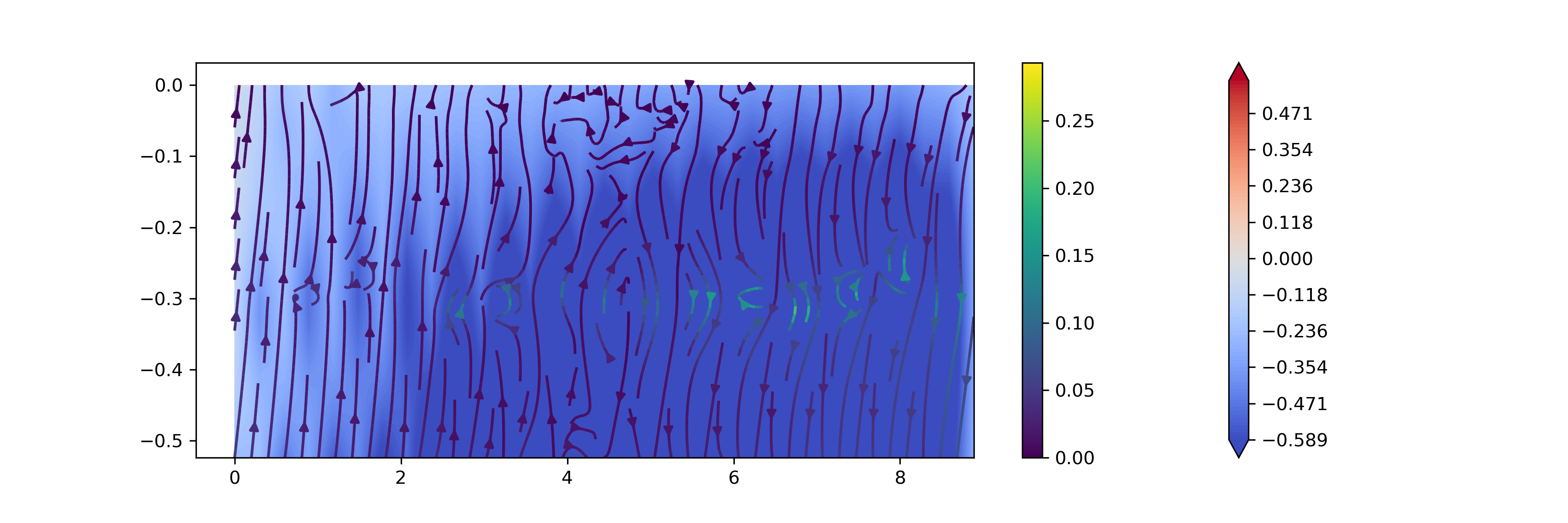} }}
    \subfloat[\centering $t=0.05$]{{\includegraphics[width=.5\linewidth]{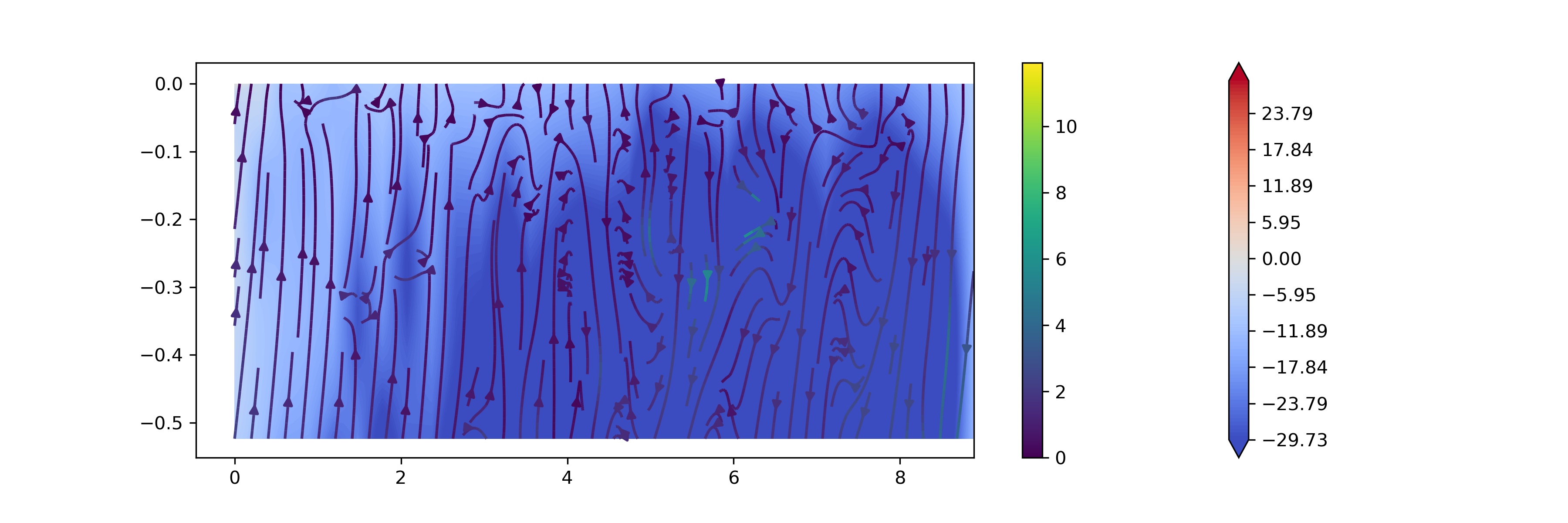}}}
    \qquad
    \subfloat[\centering $t=0.1$]{{\includegraphics[width=.5\linewidth]{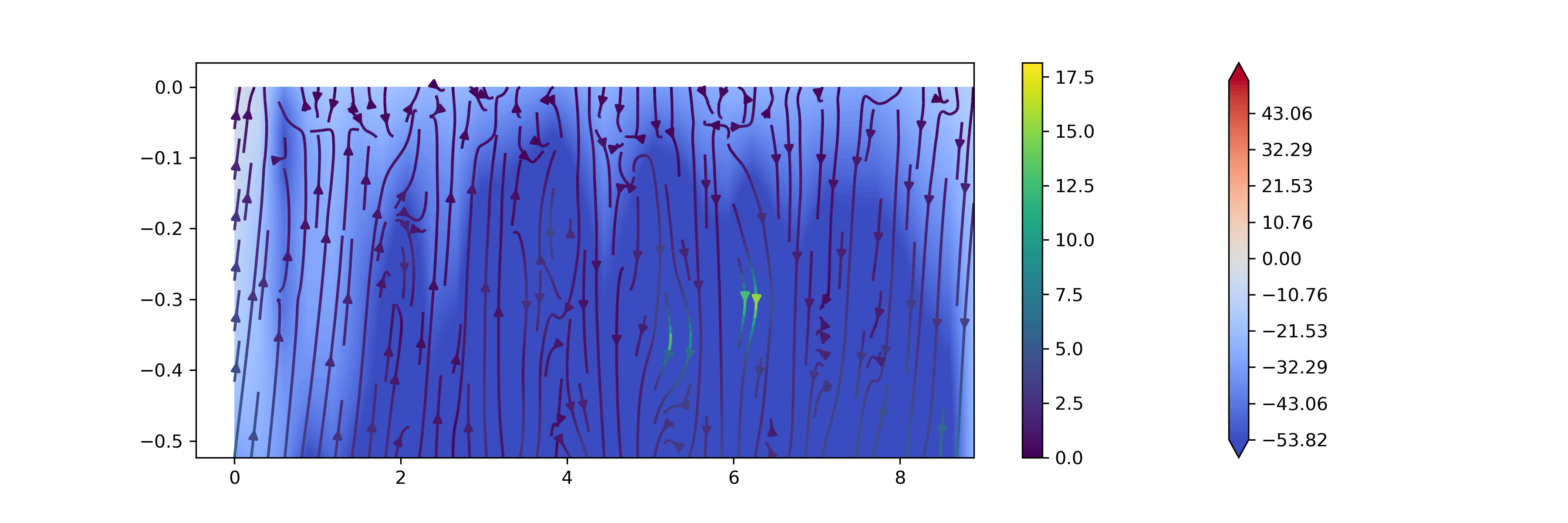} }}
    \subfloat[\centering $t=0.2$]{{\includegraphics[width=.5\linewidth]{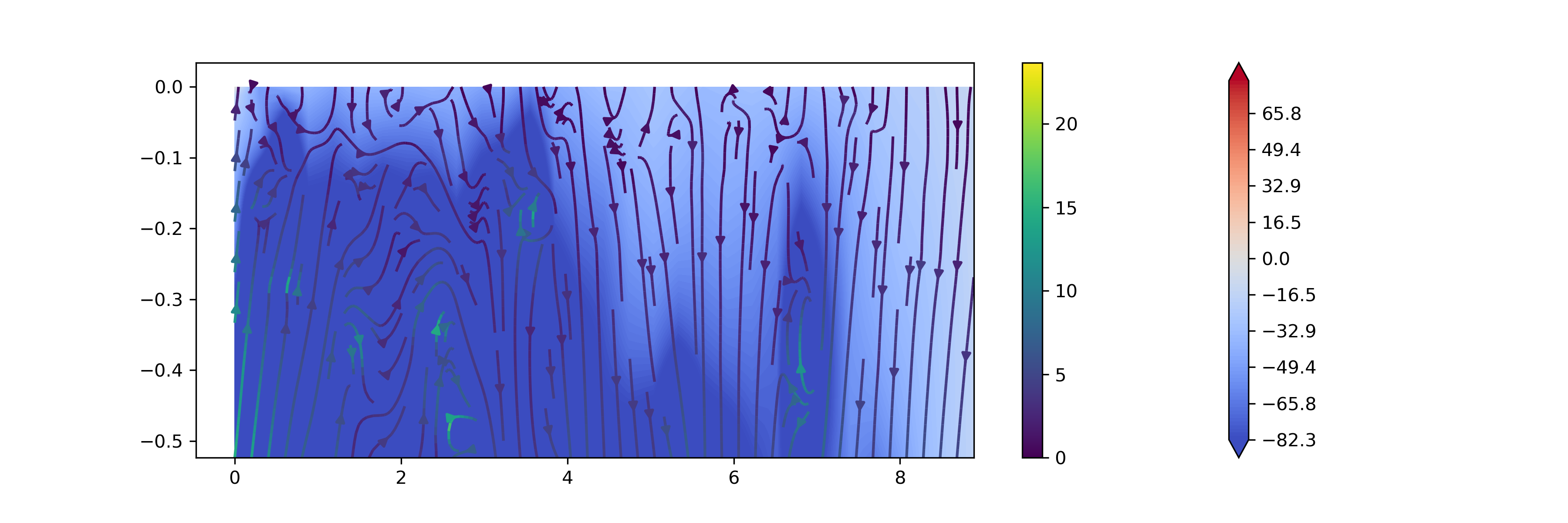}}}
    \caption{The lower boundary layer flow at different times $t$.}
    \label{Exp3FigLowerBFlow}
\end{figure}

\begin{figure}
    \centering
    \subfloat[\centering $t=0.001$]{{\includegraphics[width=.4\linewidth]{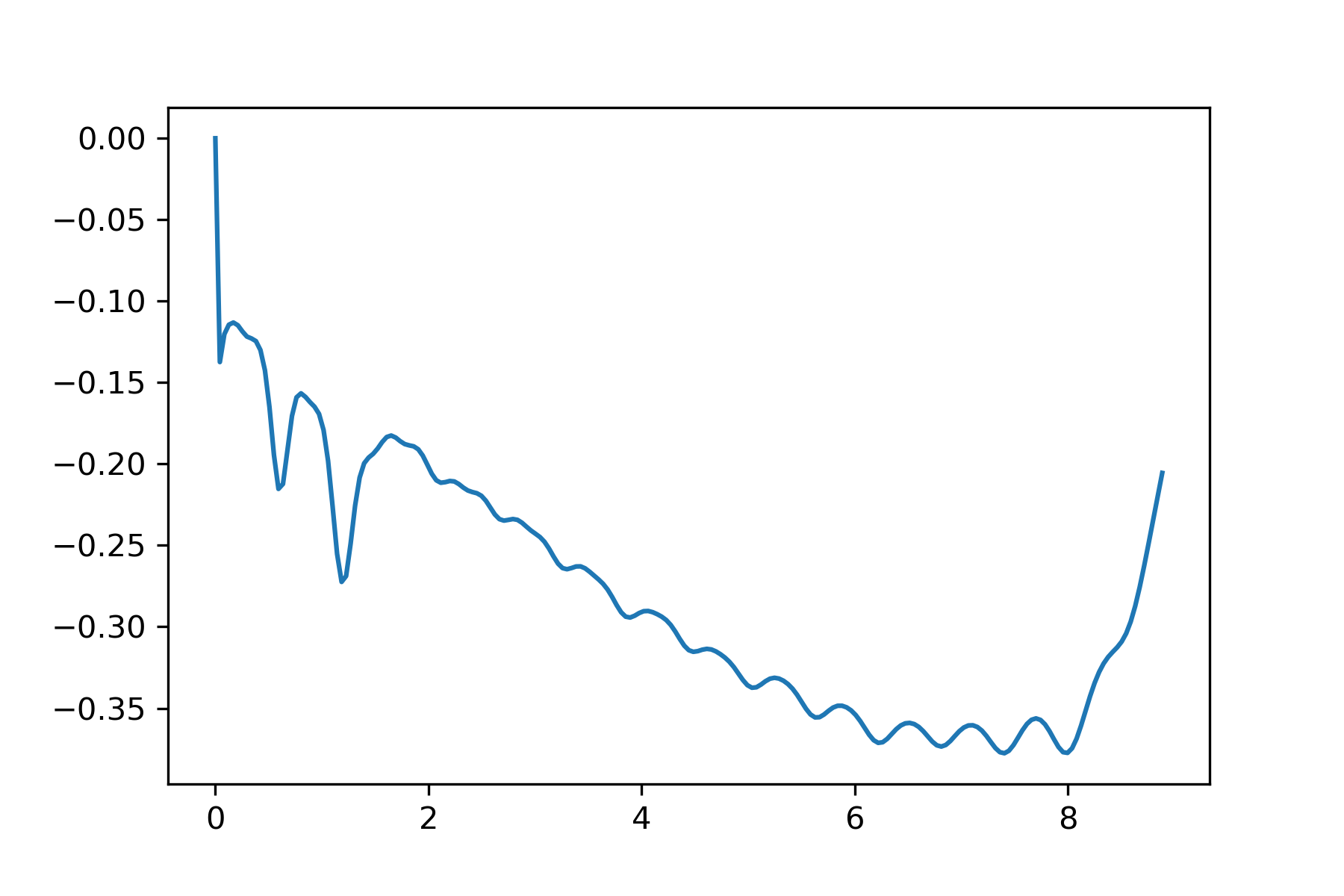} }}
    \subfloat[\centering $t=0.05$]{{\includegraphics[width=.4\linewidth]{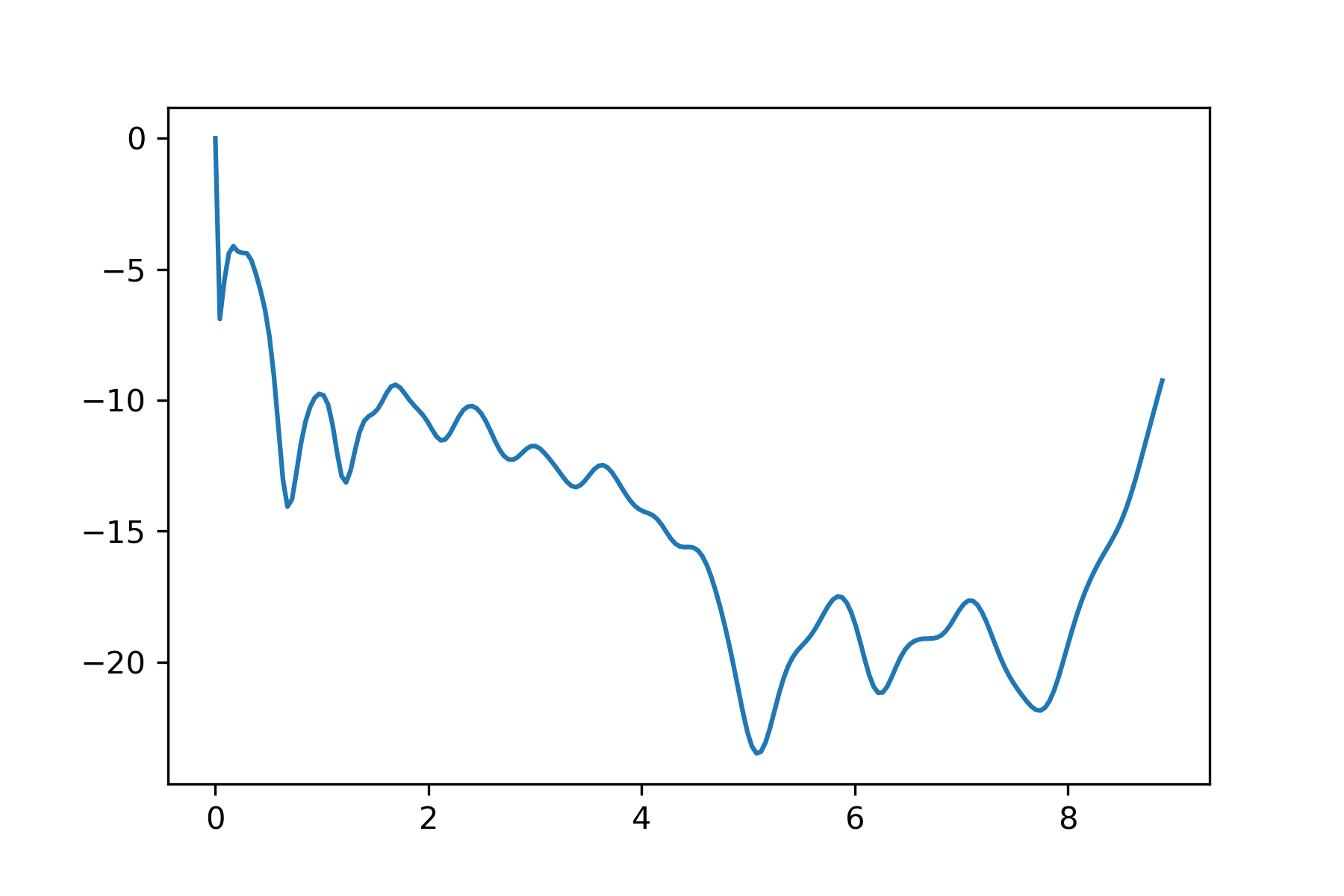}}}
    \qquad
    \subfloat[\centering $t=0.1$]{{\includegraphics[width=.4\linewidth]{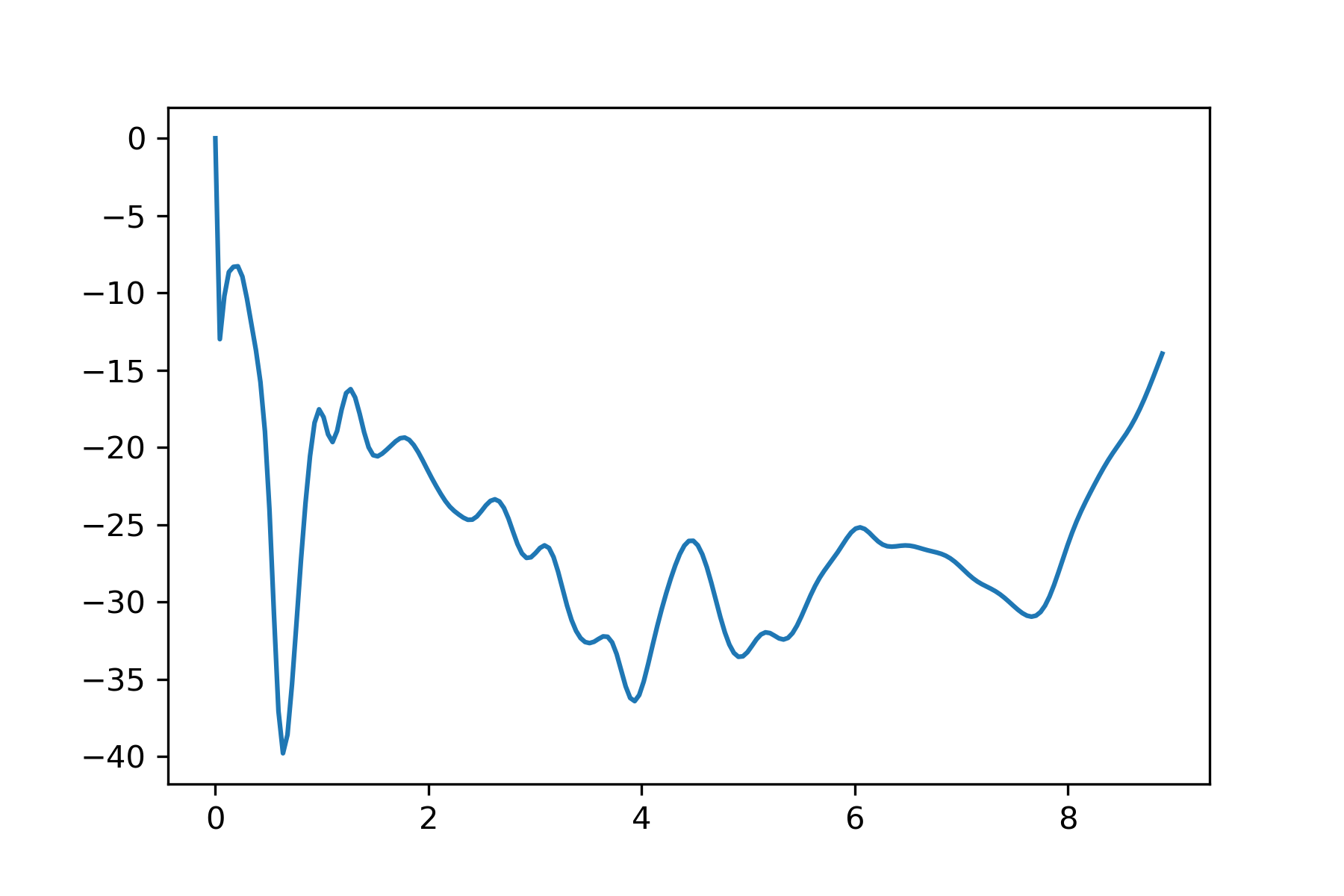} }}
    \subfloat[\centering $t=0.2$]{{\includegraphics[width=.4\linewidth]{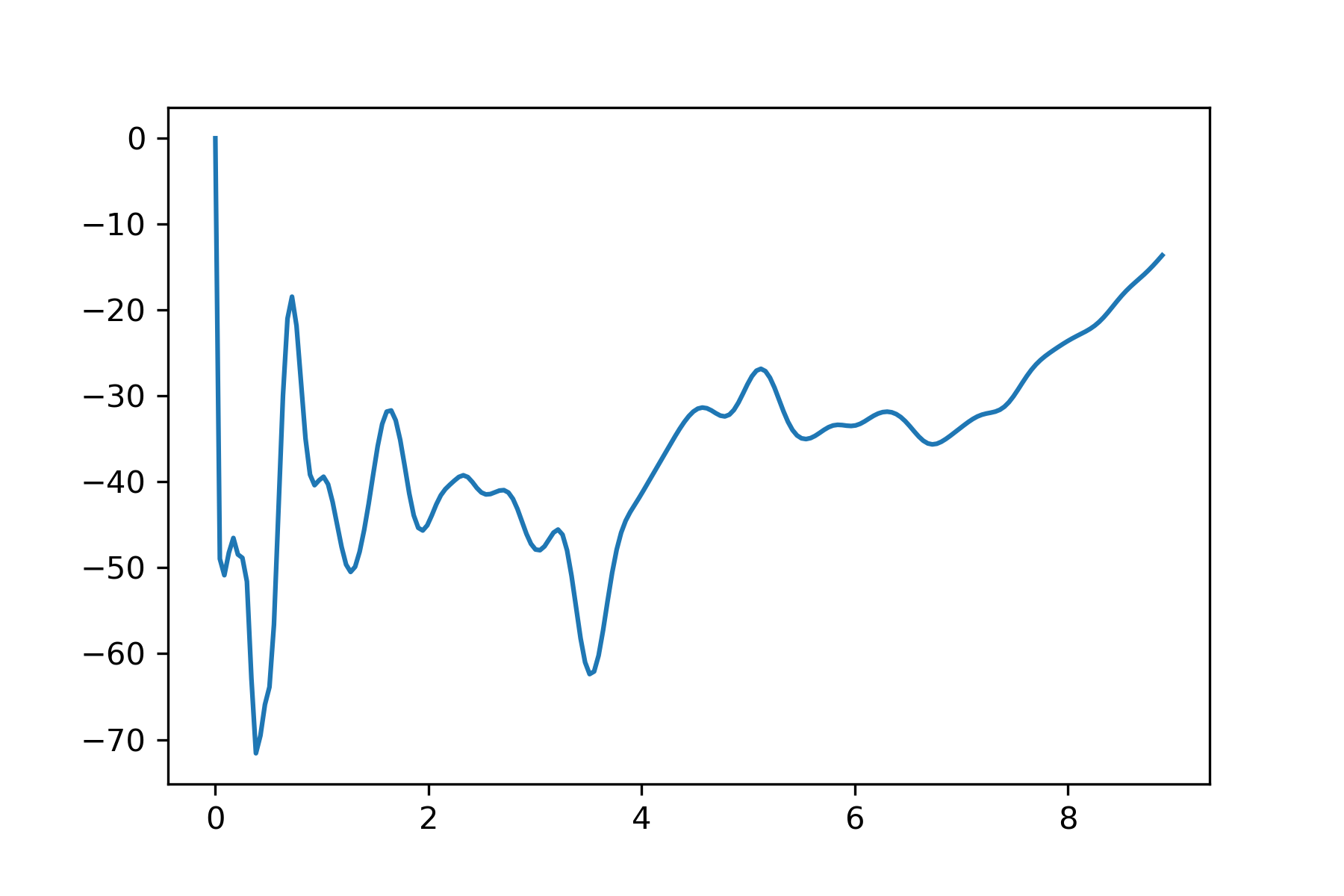}}}
    \caption{The stress applied to the lower boundary at different times $t$.}
    \label{Exp3FigLowerBStress}
\end{figure}

\begin{figure}
    \centering
    \subfloat[\centering $t=0.001$]{{\includegraphics[width=.4\linewidth]{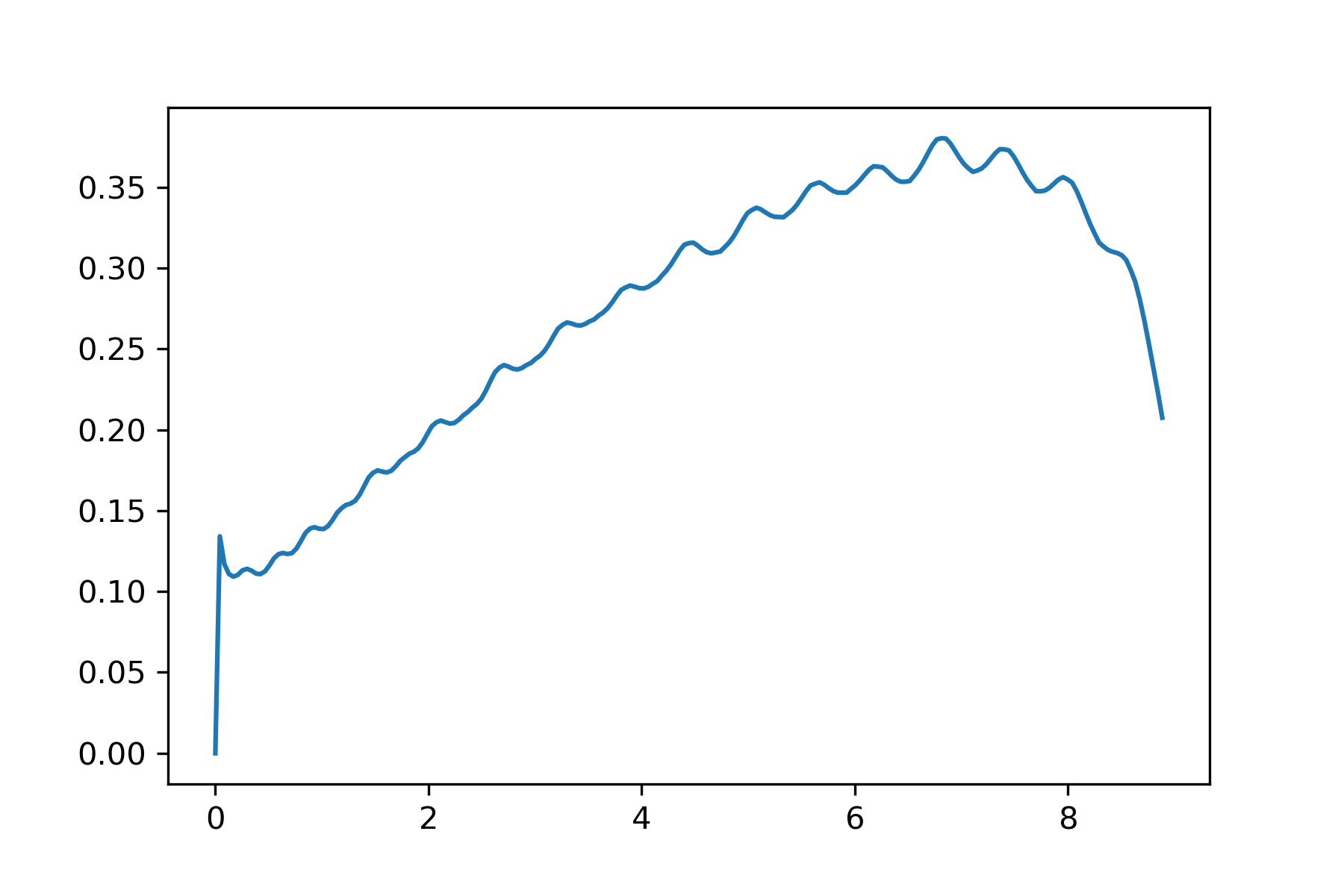} }}
    \subfloat[\centering $t=0.05$]{{\includegraphics[width=.4\linewidth]{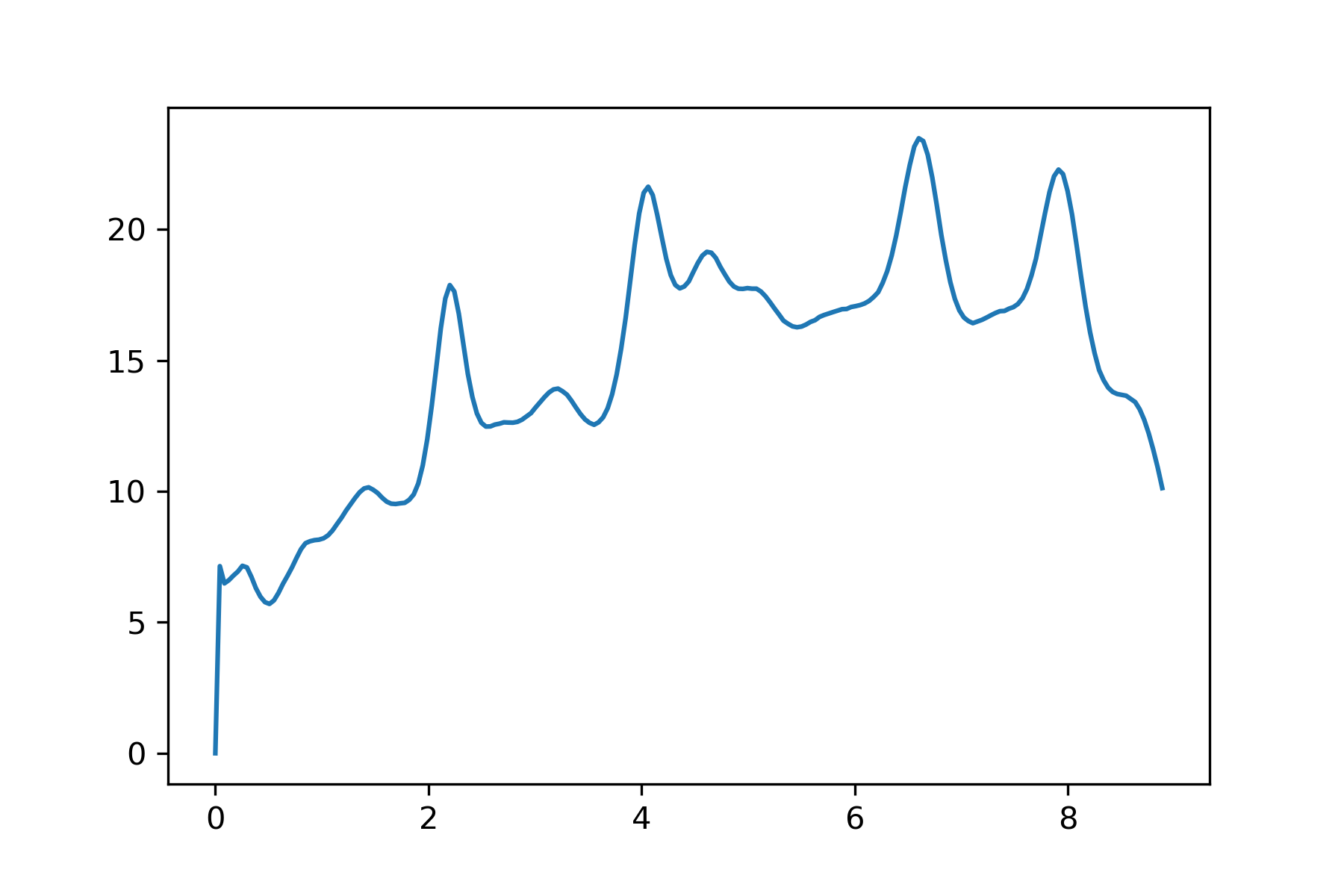}}}
    \qquad
    \subfloat[\centering $t=0.1$]{{\includegraphics[width=.4\linewidth]{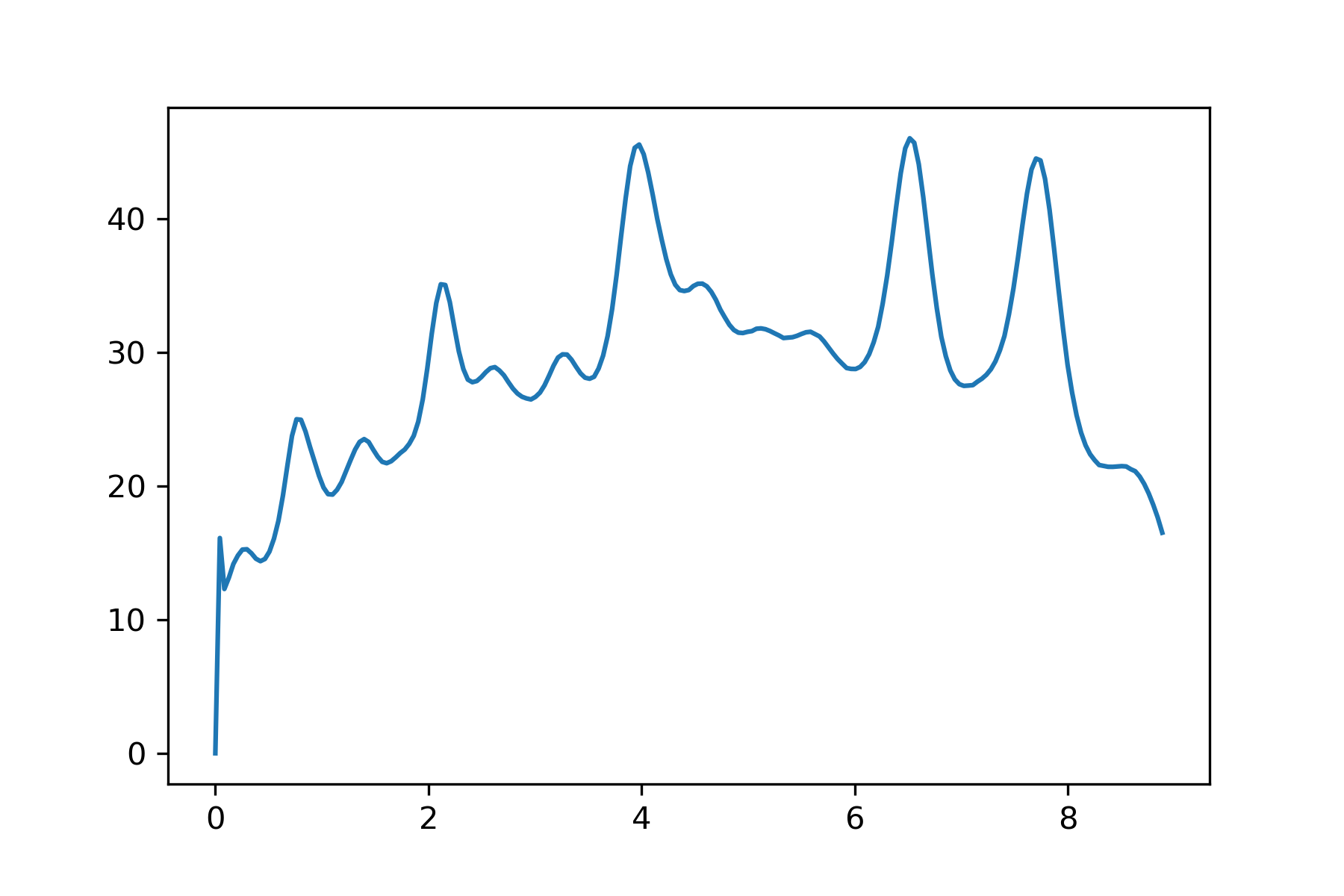} }}
    \subfloat[\centering $t=0.2$]{{\includegraphics[width=.4\linewidth]{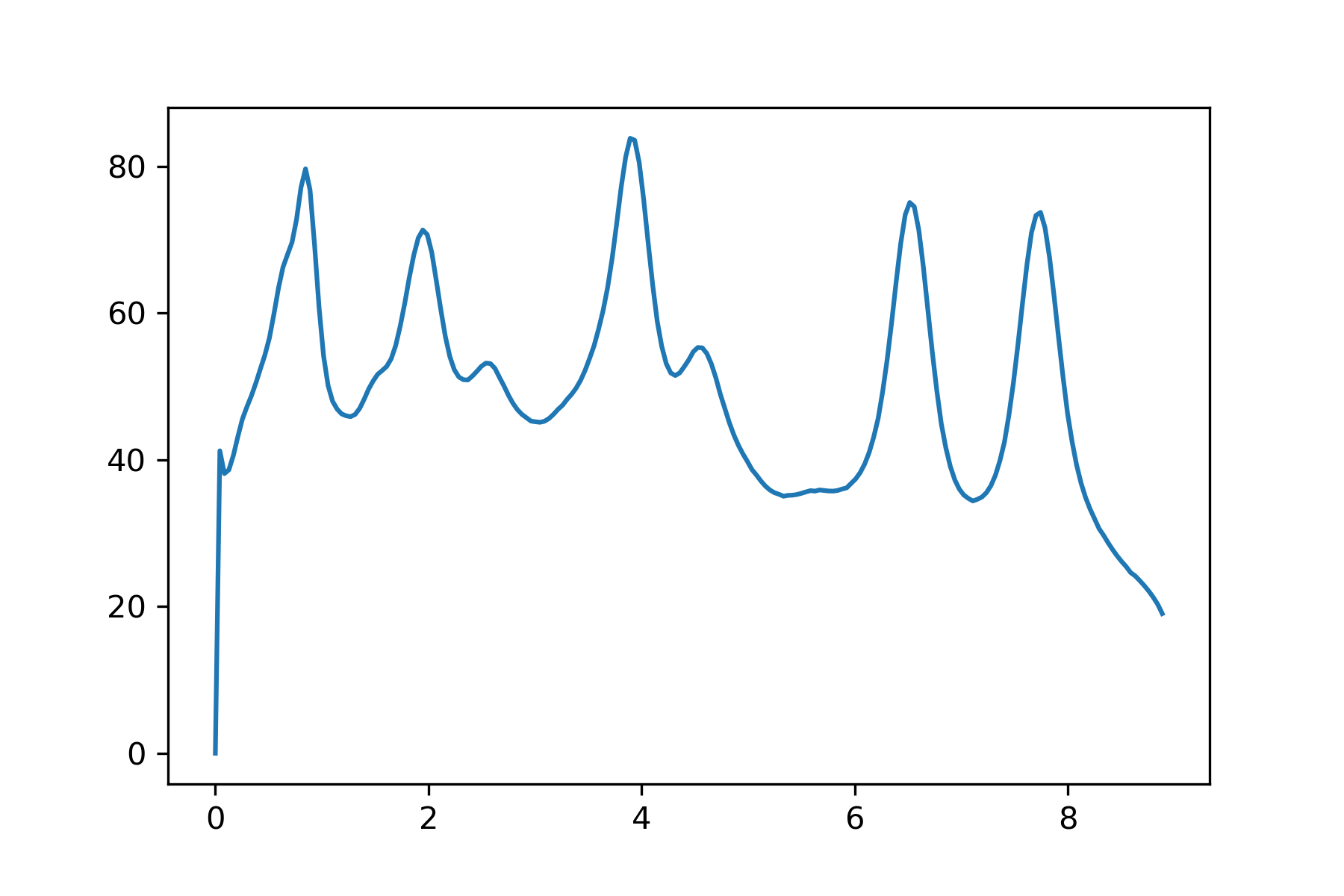}}}
    \caption{The stress applied to the upper boundary at different times $t$.}
    \label{Exp3FigUpperBStress}
\end{figure}

\end{document}